\newcommand{\cE}{\mathcal{E}}
\newtheorem{theorem}{Theorem}[section]
\newtheorem{lemma}[theorem]{Lemma}
\newtheorem{claim}[theorem]{Claim}
\newtheorem{corollary}[theorem]{Corollary}
\newtheorem{definition}[theorem]{Definition}
\newtheorem{fact}[theorem]{Fact}
\newenvironment{proofof}[1]{\noindent \textbf{Proof of #1.}}{\qed}
\newcommand{\ignore}[1]{}
\newcommand{\cC}{{\cal C}}
\newcommand{\cD}{\mathcal{D}}
\newcommand{\cX}{{\cal X}}
\newcommand{\domain}{\cX}
\newcommand{\R}{\mathbb R}
\newcommand{\eps}{\varepsilon}
\newcommand{\poly}{\mathrm{poly}}
\newcommand{\polylog}{\mathrm{polylog}}
\newcommand{\calH}{{\cal H}}
\newcommand{\calM}{{\cal M}}
\newcommand{\calT}{{\cal T}}
\newcommand{\calX}{{\cal X}}
\newcommand{\calY}{{\cal Y}}
\newcommand{\barI}{\overline{I}}
\newcommand{\mbone}{\mathbbm{1}}
\newcommand{\norm}[1]{\left\lVert #1 \right\rVert}
\newcommand{\set}[1]{\left\{ #1 \right\}}
\newcommand{\ceil}[1]{\lceil#1\rceil}
\newcommand{\Exp}{\EX}
\newcommand{\Var}{\mathrm{Var}}
\newcommand{\given}{\textrm{ s.t. }}
\newcommand{\ifT}{\textrm{ if }}
\newcommand{\otherwise}{\textrm{ o/w }}
\newcommand{\EX}{\mathbb{E}}
\newcommand{\distribution}{D}
\newcommand{\p}{\boldsymbol{p}}
\newcommand{\q}{\boldsymbol{q}}
\newcommand{\coin}{\boldsymbol{c}}
\newcommand{\rD}{\boldsymbol{r}}
\newcommand{\ind}[1]{\mbone\left[ #1 \right]}
\newcommand{\Poi}{\mathrm{Poi}}
\newcommand{\PoiS}{\mathrm{PoiS}}
\newcommand{\Bern}{\mathrm{Bern}}
\newcommand{\tvd}{\mathrm{tvd}}
\newcommand{\Sec}[1]{\hyperref[sec:#1]{\Cref*{sec:#1}}} 
\newcommand{\Eqn}[1]{\hyperref[eq:#1]{(\ref*{eq:#1})}} 
\newcommand{\Fig}[1]{\hyperref[fig:#1]{Fig.\,\ref*{fig:#1}}} 
\newcommand{\Tab}[1]{\hyperref[tab:#1]{Tab.\,\ref*{tab:#1}}} 
\newcommand{\Thm}[1]{\hyperref[thm:#1]{Theorem\,\ref*{thm:#1}}} 
\newcommand{\Fact}[1]{\hyperref[fact:#1]{Fact\,\ref*{fact:#1}}} 
\newcommand{\Lem}[1]{\hyperref[lem:#1]{Lemma\,\ref*{lem:#1}}} 
\newcommand{\Prop}[1]{\hyperref[prop:#1]{Prop.~\ref*{prop:#1}}} 
\newcommand{\Cor}[1]{\hyperref[cor:#1]{Corollary~\ref*{cor:#1}}} 
\newcommand{\Conj}[1]{\hyperref[conj:#1]{Conjecture~\ref*{conj:#1}}} 
\newcommand{\Def}[1]{\hyperref[def:#1]{Definition~\ref*{def:#1}}} 
\newcommand{\Alg}[1]{\hyperref[alg:#1]{Alg.~\ref*{alg:#1}}} 
\newcommand{\Obs}[1]{\hyperref[obs:#1]{Obs.~\ref*{obs:#1}}} 
\newcommand{\Ex}[1]{\hyperref[ex:#1]{Ex.~\ref*{ex:#1}}} 
\newcommand{\Clm}[1]{\hyperref[clm:#1]{Claim~\ref*{clm:#1}}} 
\newcommand{\Step}[1]{\hyperref[step:#1]{Step~\ref*{step:#1}}} 
\newcommand{\bigO}[1]{O\left(#1\right)}
\newcommand{\bigtO}[1]{\tilde{O}\left(#1\right)}
\newcommand{\bigOm}[1]{\Omega\left(#1\right)}
\newcommand{\bigTh}[1]{\Theta\left(#1\right)}
\newcommand{\litO}[1]{o\left(#1\right)}
\newcommand{\innerAlg}{\mathcal{A}}
\crefname{algocf}{alg.}{algs.}
\Crefname{algocf}{Alg.}{Algs.}
\newcommand{\flatten}{{\rm flat}}
\newcommand{\testing}{{\rm test}}
\renewcommand{\Comment}[1]{ \color{teal} #1 \color{black} }
\title{Distribution Testing in the Presence of Arbitrarily Dominant Noise with Verification Queries}
\author{
Hadley Black\thanks{Supported by NSF HDR TRIPODS Phase II grant 2217058 (EnCORE Institute).} \vspace{-0.4em} \\
\emph{University of California, San Diego} \vspace{-0.4em} \\
\texttt{hablack@ucsd.edu}
\and
Christopher Ye\footnotemark[1] \vspace{-0.4em} \\
\emph{University of California, San Diego} \vspace{-0.4em} \\
\texttt{czye@ucsd.edu}
}
\begin{document}

\maketitle

\thispagestyle{empty}

\begin{abstract} 
We study distribution testing without direct access to a source of relevant data, but rather to a highly contaminated one, from which only a tiny fraction (e.g. $1\%$) is relevant. To enable this, we introduce the following \emph{verification query} model. The goal is to perform a statistical task on distribution $\p$ given sample access to a mixture $\boldsymbol{r} = \lambda \boldsymbol{p} + (1-\lambda)\boldsymbol{q}$ and the ability to \emph{query} whether a sample $x \sim \boldsymbol{r}$ was generated by $\boldsymbol{p}$ (relevant) or by $\boldsymbol{q}$ (irrelevant). This captures scenarios where it is cheap to acquire data from a massive pool, but expensive to verify whether it is of interest for the specific task. In general, if $m_0$ clean samples from $\p$ suffice for a task, then $O(m_0/\lambda)$ samples and verification queries trivially suffice in our model. We ask, \emph{are there tasks for which the number of queries can be significantly reduced?}

We show that for the canonical problems in distribution testing (uniformity, identity, and closeness), the answer is yes. In fact, we obtain matching upper and lower bounds that reveal smooth trade-offs between sample and query complexity. For all $m \leq n$, we obtain (i) a \emph{uniformity and identity} tester using $\smash{O(m  + \frac{\sqrt{n}}{\eps^2 \lambda})}$ samples and $\smash{O(\frac{n}{m \eps^4 \lambda^2})}$ queries, and (ii) a \emph{closeness} tester using $\smash{O(m + \frac{n^{2/3}}{\eps^{4/3} \lambda} + \frac{1}{\eps^4 \lambda^3})}$ samples and $\smash{O(\frac{n^2}{m^2 \eps^4 \lambda^3})}$ queries. 
Moreover, we show that these query complexities are tight for all testers using $m \ll n$ samples. In particular, obtaining the optimal sample complexity for standard testing requires every sample to be verified, while if one takes $O(n)$ samples, then only $\mathrm{poly}(1/\eps \lambda)$ queries suffice.  

Next, we show that for testing closeness (and hence also identity and uniformity) using $m = \widetilde{O}(n/\eps^2\lambda)$ samples we can achieve query complexity $\widetilde{O}(1/\eps^2\lambda)$ which is nearly optimal even for the basic task of \emph{bias estimation} (uniformity testing when $n=2$) with unbounded samples. Our uniformity testers work even in the more challenging setting where the contaminated samples are generated by an adaptive adversary (at the cost of a $\log n$ factor). Finally, we show that the sample-query trade-off lower bounds can be circumvented if the algorithm is provided with the PDF of the mixture $\rD$.

\end{abstract}

\newpage

\tableofcontents

\thispagestyle{empty}

\newpage

\clearpage
\pagenumbering{arabic}

\newcommand{\todo}{1}

\ifthenelse{\equal{\todo}{1}}{

}

\section{Introduction}


In machine learning and algorithm design, direct access to data that is representative of the phenomena we wish to learn or understand is an assumption that is typically taken for granted. In other words, the process of acquiring and verifying data is completely decoupled from the role of algorithms. We imagine this process as follows: first data is gathered from an undiscerning source (a mixture of good and bad data), then verified and filtered by an expert to produce the input (purely good data) for an algorithm. If a $\lambda$-fraction of examples are good, and $m$ good examples are needed, then one will need to acquire and verify roughly $m/\lambda$ samples. If verification is expensive, then one may wish to improve this. Is this possible? We initiate the study of a new model for solving statistical problems when an arbitrarily large fraction of data is bad by giving the algorithm access to a \emph{verification oracle}, therefore \emph{coupling algorithm design with data verification}. Our central question is the following:


\begin{center}
\emph{By letting an algorithm decide the most informative samples to verify on the fly, is it possible to significantly reduce the total amount of data verification that is needed to solve a problem?}
\end{center}

We show that for certain basic statistical tasks, namely distribution testing, the answer is yes. 
Our aim is to address scenarios, such as the following examples, with a large fraction of irrelevant data:


\begin{enumerate}
    \item (\emph{Respecting privacy of a protected group:}) You are interested in the income distribution of a relatively small protected group (e.g. by gender and/or ethnicity, etc), but you only have sample access to the broader general population. It is possible to ask (query) an individual to disclose whether they are a part of the protected group, but this is time consuming and may compromise privacy.
    \item (\emph{Being robust to fake news:}) You are conducting a survey of news and research articles, a significant fraction of which contain fabricated statistics or non-reproducible results. You can employ (query) an expert to investigate the trustworthiness or quality of an article.
    \item (\emph{Astronomy and particle physics:}) In experiments like those at the Large Hadron Collider, or in telescope data, a large majority of recorded events are due to background noise. However, patterns involving rare events must be identified and analyzed. Expert verification can distinguish signal vs. background noise, but this can only be done sparingly due to the volume of data.
\end{enumerate}



\paragraph{The Verification Query Model.} Formalizing the above discussion, we initiate the study of a new \emph{verification query} model for statistical tasks. We are given sample access to a mixture $\rD = \lambda \p + (1-\lambda) \q$ where $\p$ is the input distribution of interest over a domain $\cX$, $\q$ is an arbitrary contamination distribution over $\cX$, and $\lambda \in (0,1)$ is an unknown mixture parameter which we think of as a small constant, but can be arbitrarily small. The distribution $\p$ represents a pure source of relevant data lying within a larger, perhaps contaminated source, $\rD$. Given a sample $x \sim \rD$, the \emph{verification query oracle} returns "relevant" or "irrelevant", indicating whether $x$ was generated by $\p$ or by $\q$, respectively. In general, we view queries as more expensive then samples. In fact, we obtain algorithms whose query complexity decreases smoothly as a function of the number of samples, and thus can be adapted to the relative cost of queries vs. samples in a given application. 


Verification queries can also be interpreted as revealing whether a particular event occurred for a sample. 
I.e., consider a distribution $\cD$ over domain $\Omega$, and an event $\cE \subset \Omega$. We can express $\cD$ as a mixture \[\cD = \Pr[\cE] \cdot \cD|_{\cE} + (1-\Pr[\cE]) \cdot \cD|_{\Omega \setminus \cE}\] where $\cD|_{\cdot}$ denotes the distribution conditioned on $(\cdot)$. The verification query model then captures the problem of performing a statistical task for the distribution $\cD|_{\cE}$, given sample access to $\cD$ and the ability to query whether $\cE$ occurred for a given sample. As mentioned, we imagine that this event query is expensive, invasive, or otherwise undesirable to perform extensively. 

If $m$ samples suffice for a given task in the standard model, then $O(m/\lambda)$ samples and queries are always sufficient in the verification query model since 
this many samples from $\rD$ will reveal at least $m$ samples from $\p$ with high probability, allowing for the standard algorithm to be applied as a black-box. When can we do significantly better than this?

\paragraph{Comparison with Semi-verified Learning.} To the best of our knowledge, there have been very few works which consider settings where a majority of data is irrelevant.
For example, list-decodable algorithms return a list of answers containing at least one accurate answer, but provably cannot produce a single correct output \cite{DBLP:conf/stoc/CharikarSV17}.
The closest model to ours is \emph{semi-verified learning} \cite{DBLP:conf/stoc/CharikarSV17, DBLP:conf/colt/MeisterV18, DBLP:conf/aistats/Zeng023} introduced by \cite{DBLP:conf/stoc/CharikarSV17}. In this model, like ours, only a small fraction of data is good (generated by $\p$), and to remedy this the algorithm is provided with a small set of $k$ pre-verified samples from $\p$ as side information. In our verification \emph{query} model one can obtain this side information using $O(k/\lambda)$ samples and queries, and so up to this multiplicative $\frac{1}{\lambda}$ term, our model is stronger. 

The key difference between these models is that \emph{verification queries} allow the algorithm the power to carefully \emph{choose samples to query} in order to extract as much signal as possible, as opposed to seeing $k$ independent samples from $\p$. In other words, the algorithm has some control over the good data it pays to see, and our results rely heavily on this additional power. Conceptually, our results show that strengthening the semi-verified model in this small and natural way significantly improves the capability to perform certain basic statistical tasks, namely distribution testing. (See \Cref{sec:related-work} for a more formal discussion on the relationship with semi-verified learning.) 

\paragraph{On Robust Statistics and Tolerant Testing.} Over the last two decades the design of noise-robust algorithms has been a central focus in statistics, learning theory, and algorithm design broadly. 
There are various ways of modeling noise, for example by manipulation of data \emph{labels} (random classification, Massart, etc), or by adding or subtracting data points entirely (Huber contamination), resulting in a data-set which violates the underlying distributional assumptions. 
The latter has led notably to the growing field of \emph{robust statistics} (see e.g. \cite{DBLP:books/cu/20/DiakonikolasK20} for a survey), as well as related models such as robust distribution testing \cite{canonne2023full} and \emph{tolerant testing} (see e.g. \cite{DBLP:journals/jcss/ParnasRR06}) from the property testing community. 
However, not surprisingly, these models face the fundamental information-theoretic limitation that the fraction of noisy data cannot exceed the desired testing or learning error, $\eps$. 
Our work aims to circumvent this limitation by adding the power to distinguish relevant data from contamination.





\subsection{Our Results: Distribution Testing with Verification Queries}

As an initial foray into the verification query model, we investigate basic testing questions for discrete probability distributions. Namely, we study the canonical problems in the area of \emph{distribution testing}: uniformity and closeness testing, which have been extremely well-studied in the literature (see e.g. \cite{goldreich1997property,DBLP:journals/jacm/GoldreichGR98,DBLP:conf/focs/BatuFFKRW01,DBLP:phd/us/Batu01,DBLP:journals/tit/Paninski08,DBLP:journals/jacm/BatuFRSW13, DBLP:conf/soda/ChanDVV14,diakonikolas2014testing,diakonikolaskane2016,valiant2017automatic,diakonikolas2018sample,DBLP:journals/cjtcs/DiakonikolasGPP19} for a sample of relevant works and \cite{Canonne:Survey:ToC} for a survey). We use the following definitions and terminology: 




\begin{itemize}
    \item A distribution $\rD$ is a $(1 - \lambda)$-contaminated source for $\p$ if $\rD = \lambda \p + (1 - \lambda) \q$ for some distribution $\q$.
    \item A statistical task with data domain $\calX$ and output space $\calY$ is a family of pairs $\calT = \set{(\distribution, G_{\distribution})}$ where $G_{\distribution} \subset \calY$ is the set of good outputs for distribution $\distribution$ over $\calX$.
    \item A (randomized) algorithm $\innerAlg$ is $\lambda$-robust against distributional contamination for statistical task $\calT$ with sample complexity $m$ and query complexity $q$ if for every $(\distribution_0, G_{\distribution_0}) \in \calT$ and every $(1 - \lambda)$-contaminated source for $\distribution_0$ (denoted $\distribution$),
    \begin{equation*}
        \Pr_{S \sim \distribution^{m}, \innerAlg} \left( \innerAlg(S) \in G_{\distribution_0} \right) > \frac{2}{3} \text{.}
    \end{equation*}
    Succinctly, we say $\innerAlg$ is a $\lambda$-distributionally robust algorithm with complexity $(m, q)$. We say $\innerAlg$ is $\delta$-correct if the output lies in $G_{\distribution_{0}}$ with probability $1 - \delta$.
\end{itemize}

Perhaps the most basic statistical task is bias estimation: estimating the bias of a coin up to error $\eps$.
It is well known that $\Theta(\eps^{-2})$ samples are necessary and sufficient.
As mentioned, there is a naive algorithm using $\bigO{\eps^{-2} \lambda^{-1}}$ samples and queries.
We show that this is in fact optimal.

\begin{restatable}{theorem}{BiasEstimationLB}
    \label{thm:bias-estimation-lb}
    \emph{(Bias estimation lower bound.)}
    Any $\lambda$-distributionally robust $\eps$-bias estimator requires $\bigOm{\frac{1}{\eps^2 \lambda}}$ samples and $\bigOm{\frac{1}{\eps^2 \lambda}}$ queries.
\end{restatable}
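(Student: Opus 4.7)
My plan is to prove \Thm{bias-estimation-lb} by constructing two indistinguishable instances $(\p, \q)$ whose $\p$-biases differ by $2\eps$, so that any $\eps$-bias estimator must tell them apart. The key trick is to choose the contamination $\q$ so that the mixture $\rD = \lambda\p + (1-\lambda)\q$ is \emph{identical} in both instances; then unverified samples carry no information and the lower bound reduces to bounding the information from verified samples alone.

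Concretely, take Instance~A with $\p_A$ of bias $\tfrac{1}{2}+\eps$ and $\q_A$ of bias $\tfrac{1}{2} - \lambda\eps/(1-\lambda)$, and Instance~B with $\p_B$ of bias $\tfrac{1}{2}-\eps$ and $\q_B$ of bias $\tfrac{1}{2} + \lambda\eps/(1-\lambda)$. A one-line computation gives $\rD_A = \rD_B = \mathrm{Unif}(\{0,1\})$, and for $\lambda\eps$ bounded away from $1/2$ both $\q$'s are valid distributions. Thresholding any purported $\eps$-estimator's output at $\tfrac{1}{2}$ correctly identifies the instance with probability at least $2/3$, so distinguishing A from B is no harder than $\eps$-bias estimation. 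For the per-step information content: an unverified sample contributes zero $\KL$ since $\rD_A = \rD_B$, and for a verified sample, conditional on its observed value $v \in \{0,1\}$, the origin bit is distributed as $\Bern(\lambda(1 \pm 2\eps))$, where the sign depends on the instance and on $v$. Bounding $\KL \leq \chi^2$ yields $O(\lambda\eps^2)$ per verified sample.

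Applying the chain rule for $\KL$ divergence along the algorithm's adaptive transcript, the total $\KL$ between the two instances is at most $O(q\lambda\eps^2)$, where $q$ is the number of verification queries. For constant success probability, Pinsker's inequality requires this to be $\Omega(1)$, yielding $q = \Omega(1/(\lambda\eps^2))$; since every query consumes a fresh sample, we also obtain $m \geq q = \Omega(1/(\lambda\eps^2))$. The main subtle point is the chain rule argument for adaptive queries, but this is standard since the conditional $\KL$ at each step, given the past, is uniformly bounded by the worst-case per-step $\KL$ computed above.
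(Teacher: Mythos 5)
Your construction is the same as the paper's: both set $\p$ to bias $\tfrac12\pm\eps$ and pick $\q$ so that $\rD_A = \rD_B = \mathrm{Unif}(\{0,1\})$ (writing $\alpha = \lambda/(1-\lambda)$, the paper's $\q$-bias $\alpha(\tfrac12\mp\eps) + (1-\alpha)\tfrac12$ equals your $\tfrac12 \mp \lambda\eps/(1-\lambda)$), and both derive the $\Bern(\lambda(1\pm 2\eps))$ posterior for the origin bit of a queried heads-sample. Where the routes diverge is the final step. The paper packages the argument as a black-box \emph{reduction}: given the two unknown coins of bias $\lambda \pm 2\lambda\eps$, it simulates an arbitrary $(m,q)$-estimator by locally generating uniform ``samples'' and answering each query with a fresh coin flip, then invokes a stand-alone coin-distinguishing lower bound (their \Cref{lemma:coin-distinguish-lb}, itself proved via mutual information). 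You instead argue directly on the transcript: each unverified sample contributes zero KL, each query contributes conditional $\KL \leq \chi^2 = O(\lambda\eps^2)$, and the chain rule plus Pinsker gives $q = \Omega(1/(\lambda\eps^2))$. The KL computation checks out — $\chi^2\big(\Bern(\lambda(1+2\eps))\,\|\,\Bern(\lambda(1-2\eps))\big) = \frac{(4\lambda\eps)^2}{\lambda(1-2\eps)(1-\lambda(1-2\eps))} = O(\lambda\eps^2)$ for $\eps,\lambda$ bounded away from the boundary — and the adaptive chain-rule step is standard since the per-step conditional KL bound holds uniformly over histories and the sample values themselves are i.i.d.\ $\mathrm{Unif}$ under both hypotheses. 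The deduction $m\geq q$ (each effective query targets a distinct drawn sample; repeat queries are free information that can be cached) matches the paper's implicit step. Your version is more self-contained and directly exposes the information budget; the paper's version is more modular and reuses a lemma needed elsewhere. Both are correct.
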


While verification queries do not allow a bias estimation algorithm to reduce the amount of data verification required, we show that this is in fact possible for testing uniformity and closeness over $[n]$.

\paragraph{Tight Sample vs. Query Tradeoffs.} 

Uniformity testing is perhaps the most fundamental question in distribution testing. The goal is to determine whether an input distribution $\p$ supported over $[n]$ is \emph{uniform} or $\eps$-far from uniform in total variation distance\footnote{The total variation distance between $\p,\q$ supported over $[n]$ is defined as $d_{\mathrm{TV}}(\p,\q) = \frac{1}{2}\norm{\p - \q}_1 = \frac{1}{2}\sum_{i=1}^n |\p[i]-\q[i]|$. 
We say that $\p$ is $\eps$-far from uniform if $d_{\mathrm{TV}}(\p,U_n) > \eps$ where $U_n$ denotes the uniform distribution over $[n]$.} given \emph{sample-access} to $\p$ (in the standard model). 
It is well-known that $\Theta(\sqrt{n}/\eps^2)$ samples are both necessary and sufficient \cite{DBLP:journals/tit/Paninski08}. 
One reason for its central importance is that it is at least as hard as testing whether $\p = \p^{\ast}$ for any \emph{given and fixed} $\p^{\ast}$ (identity testing) \cite{goldreich2020uniform}. 
In our setting, samples and queries can be simulated by an identical reduction, so that all of our uniformity testing algorithms imply identity testing algorithms with identical sample and query complexity up to constant factors (see \Cref{thm:identity-testing-reduction}).
In the verification query model, sample access is given instead to a mixture $\rD = \lambda \p + (1-\lambda)\q$ where $\p$ is the distribution we want to test, and $\q$ is an adversarially chosen source of contamination. 
In particular, it is possible that $\rD = U_n$ while either $d_{\mathrm{TV}}(\p,U_n) > \eps$ or $\p = U_n$, and so it is impossible to distinguish with samples alone. As mentioned, an algorithm using $\smash{O(\frac{\sqrt{n}}{\eps^2 \lambda})}$ samples and queries is trivial to obtain using known algorithms for standard testing. We show that by increasing the number of samples $m$, there is an algorithm with query complexity that improves by a factor $\frac{1}{m}$.


\begin{restatable}{theorem}{SimpleUniformityAlg}
    \label{thm:simple-uniformity-alg} \emph{(Uniformity testing upper bound.)}
    For all $m \leq O(n), \eps,\lambda \in (0,1)$, there is a $\lambda$-distributionally robust $\varepsilon$-uniformity (resp. identity) tester with sample complexity $\bigO{m + \frac{\sqrt{n}}{\eps^2 \lambda}}$ and query complexity 
    $\bigO{\frac{n}{m \eps^{4} \lambda^2}}$.
\end{restatable}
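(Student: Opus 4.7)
The plan is to design an estimator for $\lambda^2 \|\p\|_2^2$ computable from $m$ samples of $\rD$ and $q = O(n/(m\eps^4\lambda^2))$ verification queries, threshold it against the uniform benchmark $\lambda^2/n$, and show that its variance is at most $\lambda^4\eps^4/n^2$ (the squared signal gap), so that the tester succeeds with constant probability. The identity-testing upper bound will then follow from the stated reduction to uniformity in \Cref{thm:identity-testing-reduction}, which preserves sample and query complexities up to constants.

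First, I would draw $M := m + C\sqrt n/(\lambda\eps^2)$ samples $X_1,\ldots,X_M \sim \rD$ and record the frequency vector $(f_1,\ldots,f_n)$. At each element $i$, the hidden number of samples actually originating from $\p$ is $g_i \sim \Binom(f_i,\, \lambda\p[i]/\rD[i])$, and the target quantity $\sum_i g_i(g_i-1)$ has expectation $M(M-1)\lambda^2\|\p\|_2^2$, with signal gap $\Theta(M^2\lambda^2\eps^2/n)$ between $\p=U_n$ and $\p$ being $\eps$-far. To estimate $g_i(g_i-1)$ without paying $f_i$ queries per element, the algorithm Poisson-subsamples the $M$ samples at rate $q/M$ and queries the verification oracle on each subsampled point. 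Letting $q_i$ be the number of queries at element $i$ and $a_i$ the number returning ``relevant'', the hypergeometric-unbiased statistic
\[
\hat G_i \;:=\; \frac{a_i(a_i-1)\, f_i(f_i-1)}{q_i(q_i-1)} \qquad (\text{for } q_i \geq 2, \text{ else } 0)
\]
satisfies $\EX[\hat G_i\mid f_i,g_i,\, q_i\geq 2] = g_i(g_i-1)$. The algorithm then outputs $\hat V := \sum_i \hat G_i$ (with a straightforward correction by $\Pr[q_i\geq 2\mid f_i]$ to remove the truncation bias) and accepts iff $\hat V/(M(M-1)) \leq \lambda^2(1+c\eps^2)/n$.

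The main obstacle is the variance analysis. Decomposing $\Var[\hat V] = \EX[\Var[\hat V\mid f,g]] + \Var[\EX[\hat V\mid f,g]]$, the second term equals $\Var[\sum_i g_i(g_i-1)]$, the standard variance of the collision count of the ``virtual'' $\p$-subset; this is $O(M^2\lambda^2\|\p\|_2^2 + M^3\lambda^3\|\p\|_3^3)$ and is controlled precisely when $M\lambda \geq \sqrt n/\eps^2$, which produces the $\sqrt n/(\lambda\eps^2)$ term in the sample complexity. The first, query-induced term is the technical heart of the proof: writing $\mu_i := g_i q_i/f_i$ (the conditional mean of $a_i$), one bounds
\[
\Var[\hat G_i\mid f_i,g_i,q_i] \;\leq\; O\!\left(\frac{f_i^4}{q_i^4}(\mu_i^2 + \mu_i^3)\right) \;=\; O\!\left(\frac{f_i^2 g_i^2}{q_i^2} + \frac{f_i g_i^3}{q_i}\right),
\]
and then evaluates $\sum_i \EX[\Var[\hat G_i\mid f_i,g_i,q_i]]$ via fourth-moment computations for the hypergeometric and multinomial distributions. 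The worst case for $\|\rD\|_2^2$---when the adversarial $\q$ concentrates on a single element---inflates individual $f_i$'s, but this is offset by the corresponding shrinkage of $\beta_i := \lambda\p[i]/\rD[i]$ at that element, so the two effects cancel after summing. Combined with the fact that $\EX[f_i^2\beta_i^2] = \EX[g_i^2]$ obeys the same bound as in the clean (non-contaminated) uniformity tester, careful bookkeeping yields a variance bound of order $\lambda^2/(nMq)$, which matches the squared signal gap $\lambda^4\eps^4/n^2 = \lambda^2/(nMq)$ exactly when $Mq = \Theta(n/(\lambda^2\eps^4))$---equivalently, $q = O(n/(m\eps^4\lambda^2))$ after substituting $M = m + C\sqrt n/(\lambda\eps^2)$.
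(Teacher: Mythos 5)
Your proposal replaces the paper's ``query a random collision pair'' primitive with a ``Poisson-subsample individual samples, then query each subsampled point'' primitive, and builds a per-bucket hypergeometric estimator of $g_i(g_i-1)$. This is a genuinely different route from the paper (which flattens $\rD$ to $\norm{\rD}_2^2\le 1/m$, then queries both endpoints of a uniformly random collision in $S_{\rD}$ and reduces to estimating the bias $c(S_{\p})/c(S_{\rD})$), but the proposed estimator does not achieve the claimed query complexity. The variance bound $\lambda^2/(nMq)$ is wrong, and it is not a bookkeeping issue: the per-sample subsampling scheme wastes queries on singletons, and this is a structural obstruction.

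Concretely, after the truncation-bias correction (dividing $\hat G_i$ by $p_i := \Pr[q_i\ge 2\mid f_i]$), a direct computation in the sparse regime $q_i\in\{0,1,2\}$ (which is the relevant one whenever $q\rD[i]\ll 1$, i.e.\ for essentially every bucket when $q\ll n$) gives $\Var[\hat G_i/p_i\mid f_i,g_i]\approx g_i^2 f_i^2/p_i$ with $p_i\approx (q f_i/M)^2/2$, so $\Var[\hat G_i/p_i]\approx 2 g_i^2 M^2/q^2$. The $f_i$-dependence indeed cancels (your $\beta_i$ cancellation is real \emph{per bucket}), but summing over $i$ gives $\Var[\hat V]\approx \frac{M^2}{q^2}\sum_i g_i^2$, and $\sum_i g_i^2 = c(S_{\p}) + m_{\p}\approx M^2\lambda^2/n + M\lambda$. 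For $M < n/\lambda$ (which holds whenever $m\le O(n)$), the \emph{diagonal} term $M\lambda$ dominates, yielding $\Var[\hat V]\approx M^3\lambda/q^2$, i.e.\ $\Var[\hat V/M^2]\approx \lambda/(Mq^2)$. This exceeds your claimed $\lambda^2/(nMq)$ by a factor $n/(\lambda q)\gg 1$ for $q\ll n/\lambda$. Matching the squared signal gap $M^4\lambda^4\eps^4/n^2$ then forces $q\gtrsim n/(\sqrt{M\lambda^3}\,\eps^2)$, which at the low end $M=\Theta(\sqrt n/(\lambda\eps^2))$ gives $q=\Omega(n^{3/4}/(\lambda\eps))$, strictly worse than the claimed $O(\sqrt n/(\lambda\eps^2))$ once $n>1/\eps^4$. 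Intuitively, a random $q/M$-fraction subsample of $S_{\rD}$ contains only a $(q/M)^2$-fraction of the collisions, so after flattening (which you also omit, but which does not rescue this issue) the subsample typically contains \emph{zero} collisions when $q^2/m\ll 1$. The fix is to sample uniformly at random from the set of collision pairs of $S_{\rD}$ and spend two queries per pair, which is exactly \Cref{clm:c-Sp-query}; you would also need the flattening step (\Cref{lemma:flattening}) to bound the total collision count and the $\hat m_{\p}$ estimate (\Cref{clm:m-p-query}) to set the threshold correctly.
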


When $m = \Theta(\frac{\sqrt{n}}{\eps^2\lambda})$, we recover the trivial algorithm, whereas when $m = \Theta(n)$ we obtain a query complexity $O(\frac{1}{\eps^4 \lambda^2})$, \emph{independent of} $n$. 
In fact, we show that this sample-query tradeoff is optimal in the sub-linear regime i.e., $m \leq cn$ where $c > 0$ is a sufficiently small constant.
Our lower bound holds even when the algorithm is given $\lambda$, which we do not assume in our upper bounds.

\begin{restatable}{theorem}{UniformityTestingLB}
    \label{thm:dist-contamination-lb-eps} \emph{(Uniformity testing lower bound.)}
    Let $\lambda \leq \frac{1}{2}$ and $\eps < 0.1$.
    Any $\lambda$-distributionally robust $\eps$-uniformity tester with $m \leq n/100$ samples requires $\bigOm{\frac{n}{m \eps^{4} \lambda^{2}}}$ queries.
\end{restatable}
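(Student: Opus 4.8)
The plan is a two-point (Le Cam) argument, where the two ``points'' are actually priors over instances. Fix a small constant $c_0<1/100$. I construct a prior over \emph{yes}-instances (all with $\p = U_n$) and a prior over \emph{no}-instances (all with $d_{\mathrm{TV}}(\p, U_n) > \eps$), together with a choice of the contamination $\q$ in each case, with the defining property that the induced mixture $\rD = \lambda\p + (1-\lambda)\q$ has \emph{exactly the same law} under the two priors. Concretely: partition $[n]$ into a uniformly random ``heavy'' set $H$ of size $\Theta(m)$ and its complement, the ``light'' set. On the light cells I put the usual Paninski-type deviation — pair them up and, for a uniformly random orientation $\sigma$, give $\p$ mass $\tfrac{1\pm\Theta(\eps)}{n}$ on the two halves of each pair, so that $d_{\mathrm{TV}}(\p,U_n) = \eps$; on the heavy cells, and on a yes-instance, $\p$ agrees with $U_n$. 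I then choose $\q$ so that $\rD$ has a \emph{fixed} profile — density $\Theta(1/n)$ on the light cells and density $\Theta(1/m)$ on the heavy cells — independent of $\sigma$ and of yes versus no; the parameters are set so that $\q$ is a genuine distribution (this is where $\lambda\le\tfrac12$ and $\eps<0.1$ are used). Since $\rD$ has the same law under both priors, the sequence of $m$ samples \emph{by itself reveals nothing} about yes versus no, so I may condition on the samples and analyze only the verification answers.

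The second ingredient is that the query transcript factorizes over cells and, crucially, that a \emph{single} query is worthless. A query to a sample in cell $i$ returns $\Bern\!\big(\tfrac{\lambda\p[i]}{\rD[i]}\big)$, independently across queried samples given $\sigma$, with independent orientations across light cells; heavy cells carry no orientation, and there $\tfrac{\lambda\p[i]}{\rD[i]}$ is the \emph{same} function of $i$ under yes and no, so queries to heavy cells are completely uninformative. On a light cell the relevant-probability is $\lambda(1\pm\Theta(\eps))$ on a no-instance versus the constant $\lambda$ on a yes-instance, but marginalizing over $\sigma_i$ a single answer from a light cell has law $\Bern(\lambda)$ on \emph{both} sides. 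Hence the transcript laws differ only through light cells in which at least two queried samples collide, and a short computation gives a per-such-cell $\chi^2$ contribution of $\Theta(\lambda^2\eps^4)$ per ``repeat'' query. By a chain-rule / data-processing bound that absorbs the adaptivity, $\chi^2(P_{\mathrm{no}}\,\|\,P_{\mathrm{yes}}) \le \exp\!\big(O(\lambda^2\eps^4)\cdot W\big)-1$, where $W$ is the total number of queries that hit a light cell already queried before. It remains to bound $W$. Among all cells occupied by at least two of the tester's samples, a $1-O(m/n)$ fraction lie in the $\approx m$ heavy cells (their density $\Theta(1/m)$ makes them dominate the collision count), so if the tester were oblivious to the hidden partition only an $O(m/n)$ fraction of its ``collision budget'' would lie in light cells, giving $W = O\!\big(\tfrac mn q\big)$ (and $W = O(m^2/n)$ regardless of $q$). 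Plugging in, $\chi^2(P_{\mathrm{no}}\|P_{\mathrm{yes}})$ — hence the total-variation distance of the transcripts — is $o(1)$ unless $\tfrac mn q \gtrsim \tfrac{1}{\lambda^2\eps^4}$, i.e.\ unless $q = \Omega\!\big(\tfrac{n}{m\,\eps^4\lambda^2}\big)$, which is the claim. (In the complementary regime $m\le c_0\sqrt n/(\eps^2\lambda)$ one has $W = O(m^2/n) = o(1/\lambda^2\eps^4)$, so no tester succeeds at any query budget, matching the sample term of \Thm{simple-uniformity-alg}.)

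The step I expect to be the real work is making the bound $W = O(\tfrac mn q)$ rigorous against an \emph{adaptive} tester. The subtlety is that, although the samples do not leak the light/heavy partition, the \emph{answers} to earlier queries carry (yes/no-independent) evidence about which cells are light, so a clever tester can bias its later queries toward light cells; one must show that the cost of acquiring that evidence — essentially one ``wasted'' query per discovered light cell — exactly cancels the gain, so that the number of genuinely useful repeat-queries is still only $O(\tfrac mn q)$. This forces the heavy-part density to be tuned so that the first query to any cell is equally (un)informative about yes versus no whether that cell is heavy or light, and requires the Paninski amplitude to be chosen so that $\p$ is exactly $\eps$-far with the deviation concentrated in the light part; the whole thing then goes through a careful martingale/chain-rule decomposition of $\KL(P_{\mathrm{no}}\|P_{\mathrm{yes}})$ (or $\chi^2$) along the adaptively chosen query sequence. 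The remaining pieces — the per-cell $\chi^2 = \Theta(\lambda^2\eps^4)$ estimate and a birthday-type concentration of the $\ge 2$-occupancy counts (using $m\le n/100$) — are routine.
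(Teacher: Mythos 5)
Your construction and accounting match the paper's proof in all essentials: the same heavy/light hard instance with the mixture $\rD$ identically distributed under yes and no, the same observation that heavy-cell queries and single light-cell queries are uninformative, and the same per-useful-query information budget of order $\eps^4\lambda^2$ arriving at rate $m/n$. The paper formalizes this via mutual information rather than $\chi^2$: it Poissonizes so the buckets are independent conditioned on the hidden bit $X$, shows $I(X:T)=\Omega(1)$ for any correct tester, and proves a per-queried-bucket bound $I(X:A_i,B_i)=O(\eps^4\lambda^2 m/n)$ (\Cref{lemma:uniformity-one-bucket-mi}), where the $m/n$ factor emerges from the heavy buckets dominating the denominator $h_y$ exactly as in your ``most collisions are heavy'' heuristic.

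The one genuine gap is the step you yourself flag: the bound $W=O(mq/n)$ on useful repeat queries for an \emph{adaptive} tester. Your argument for it is stated only for testers oblivious to the partition, and your proposed fix (a martingale/chain-rule argument showing the cost of locating a light cell cancels the gain) is a plausible but nontrivial piece of work that you have not carried out; as written, an adversarial reading would object that a tester could spend its first $q/2$ queries identifying light collision cells and its second $q/2$ queries exclusively on repeats within them, and your per-repeat accounting alone does not rule this out. The paper sidesteps this entirely with a simple strengthening of the oracle: it grants the tester, for one query, the sources of \emph{all} samples in the queried bucket, and then lower-bounds the number of \emph{distinct buckets} queried (\Cref{lemma:uniformity-bucket-query-lb}). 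With that device there is no distinction between first and repeat queries to a cell, each queried bucket contributes at most $O(\eps^4\lambda^2 m/n)$ information no matter how adaptively it was chosen (the per-bucket bound is computed marginally over the bucket's type), and subadditivity of mutual information over the conditionally independent buckets finishes the proof. You should either adopt this strengthening or actually prove your cancellation lemma; without one of the two, the argument is incomplete precisely where you suspected.
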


Next, we study the problem of \emph{closeness testing}: given sample access to two distributions $\p_1,\p_2$ supported over $[n]$, distinguish between the case of $\p_1 = \p_2$ vs. $d_{\mathrm{TV}}(\p_1,\p_2) > \eps$. In the standard model, the sample-complexity of this question is well-known to be $\smash{\Theta(\frac{n^{2/3}}{\eps^{4/3}} + \frac{\sqrt{n}}{\eps^2})}$ (\cite{DBLP:journals/jacm/BatuFRSW13, DBLP:conf/soda/ChanDVV14, diakonikolaskane2016}). In the verification query model, we have sample and verification query access to \emph{two mixtures} $\rD_1 = \lambda \p_1 + (1-\lambda)\q_1$ and $\rD_2 = \lambda \p_2 + (1-\lambda)\q_2$ and the objective\footnote{Note that we are assuming here that the two sources share a common mixture parameter $\lambda$, which may not be the case in some settings. We give a reduction in \Cref{lem:unequal-mixture-params} showing that we can reduce from the case of two different mixture parameters $\lambda_1,\lambda_2$ to a common one obtained by their product, $\lambda_1\lambda_2$. Therefore, this is without loss of generality, up to a potentially quadratic blowup in the dependence on $\lambda$. An interesting question is whether there is a more efficient reduction to common mixture parameter $\min(\lambda_1,\lambda_2)$ instead of the product, which would be best possible.} is to test closeness of $\p_1,\p_2$. Again, $\smash{\Theta((\frac{n^{2/3}}{\eps^{4/3}} + \frac{\sqrt{n}}{\eps^2})\frac{1}{\lambda})}$ samples and queries is trivial to obtain. Similar to our results for uniformity testing, we show that the query complexity of testing closeness improves as the number of samples is increased, this time by a factor $\frac{1}{m^2}$. 

\begin{restatable}{theorem}{ClosenessTestingUB}
    \label{thm:simple-closeness-alg} \emph{(Closeness testing upper bound.)}
   For all $m \leq O(n), \eps,\lambda \in (0,1)$, there is a $\lambda$-distributionally robust $\eps$-closeness tester with sample complexity $\bigO{m + \frac{n^{2/3}}{\eps^{4/3}\lambda} + \frac{1}{\eps^4 \lambda^3}}$ and query complexity $\bigO{\frac{n^2}{m^2 \eps^{4}\lambda^3}}$.
\end{restatable}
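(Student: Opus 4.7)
The plan is to build a Chan--Diakonikolas--Valiant--Valiant (CDVV) style estimator of $\|\p_1-\p_2\|_2^2$ that couples the $m$ (mostly unverified) samples with a Bernoulli-subsampled batch of verification queries. By Poissonizing, drawing $m$ samples from $\rD_j$ produces independent bin counts $N_i^{(j)}\sim\Poi(m\rD_j[i])$, and independently verifying each sample with probability $p$ produces ``clean'' counts $C_i^{(j)}\sim\Poi(pm\lambda\p_j[i])$ that are independent across bins. I would choose $p=\Theta(n^2/(m^3\eps^4\lambda^3))$ so the expected query budget matches the target $O(n^2/(m^2\eps^4\lambda^3))$. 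In parallel, I would draw an auxiliary batch of $O(n^{2/3}/(\eps^{4/3}\lambda)+1/(\eps^4\lambda^3))$ samples from each mixture and verify every one, providing a baseline honest Poisson sample from each $\p_j$. This auxiliary batch contributes the non-$m$ terms in the advertised sample complexity and handles the bins for which the main statistic's denominator is unstable.

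The base unbiased estimator would be
\[Z=\sum_i\bigl[(C_i^{(1)}-C_i^{(2)})^2-C_i^{(1)}-C_i^{(2)}\bigr],\]
which has $\EX[Z]=(pm\lambda)^2\|\p_1-\p_2\|_2^2$. Analyzing $Z$ directly forces $pm\lambda\gtrsim n^{2/3}/\eps^{4/3}$ and merely recovers the trivial query bound. To beat it I would apply the CDVV reweighting trick, normalizing each bin's contribution by a Poissonization-friendly function of $N_i^{(1)}+N_i^{(2)}$ rather than the verified-only denominator $C_i^{(1)}+C_i^{(2)}$. Since $N_i^{(j)}$ concentrates near $m\rD_j[i]\geq m\lambda\p_j[i]$, the effective denominator is a factor $m\lambda$ larger than in the verified-only version, which amplifies each bin's signal-to-noise ratio by the same factor and yields the advertised sample-query trade-off.

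The main obstacle will be the variance analysis of this reweighted statistic. Since $C_i^{(j)}$ is a subsample of $N_i^{(j)}$, the two are not independent; I would decompose each Poissonized bin into three disjoint independent sub-counts---verified-clean, verified-contaminated, and unverified---and express both statistics in terms of these to obtain clean fourth-moment bounds. A second technical challenge is controlling the contribution of light bins where the denominator is ill-behaved: I would threshold on $N_i^{(1)}+N_i^{(2)}$ and delegate light bins to a black-box CDVV tester run on the fully-verified auxiliary sample, whose $O(n^{2/3}/(\eps^{4/3}\lambda))$ clean samples suffice for that restricted task, while the $1/(\eps^4\lambda^3)$ term supplies the correction required for a constant number of very heavy bins where the asymptotic CDVV variance bound must be replaced by a Bernstein-type estimate. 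Finally, a guess-and-double scheme on $\lambda$, analogous to the one used in the uniformity tester, would remove the need to know $\lambda$ in advance.
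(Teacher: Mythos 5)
There are two concrete gaps. First, your auxiliary batch already breaks the query budget: you propose to draw $O\!\left(\frac{n^{2/3}}{\eps^{4/3}\lambda}+\frac{1}{\eps^4\lambda^3}\right)$ extra samples and \emph{verify every one}, but those verifications are queries, and $\frac{n^{2/3}}{\eps^{4/3}\lambda}$ exceeds the allowed $O\!\left(\frac{n^2}{m^2\eps^4\lambda^3}\right)$ whenever $m \gg \frac{n^{2/3}}{\eps^{4/3}\lambda}$, i.e.\ throughout the regime the theorem is about (in the statement these terms are samples, not queries). Second, and more fundamentally, the claimed amplification from normalizing by $N_i^{(1)}+N_i^{(2)}$ is illusory. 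Up to concentration $N_i^{(1)}+N_i^{(2)}\approx m(\rD_1[i]+\rD_2[i])$ is essentially a deterministic weight, so dividing a bin's term by it rescales that bin's signal and noise together; CDVV-style reweighting equalizes variance across bins but cannot create signal that is absent from the numerator. Since your numerator uses only the verified counts $C_i^{(j)}\sim\Poi(pm\lambda\,\p_j[i])$, the statistic's expectation in the far case is of order $M^2\sum_i\frac{(\p_1[i]-\p_2[i])^2}{m(\rD_1[i]+\rD_2[i])}\gtrsim \frac{M^2\eps^2}{m}$ with $M=pm\lambda$, while in the close case the surviving bins (those with $\rD[i]\gtrsim 1/m$, of which there can be $\Theta(m)$, with $\p[i]$ as large as $\rD[i]/\lambda$) contribute variance on the order of $\frac{M^2}{m\lambda^2}+\frac{M}{\lambda}$. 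Requiring the squared signal to dominate forces $q=pm\gtrsim\max\!\left(\frac{\sqrt m}{\eps^2\lambda^2},\frac{m^{2/3}}{\eps^{4/3}\lambda^{4/3}}\right)$, which for $m=\Theta(n)$ is polynomial in $n$, whereas the target there is $O\!\left(\frac{1}{\eps^4\lambda^3}\right)$, independent of $n$. Intuitively, uniform Bernoulli-$p$ verification of samples only exposes the $\approx q^2\lambda^2\norm{\p_b}_2^2$ clean--clean pairs inside the verified subsample, and no deterministic reweighting by the unverified counts can recover the pairwise signal of the other clean samples that the subsampling discarded; the same obstruction already shows up for uniformity, where verifying a random subsample needs $\Omega(\sqrt n/\eps^2\lambda)$ queries while the paper gets $n/(m\eps^4\lambda^2)$.

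The paper's route is structurally different precisely on this point. It first applies Diakonikolas--Kane flattening with $O(m)$ samples so that $\norm{\rD_b}_2^2\le O(1/m)$ and $\norm{\p_b}_2^2\le O(1/\lambda m)$, then takes $\Theta(1/\lambda)$ times the clean budget of samples and forms the DGPP collision statistic $c(S_{\p_1})+c(S_{\p_2})-\frac{m-1}{m}c(S_{\p_1},S_{\p_2})$ over \emph{all} clean samples; verification queries are spent only on uniformly random \emph{colliding pairs}, so each of the three clean collision counts becomes a bias-estimation problem in which the coin probability is $O(\lambda)$ and the total number of collisions is $O(m/\lambda^2)$ thanks to flattening, giving exactly $O\!\left(\frac{n^2}{m^2\eps^4\lambda^3}\right)$ queries for the needed additive accuracy. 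A further step you would also need (and which is nontrivial in the paper, \Cref{lemma:unequal-error}) is controlling the error caused by the random and unequal numbers of clean samples behind $\p_1$ and $\p_2$, which is where the extra $\frac{1}{\eps^4\lambda^3}$ sample term is actually used. If you want to salvage your scheme, the fix is to move the queries from uniformly chosen samples to uniformly chosen collisions, at which point you are essentially reconstructing the paper's argument.
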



We establish that the sample-query trade-off achieved by the above algorithm is tight for $m \leq cn$. 
Again, this lower bound holds even when the algorithm is given $\lambda$.

\begin{restatable}{theorem}{ClosenessTestingLB}
    \label{thm:dist-contam-closeness-lb} \emph{(Closeness testing lower bound.)}
    Let $\lambda \leq \frac{1}{2}$ and $\eps < 0.1$.
    Any $\lambda$-distributionally robust $\eps$-closeness tester with $m \leq n/100$ samples requires $\bigOm{\frac{n^2}{m^2 \eps^{4} \lambda^{3}}}$ queries.
\end{restatable}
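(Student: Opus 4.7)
My plan is to prove this lower bound by adapting the LeCam two-point method used for the uniformity lower bound (\Thm{dist-contamination-lb-eps}) to the two-stream closeness setting. The overall approach is to define YES and NO priors on instances $(\p_1,\p_2,\q_1,\q_2)$, Poissonize the sample and query processes, and then upper bound the chi-squared divergence between the resulting observation distributions, invoking LeCam's inequality to turn a small chi-squared into impossibility of testing.

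For the hard instance, I would partition $[n]$ into pairs and draw $\boldsymbol{\sigma}\in\{\pm 1\}^{n/2}$ uniformly, defining a Paninski-style $\p_1(2i-1)=(1+\sigma_i\eps)/n$, $\p_1(2i)=(1-\sigma_i\eps)/n$. Under the YES prior set $\p_2=\p_1$, and under the NO prior define $\p_2$ by flipping the sign within each pair, yielding $d_{\mathrm{TV}}(\p_1,\p_2)=\eps$. The contaminants $\q_1,\q_2$ are then chosen adversarially (possibly as a randomized prior) so that the sample marginals $\rD_1,\rD_2$ agree exactly under YES and NO---the simplest choice is $\rD_1=\rD_2=U_n$, which is feasible whenever $\lambda\le 1/2$, matching the paper's regime. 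This ensures that unqueried samples alone carry no distinguishing information, so the entire difficulty must be paid in queries.

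After Poissonizing both samples and queries, the observations at each position $x$ in stream $j$ reduce to $X_{x,j}\sim\mathrm{Poi}(\lambda K_{x,j}(1+\xi_x^{(j)}\eps))$, where $K_{x,j}$ is the (adaptively chosen) number of queries at that position and $\xi_x^{(j)}\in\{\pm 1\}$ is the Paninski sign induced by $\boldsymbol{\sigma}$ and the YES/NO assignment. Using independence across streams and pairs under the Poisson reference measure, the chi-squared divergence at each pair $i$ factorizes across streams as $(P_{\mathrm{YES}}-P_{\mathrm{NO}})/P_0 = \tfrac{1}{2}\Delta_{i,1}\Delta_{i,2}$ where $\Delta_{i,j}$ is the $+/-$ sign-likelihood-ratio difference in stream $j$. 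A direct Taylor expansion around the $\xi=0$ Poisson gives the per-stream second moment $\EX_{P_0}[\Delta_{i,j}^2]\approx 4\lambda\eps^2 K_{i,j}$, and hence $\chi^2_i\lesssim \lambda^2\eps^4 K_{i,1}K_{i,2}$ and the global bound $\chi^2\lesssim \lambda^2\eps^4\sum_i K_{i,1}K_{i,2}$. Requiring $\chi^2=O(1)$ translates directly into an upper bound on $\sum_i K_{i,1}K_{i,2}$, which must in turn be matched by the worst-case budget-feasible query allocation under the constraints $K_{i,j}\le N_{i,j}$, $N_{i,j}\sim\mathrm{Poi}(2m/n)$, and $\sum_i K_{i,j}\le q_j$.

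The main obstacle is this last step: showing that \emph{no} adaptive query strategy can push $\sum_i K_{i,1}K_{i,2}$ above roughly $q\lambda(m/n)^2$, so that the detection threshold $\sum_i K_{i,1}K_{i,2}\ge \Omega(1/(\lambda^2\eps^4))$ forces $q\ge\Omega(n^2/(m^2\eps^4\lambda^3))$. A naive analysis bounding $\sum K_{i,1}K_{i,2}$ only by the sample coverage $\lesssim m^2/n$ or the birthday-style query coverage $\lesssim q^2/n$ yields the weaker uniformity-style bound $q\gtrsim n/(m\eps^4\lambda^2)$, and misses the extra $n/(m\lambda)$ factor. Extracting this factor will require combining two effects: (i) a coverage argument showing that with $m\ll n$ samples and $\rD_1=\rD_2=U_n$, only $\sim m^2/n$ pairs are cross-covered with samples in both streams, so queries outside this small set are wasted; and (ii) a variance-inflation argument that the adversarial contamination contributes $\Theta(\lambda)$ Poisson variance per query, so each query only carries $\Theta(\lambda)$ ``bits'' about the paired signs. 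Carrying out this combined coverage-plus-variance argument rigorously against arbitrary adaptive strategies---possibly by further randomizing $\q_1,\q_2$ over a prior that defeats concentration strategies---is the crux of the proof and the step where I expect the bulk of the technical work to lie.
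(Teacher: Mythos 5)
Your overall framework (Poissonization, chi-squared/mutual-information bound per position, LeCam) is in the same spirit as the paper's proof, but the hard instance you propose is genuinely too weak to give the stated bound, and the step you flag as "the crux" cannot be repaired within that construction. With $\rD_1=\rD_2=U_n$ and a Paninski-style $\pm\eps/n$ perturbation of $\p_1,\p_2$, the tester \emph{sees} which domain elements received samples in both streams (there are $\approx m^2/n$ of them), and verification queries are made on those observed samples, so no query is ever "wasted" on an uncovered pair. For each doubly covered element, one query per stream yields a pair of $\mathrm{Bern}(\lambda(1\pm\eps))$ outcomes whose cross-stream correlation carries $\Theta(\lambda^2\eps^4)$ of chi-squared signal; hence there is an algorithm for your instance that uses only $O(1/(\lambda^2\eps^4))$ queries (query a random subset of doubly covered elements and test the empirical correlation), provided $m \gtrsim \sqrt{n}/(\lambda\eps^2)$ so that enough doubly covered elements exist. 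So the best lower bound your construction can certify is $\bigOm{1/(\eps^4\lambda^2)}$, missing both the $n^2/m^2$ factor and one factor of $1/\lambda$; your hoped-for inequality $\sum_i K_{i,1}K_{i,2}\lesssim q\lambda(m/n)^2$ is simply false against the adaptive strategy that concentrates all queries on the visible doubly covered elements, where $\sum_i K_{i,1}K_{i,2}=\Theta(q)$.

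The missing idea is the dilution structure in the paper's instance (\Cref{def:closeness-hard-instance}): besides the informative light buckets of mixture mass $\Theta(\lambda\eps/n)$, there are $\Theta(m)$ heavy buckets of mixture mass $\Theta(1/m)$ (including a type on which all mass comes from $\p$) arranged so that, \emph{before querying}, a bucket that is sampled in both streams looks the same whether it is heavy (uninformative) or light (informative); the conditional probability that a doubly sampled, queried bucket is actually informative is only $\approx \lambda^2\eps^2 m/n$, and because this event is confusable with the heavy-bucket event the per-bucket information is quadratically, not linearly, small in that probability, giving $I(X:A_i,B_i)=\bigO{\eps^4\lambda^3 m^2/n^2}$ (\Cref{lemma:closeness-one-bucket-mi}) and hence $q=\bigOm{n^2/(m^2\eps^4\lambda^3)}$. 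A uniform mixture has no such confusers, so each query retains $\Theta(\lambda^2\eps^4)$ of signal no matter how small $m/n$ is; the "coverage plus $\Theta(\lambda)$ variance per query" argument you sketch cannot substitute for this, because coverage only caps the total extractable information (a constraint on $m$, not on $q$) rather than degrading the information per query. To fix the proof you would need to redesign the prior so that the mixtures place $\Theta(m)$ buckets of mass $\Theta(1/m)$ carrying no signal, and then run your chi-squared (or the paper's MI) computation against the resulting mixture of heavy and light explanations for each queried bucket; the paper also sidesteps query adaptivity by giving the tester, for free, all sources in any queried bucket and lower bounding the number of distinct queried buckets, a simplification you would likely also need.
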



\paragraph{Algorithms with Optimal Query Complexity.} Notice that the upper bounds on the query complexity of uniformity and closeness testing achieved by \Cref{thm:simple-uniformity-alg} and \Cref{thm:simple-closeness-alg} are $O(\eps^{-4}\lambda^{-2})$ and $O(\eps^{-4}\lambda^{-3})$ at best, respectively. (Achieved using $O(n)$ samples). We have established in the corresponding lower bounds that this is optimal given a budget of $O(n)$ samples, but it is natural to ask whether the query complexity can be improved further using $m \gg n$ samples. We show in the following theorem that this is indeed the case. In fact, we give a closeness tester (and hence also an identity tester, and uniformity tester) using $\widetilde{O}(\eps^{-2}\lambda^{-1})$ queries, matching our lower bound for bias estimation (\Cref{thm:bias-estimation-lb}) up to $\poly\log(\eps^{-1}\lambda^{-1})$ factors. This tester follows a different strategy which we outline in \Cref{sec:tech-sup}.

\begin{restatable}{theorem}{ClosenessTestingUB-high}
    \label{thm:simple-closeness-alg-high} \emph{(Query-optimal closeness testing, \Cref{thm:closenessUB-high} informal.)}
   For all $\eps,\lambda \in (0,1)$, there is a $\lambda$-distributionally robust $\eps$-closeness tester with sample complexity $\widetilde{O}(n/\eps^2\lambda)$ and query complexity $\widetilde{O}(1/\eps^2\lambda)$.
\end{restatable}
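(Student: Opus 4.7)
The query complexity $\widetilde O(1/\eps^2\lambda)$ matches the bias-estimation lower bound of \Cref{thm:bias-estimation-lb} up to polylogarithmic factors, so essentially every query must be near-optimally informative. This forces a strategy different from the sub-linear-sample testers: my plan is to decouple the task by first using the generous sample budget to \emph{learn the mixtures} $\rD_1,\rD_2$ without spending queries, and then exploit the known PDFs to drive a query-parsimonious test, in analogy with the ``PDF-of-mixture known'' setting the introduction highlights as a way to circumvent the earlier sample-query trade-off lower bounds.

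\textbf{Step 1 (learn the mixtures).} Draw $M = \widetilde O(n/\eps^2\lambda)$ samples from each $\rD_i$ without querying, and form the empirical distributions $\hat{\rD}_i$. Standard multinomial/Bernstein concentration gives sharp per-element estimates: for every $j$ with $\rD_i[j] \gtrsim \mathrm{polylog}(n)\cdot\eps^2\lambda/n$, $|\hat{\rD}_i[j]-\rD_i[j]| \ll \rD_i[j]$, while elements below this cutoff carry total mass $\ll \eps\lambda$ and can be truncated out of the test statistic.

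\textbf{Step 2 (query-efficient test using $\hat{\rD}_i$).} The key observation is that, conditional on a sample $x \sim \rD_i$ landing at element $j$, the verification query returns ``relevant'' as an independent Bernoulli with mean exactly $\lambda\,\p_i[j]/\rD_i[j]$. Hence, rescaling the $\{0,1\}$-valued query outcome at a sample observed at $j$ by $\hat{\rD}_i[j]/\lambda$ yields an (approximately) unbiased estimator of $\p_i[j]$. I would aggregate these estimators via an $\ell_2$/$\chi^2$-style closeness statistic, in the spirit of the query-optimal uniformity tester that matches the $\Omega(1/\eps^2\lambda)$ bias-estimation bound, and average over $Q = \widetilde O(1/\eps^2\lambda)$ queries spread across samples from both mixtures. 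Unequal mixture parameters for the two sources are absorbed by the reduction in \Cref{lem:unequal-mixture-params}.

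\textbf{Main obstacle.} The crux is propagating the mixture-learning error through the test: replacing $\rD_i[j]$ by $\hat{\rD}_i[j]$ in the denominator of the importance weights can blow up the variance when $\rD_i[j]$ is small, and even a small per-element bias can accumulate to a constant after summation over $n$ elements. I plan to handle this with a domain-bucketing argument: apply the reweighted-query estimator only on the ``heavy'' bucket where $\hat{\rD}_i$ is multiplicatively accurate, bound the contribution of the ``light'' bucket to $d_{\mathrm{TV}}(\p_1,\p_2)$ using the sample-based estimates alone, and close the analysis with a Bernstein-type bound that controls the combined additive error below $O(\eps)$. The polylogarithmic factors hidden in $\widetilde O(\cdot)$ should precisely match the slack needed for a union bound over the $n$ elements and for the per-bucket concentration inequalities.
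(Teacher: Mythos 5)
Your starting point overlaps with the paper's: both use the large sample budget to pin down per-element information about the mixtures, and both exploit the fact that querying a sample at element $j$ is a $\mathrm{Bern}(\lambda\p_b[j]/\rD_b[j])$ coin. But the core of your Step 2 has a genuine gap. Averaging $Q=\widetilde{O}(1/\eps^2\lambda)$ reweighted query outcomes ``spread across samples'' into an $\ell_2$/$\chi^2$-style statistic cannot work: with at most one query landing on any given element (coincidences among $Q$ queried samples occur with probability about $Q^2\max_j \rD_b[j]\ll 1$), you can neither form the cross terms $\langle \p_1,\p_2\rangle$ that an $\ell_2$/$\chi^2$ statistic needs, nor beat the per-element Bernoulli noise, whose standard deviation at a singly-queried element swamps the signal $|\p_1[j]-\p_2[j]|$. (A collision-based $\ell_2$ statistic is exactly the paper's sublinear-regime tester, and at $m=\Theta(n)$ it plateaus at $\widetilde{O}(1/\eps^4\lambda^3)$ queries, not $\widetilde{O}(1/\eps^2\lambda)$.) What actually closes the argument in \Cref{thm:closenessUB-high-post-flattening} is the opposite allocation: pick a \emph{small} set of test elements and spend \emph{many} queries on each, with precision tied to a bucketing of the deviation scale --- if $\norm{\p_1-\p_2}_1>\eps$ then for some $j$ a set of $\Theta(2^j)$ uniformly random elements contains one with $|\p_1[i]-\p_2[i]|\gtrsim \eps 2^j/n$ (\Cref{lem:bucket-bias-closeness-strong}), and estimating the coin $\rho_b[i]=P_b[i]/R_b[i]$ to additive error $\Theta(\lambda\eps 2^j)$ costs only $\widetilde{O}(1/(\lambda\eps^2 2^j))$ queries by \Cref{lem:bias-add}, giving $\widetilde{O}(1/\eps^2\lambda)$ per scale. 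Your bucketing is by $\rD$-mass (heavy/light, for learning accuracy), which is a different and insufficient decomposition.

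The second missing ingredient is flattening. Learning $\hat{\rD}_b$ multiplicatively does not control $\norm{\p_b}_\infty$, and without the guarantee $\norm{\p_b}_\infty\le O(1/\lambda n)$ and $\rD_b[j]=\Theta(1/n)$ (obtained in the paper via \Cref{thm:strong-mixture-flat-cor} and \Cref{lem:unifier}) the scheme breaks in three places: the additive-error bound $|P_b[i]/\lambda m-\p_b[i]|\lesssim \sqrt{\p_b[i]\log(\cdot)/\lambda m}$ is too weak at the target scale $\eps 2^j/n$ when $\p_b[i]$ is large; the deviation-scale bucketing no longer terminates after $O(\log(1/\eps\lambda))$ scales, so a few elements with $\Omega(1)$-size discrepancies are essentially invisible to uniformly drawn test elements (and sampling test elements from $\p_b$ instead inflates the coin probabilities $\rho_b[i]$, making each bias estimation up to a $1/\lambda$ factor more expensive than the budget allows); and the importance weights $\hat{\rD}_b[j]/\lambda$ are no longer uniformly bounded. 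Your plan to truncate $\rD$-light elements and ``bound the light bucket from samples alone'' also does not substitute for this: the light-bucket $\p$-mass bound you get is roughly (truncated $\rD$-mass)$/\lambda$, and with your stated cutoff this is $\mathrm{polylog}(n)\cdot\eps^2$, which is not $o(\eps)$ for constant $\eps$. So as written the proposal is missing both the query-allocation mechanism (deviation-scale bucketing with per-element repeated bias estimation) and the infinity-norm reduction that make the $\widetilde{O}(1/\eps^2\lambda)$ query bound attainable.
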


\paragraph{Algorithms Against Adaptive Contamination.} 
So far, we have considered only \emph{distributional} sources of contamination. 
That is, although the contamination distribution $\q$ is worst-case (chosen adversarially in response to $\p$), the contaminated samples are drawn i.i.d. from $\q$. Alternatively, one could consider a more challenging model where these samples are chosen by an adaptive adversary (also called Huber contamination), which leads to further technical challenges.

\begin{itemize}
    \item An adversarially $(1 - \lambda)$-contaminated source for $\p$ is sampled from as follows: 
    \begin{enumerate}
        \item $m$ samples are drawn iid from $\p$.
        \item An adversary arbitrarily corrupts up to a $(1 - \lambda)$-fraction of the samples drawn from $\p$ and returns the corrupted dataset.\footnote{Note that the adversary may corrupt less than $(1 - \lambda)$-fraction of samples.}
    \end{enumerate} 
    \item The notion of an \emph{adversarially robust algorithm} is precisely analogous to the distributional definition given earlier in the section, but defined with respect to adversarially contaminated sources.
\end{itemize}


Unsurprisingly, obtaining adversarially robust algorithms is more technically challenging. Nevertheless, we are able to strengthen our upper bounds for uniformity testing to hold in this more difficult setting, with similar sample and query complexity (up to log-factors).

\begin{restatable}{theorem}{UniformityAdvAlg}
    \label{thm:adversarially-robust-uniformity-alg}  \emph{(Adversarial uniformity.)}
    For all $m \in \mathbb{N}, \eps,\lambda \in (0,1)$, there is a $\lambda$-adversarially robust $\varepsilon$-uniformity (resp. identity) tester with sample complexity $\bigO{m + \frac{\sqrt{n}}{\eps^2 \lambda}}$ and query complexity $\bigO{\frac{n \log n}{m \eps^{4} \lambda^2} + \frac{1}{\eps^{4} \lambda^2}}$.
\end{restatable}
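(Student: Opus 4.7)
The plan is to bootstrap the distributional uniformity tester from \Cref{thm:simple-uniformity-alg} into an adversarially robust algorithm at a cost of an $O(\log n)$ overhead in query complexity, plus an additive $O(1/(\eps^4\lambda^2))$ term. The central challenge is that in the adversarial model the positions at which corrupted samples appear, along with their values, may be chosen in a coordinated way in response to the clean draws; this breaks the i.i.d.\ structure used by the distributional analysis.

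First I would draw $N = \Theta(m + \sqrt{n}/(\eps^2\lambda))$ samples from the adversarial source and then uniformly permute them, so that the set of corrupted positions becomes a uniformly random subset of size at most $(1-\lambda)N$. This symmetrization step restores, from the algorithm's perspective, a distributional-style structure: conditional on the adversary's chosen multiset of corrupted values, the corruption acts on a uniformly random subset of indices, which is essentially the property the distributional analysis relies on.

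Next I would port the collision-based approach underlying \Cref{thm:simple-uniformity-alg}, but apply it inside a bucketed decomposition of the domain. Concretely, I would partition $[n]$ into $O(\log n)$ empirical-mass buckets (e.g.\ $B_j = \{i \in [n] : \text{empirical frequency of } i \in [2^{j-1}/N, 2^j/N)\}$) and run the per-bucket statistic, using verification queries to strip out the contribution of clean-bad and bad-bad collisions. Summing the $O(n/(m\eps^4\lambda^2))$ per-bucket query cost over the $O(\log n)$ buckets, together with a union bound on correctness, yields the first term in the stated query budget, while the additive $O(1/(\eps^4\lambda^2))$ term would come from a constant per-bucket overhead or, alternatively, a separate bias-estimation-style preprocessing step to calibrate the test statistic against the unknown $\lambda$.

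The main obstacle will be controlling the concentration of the bucketed statistic in the adversarial model. Even after randomizing positions, the adversary's multiset of corrupted values is already committed and may correlate with any particular bucket, so the variance of the per-bucket statistic has to be bounded against the worst-case such multiset. The analysis will need a careful second-moment bound conditional on the adversary's value choices, followed by a union bound over the $O(\log n)$ buckets (which is precisely where the extra logarithmic factor enters), and the verification queries must be allocated so that spurious bad-bad and bad-clean collisions are identified and subtracted inside the stated query budget. The reduction from identity to uniformity testing goes through \Cref{thm:identity-testing-reduction} verbatim, as in the distributional setting.
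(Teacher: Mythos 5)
Your proposal takes a genuinely different route from the paper's proof, but it has a gap at precisely the step that makes the adversarial setting hard: controlling the total number of collisions after the adversary has had the chance to concentrate its corruptions.

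The paper's proof (see \Cref{lemma:subsample-flattening-alg}) randomly splits the sample set into a flattening set $S_{\flatten}$ and a test set $S_{\testing}$, constructs a split-distribution map $F$ from $S_{\flatten}$, and then proves $c(F(S_{\testing})) = O(m\log n)$. The crux is \Cref{clm:subsample-flattening}: for any bucket receiving $Y_i \geq 100\log n$ samples, a Chernoff bound guarantees $Y_i^{\flatten} \geq Y_i/4$, so the flattening map actually dilutes that bucket. This is what caps the collision count and hence the query complexity. Your proposal instead partitions $[n]$ into empirical-frequency buckets $B_j$ and runs a per-bucket collision statistic, asserting a per-bucket cost of $O(n/(m\eps^4\lambda^2))$. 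That assertion is where the argument fails. Without a flattening step, an adversary can dump all $(1-\lambda)$ of its budget on a single element $x$, producing $\Theta(m^2)$ collisions in whichever bucket $x$ lands in; the $\p$-collisions on $x$ are only $\Theta((\lambda m/n)^2)$, so a randomly queried collision in that bucket is a $\p$-collision with probability $\Theta(\lambda^2/n^2)$, not $\Theta(m/(n \cdot \text{collisions}))$. Grouping by empirical frequency does not split the mass of that element the way the paper's split distribution does, so the factor-$m$ savings over the trivial bound never appears. More generally, summing $N \cdot 2^j$ collisions over buckets gives $\Theta(N^2) = \Theta(m^2)$ collisions in the worst case, not $O(m\log n)$, and there is no mechanism in your sketch that reduces this.

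Your first step (symmetrizing by permuting the samples) is harmless but also does not do what you claim: after permutation, the \emph{positions} of corruptions are uniform, but the \emph{multiset} of corrupted values is still an arbitrary function of the clean draws, so the i.i.d.\ structure used by the distributional proof of \Cref{lemma:simple-t-query-alg} is not recovered. The paper does not rely on such a symmetrization; it handles the adversarial correlation directly through the Chernoff argument on the random flatten/test split. To repair your proposal, you would want to replace the frequency bucketing with the paper's random subsample-and-flatten step (or something playing the same role), so that any element with large empirical frequency is provably split down before collisions are counted. The identity-testing reduction via \Cref{thm:identity-testing-reduction} and the additive $O(1/(\eps^4\lambda^2))$ term for estimating $\hat m_\p$ are fine and match the paper.
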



Similarly, we give a strengthening of \Cref{thm:simple-closeness-alg-high} for the special case of testing uniformity.

\begin{restatable}{theorem}{AdversarialUniformityAlg}
    \label{thm:adversarial-uniformity-alg} \emph{(Query-optimal adversarial uniformity, \Cref{thm:n-sample-1-query-adversary} informal.)}
    For all $\eps,\lambda \in (0,1)$, there is a $\lambda$-adversarially robust $\varepsilon$-uniformity (resp. identity) tester with sample complexity $\widetilde{O}(n/\eps^2\lambda)$ and query complexity $\widetilde{O}(1/\eps^2\lambda)$.
\end{restatable}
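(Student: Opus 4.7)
My strategy is to specialize the query-optimal closeness tester of \Cref{thm:simple-closeness-alg-high} to $\p_2 = U_n$ and show that essentially the same algorithm survives in the adaptive adversarial model at a multiplicative cost of $\mathrm{polylog}(n)$. Recall that the distributional tester uses $m = \widetilde{O}(n/\eps^2 \lambda)$ samples to generate, largely without queries, a short list of $\{0,1\}$-valued ``witness functions'' $f_1,\dots,f_L$ with $L = \mathrm{polylog}(n)$, such that whenever $d_{TV}(\p,U_n)>\eps$, at least one $\E_\p[f_\ell]$ deviates from $\E_{U_n}[f_\ell]$ by $\widetilde{\Omega}(\eps)$. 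Each witness is then checked via a bias-estimation subroutine requiring $\widetilde{O}(1/\eps^2\lambda)$ samples and queries, matching the upper bound implicit in \Cref{thm:bias-estimation-lb}.

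The key leverage against the adaptive adversary is that the verification oracle is always truthful: samples labeled ``relevant'' are genuinely i.i.d.\ from $\p$, regardless of which $(1-\lambda)m$ samples the adversary chose to corrupt or how those choices depend on the realized $\p$-samples. Therefore, provided a random batch of $k = \widetilde{O}(1/\eps^2\lambda)$ verified samples contains $\widetilde{\Omega}(1/\eps^2)$ relevant ones (which holds with high probability by a Chernoff bound on the guaranteed $\lambda$-fraction of uncorrupted samples, since batch positions are chosen independently of the adversary's commitment), each scalar bias estimator remains correct in the adversarial model with the same complexity as in the distributional one.

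The main obstacle lies in the construction of the witness collection $f_1,\dots,f_L$, which in the original tester may be built from the empirical distribution of samples; since the adaptive adversary can perturb this empirical distribution by up to $(1-\lambda)$ in $\ell_1$, a data-dependent construction is not obviously robust. I plan to address this in one of two ways: either (i) replace the construction by a data-oblivious $\mathrm{polylog}(n)$-sized family of witnesses (e.g.\ threshold-type or hash-based functions of the domain) that detects every $\eps$-far $\p$ simultaneously, paying one additional $\log n$ factor via a union bound over $L$; or (ii) show that the data-dependent construction actually relies only on statistics of the (unknown-to-the-algorithm) relevant sub-sample, which has size $\Omega(\lambda m) = \widetilde{\Omega}(n/\eps^2)$ i.i.d.\ from $\p$ and is thus untouched by the adversary. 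Either route, combined with the bias-estimator analysis above and a union bound over the $L$ subroutines, yields sample complexity $\widetilde{O}(n/\eps^2\lambda)$ and query complexity $\widetilde{O}(1/\eps^2\lambda)$, with the additional $\mathrm{polylog}(n)$ factor absorbed into the $\widetilde{O}$ notation.
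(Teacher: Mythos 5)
Your high-level instinct is right that the flattening step of the distributional closeness tester is the obstacle, but neither of your two proposed fixes actually resolves it, and the paper's proof (\Cref{thm:n-sample-1-query-adversary} in \Cref{sec:highsample}) uses a genuinely different tester that you have not anticipated. Plan (i), a data-oblivious witness family, cannot work for this problem: after dropping flattening there is no upper bound on $\norm{\p}_\infty$, so $\p$ may be $\eps$-far from uniform entirely because of a tiny set of very heavy elements, and no $\mathrm{polylog}(n)$-size data-oblivious collection of indicator-type witnesses can locate a heavy set whose position depends on $\p$. (Detecting such spikes obliviously is essentially a second-moment/collision estimate, which is the paper's \Cref{thm:adversarially-robust-uniformity-alg} and costs $\widetilde{O}(1/\eps^4\lambda^2)$ queries, worse than the target.) Plan (ii) is closer — the paper does use that the $\Omega(\lambda m)$ relevant samples are i.i.d.\ from $\p$ (\Cref{cor:learn-adversary}) — but this alone does not suffice, because the tester never sees $P[i]$ directly; it sees only $R[i]=P[i]+A[i]$ and must pay $\approx R[i]/P[i]$ queries per bias estimate. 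An adaptive adversary can dump all of its $(1-\lambda)m$ samples onto the very elements where $\p$ deviates from uniform, driving $\rho[i]=P[i]/R[i]$ down and making your per-witness bias estimation blow up. Your Chernoff argument about ``a random batch containing $\widetilde\Omega(1/\eps^2)$ relevant samples'' bounds the global fraction of relevant samples, which is not what is needed; the difficulty is local to the buckets that carry the deviation.

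What the paper actually does is abandon the closeness-tester template and exploit an asymmetry special to uniformity: when $\p[i]\gg 1/n$ a constant-factor \emph{multiplicative} estimate of $\p[i]$ suffices to reject, whereas only when $\p[i]=\Theta(1/n)$ is an additive estimate needed, and there $\p[i]$ itself is small. The tester therefore draws test elements from two sources — $\p$ itself (using $O(1/\lambda\eps)$ queries) to catch heavy elements, and $U_n$ (free) to catch the light-bucket deviations of size $\eps 2^j/n$ — and the crux is \Cref{lem:bucket-bias}: a Markov-style argument that the adversary, having total mass $(1-\lambda)m$, cannot inflate $R[i]$ on a constant fraction of the relevant test elements simultaneously. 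Concretely, item (1) shows $\Pr_{i\sim\p}[\p[i]>2/n \text{ and } R[i]/m\le \tfrac{20}{\eps}\p[i]]\ge \eps/8$, and item (2) shows that for the winning bucket $B_{j^\ast}$ a constant fraction of uniform draws land in $B_{j^\ast}$ with $R[i]/m \le O(2^{j^\ast}\log(1/\eps)/n)$. Without this lemma you have no control over the coin probabilities $\rho[i]$, and the query bound does not close. So there is a real gap in the proposal: the bucketing-plus-Markov argument that limits how much the adversary can degrade the coins is the missing ingredient, and it forces a different algorithm design (sampling test elements from $\p$, skipping elements with inflated $R[i]$, using multiplicative rather than additive estimation on the heavy regime) rather than a black-box reuse of the closeness tester.
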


This query complexity is nearly optimal in light of our lower bound for bias estimation, \Cref{thm:bias-estimation-lb}.

\paragraph{Circumventing Query Lower Bounds with Mixture Knowledge.} 
Next, we ask what happens when the algorithm is provided with an explicit description of the mixture distribution $\rD$ as input. 
We show that this added information enables us to circumvent the lower bound of \Cref{thm:dist-contamination-lb-eps} and achieve near-optimal (up to $\poly(\log n,\eps^{-1},\lambda^{-1})$ factors) dependence on $n$ for both sample and query complexity, simultaneously.


\begin{restatable}{theorem}{UniformityAlgMixtureKnowledge}
    \label{thm:uniformity-alg-mixture-informal} \emph{(Uniformity testing with mixture knowledge, \Cref{thm:dist-mix-knowledge-alg} informal.)} For all $\lambda, \eps \in (0,1)$, given complete knowledge of the PDF of the distributionally contaminated source, there is a $\lambda$-distributionally robust $\varepsilon$-uniformity (resp. identity) tester with sample complexity $\bigtO{\frac{\sqrt{n}}{\poly(\eps \lambda)}}$ and query complexity $\poly\left(\frac{\log n}{\eps \lambda}\right)$.
\end{restatable}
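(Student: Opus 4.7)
My plan is to design a bucket-based algorithm that exploits the known PDF of $\rD$ in two ways: to cheaply estimate $\lambda$, and to partition $[n]$ into level sets on which $\rD$ is near-uniform, enabling localized tests that use few verification queries per bucket. First, I would estimate $\lambda = \EX_\rD[V(X)]$ to multiplicative accuracy $1 \pm \poly(\eps\lambda)$ via $\widetilde{O}(1/(\eps\lambda)^2)$ verifications on random $\rD$-samples. Next, using the PDF of $\rD$, partition $[n]$ into $L = O(\log(n/\eps\lambda))$ dyadic buckets $B_1, \ldots, B_L$ with $\rD[i] \in [2^{-\ell-1}, 2^{-\ell}]$ for $i \in B_\ell$; buckets with $\rD$-mass below $\eps/L$ can be safely ignored since their total contribution to $d_{\mathrm{TV}}(\p, U_n)$ is $O(\eps)$.

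The key dichotomy is: if $d_{\mathrm{TV}}(\p, U_n) > \eps$, then either (a) some bucket's total $\p$-mass $\p(B_\ell)$ differs from its null value $|B_\ell|/n$ by $\Omega(\eps/L)$, or (b) $\p|_{B_\ell}$ is $\Omega(\eps/\sqrt{L})$-far from uniform on $B_\ell$ in TV. For (a), I compare the empirical fraction of verified-relevant samples landing in $B_\ell$ to the null-predicted value $\lambda |B_\ell|/(n\rD(B_\ell))$, a Bernoulli parameter test costing $\poly(L/\eps\lambda)$ verifications per bucket. For (b), I use a collision-based within-bucket uniformity tester: drawing $\widetilde{O}(\sqrt{|B_\ell|}/\poly(\eps\lambda))$ samples landing in $B_\ell$ produces only $\poly(1/\eps\lambda)$ expected collision pairs (because $\rD|_{B_\ell}$ is near-uniform, so $\|\rD|_{B_\ell}\|_2^2 \approx 1/|B_\ell|$), and verifying both endpoints of each collision lets us estimate $\|\p|_{B_\ell}\|_2^2$ at cost $\poly(\log n/\eps\lambda)$.

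The main obstacle is twofold. First, formally establishing the dichotomy via an appropriate decomposition of $\|\p - U_n\|_1$ into inter-bucket and intra-bucket contributions, carefully allocating the $\eps$ error budget across the $L$ buckets. Second, adapting the standard collision-based $L_2$ estimator to the near-uniform mixture setting: since $\rD|_{B_\ell}$ varies by a factor of $2$, collision statistics must be re-weighted using the known PDF to correctly extract $\|\p|_{B_\ell}\|_2^2$ from observed verified-collision counts, and the variance contribution from $\p$-$\q$ and $\q$-$\q$ collisions must be carefully controlled since these are not directly observed. Summing over the $L$ buckets — sample counts aggregate to $\sum_\ell \sqrt{|B_\ell|} \leq \sqrt{L n}$ via Cauchy--Schwarz, and per-bucket query counts sum to $L \cdot \poly(\log n/\eps\lambda) = \poly(\log n/\eps\lambda)$ — yields the claimed sample complexity $\widetilde{O}(\sqrt{n}/\poly(\eps\lambda))$ and query complexity $\poly(\log n/\eps\lambda)$.
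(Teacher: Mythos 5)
Your high-level plan (use the known PDF of $\rD$ to form level sets on which $\rD[i]$ is within a factor $2$, split the $\eps$ budget across the $O(\log n)$ level sets, and query only collisions inside a level set) is indeed the skeleton of the paper's proof of \Cref{thm:dist-mix-knowledge-alg}. But two of your steps have genuine gaps. First, the claim that buckets with $\rD$-mass below $\eps/L$ "can be safely ignored since their total contribution to $d_{\mathrm{TV}}(\p,U_n)$ is $O(\eps)$" is false: such a bucket only has small \emph{$\p$-mass} (at most $\rD(B_\ell)/\lambda$, note the extra $1/\lambda$), while its mass under $U_n$, namely $|B_\ell|/n$, can be $\Omega(1)$ — e.g.\ $\p=\q=$ uniform on half the domain makes the other half an "ignored" bucket carrying TV contribution $1/2$. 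The fix is to argue only about \emph{excess} mass $\sum_{\p[i]>1/n}(\p[i]-1/n)$ (which equals the TV distance) with a $\lambda$-dependent ignore threshold; this is exactly how the paper's \Cref{claim:far-index} is phrased, and it is why the soundness condition of each subroutine is stated in terms of excess mass on a level set rather than TV contribution.

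Second, your branch (b) assumes you can run a collision tester for the \emph{conditional} distribution $\p|_{B_\ell}$ at per-bucket query cost $\poly(1/\eps\lambda)$, but inside a bucket the effective contamination parameter is not $\lambda$: it is $\lambda\p(B_\ell)/\rD(B_\ell)$, which can be arbitrarily small when $\q$ dominates the bucket, so the fraction of verified collisions that are $\p$-collisions need not be $\poly(\eps\lambda/\log n)$ and the bias-estimation cost is not controlled. Relatedly, the collision-fraction statistic conflates $\p(B_\ell)^2$ with the shape term $\lVert \p|_{B_\ell}\rVert_2^2$, so completeness/soundness thresholds cannot be set without also estimating the number of $\p$-samples in the bucket. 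The paper avoids this by working with \emph{absolute} $\p$-collision counts on the level set together with an estimate $\hat m_{\p}$, and its key quantitative step — missing from your sketch — is \Cref{lemma:p-l2-lb}: excess mass $\alpha$ on $S_k^{\rD}$ forces $\lVert\p[S_k^{\rD}]\rVert_2^2\ge\alpha^3\lambda 2^{-k+1}$, while $\lVert\rD[S_k^{\rD}]\rVert_2^2\le 2^{-k+1}$, so a random collision in the level set is a $\p$-collision with probability $\poly(\alpha\lambda)$. Even then this only works when $2^{-k}$ is above roughly $1/(n\alpha^3\lambda)$ (otherwise the signal drops below the collision level of a truly uniform $\p$), which is why the paper needs the separate light-bucket routine \Cref{lemma:dist-mix-knowledge-large-k-alg}, including a lower bound on the conditional mixture parameter (\Cref{lemma:eta-lb}) of the type your proposal takes for granted. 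Without these ingredients — the excess-mass reduction, the $\ell_2$ lower bound on a level set, and the treatment of the light region where $\q$ can dominate — the proposed tester's correctness and its $\poly(\log n/\eps\lambda)$ query bound do not follow.
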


We remark that in the interest of simplicity we have not optimized our dependence on $\eps, \lambda$ in \Cref{thm:uniformity-alg-mixture-informal}.

\begin{table}[ht]
\label{table:results-alt}
\centering
\begin{tabular}{|c|c|c|c|c|c|}
\hline
 & \textbf{Samples} & \textbf{Queries} & \textbf{Query Lower Bound} & \textbf{Adversarial?} & \textbf{Ref}  \\
\hline
\textbf{Uniformity} &
$m + \frac{\sqrt{n}}{\eps^2 \lambda}$ &
$\frac{n}{m \eps^{4} \lambda^{2}}$ &
$\frac{n}{m \eps^{4} \lambda^{2}}$ \text{ if $m \ll n$} &
\textbf{Y} [\ref{thm:adversarially-robust-uniformity-alg}] & 
 \ref{thm:simple-uniformity-alg}, \ref{thm:dist-contamination-lb-eps} \\[5pt]
\cline{2-6}
&
$\frac{n}{\eps^2 \lambda}$ &
$\frac{1}{\eps^2 \lambda}$ &
$\frac{1}{\eps^2 \lambda}$ \text{ for all $m$}&
\textbf{Y} &
\ref{thm:bias-estimation-lb}, \ref{thm:adversarial-uniformity-alg} \\[5pt]
\hline
\textbf{Closeness} &
$m + \frac{n^{2/3}}{\eps^{4/3} \lambda} + \frac{1}{\eps^{4} \lambda^{3}}$ &
$\frac{n^{2}}{m^{2} \eps^{4} \lambda^{3}}$ &
$\frac{n^{2}}{m^{2} \eps^{4} \lambda^{3}}$ \text{ if $m \ll n$} &
&
\ref{thm:simple-closeness-alg}, \ref{thm:dist-contam-closeness-lb}\\[5pt]
\cline{2-6}
&
$\frac{n}{\eps^2 \lambda}$ &
$\frac{1}{\eps^2 \lambda}$ &
$\frac{1}{\eps^2 \lambda}$ \text{ for all $m$}&
&
\ref{thm:bias-estimation-lb}, \ref{thm:simple-closeness-alg-high}\\[5pt]
\hline
\end{tabular}
\caption{\small{Table of our results. Polylogarithmic terms are omitted for simplicity. When $m \ll n$, all query complexities are optimal up to constant factors. All lower bounds hold in the distributional setting, and therefore in the adversarial setting as well. Our uniformity testing algorithms also hold more generally for identity testing.}}
\end{table}






\paragraph{Paper organization.} For the sublinear sample regime ($m\leq O(n)$) we prove our upper bounds for testing uniformity (\Cref{thm:simple-uniformity-alg} and \Cref{thm:adversarially-robust-uniformity-alg}) in \Cref{sec:uniformity-UB} and closeness (\Cref{thm:simple-closeness-alg}) in \Cref{sec:closeness-UB}. We prove our lower bounds (\Cref{thm:dist-contamination-lb-eps} and \Cref{thm:dist-contam-closeness-lb}) in \Cref{sec:LB}. We obtain our query-optimal testers for closeness (\Cref{thm:simple-closeness-alg-high}) and uniformity (\Cref{thm:adversarial-uniformity-alg}) in \Cref{sec:closeness-UB-highsample} and \Cref{sec:highsample}. Our uniformity tester with perfect mixture knowledge (\Cref{thm:uniformity-alg-mixture-informal}) is presented in \Cref{sec:uniformity-UB-MK}. We provide technical overviews for the proofs of our main results on bias estimation, uniformity, and closeness testing in \Cref{sec:tech}.

\subsection{Discussion and Open Questions}

We propose a novel algorithmic framework for statistical analysis on heavily contaminated data sources: verification queries which reveal the trustworthiness of a sample.
Within this framework, we have given nearly tight upper and lower bounds for fundamental problems in distribution testing: bias estimation, uniformity testing, and closeness testing.

A natural open question is to investigate the power of verification queries for other statistical problems. 
For example, one can consider mean estimation in high dimensional settings, or more general learning tasks such as PAC learning.
As usual, the naive algorithm is to run the standard algorithm and query all samples. 
Is this optimal, or is there a better algorithm?

Our work also leaves open the question of whether there is a separation in our model between sources of distributional vs. adversarial contamination. 
For uniformity, we have shown that our algorithms can be adapted to the adversarial setting with near optimal query complexity.
Can the same be done for any statistical task? 
More generally, is there a generic way to make a distributionally robust algorithm adversarially robust?
Is there a separation (perhaps via some other problem) between the two contamination models?


\subsection{Related Work and Comparison with Other Models} \label{sec:related-work}

\paragraph{On Tolerant Testing.} An alternative approach to addressing distributional contamination is the concept of \emph{tolerant testing} \cite{DBLP:journals/jcss/ParnasRR06}. 
In the tolerant uniformity testing problem, an algorithm must decide if $\p$ is $\eps$-close to uniform or $2 \eps$-far from uniform.
In the verification query model setting, if $\p$ is uniform, then $\rD$ is $(1 - \lambda)$-close to uniform, while if $\p$ is $\eps$-far from uniform, $\rD$ is $\eps \lambda$-far from uniform. 
Then, if $\lambda \geq \frac{1}{1 + \eps}$ (low contamination), we can in fact test uniformity without any verification queries.

However, this approach immediately runs into two significant obstacles.
First, tolerant testing algorithms succeed only when contamination is low ($\lambda \geq \frac{1}{2}$), while our algorithms succeed for arbitrary contamination parameters $\lambda$.
Second, even for constant\footnote{See \cite{CJKL:22} for the dependency on tolerance parameters for the full landscape of the problem.} $\eps$ the sample complexity of tolerant testing is barely sublinear $\Theta(n/\log n)$ \cite{valiant2010clt,valiant2017estimating,CJKL:22}.
In contrast, our sample-query trade-off (\Cref{thm:simple-uniformity-alg}) yields an algorithm with strongly sub-linear sample complexity at the cost of querying a tiny fraction of the samples (e.g. $n^{0.9}$ samples and $n^{0.1}$ queries).

\paragraph{Contrasting with Semi-verified Learning.} Our results show that verification queries are significantly more powerful than semi-verified learning \cite{DBLP:conf/stoc/CharikarSV17} for the problems we study. Consider the uniformity testing problem with $\lambda = 1/2$. Using known lower bounds for uniformity testing in the standard model, we can construct distributions $\p_{\mathrm{yes}},\p_{\mathrm{no}},\q_{\mathrm{yes}},\q_{\mathrm{no}}$ where the following hold: 
\begin{enumerate}
    \item $\p_{\mathrm{yes}} = U_n$, $d_{\mathrm{TV}}(\p_{\mathrm{no}},U_n) > \varepsilon$.
    \item A set of $k = o(\sqrt{n}/\eps^2)$ samples from $\p_{\mathrm{yes}}$ or $\p_{\mathrm{no}}$ are statistically close, i.e. $d_{\mathrm{TV}}(\p_{\mathrm{yes}}^k,\p_{\mathrm{no}}^k) = o(1)$.
    \item $\rD_{\mathrm{yes}} = \frac{1}{2}(\p_{\mathrm{yes}} + \q_{\mathrm{yes}}) = \frac{1}{2}(\p_{\mathrm{no}} + \q_{\mathrm{no}}) = \rD_{\mathrm{no}}$ by defining $\q_{\mathrm{yes}},\q_{\mathrm{no}}$ appropriately.
\end{enumerate}
By (1), a uniformity tester must be able to distinguish these two cases. However, by (2) unless $k = \Omega(\sqrt{n}/\eps^2)$, the side information is not useful, and by (3) samples from $\rD_{\mathrm{yes}},\rD_{\mathrm{no}}$ are identically distributed (and in fact can even be simulated). 
Therefore, in semi-verified learning, essentially the entire burden is placed on the side information, i.e. we must have $k \geq \Omega(\sqrt{n}/\eps^2)$ pre-verified samples regardless of the number of corrupted samples.
In contrast, our \Cref{thm:simple-uniformity-alg} shows that in the verification query model it is possible to distinguish these cases using only $\mathrm{poly}(1/\eps)$ verified samples.

\paragraph{Additional Related Work.} One of the primary themes in our work is the use of queries to overcome uncertainty about sampled data. Although we are the first to study the verification query model we consider, other models have been studied with a similar theme. For instance, the \emph{huge object} \cite{DBLP:journals/theoretics/0001R23} and the \emph{confused collector} models \cite{DBLP:conf/innovations/PintoH24,DBLP:journals/corr/abs-2304-01374} for distribution testing. 

Another high-level theme of our work is to study algorithms equipped with both a weak oracle that is cheap, as well as a stronger oracle that is expensive. 
This is similar in spirit to \emph{active testing} \cite{DBLP:conf/focs/BalcanBBY12} and dual-access models considered in the context of metric clustering, e.g. \cite{DBLP:journals/siamcomp/BatuDKR05,DBLP:conf/soda/GuhaMV06}. 

Finally, we consider distribution testing of $\p$ with sample access to a mixture $\lambda \p + (1-\lambda)\q$. It is worth noting that the work of \cite{DBLP:conf/colt/Aliakbarpour0R19} considers the problem of testing whether a distribution $\p$ can be written as $\p = \lambda \q_1 + (1-\lambda)\q_2$ for some $\lambda$, given sample access to $\p,\q_1,\q_2$.  

We have primarily studied the problems of uniformity and closeness testing.
These fundamental distribution testing problems have been studied in a variety of other settings, see e.g. \cite{chakraborty2013power, canonne2015testing, acharya2018differentially, aliakbarpour2018differentially, aliakbarpour2019private, liu2024replicable, CKO24, aamand2025structure, diakonikolas2025replicable, DBLP:journals/corr/abs-2503-12518, blanc2025instance, goldreich2025location}.

\section{Technical Overview} \label{sec:tech}


In the interest of emphasizing our main ideas, we assume $\lambda = \frac{1}{2}$ throughout the overview.
In reality, we design algorithms that succeed given a lower bound on $\lambda$ and spend $\bigO{\frac{1}{\lambda}}$ samples and queries to obtain this lower bound.
Note that this term never dominates as $\bigO{\frac{m}{\lambda}}$ samples are required to obtain sufficiently many samples from $\p$ and $\bigO{\frac{1}{\lambda}}$ queries are required just to detect a single sample from $\p$.

\subsection{Upper Bounds in the Sublinear Sample Regime} \label{sec:tech-sub}

\paragraph{Uniformity Testing.}

In uniformity testing, we are given sample access to $\rD = \frac{1}{2}(\p + \q)$ and asked to determine if $\p$ is uniform or far from uniform.
If we have an $m$ sample algorithm $\innerAlg_{0}$ for uniformity testing (without noise), then we have a distributionally robust algorithm with $O(m)$ samples and $O(m)$ queries.
For example, with $O(m)$ samples we guarantee that at least $m$ samples come from $\p$, and we can identify these samples with $O(m)$ queries and run $\innerAlg_{0}$.
Of course, $\Omega(m)$ samples are necessary.
At first glance, it is not obvious that there is even an algorithm with $o(m)$ queries.
In fact in the bias estimation problem, we prove that $\Omega(\eps^{-2})$ queries are necessary (\Cref{thm:bias-estimation-lb}).

For uniformity testing, we begin with the crucial observation that there are very few influential samples (although it may take many samples to observe them).
In particular, it suffices to count collisions to determine if $\p$ is uniform or not, and indeed this approach yields an optimal tester in the standard setting \cite{DBLP:journals/cjtcs/DiakonikolasGPP19}. 
Given $m$ samples, a uniform distribution produces roughly $\frac{m^2}{n}$ collisions while a distribution $\eps$-far from uniform produces at least $\frac{(1 + \eps^2) m^2}{n}$ collisions with high probability.
Thus, with $m = \bigTh{n^{1/2} \eps^{-2}}$ samples, a uniform distribution only produces $\bigTh{\eps^{-4}}$ collisions.
By only querying collisions, we can estimate how many were produced by $\p$, and this is sufficient to test uniformity of $\p$.
In an ideal setting, we would like to directly query collisions from $\p$, in which case we can even hope for query complexity independent of $n$! 

However, we can only query collisions to $\rD$.
If $\norm{\rD}_{2}^{2} \gg \norm{\p}_{2}^{2}$, most collisions in the observed samples from $\rD$ will be produced by $\q$ and not $\p$. 
For example, if $\q$ is uniform over a subdomain of size $10m$, then $\q$ will produce $\Theta(m)$ collisions but we will not be able to detect the domain of $\q$ without using many queries.
Still, when $\norm{\rD}_{2}^{2} \leq t \norm{\p}_{2}^{2}$ is not too large, we obtain an efficient algorithm: repeatedly query \emph{random} collisions in the set of samples in order to estimate the number of collisions produced by $\p$. 
Since we find a $\p$-collision with every $O(t)$ queries, we obtain an algorithm using $O(t / \poly(\eps))$ queries.

The final ingredient is to bound $t$.
Intuitively, if $\rD$ has significant mass on any element, we can detect this in $m$ samples, i.e. we can learn the mass of any element with mass greater than $\frac{1}{m}$.
This intuition is captured by the flattening technique \cite{diakonikolaskane2016} which states that using $O(m)$ samples, we can assume without loss of generality that $\norm{\rD}_{2}^{2} \leq \frac{1}{m}$ (and therefore $\norm{\q}_{2}^{2} \leq \frac{1}{m}$).

This leads to our final algorithm.
We use $O(m)$ samples to flatten (\cite{diakonikolaskane2016}) the input distribution $\rD$ so that every bucket has mass at most $\frac{1}{m}$.
Under this assumption, if we take another $m$ samples the observed number of collisions is in expectation at most $\binom{m}{2} \norm{\rD}_{2}^{2} = O(m)$.
Now, consider the two cases.
If $\p$ is uniform, it produces at most $\binom{m}{2} \norm{\p}_{2}^{2} \leq \frac{m^2}{n}$ collisions.
If $\p$ is $\eps$-far from uniform, a standard calculation (see e.g. \cite{Canonne:Survey:ToC}) shows that it produces at least $\binom{m}{2} \norm{\p}^{2}_2 \geq \frac{m^2(1 + \eps^2)}{n}$ collisions.
If we query a random collision, we find a collision between two elements sampled from $\p$ with probability at most $\frac{m}{n}$ in the former case and at least $\frac{(1 + \eps^2)m}{n}$ in the latter case.
Since this reduces the problem of uniformity testing to differentiating a coin with bias $\frac{m}{n}$ from one with bias $\frac{(1 + \eps^2)m}{n}$, it suffices to query $\frac{n}{m \eps^4}$ samples.

There is still one remaining issue: $\p$ can produce many collisions for two reasons: 1) $\p$ is $\eps$-far from uniform, or 2) more samples come from $\p$ than $\q$.
We must ensure that the signal from 1) outweighs the noise from 2).
To address this issue, we compute an estimate $\hat{m}_{\p}$ of $m_{\p}$, where $m = m_{\p} + m_{\q}$ denote the number of samples drawn from $\p, \q$ respectively.\footnote{Alternatively, we may increase the number of samples to address this issue. In particular, $m_{\p} = \frac{m}{2} \pm O(\sqrt{m})$ so that in the completeness case we have the number of $\p$-collisions is at most $\frac{m_{\p}^{2}}{n} \ll \frac{m^2 + O(m^{1.5})}{n}$ while in the soundness case the number of $\p$-collisions is at least $\frac{m_{\p}^2(1 + \eps^2)}{n} \gg \frac{m^2 + m^2 \eps^2}{n}$. Thus, $m \gg \frac{1}{\eps^{4}}$ ensures that $\eps^2 m^2 \gg m^{1.5}$.}
Clearly, a randomly queried sample reveals $\p$ with probability $\frac{m_{\p}}{m}$. 
With $O(\eps^{-4})$ queries, we can estimate $m_{\p}$ up to accuracy $\eps^2 m$.
In particular, we can estimate $\frac{m_{\p} (1 + 0.5 \eps^{2})}{n}$ up to accuracy $\ll \frac{\eps^2 m}{n}$ which suffices to ensure that the signal from 1) dominates the noise from 2).

\paragraph{Adversarially Robust Uniformity Testing.}
We can make our uniformity testing algorithm adversarially robust (\Cref{thm:adversarially-robust-uniformity-alg}).
A close inspection of our distributionally robust uniformity tester reveals that it relies on the following two facts:
\begin{enumerate}
    \item There are enough samples from $\p$ to guarantee that there are many $\p$-collisions when $\p$ is far from uniform and few $\p$-collisions when $\p$ is uniform.
    This is guaranteed by taking $\gg \frac{\sqrt{n}}{\lambda \eps^2}$ samples.
    \item There are not too many collisions in $\rD$.
    This is guaranteed by the flattening procedure \cite{diakonikolaskane2016}.
    As discussed above, we use this to bound the number of queries, since this ensures we can efficiently find a $\p$-collision by querying a random collision.
\end{enumerate}
In the adversarial setting, we can no longer first apply the flattening algorithm to the first half of samples and then test uniformity on the flattened distribution with the remaining samples, since we are no longer guaranteed sample access to the same distribution.
However, if we randomly subsample our data into a flattening dataset and a testing dataset, we can still ensure that there are not too many collisions in the testing dataset (since random sub-sampling ensures that any bucket with high frequency in the testing dataset must also have high frequency in the flattening dataset).
To meet the first requirement, note that we still collect sufficiently many samples from $\p$ to ensure that the number of collisions are separated in the completeness and soundness cases.

\paragraph{Closeness Testing.}
In closeness testing, we are given sample access to $\rD_1 = \frac{1}{2}(\p_1 + \q_1)$ and $\rD_2 = \frac{1}{2}(\p_2 + \q_2)$ and asked to determine if $\p_1, \p_2$ are identical or far from each other.
Our algorithm aims to use queries to simulate a collision based closeness testing algorithm \cite{DBLP:journals/cjtcs/DiakonikolasGPP19}.
In the standard setting, \cite{DBLP:journals/cjtcs/DiakonikolasGPP19} show that there is an optimal collision based estimator, which computes $Z = C_{\p_1} + C_{\p_2} - \frac{m - 1}{m} C_{\p_1\p_2}$ where $C_{\p_1}, C_{\p_2}, C_{\p_1\p_2}$ denote the $\p_1$-collisions, $\p_2$-collisions and collisions between $\p_1, \p_2$.
It is easy to see that $\EX[Z] = \binom{m}{2} \norm{\p_1 - \p_2}_{2}^{2}$ so that $Z$ gives a closeness tester in $\ell_{2}$ norm.

Given an upper bound $\norm{\p_{1}}_{2}^{2}, \norm{\p_{2}}^{2} \leq b$, \cite{DBLP:journals/cjtcs/DiakonikolasGPP19} show that $Z$ does not deviate from its mean by more than $\simeq m^2 \max(\alpha, \eps^{2})$ where $\alpha = \norm{\p_1 - \p_2}_{2}^{2}$.
Thus, our goal is to estimate the collision counts up to this error.
For instance, we hope to estimate $\hat{C}_{\p_1, \p_2}$ such that $|\hat{C}_{\p_1, \p_2} - C_{\p_1, \p_2}| \ll m^{2} \eps^{2}$.
Since there are at most $O(m^2 b) = O(m)$ collisions (as flattening ensure $b = O(1/m)$) this is equivalent to bias estimation up to accuracy $m \eps^{2}$ which requires $\bigO{\frac{1}{m^{2} \eps^{4}}}$ queries.
Since $\norm{\p_{1} - \p_{2}}_{1} \geq \eps$ implies that $\norm{\p_{1} - \p_{2}}_{2} \geq \frac{\eps}{\sqrt{n}}$, we obtain an algorithm using $\bigO{\frac{n^2}{m^2 \eps^{4}}}$ queries.
$\hat{C}_{\p_1}, \hat{C}_{\p_2}$ can be estimated similarly.

However, it is still not obvious that we can simply compare the estimated test statistic $\hat{Z} = \hat{C}_{\p_1} + \hat{C}_{\p_2} - \frac{m - 1}{m} \hat{C}_{\p_1 \p_2}$ to some predetermined threshold. 
In particular, we do not have control over the number of samples from $\rD_1, \rD_2$ that are drawn by $\p_1,\p_2$ and we do not have a query-efficient way to compute these numbers to sufficient accuracy, unlike in the uniformity testing case, where cruder estimates are sufficient. 
(Of course, we do know that these numbers are well-concentrated around their expectation $\lambda m$, but this is not enough on its own.) Thus, another technical issue is that differing numbers of samples from $\p_1,\p_2$ can cause the collision counts $\hat{C}_{\p_1}, \hat{C}_{\p_2}$ to look very different, even when $\p_1 = \p_2$, causing the tester to reject. Therefore, bounding the difference $|Z - \hat{Z}|$ requires carefully bounding the effects of these "extra" collisions (see the proof of \Cref{thm:closenessUB} for details). We remark that this technical issue would be avoided in a simpler, but less realistic model, where an \emph{exactly} $\lambda$-fraction of samples are drawn from the target distributions $\p_1,\p_2$. However, we feel that this would significantly sacrifice the applicability and appeal of the model.

\subsection{Lower Bounds}

\paragraph{Bias Estimation Lower Bound.}
We begin with a simple lower bound for bias estimation.
Our construction for general $\lambda$ is slightly more involved.
In the completeness case $\p, \q$ are both fair coins (i.e. uniform), while in the soundness case $\p$ has bias $\frac{1}{2} + \eps$ while $\q$ has bias $\frac{1}{2} - \eps$ (i.e. far from uniform).
Since in both cases, $\rD$ is a fair coin, any given sample reveals no information about $\p$.
In general, our hard instances will ensure that $\rD$ is identical in both completeness and soundness cases so that samples alone reveal no information.
Consider a sample that shows heads.
If both $\p, \q$ are fair coins, a query on this sample reveals $\p$ with probability $\frac{1}{2}$.
On the other hand, if $\p$ has bias $\frac{1}{2} + \eps$, a query reveals $\p$ with probability $\frac{1}{2} + \eps$.
In particular, each query simulates a coin flip.
A standard bias estimation lower bound then shows that $\Omega(\eps^{-2})$ queries are necessary.

\paragraph{Uniformity Testing Lower Bound.}
We begin by showing that any algorithm that takes $m \leq n^{0.99}$ samples requires $\bigOm{\frac{n}{m}}$ queries (see \Cref{thm:dist-contamination-lb}).
Consider the following instance (see \Cref{def:uniformity-hard-instance}).
There are $m$ heavy buckets where $\p$ has mass $\frac{1}{n}$ and $\q$ has mass $\frac{4\eps}{n} + \frac{1 - 4 \eps}{m}$.
In the completeness case, $\p$ has mass $\frac{1}{n}$ on the remaining $n - m$ light buckets while $\q$ has mass $\frac{4 \eps}{n}$.
In the soundness case, $\p, \q$ have $\frac{1 + 2 \eps}{n}$ and $\frac{2 \eps}{n}$ respectively on half of the remaining light buckets and mass $\frac{1 - 2 \eps}{n}$ and $\frac{6 \eps}{n}$ respectively on the other half.
In both cases, the light buckets collectively produce at most $\bigTh{\frac{m^2}{n}}$ collisions while $\rD$ produces at least $m^2 \norm{\rD}_{2}^{2} = \Theta(m)$ collisions.
In fact, for any fixed frequency $c$, there are $\Theta(m)$ $c$-collisions among the heavy buckets.
Since $\p$ is identical on heavy buckets, these queries should not reveal any information and any correct algorithm must query at least one light bucket.
However, any collision query lands on a light bucket with probability at most $\bigO{\frac{m^2/n}{m}} = \bigO{\frac{m}{n}}$, so any uniformity tester requires $\bigOm{\frac{n}{m}}$ queries.

Examining this instance more closely also yields the $\bigOm{\frac{n}{m \eps^{4}}}$ lower bound. 
Consider an algorithm that randomly queries collisions.
The distribution produces $\Theta(m)$ collisions (mostly on heavy buckets).
On the other hand $\p$ produces $\frac{m^2}{n}$ collisions when $\p$ is uniform and $\geq \frac{m^2(1 + \eps^{2})}{n}$ collisions when $\p$ is far from uniform.
Thus, a randomly chosen collision is a $\p$ collision with probability $\frac{m}{n}$ in the former case and $\frac{m(1 + \eps^2)}{n}$ in the latter.
A standard lower bound for bias estimation shows that $\bigOm{\frac{n}{m \eps^{4}}}$ queries are necessary to distinguish the two cases.
Here, note that since $\frac{m}{n} \ll 1$, only $\bigO{\frac{n}{m \eps^{4}}}$ queries are necessary to estimate the bias of a coin with small entropy in contrast to the $\bigO{\frac{n^2}{m^2 \eps^{4}}}$ queries necessary to estimate the bias of a coin with high entropy up to the same accuracy.

To formalize our intuition (and obtain a lower bound for arbitrary algorithms), we employ a mutual information based framework for obtaining the tight $\bigOm{\frac{n}{m \eps^{4}}}$ lower bound, which first appeared in the distribution testing context in \cite{diakonikolaskane2016}.
Let $\innerAlg$ be a uniformity tester with $m$ samples and $q$ queries.
First, we make two standard assumptions 1) $\innerAlg$ takes $\Poi(m)$ samples, and 2) each bucket is heavy with probability $\frac{m}{n}$ and light with probability $\frac{n - m}{n}$.\footnote{In the soundness case, each bucket is each type of light bucket with equal probability.}
For 1) it is easy to see that $\Poi(m) = O(m)$ with high probability.
For 2), while $\p, \q$ chosen in this way may not be distributions, they are non-negative measures with total mass close to $1$.
Normalizing by $\norm{\p}_{1}, \norm{\q}_{1}$ therefore should not significantly change the distribution.
These assumptions allow us to assume that samples observed from each bucket are independent, which will be crucial to our analysis.
In particular, we may assume that the algorithm observes $Y_i \sim \Poi(m \rD[i])$ samples from the $i$-th bucket, where $Y_i = A_i + B_i$ with $A_i \sim \Poi(m \p[i]/2), B_i \sim \Poi(m \q[i]/2)$ are sampled independently.

Suppose an algorithm $\innerAlg$ takes $m$ samples and $q$ queries.
We will in fact assume that a single query reveals all the samples on the queried bucket, and prove a lower bound on the number of queried buckets.
Note that this assumption only strengthens $\innerAlg$.
Let $Q \subset [n]$ be the set of queried buckets.
Let $T$ be the input to the algorithm.
In particular, let $T = (T_1, \dotsc, T_n)$ where $T_i = (A_i, B_i)$ if $i \in Q$ is queried and $T_i = Y_i$ otherwise.
We design an adversary that flips an unbiased bit $X$: and produces a completeness instance ($\p$ uniform) if $X = 0$ and a soundness instance ($\p$ far from uniform) if $X = 1$.
To prove a lower bound on $q$, we require two facts:
\begin{enumerate}
    \item If $\innerAlg$ is correct, then $I(T:X) = \Omega(1)$.
    \label{it:overview:uniformity-mi-lb}
    \item If $\innerAlg$ makes $q$ queries, then $I(T:X) = \bigO{q \cdot \frac{\eps^{4} m}{n}}$.
    \label{it:overview:uniformity-mi-ub}
\end{enumerate}
The lower bound on $q$ immediately follows from the above two observations.
\Cref{it:overview:uniformity-mi-lb}, which states that any correct algorithm must observe samples correlated with the desired output, follows essentially from the data processing inequality.
Since $\innerAlg$ is a randomized function such that $\innerAlg(T) = X$ with probability $\geq \frac{2}{3}$, $\innerAlg(T)$ and therefore $T$ must be correlated with $X$ (see \Cref{lemma:mutual-info-bound}).
\Cref{it:overview:uniformity-mi-ub} says that few queries do not have significant correlation with the desired output $X$.
To prove the bound \Cref{it:overview:uniformity-mi-ub}, we observe that if $i \not\in Q$ is not queried, $T_i = Y_i \sim \Poi(m \rD[i])$ is independent of $X$.
Furthermore, since the samples of bucket $i$ are independent (even conditioned on $X$) it suffices to bound the mutual information gained from a single bucket.
This follows from a technical calculation (see \Cref{lemma:uniformity-one-bucket-mi} for details).

\paragraph{Closeness Testing Lower Bound.}

Consider the following instance (see \Cref{def:closeness-hard-instance}).
As before, we assume $\innerAlg$ takes $\Poi(m)$ samples.
Each bucket is independently heavy with probability $\frac{m}{n}$, light with probability $\frac{n - m}{n}$.
Light buckets are type 1 and 2 with equal probability.
On heavy buckets $\p_1, \q_1, \p_2, \q_2$ have mass $\frac{1 - \eps}{m}$.
In the completeness case, on type 1 light buckets $\p_1, \p_2$ have mass $\frac{3 \eps}{2(n - m)}$ and $\q_1, \q_2$ have mass $\frac{\eps}{2(n - m)}$, and on type 2 buckets, $\q_1, \q_2$ have mass $\frac{3 \eps}{2(n - m)}$ and $\p_1, \p_2$ have mass $\frac{\eps}{2(n - m)}$.
Note that $\p_1 = \p_2$.
In the soundness case, on type 1 light buckets $\p_1, \q_2$ have mass $\frac{3 \eps}{2(n - m)}$ and $\q_1, \p_2$ have mass $\frac{\eps}{2(n - m)}$, and on type 2 buckets, $\q_1, \p_2$ have mass $\frac{3 \eps}{2(n - m)}$ and $\p_1, \q_2$ have mass $\frac{\eps}{2(n - m)}$.
Note $\p_1, \p_2$ are $\eps$-far apart with high probability.
Now, consider the number of collisions produced between every pair of distributions in the two scenarios.

\begin{table}[h!]
    \centering
    \begin{tabular}{|| c | c c c c ||} 
         \hline
         & $\p_1$ & $\q_1$ & $\p_2$ & $\q_2$ \\ [0.5ex] 
         \hline\hline
         $\p_1$ & $\frac{m}{2} + \frac{5 \eps^2 m^2}{2 n}$ & $\frac{m}{2} + \frac{3 \eps^2 m^2}{2 n}$ & $\frac{m}{2} + \frac{5 \eps^2 m^2}{2 n} {\color{red} - \frac{\eps^2 m^2}{n}}$ & $\frac{m}{2} + \frac{3 \eps^2 m^2}{2 n} {\color{red} + \frac{\eps^2 m^2}{n}}$ \\ 
         $\q_1$ & & $\frac{m}{2} + \frac{5 \eps^2 m^2}{2 n}$ & $\frac{m}{2} + \frac{3 \eps^2 m^2}{2 n} {\color{red} + \frac{\eps^2 m^2}{n}}$ & $\frac{m}{2} + \frac{5 \eps^2 m^2}{2 n} {\color{red} - \frac{\eps^2 m^2}{n}}$ \\
         $\p_2$ & & & $\frac{m}{2} + \frac{5 \eps^2 m^2}{2 n}$ & $\frac{m}{2} + \frac{3 \eps^2 m^2}{2 n}$ \\
         $\q_2$ & & & & $\frac{m}{2} + \frac{5 \eps^2 m^2}{2 n}$ \\
         \hline
    \end{tabular}
    \caption{The expected number of collisions between every pair of distributions. 
    For simplicity of expressions, we assume we draw exactly $m$ samples from $\p, \q$, $\binom{m}{2} \simeq \frac{m^2}{2}$, $1 - \eps \simeq 1$, and $n - m \simeq n$. 
    The expected number of collisions shown in black is in the completeness case. 
    Any changes in the soundness case are shown in red.
    No red indicates that the expected number of collisions is identical in both cases.}
    \label{table:closeness-lb-overview}
\end{table}

As before, consider an algorithm $\innerAlg$ that queries random collisions.
Suppose for example that this algorithm attempts to test closeness by counting the number of $(\p_1, \p_2)$-collisions.
On average, $\rD_1, \rD_2$ produces $\Theta(m)$ collisions.
In the completeness case, there are $\frac{m}{2} + \frac{5 \eps^2 m^2}{2n}$ $(\p_1, \p_2)$-collisions while in the soundness case there are only $\frac{m}{2} + \frac{3 \eps^2 m^2}{2n}$.
In particular, a random collision yields $(\p_1, \p_2)$ with probability $\Theta(1) + \frac{5 \eps^2 m}{n}$ in the former case and $\Theta(1) + \frac{3 \eps^2 m}{n}$ in the latter.
Estimating the bias of a coin (with bias $\Omega(1)$) up to accuracy $\frac{\eps^2 m}{n}$ requires $\bigO{\frac{n^2}{m^2 \eps^{4}}}$ queries.
Note that in contrast to uniformity testing, we obtain a stronger lower bound as we now have to estimate the bias of a coin with high entropy.

To formalize this (again for arbitrary algorithms), we employ a similar mutual information framework to the uniformity testing problem.
Much of the formal argument is similar, except we obtain a stronger result that states that querying a bucket $i$ only reveals $\bigO{\frac{\eps^{4} m^{2}}{n^{2}}}$ information (\Cref{lemma:closeness-one-bucket-mi}).
The query lower bound then follows identically.

\subsection{Query-Optimal Testing with Superlinear Samples} \label{sec:tech-sup}

It is a standard fact that $O(n/\eps^2)$ samples are sufficient to \emph{learn} a discrete distribution $\p$ up to total variation distance $\eps$. When this many samples are allowed, we exploit this fact to obtain testers with query complexity $\widetilde{O}(1/\eps^2\lambda)$, which is optimal up to $\mathrm{polylog}(1/\eps\lambda)$ factors. Our testers in this regime follow a different approach from the sublinear sample setting, outlined in \Cref{sec:tech-sub}, which we now describe. 

\paragraph{Closeness Testing.} First, using $O(n \log n)$ samples up-front, a flattening argument (\Cref{thm:strong-mixture-flat}) allows us to assume our distributions have bounded infinity norm, namely $\norm{\rD_1}_{\infty},\norm{\rD_2}_{\infty} \leq O(1/n)$ and $\norm{\p_1}_{\infty},\norm{\p_2}_{\infty} \leq O(1/\lambda n)$. Another simple reduction (\Cref{lem:unifier}) allows us to further assume that $\rD_b[i] = \Theta(1/n)$ for every $i \in [n]$, and so we will assume this throughout the discussion.

We begin by taking $m = \widetilde{O}(n/\eps^2\lambda)$ samples (enough for learning) from each mixture $\rD_b$ (for $b \in \{1,2\}$) and let $Z_1^{(b)},\ldots,Z_n^{(b)}$ denote the partition on these samples by domain element. For each $i \in [n]$, let $P_b[i],Q_b[i],R_b[i] = P_b[i] + Q_b[i]$ denote the number of samples in $Z_i^{(b)}$ which were drawn according to $\p_b,\q_b,\rD_b$, respectively. Note that the tester knows $R_b[i]$ without making any queries, but not $P_b[i],Q_b[i]$. By the learning bound, we have 
\[
\frac{P_b[i]}{\lambda m} \approx \p_b[i] ~\text{ and }~ \frac{R_b[i]}{m} \approx \rD_b[i] 
\approx \frac{1}{n} \text{.}
\]
(In particular, we need a small additive error estimate of $\p_b[i]$, which relies on our bound on the infinity norms guaranteed by the flattening reduction.) These estimates would clearly be useful for testing closeness of $\p_1,\p_2$, but as mentioned we do not know $P_b[i]$. However, by performing a verification query on a uniform random sample in $Z^{(b)}_i$, we simulate a $\rho_b[i] := P_b[i]/R_b[i]$ probability coin. Thus, the idea is to use queries to simulate these coin flips in order to estimate $P_b[i]/\lambda m \approx \p_b[i]$. If in doing so we can detect that $\p_1[i] \neq \p_2[i]$ for some $i \in [n]$, then we can safely reject. 

When $\norm{\p_1 - \p_2} \geq \eps$ (and using our assumption that $\norm{\p_b}_{\infty} \leq O(1/\lambda n)$), a bucketing argument reveals that for some $j \leq O(\log(1/\eps\lambda))$, sampling $i \sim [n]$ uniformly yields $|\p_1[i] - \p_2[i]| \geq \eps 2^j/n$ with probability at least $\gtrsim 2^{-j}$. Thus, for each such $j$, we sample a set $X_j$ of $\approx 2^j$ uniform samples. Then, for each element $X_j$ it suffices to obtain an estimate of $\p_b[i]$ within additive error $\lesssim \eps 2^j/n$. Now, since 
\[
\rho_b[i] = \frac{P_b[i]}{R_b[i]}\approx \p_b[i] \cdot \frac{\lambda m}{R_b[i]} \approx \p_b[i] \cdot \lambda n
\]
(where $R_b[i]$ is known), this translates to a $\lesssim \lambda \eps 2^j$ additive error estimate of $\rho_b[i]$. By Markov's inequality, we can argue that every $i \in X_j$ satisfies $\p_b[i] \lesssim 2^j/n$ and so $\rho_b[i] \lesssim \lambda 2^j$. Then, using a standard bias estimation (\Cref{lem:bias-add}), this costs $\lesssim \frac{\lambda 2^j}{(\lambda \eps 2^j)^2} = \frac{1}{\eps^2 \lambda 2^j}$ queries to obtain. (Each query simulates an independent $\mathrm{Bern}(\rho_b[i])$ trial). Summing over all $j$ yields the desired query complexity. 

\paragraph{Uniformity Testing against Adversarial Contamination.} In the above closeness tester, we used the flattening technique of \cite{diakonikolaskane2016} adapted to obtain a bound on the infinity norm ($\norm{\p_b}_{\infty} \lesssim 1/\lambda n$) using $O(n \log n)$ samples. This was useful for two reasons:
\begin{enumerate}
    \item (Sample complexity) Our approach relies on $P_b[i]/\lambda m$ being a low additive error ($\lesssim \eps/n$) empirical estimate of $\p_b[i]$. To guarantee this, the sample complexity depends on $\p_1[i],\p_2[i]$ (see \Cref{thm:learn}, \Cref{cor:learn}). In particular, to get sample complexity $\widetilde{O}(n \cdot \mathrm{poly}(\eps^{-1},\lambda^{-1}))$, we need a bound $\p_1[i],\p_2[i] \leq \mathrm{poly}(\eps^{-1},\lambda^{-1})/n$. 
    \item (Query complexity) Our bucketing argument allowed us to say that sampling $i \sim [n]$ \emph{uniformly at random} will, for some $j \leq O(\log(1/\eps\lambda))$, guarantee $|\p_1[i]-\p_2[i]| \gtrsim \eps 2^j/n$ with probability $\gtrsim 2^{-j}$. Then, since the test elements $X_j$ are sampled uniformly, we can argue $\p_b[i] \lesssim 2^j/n$ by Markov's inequality. This allowed us to give an upper bound on the coin probability $\rho_b[i]$, crucially helping us bound the number of queries needed for the bias estimation step (\Cref{lem:bias-add}). Without flattening, one will also need to sample test elements from $\p_b$, resulting in possibly much larger $\rho_b[i]$, and leading to more expensive bias estimations.
\end{enumerate}

In the context of \emph{adversarial contamination}, it is not clear how we can use flattening to obtain the properties we need for this approach. However, in \emph{uniformity testing}, the above issues go away since we are only comparing $\p[i]$ against the uniform probability $1/n$. In particular, as $\p[i]$ becomes much larger than $1/n$, the error we can tolerate in the estimate of $\p[i]$ also increases. Concretely, when $\p[i] > 2/n$, we only need a constant factor multiplicative estimate of $\p[i]$. In summary, for uniformity testing we do not rely on flattening to achieve optimal query complexity and this allows for our approach to tolerate adversarial contamination, with some additional technical effort.

\section{Preliminaries}

Let $[n] = \set{1, \dotsc, n}$.
Let $a \ll b$ denote that $a$ is a (large) constant factor less than $b$.
We will use $\p, \q, \rD, \distribution$ to denote non-negative measures (typically probability distributions).
Given measures $\p_1, \dotsc, \p_t$, let $\sum_{i = 1}^{t} c_i \p_i$ denote the measure that has mass $\sum c_i \p_i$.
In particular, $\frac{\p + \q}{2}$ is a balanced mixture of $\p, \q$.
We typically use $\p$ to denote the target distribution (for which we would like to test some property) and $\q$ to denote the noise distribution.

For a given subset $S \subseteq [n]$, let $(\p|S)$ denote the conditional distribution of $\p$ on $S$ and $\p[S]$ the pseudodistribution induced by restricting $\p$ to $S$.
For example, $\p[i]$ is the mass of the $i$-th bucket under $\p$.

Given a multi-set of samples $T \in [n]^{m}$ over $[n]$, we say the frequency of the $i$-th bucket is the number of occurrences of $i$ in $T$.
A $c$-collision is a tuple of $c$ indices $(i_1, \dotsc, i_c)$ such that $T_{i_1} = \dotsc = T_{i_c}$.
If $T$ is sampled from a mixture $\rD = \lambda \p + (1 - \lambda) \q$, a $\p$-collision (resp. $\q$-collision) is a pair of indices $(i, j)$ where $T_i = T_j$ and both are sampled from $\p$ (resp. $\q$).

We give some useful concentration bounds.

\begin{lemma}[Poisson Concentration (see e.g. \cite{CCPoisson}]
    \label{lemma:poisson-concentration}
    Let $X \sim \Poi(\lambda)$.
    Then for any $x > 0$,
    \begin{equation*}
        \max(\Pr(X \geq \lambda + x), \Pr(X \leq \lambda - x)) < e^{-\frac{x^2}{2(\lambda + x)}} \text{.}
    \end{equation*}
\end{lemma}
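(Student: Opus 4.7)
The plan is to apply the standard Chernoff–Cramér method. The moment generating function of $X \sim \Poi(\lambda)$ is $M(t) = \EX[e^{tX}] = \exp(\lambda(e^t - 1))$, which one obtains in one line from the power series of the Poisson pmf. Applying Markov's inequality to $e^{tX}$ for any $t > 0$ gives
$$\Pr(X \geq \lambda + x) \leq e^{-t(\lambda + x)} M(t) = \exp\bigl(\lambda(e^t - 1) - t(\lambda + x)\bigr).$$

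First I would handle the upper tail by optimizing the right-hand side over $t > 0$. Differentiating the exponent yields the critical point $t^\ast = \log(1 + x/\lambda)$; substituting back produces the classical Bennett-type bound
$$\Pr(X \geq \lambda + x) \leq \exp\bigl(-\lambda h(x/\lambda)\bigr), \qquad h(u) := (1+u)\log(1+u) - u.$$
The remaining analytic step is the elementary inequality $h(u) \geq \frac{u^2}{2(1+u)}$ for all $u \geq 0$, which I would verify by noting that both sides vanish at $u=0$ and comparing first (and if needed, second) derivatives. Setting $u = x/\lambda$ gives $\lambda h(x/\lambda) \geq \frac{x^2}{2(\lambda + x)}$, which matches the claimed tail.

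For the lower tail I would proceed symmetrically, applying Markov to $e^{-tX}$ with $t > 0$, which after optimization gives $\Pr(X \leq \lambda - x) \leq \exp(-\lambda h(-x/\lambda))$ for $0 < x < \lambda$. Here the analogous inequality $h(-u) \geq \frac{u^2}{2}$ on $u \in [0,1)$ is in fact stronger than what is needed, since $\frac{u^2}{2} \geq \frac{u^2}{2(1+u)}$, so the bound follows immediately. For $x \geq \lambda$ the event $\{X \leq \lambda - x\}$ is empty and there is nothing to prove.

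The main obstacle is really just the auxiliary analytic inequality $h(u) \geq u^2/(2(1+u))$ on $u \geq 0$; this is a well-known calculus exercise that underlies Bennett's and Bernstein's inequalities, so I do not expect any true difficulty. An alternative route would be to couple $\Poi(\lambda)$ with the limit of a sum of independent Bernoullis and invoke the multiplicative Chernoff bound directly, but the direct MGF calculation above is cleanest. Since the statement itself is attributed to \cite{CCPoisson}, in the write-up a single-line appeal to that reference suffices.
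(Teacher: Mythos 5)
Your proof is correct and is exactly the standard Chernoff--Cram\'er / Bennett-type argument one would find in the cited source; the paper itself supplies no proof, just the citation, so there is nothing in the paper to diverge from. The MGF computation, the optimization $t^\ast = \log(1+x/\lambda)$ giving $\exp(-\lambda h(x/\lambda))$ with $h(u) = (1+u)\log(1+u) - u$, and the analytic inequalities $h(u) \geq \frac{u^2}{2(1+u)}$ for $u \geq 0$ and $h(-u) \geq u^2/2$ for $u \in [0,1)$ all check out and yield the claimed bound (strictly, since the analytic inequalities are strict for $u > 0$). The one small slip is the remark that ``for $x \geq \lambda$ the event $\{X \leq \lambda - x\}$ is empty'': at $x = \lambda$ the event is $\{X = 0\}$, which has probability $e^{-\lambda}$, not zero. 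The bound still holds there because $e^{-\lambda} < e^{-\lambda/4} = e^{-x^2/(2(\lambda+x))}$, so you should either handle $x = \lambda$ explicitly or note that the optimized bound $\exp(-\lambda h(-x/\lambda))$ degenerates gracefully to $e^{-\lambda}$ as $x \uparrow \lambda$.
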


Note that $\norm{\p}_{c}^{c}$ is the probability a given $c$-tuple of samples collide.
Linearity of expectation then yields the following lemma.

\begin{lemma}
    \label{lemma:c-moment-exp}
    Let $c \geq 2$ be an integer.
    Let $X$ be the number of $c$-collisions given $m$ i.i.d. samples drawn from $\p$.
    Then, $\EX[X] = \binom{m}{c} \norm{\p}_{c}^{c}$.
\end{lemma}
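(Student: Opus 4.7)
The plan is a direct linearity-of-expectation computation, exactly as the sentence immediately preceding the lemma statement suggests. For each fixed $c$-subset $S \subseteq [m]$ of sample indices (there are $\binom{m}{c}$ such subsets), I will introduce the indicator $\mathbbm{1}[S \text{ is a } c\text{-collision}]$, meaning all samples indexed by $S$ land on the same bucket. Then $X = \sum_{S} \mathbbm{1}[S \text{ is a } c\text{-collision}]$, and the lemma reduces to computing $\Pr[S \text{ is a } c\text{-collision}]$ for a single fixed $S$.

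The first step is to observe that since the samples are i.i.d. from $\p$, the probability that $c$ fixed samples all land on bucket $i$ is $\p[i]^c$. Summing over $i \in [n]$ by disjointness of the events $\{$all land on $i\}$ gives
\[
\Pr[S \text{ is a } c\text{-collision}] \;=\; \sum_{i=1}^{n} \p[i]^c \;=\; \norm{\p}_{c}^{c}.
\]
The second step is to apply linearity of expectation across the $\binom{m}{c}$ subsets $S$, yielding $\EX[X] = \binom{m}{c} \norm{\p}_{c}^{c}$, which is precisely the claimed identity.

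There is no genuine obstacle: the only thing to be a little careful about is making sure one is counting unordered $c$-subsets of sample indices (consistent with the definition of $c$-collision as a tuple of $c$ distinct indices up to the implicit convention in the preliminaries section), so that the combinatorial factor is $\binom{m}{c}$ rather than $m^c$ or $m!/(m-c)!$. Once this bookkeeping is fixed, the proof is a one-line application of linearity of expectation over the indicators, and no probabilistic or concentration machinery is needed.
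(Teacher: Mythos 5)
Your argument is correct and is exactly the one the paper intends: the sentence immediately preceding the lemma ("Note that $\norm{\p}_{c}^{c}$ is the probability a given $c$-tuple of samples collide. Linearity of expectation then yields the following lemma.") is precisely the linearity-of-expectation-over-subsets computation you spell out, and your aside about counting unordered $c$-subsets correctly reconciles the paper's "tuple" phrasing with the $\binom{m}{c}$ factor in the statement.
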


Using Markov's inequality, we can obtain a simple bound on the probability that a $c$-collision occurs when $\EX[X] \ll 1$.

\begin{corollary}
    \label{cor:c-collision-ub}
    Let $c \geq 2$ be an integer and $t > 1$.
    Let $X$ be the number of $c$-collisions given $m$ i.i.d. samples drawn from $p$.
    Then, $\Pr(X \geq t) \leq \frac{\norm{\p}_{c}^{c} \binom{m}{c}}{t} \leq O_{c}(\norm{\p}_{c}^{c} m^{c}/t)$.
\end{corollary}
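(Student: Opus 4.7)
The plan is simply to combine \Lem{c-moment-exp} with Markov's inequality. Since $X$ counts $c$-collisions among i.i.d.\ samples from $\p$, it is a non-negative random variable whose expectation is exactly $\binom{m}{c}\norm{\p}_c^c$ by the preceding lemma. Markov's inequality then gives
\[
\Pr(X \geq t) \;\leq\; \frac{\EX[X]}{t} \;=\; \frac{\binom{m}{c}\norm{\p}_c^c}{t},
\]
which is the first inequality in the statement.

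For the second inequality, I would use the standard bound $\binom{m}{c} \leq \frac{m^c}{c!}$, so that
\[
\frac{\binom{m}{c}\norm{\p}_c^c}{t} \;\leq\; \frac{\norm{\p}_c^c\, m^c}{c!\, t} \;=\; O_c\!\left(\frac{\norm{\p}_c^c\, m^c}{t}\right),
\]
where the hidden constant is $1/c!$ and hence depends only on $c$. There is no real obstacle here; the statement is a direct corollary, and the only thing to note is that $t > 1$ (indeed $t \geq 1$) is what makes the Markov bound nontrivial, while the absorption of $\binom{m}{c}$ into $O_c(m^c)$ is immediate from the factorial in the denominator.
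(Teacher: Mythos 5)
Your proof is correct and matches the paper's intended argument exactly: the corollary is obtained by combining \Cref{lemma:c-moment-exp} with Markov's inequality, and the second inequality is just $\binom{m}{c} \leq m^{c}/c!$ absorbed into the $O_{c}(\cdot)$. Nothing further is needed.
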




\subsection{Information Theory}

We need the following tools from information theory.

\begin{definition}
    \label{def:entropy}
    Let $X$ be a random variable over domain $\domain$.
    The \emph{entropy} of $X$ is
    \begin{equation*}
        H(X) = \sum_{x \in \domain} \Pr(X = x) \log \frac{1}{\Pr(X = x)}.
    \end{equation*}
\end{definition}

\begin{definition}
    \label{def:k-l-divergence}
    Let $X, Y$ be random variables over domain $\domain$.
    The \emph{KL-divergence} of $X, Y$ is
    \begin{equation*}
        D(X||Y) = \sum_{x \in \domain} \Pr(X = x) \log \frac{\Pr(X = x)}{\Pr(Y = x)}.
    \end{equation*}
\end{definition}

\begin{definition}
    \label{def:mutual-info}
    Let $X, Y$ be random variables over domain $\domain$.
    The \emph{mutual information} of $X, Y$ is
    \begin{equation*}
        I(X: Y) = \sum_{x, y \in \domain} \Pr((X, Y) = (x, y)) \log \frac{\Pr((X, Y) = (x, y))}{\Pr(X = x) \Pr(Y = y)}.
    \end{equation*}
\end{definition}

\begin{lemma}[Data Processing Inequality]
    \label{lemma:data-processing-inequality}
    Let $X, Y, Z$ be random variables over domain $\domain$ such that $Z$ is independent of $X$ conditioned on $Y$.
    Then
    \begin{equation*}
        I(X: Y) \geq I(X: Z).
    \end{equation*}
\end{lemma}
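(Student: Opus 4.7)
The plan is to prove the data processing inequality via the chain rule for mutual information applied to the joint distribution of $(X, Y, Z)$. The Markov structure ``$Z$ independent of $X$ given $Y$'' is exactly the hypothesis $X \to Y \to Z$, and this will let one term in a chain-rule expansion vanish while another non-negative term remains.

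Concretely, I would first recall (or briefly derive from \Cref{def:mutual-info} and \Cref{def:entropy}) the chain rule
\begin{equation*}
I(X : (Y, Z)) = I(X : Y) + I(X : Z \mid Y) = I(X : Z) + I(X : Y \mid Z),
\end{equation*}
where the conditional mutual information $I(A : B \mid C)$ is defined as the expectation over $C$ of the mutual information between $A$ and $B$ under their conditional joint distribution given $C$. Both equalities are standard identities that follow by regrouping logarithms of ratios of joint and marginal probabilities.

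Next I would use the hypothesis. Conditional independence of $X$ and $Z$ given $Y$ means that for all $x, y, z$ we have $\Pr(X = x, Z = z \mid Y = y) = \Pr(X = x \mid Y = y)\Pr(Z = z \mid Y = y)$, so inside each conditional slice the log-ratio defining mutual information is zero, giving $I(X : Z \mid Y) = 0$. On the other hand, $I(X : Y \mid Z) \geq 0$ by non-negativity of KL-divergence (which in turn follows from Jensen's inequality applied to $-\log$). Substituting these two facts into the chain-rule identity yields
\begin{equation*}
I(X : Y) = I(X : Z) + I(X : Y \mid Z) \geq I(X : Z),
\end{equation*}
which is the claim.

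The only delicate point is that the paper's \Cref{def:mutual-info} does not explicitly introduce conditional mutual information, so I would make sure to either unfold the chain-rule identity directly in terms of the joint/marginal probabilities (turning everything into an explicit sum of $\Pr(\cdot)\log\frac{\Pr(\cdot)}{\Pr(\cdot)}$ terms and verifying the identity by elementary manipulation of logarithms) or else briefly define $I(X:Z\mid Y) := \mathbb{E}_Y[I(X:Z \mid Y = y)]$ before invoking the chain rule. Otherwise the argument is entirely routine; no obstacles are expected.
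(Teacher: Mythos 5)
Your proof is correct; the chain-rule argument you give (expand $I(X:(Y,Z))$ two ways, kill $I(X:Z\mid Y)$ by the Markov hypothesis, keep $I(X:Y\mid Z)\geq 0$) is the standard textbook proof of the data processing inequality. Note, however, that the paper itself states \Cref{lemma:data-processing-inequality} as a known fact and never proves it — it is invoked once, inside the proof of \Cref{lemma:mutual-info-bound}, but the sentence ``We give the proof of these standard facts for completeness in \Cref{sec:omitted-proofs}'' covers only \Cref{lemma:mutual-info-bound} and \Cref{claim:MI_asymp}, not the DPI. So there is no in-paper proof to compare against; your argument would be a valid self-contained addition. Your closing caveat about conditional mutual information not being defined in \Cref{def:mutual-info} is well taken — for the paper's purposes (finite domains, unbiased bit $X$) it suffices to define $I(X:Z\mid Y)$ as the average over $y$ of the mutual information of the conditional joint law, observe it is a sum of KL divergences and hence non-negative, and verify the chain rule by direct manipulation of the log-ratios, exactly as you suggest.
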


The following shows that 

\begin{restatable}{lemma}{MutualInfoBound}
    \label{lemma:mutual-info-bound}
    Let $X \sim \Bern(\frac{1}{2})$ and $T$ be a random variable possibly correlated with $X$.
    If there exists a (randomized) function $f$ such that $f(T) = X$ with probability at least $51\%$ probability, then $I(X:T) \geq 2 \cdot 10^{-4}$.
\end{restatable}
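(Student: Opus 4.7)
The plan is to combine the data processing inequality with Fano's inequality. Since $X$ and the candidate predictor $f(T)$ are both binary random variables, we get a clean one-line bound once we reduce to this setting.

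First I would define $Y = f(T)$. Since $f$ is randomized, write $Y = g(T, R)$ where $g$ is deterministic and $R$ is independent of $(X, T)$. Then $X \to T \to (T,R) \to Y$ forms a Markov chain (the external randomness $R$ is independent of $X$), and so by \Cref{lemma:data-processing-inequality} we have $I(X:T) \geq I(X:Y)$. Thus it suffices to prove $I(X:Y) \geq 2 \cdot 10^{-4}$.

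Next I would apply Fano's inequality to the binary random variable $X$. Writing $p_e = \Pr(X \neq Y) \leq 0.49$, Fano's inequality gives
\begin{equation*}
    H(X \mid Y) \leq H_b(p_e) + p_e \log(|\mathrm{supp}(X)| - 1) = H_b(p_e),
\end{equation*}
where $H_b(p) = -p \log p - (1-p) \log(1-p)$ is the binary entropy (in the same log base as used in \Cref{def:entropy}). Since $H_b$ is increasing on $[0, 1/2]$ and $p_e \leq 0.49$, we have $H(X \mid Y) \leq H_b(0.49)$. Since $X \sim \mathrm{Bern}(1/2)$, $H(X) = \log 2$, so
\begin{equation*}
    I(X:Y) = H(X) - H(X \mid Y) \geq \log 2 - H_b(0.49).
\end{equation*}

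The final step is a routine numerical check: using natural logarithms, $\log 2 \approx 0.6931$ and $H_b(0.49) = 0.49 \ln(1/0.49) + 0.51 \ln(1/0.51) \approx 0.6929$, giving $I(X:Y) \geq 2 \cdot 10^{-4}$. (If the paper's convention is $\log_2$, the gap becomes roughly $3 \cdot 10^{-4}$, which only strengthens the bound.) There is no real obstacle here — the only subtlety is handling the randomization of $f$ correctly via the independent-randomness trick, and remembering that Fano has no residual term in the binary case.
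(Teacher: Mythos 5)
Your proof is correct and takes essentially the same route as the paper: both use the data processing inequality to reduce to bounding $I(X:f(T)) = H(X) - H(X\mid f(T))$, and then upper-bound $H(X\mid f(T))$ by the binary entropy of the error probability (the paper via Jensen's inequality on the concave binary entropy function, you via Fano's inequality, which in the binary-alphabet case is the identical estimate). Your explicit handling of the randomization of $f$ through the Markov chain $X \to T \to (T,R) \to f(T)$ is a nice clarification of a detail the paper treats implicitly.
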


The following inequality is an useful tool for bounding the mutual information.




\begin{restatable}[Asymptotic Upper Bound of Mutual Information]{claim}{MIAsymp}
    \label{claim:MI_asymp}
    Let $X$ be an unbiased uniform random bit, and $M$ be a discrete random variable possibly correlated with $X$.
    Then the mutual information between the two random variables satisfies that
    \begin{equation*}
        I(X:M) \leq 2 \sum_{a}  \frac{(\Pr[M=a|X=0] - \Pr[M=a|X=1])^2}{\Pr[M=a|X=0] + \Pr[M=a|X=1]} \text{.}
    \end{equation*}
\end{restatable}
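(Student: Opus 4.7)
The plan is to rewrite $I(X:M)$ in terms of the conditional distributions of $M$ given $X{=}0$ and $X{=}1$, and then apply the elementary inequality $\log(1+t)\le t$ to each KL term. Concretely, let $p_a := \Pr[M=a \mid X=0]$ and $q_a := \Pr[M=a \mid X=1]$, so that $\Pr[X=x,M=a] = \tfrac12 \Pr[M=a\mid X=x]$ and $\Pr[M=a] = \tfrac{p_a+q_a}{2}$. Expanding the definition of mutual information, the ratio inside the logarithm simplifies, and one obtains
\begin{equation*}
I(X:M) \;=\; \tfrac12 \sum_a p_a \log\!\frac{2p_a}{p_a+q_a} \;+\; \tfrac12 \sum_a q_a \log\!\frac{2q_a}{p_a+q_a},
\end{equation*}
which is $\tfrac12\bigl[D(P\,\|\,M_{\text{avg}}) + D(Q\,\|\,M_{\text{avg}})\bigr]$ where $M_{\text{avg}} = (P+Q)/2$, i.e.\ the Jensen--Shannon divergence of $P$ and $Q$.

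Next I would bound each logarithm by rewriting $\frac{2p_a}{p_a+q_a} = 1 + \frac{p_a-q_a}{p_a+q_a}$ and similarly for the $q_a$ term, and applying $\log(1+t)\le t$ (valid for $t>-1$, and here $|t|\le 1$). This gives
\begin{equation*}
\sum_a p_a \log\!\frac{2p_a}{p_a+q_a} \;\le\; \sum_a p_a \cdot \frac{p_a-q_a}{p_a+q_a},
\qquad
\sum_a q_a \log\!\frac{2q_a}{p_a+q_a} \;\le\; \sum_a q_a \cdot \frac{q_a-p_a}{p_a+q_a}.
\end{equation*}
Adding the two inequalities, the numerator telescopes to $p_a(p_a-q_a) + q_a(q_a-p_a) = (p_a-q_a)^2$, yielding
\begin{equation*}
I(X:M) \;\le\; \tfrac12 \sum_a \frac{(p_a-q_a)^2}{p_a+q_a}.
\end{equation*}
This is in fact a factor of $4$ stronger than the stated bound $2\sum_a \frac{(p_a-q_a)^2}{p_a+q_a}$, so absorbing any base-of-logarithm constant (or simply using the loose inequality $\tfrac12 \le 2$) finishes the proof.

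\textbf{Main obstacle.} There is no real obstacle: the only subtlety is the sign of $\log(1+t)-t$, which the elementary inequality $\log(1+t)\le t$ handles uniformly for $t\in(-1,\infty)$, so both terms can be bounded in the same direction before summing. Equivalently, one could cite the standard bound $D(P\,\|\,Q)\le \chi^2(P,Q)$ applied to each of $D(P\,\|\,M_{\text{avg}})$ and $D(Q\,\|\,M_{\text{avg}})$ and obtain the same conclusion; I would present the $\log(1+t)\le t$ argument since it is self-contained and avoids invoking named inequalities.
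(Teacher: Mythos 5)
Your proof is correct and takes essentially the same approach as the paper: both apply the elementary bound $\log x \le x-1$ (equivalently $\log(1+t)\le t$) term by term to the mutual-information expansion, with you working through the conditional probabilities $p_a,q_a$ (the Jensen--Shannon view) while the paper works with the joint probabilities $\Pr[X=i,M=a]$ and converts to conditionals at the end. Incidentally, your calculation cleanly yields the sharper constant $\tfrac12$, whereas the paper's final display has a small constant-factor slip (it writes $2$ where the equality in fact gives $\tfrac12$), which is harmless since the claim only asserts the looser bound with coefficient $2$.
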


    We give the proof of these standard facts for completeness in \Cref{sec:omitted-proofs}.



\subsection{Split Distributions and Flattening}

We require the flattening technique of \cite{diakonikolaskane2016}.

\begin{definition}[Definition 2.4 of \cite{diakonikolaskane2016}]
    \label{def:split-distributions}
    Given a distribution $\rD$ on $[n]$ and a multi-set $S$ of elements of $[n]$, define the split distribution $\rD_{S}$ on $[n + |S|]$ as follows: 
    For $1 \leq i \leq n$, let $a_i$ denote $1$ plus the number of elements of $S$ that are equal to $i$. 
    Thus, $\sum_{i = 1}^{n} a_i = n + |S|$.
    We can therefore associate the elements of $[n + |S|]$ to elements of the set $B = \set{(i,j) : i \in [n], 1 \leq j \leq a_i}$.
    We now define a distribution $\rD_{S}$ with support $B$, by letting a random sample from $\rD_{S}$ be given by $(i, j)$, where $i$ is drawn randomly from $\rD$ and $j$ is drawn randomly from $[a_i]$.
\end{definition}

\begin{fact}[Fact 2.5 of \cite{diakonikolaskane2016}]
    \label{fact:split-preserve-1-norm}
    Let $\p$ and $\q$ be probability distributions on $[n]$, and $S$ a given multi-set of $[n]$. 
    Then: 
    (1) We can simulate a sample from $\p_S$ or $\q_S$ by taking a single sample from $\p$ or $\q$, respectively. 
    (2) It holds $\norm{\p_{S} - \q_{S}}_{1} = \norm{\p - \q}_{1}$.
\end{fact}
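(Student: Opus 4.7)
The plan is to unpack the definition of $\rD_S$ directly; both claims follow from a one-line computation, so the main question is simply to write down the relevant identities cleanly. The only conceptual point worth isolating is that the multi-set $S$ is known to the simulator, so the numbers $a_i$ are deterministic auxiliary data rather than random quantities that need to be estimated from samples.

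For part (1), I would appeal to the definition of $\rD_S$ as a two-stage sampler: draw $i$ from $\rD$ and then, independently, draw $j$ uniformly from $[a_i]$. Since the counts $a_i$ depend only on the fixed multi-set $S$, the second stage can be performed internally by the simulator with fresh private randomness. Therefore a single sample $i \sim \p$ (respectively $i \sim \q$) together with an internal uniform draw from $[a_i]$ produces a genuine sample from $\p_S$ (respectively $\q_S$). This is precisely what the claim asserts.

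For part (2), the key observation is that from the two-stage sampler one reads off $\p_S(i,j) = \p(i)/a_i$ and $\q_S(i,j) = \q(i)/a_i$ for every $(i,j) \in B$. Therefore
\[
\norm{\p_S - \q_S}_1 \;=\; \sum_{i=1}^n \sum_{j=1}^{a_i} \left| \frac{\p(i)}{a_i} - \frac{\q(i)}{a_i} \right| \;=\; \sum_{i=1}^n a_i \cdot \frac{|\p(i) - \q(i)|}{a_i} \;=\; \sum_{i=1}^n |\p(i) - \q(i)| \;=\; \norm{\p - \q}_1,
\]
where the cancellation of $a_i$ with the number of sub-buckets it introduces is the whole content of the identity.

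There is no real obstacle here: both parts are immediate from the definition of $\rD_S$, and the argument goes through verbatim for any pair of non-negative measures (not just probability distributions), since nothing in the computation used $\norm{\p}_1 = \norm{\q}_1 = 1$. The only thing one has to be careful about is bookkeeping for the $a_i$'s — i.e., remembering that each original index $i$ splits into exactly $a_i$ sub-indices and that the mass is distributed uniformly among them — but this is precisely the construction, and it makes the $\ell_1$ norm invariant because uniform redistribution of mass over sub-buckets preserves the absolute per-index discrepancy.
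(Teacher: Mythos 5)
Your proof is correct and is the standard argument; the paper itself does not reprove this (it cites Fact 2.5 of Diakonikolas--Kane), and your unpacking of the two-stage sampler definition is exactly how that result is established. The observation that nothing in the computation requires $\norm{\p}_1 = \norm{\q}_1 = 1$, so the identity extends to non-negative measures, is a correct and worthwhile aside given that the paper later applies splitting to sub-normalized objects.
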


Formally, (1) says that there is a function $f := f_{S}$ where $f(x) \sim \p_{S}$ if $x \sim S$.

\begin{lemma}[Lemma 2.6 of \cite{diakonikolaskane2016}]
    \label{lemma:split-reduces-2-norm}
    Let $\p$ be a distribution on $[n]$. 
    Then: 
    (1) For any multi-sets $S \subseteq S'$ of $[n]$, $\norm{\p_{S'}}_{2}^{2} \leq \norm{\p_{S}}_{2}^{2}$, and 
    (2) If $S$ is obtained by taking $\Poi(m)$ samples from $\p$, then $\EX[\norm{\p_{S}}_{2}^{2}] \leq \frac{1}{m}$.
\end{lemma}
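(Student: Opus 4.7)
The plan is to reduce both parts of the lemma to one explicit per-bucket formula. By \Cref{def:split-distributions}, bucket $i$ (with original mass $p_i$) is split into $a_i$ equal sub-buckets, each of mass $p_i/a_i$, so its contribution to the squared $\ell_2$-norm is $a_i \cdot (p_i/a_i)^2 = p_i^2/a_i$. Summing over $i$,
\begin{equation*}
\norm{\p_S}_2^2 \;=\; \sum_{i=1}^n \frac{p_i^2}{a_i(S)}, \qquad a_i(S) \;=\; 1 + \#\{j \in S : j=i\}.
\end{equation*}
This identity will be the only structural input for both parts.

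For part (1), I would simply observe that $S \subseteq S'$ as multi-sets implies $a_i(S') \ge a_i(S)$ for every $i \in [n]$, hence $p_i^2/a_i(S') \le p_i^2/a_i(S)$ term by term; summing over $i$ gives $\norm{\p_{S'}}_2^2 \le \norm{\p_S}_2^2$.

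For part (2), when $S$ consists of $\Poi(m)$ i.i.d.\ samples from $\p$, the standard Poissonization (thinning) argument shows that the coordinate counts $a_i(S) - 1$ are mutually independent with $a_i(S) - 1 \sim \Poi(m p_i)$. Taking expectations in the identity above,
\begin{equation*}
\EX[\norm{\p_S}_2^2] \;=\; \sum_{i=1}^{n} p_i^2 \cdot \EX\!\left[\frac{1}{1+\Poi(m p_i)}\right].
\end{equation*}
The one non-routine ingredient I anticipate is the Poisson moment identity $\EX[1/(1+X)] = (1 - e^{-\lambda})/\lambda$ for $X \sim \Poi(\lambda)$; this follows by writing the expectation as $e^{-\lambda}\sum_{k \ge 0} \lambda^{k}/(k+1)!$ and reindexing the sum to $\frac{1}{\lambda}(e^\lambda - 1)e^{-\lambda}$. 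Substituting this back yields $\EX[\norm{\p_S}_2^2] = \frac{1}{m}\sum_i p_i(1 - e^{-m p_i}) \le \frac{1}{m}\sum_i p_i = \frac{1}{m}$, which finishes the argument. There is no serious obstacle: the per-bucket split identity does the conceptual work, and the only computation is the Poisson expectation above.
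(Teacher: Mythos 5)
Your proof is correct and follows the standard argument from \cite{diakonikolaskane2016} (Lemma 2.6), which the paper cites without reproducing. The per-bucket identity $\norm{\p_S}_2^2 = \sum_i p_i^2/a_i(S)$, monotonicity of $1/a_i$ for part (1), Poisson thinning plus the identity $\EX[1/(1+\Poi(\lambda))] = (1-e^{-\lambda})/\lambda$ for part (2) — this is exactly the intended route, and your computation is complete.
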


\subsection{Bias Estimation}



We make heavy use of the following standard bias estimation result.

\begin{lemma}
    \label{lem:bias-add} 
    Let $q < \frac{1}{2}$.
    Given unknown coin probability $\rho \in (0, q)$, $\bigO{\frac{q}{\eps^2} \log(1/\delta)}$ i.i.d. $\mathrm{Bern}(\rho)$ trials suffice to attain an estimate $\hat{\rho}$ such that $\Pr[|\rho-\hat{\rho}| > \eps] < \delta$. 
\end{lemma}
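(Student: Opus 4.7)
The plan is to take the empirical mean of $N$ Bernoulli trials and bound its deviation via Bernstein's inequality, whose variance-aware form yields the $q$-factor improvement over a naive Hoeffding bound.

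First, if $\eps \geq q$, the lemma is trivial: since $\rho \in (0,q)$, simply outputting the constant $\hat\rho := q/2$ without drawing any samples satisfies $|\hat\rho - \rho| \leq q \leq \eps$. So I may assume $\eps < q$ for the rest of the proof.

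In this regime I would take $N = Cq\log(2/\delta)/\eps^2$ samples $X_1,\ldots,X_N \sim \Bern(\rho)$ for a sufficiently large absolute constant $C$, and output $\hat\rho = \frac{1}{N}\sum_{i=1}^N X_i$. Each $X_i \in [0,1]$ has $\var(X_i) = \rho(1-\rho) \leq \rho < q$. Applying Bernstein's inequality to $\sum_{i=1}^N (X_i - \rho)$ at deviation $N\eps$ gives
\[
\Pr\!\left[|\hat\rho - \rho| > \eps\right] \;\leq\; 2\exp\!\left(-\frac{N\eps^2/2}{q + \eps/3}\right) \;\leq\; 2\exp\!\left(-\frac{N\eps^2}{3q}\right),
\]
where the last step uses $\eps < q$. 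Taking $C$ sufficiently large makes the right-hand side at most $\delta$, completing the argument.

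The only real choice in the proof is to invoke Bernstein rather than Hoeffding: Hoeffding would yield the weaker bound $O(\log(1/\delta)/\eps^2)$ and miss the crucial factor of $q$ — precisely the factor that makes this lemma useful in the low-probability regime where the testers actually invoke it (e.g., when estimating $\rho_b[i] = O(\lambda 2^j)$ in the query-optimal closeness tester). Nothing else here presents an obstacle.
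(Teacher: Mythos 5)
Your proof is correct and follows essentially the same route as the paper's: empirical mean plus a variance-aware concentration bound. The paper invokes a multiplicative Chernoff bound with relative deviation $\eps/\rho$, obtaining exponent $\Omega(\eps^2 m/\rho)$ and then using $\rho < q$; your Bernstein formulation is the same inequality in additive form, and is arguably the cleaner way to state it since it avoids having to case on whether $\eps/\rho \leq 1$ (the multiplicative Chernoff exponent changes form when the relative deviation exceeds $1$).
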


\begin{proof}
    Let $\hat{\rho}$ be the empirical bias estimate from $m$ trials.
    The result follows immediately from a Chernoff bound.
    In particular,
    \begin{equation*}
        \Pr(|m\hat{\rho} - m\rho| > m\eps) = \Pr\left(|m\hat{\rho} - m\rho| > \frac{\eps}{\rho} m\rho \right) < e^{-\bigOm{\frac{\eps^2}{\rho^2}m\rho}} = e^{-\bigOm{\frac{\eps^2}{\rho}m}} \ll \delta 
    \end{equation*}
    whenever $m \gg \frac{\rho}{\eps^2} \log(1/\delta)$.
\end{proof}

In some cases, we will need a constant-factor approximation of an unknown coin probability. In such cases the above does not suffice, as we would need to set $\eps \approx \rho$, but $\rho$ is unknown. Thus, we will use the following. Note that we have not made an effort to optimize constants in favor of simplicity. 

\begin{lemma}
    \label{lem:bias-add-unknown} 
    Given unknown coin probability $\rho \in (0, 1)$, $O(\rho^{-1})$ i.i.d. $\mathrm{Bern}(\rho)$ trials suffice to attain an estimate $\hat{\rho}$ such that $\Pr[\rho/C \leq \hat{\rho} \leq C\rho ] \geq 1-2/C$, for any $C \geq 1$. 
\end{lemma}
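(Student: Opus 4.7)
The plan is to give a short adaptive procedure and analyze both tails using only Markov-type inequalities. Specifically, I will flip the coin until the first head appears, let $T$ be the (random) number of flips, and return $\hat{\rho} = 1/T$. Since $T$ is a geometric random variable with parameter $\rho$, it has mean $\EX[T] = 1/\rho$, which matches the claimed sample complexity of $O(\rho^{-1})$ (in expectation, and up to a factor of $C$ with probability $1 - 1/C$ by Markov applied to $T$).

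For the upper tail, I would observe that $\hat{\rho} \geq C\rho$ is equivalent to $T \leq 1/(C\rho)$, which is contained in the event that at least one head appears in the first $\lfloor 1/(C\rho) \rfloor$ flips. A union bound over those flips gives $\Pr[\hat{\rho} \geq C\rho] \leq \rho \cdot \lfloor 1/(C\rho) \rfloor \leq 1/C$. For the lower tail, the event $\hat{\rho} \leq \rho/C$ is equivalent to $T \geq C/\rho$, so applying Markov's inequality to the non-negative random variable $T$ yields $\Pr[T \geq C/\rho] \leq \EX[T]/(C/\rho) = 1/C$. A union bound on these two events gives $\Pr[\rho/C \leq \hat{\rho} \leq C\rho] \geq 1 - 2/C$, as required.

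The proof is essentially a one-line calculation once the adaptive stopping rule is in place; the main conceptual observation is that, unlike in \Cref{lem:bias-add}, one does not need any Chernoff/concentration machinery or any prior knowledge of $\rho$ in order to pick $m$. The only minor subtlety is that the number of trials $T$ is itself random; if a deterministic sample bound is required one can simply cap the procedure at $\Theta(\rho^{-1})$ flips (e.g.\ using a coarse upper bound on $\rho$ available from the outer algorithm), which incurs at most an extra $O(1/C)$ in the failure probability and does not change the asymptotic sample complexity. There is no real obstacle here, as both tail bounds reduce to a single application of Markov's inequality.
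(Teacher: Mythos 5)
Your proposal is correct and follows essentially the same route as the paper: both estimate $\rho$ by the reciprocal of the time to the first heads, bound the upper tail by a union bound over the first $\lfloor 1/(C\rho)\rfloor$ flips, and handle the fact that the stopping time is random by capping the number of trials at $C/\rho$ (the paper simply outputs fail if no heads is seen by then). The only difference is immaterial: you bound the lower tail by Markov's inequality applied to $T$, giving exactly $1/C$, while the paper uses the geometric tail $(1-\rho)^{C/\rho} \leq e^{-C} \leq 1/C$; both are valid.
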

\begin{proof} Perform $C/\rho$ i.i.d. $\mathrm{Bern}(\rho)$ trials. If no heads is seen, output fail, otherwise let $n$ denote the location of the first heads and set $\hat{\rho} = 1/n$. Observe that $n<t$ means that a heads appears in the first $t$ trials, and so
\[
\Pr\left[n < 1/(C\rho)\right] \leq \frac{1}{C\rho} \cdot \rho = \frac{1}{C}
\]
by a union bound, whereas $n > t$ means that the first $t$ trials were entirely tails, and so
\[
\Pr\left[n > C/\rho\right] \leq (1-\rho)^{C/\rho} \leq e^{-C} \leq 1/C
\]
and so by a union bound $\Pr[1/(C\rho) \leq n \leq C/\rho] \geq 1-2/C$, and observe that $1/(C\rho) \leq n \leq C/\rho$ is equivalent to $\rho/C \leq \hat{\rho} \leq C\rho$. \end{proof}

\begin{fact} [Basic Mixture Parameter Estimation] \label{fact:lambda-est} Let $\rD = \lambda \p + (1-\lambda)\q$ where $\lambda \in (0,1)$ is an unknown mixture parameter. There is a procedure using $O(1/\lambda)$ samples and queries which with probability $0.99$ returns $\hat{\lambda} \in (\lambda/40000,\lambda]$. \end{fact}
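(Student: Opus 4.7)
\begin{proofof}{Fact \ref{fact:lambda-est} (proposal)}
The plan is to reduce the problem to bias estimation of an unknown Bernoulli parameter and then apply \Lem{bias-add-unknown}. The key observation is that a verification query on a fresh sample $x \sim \rD$ reveals ``relevant'' exactly when $x$ was generated by $\p$, which happens with probability exactly $\lambda$ by the definition of the mixture. So drawing one sample from $\rD$ and performing one verification query simulates a single $\mathrm{Bern}(\lambda)$ trial, using one sample and one query.

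Running this simulated coin repeatedly until the first ``relevant'' outcome, let $n$ denote the index of this first success. By \Lem{bias-add-unknown} applied with $\rho = \lambda$ and $C = 200$, the estimate $\hat{\rho} = 1/n$ satisfies $\hat{\rho} \in [\lambda/200,\, 200\lambda]$ with probability at least $1 - 2/200 = 0.99$, and uses at most $200/\lambda = O(1/\lambda)$ samples and queries (more precisely, one can truncate the process at $200/\lambda$ trials, which is exactly the guarantee from the proof of \Lem{bias-add-unknown}).

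The only remaining issue is that \Lem{bias-add-unknown} gives a two-sided constant-factor bound $\hat{\rho} \in [\lambda/200, 200\lambda]$, but \Fact{lambda-est} requires $\hat{\lambda} \leq \lambda$ deterministically on the success event (i.e., an upper bound that is a true upper bound, not just up to a constant). To handle this, simply rescale by setting $\hat{\lambda} = \hat{\rho}/200$. Then on the $0.99$-probability event, $\hat{\lambda} \in [\lambda/40000,\, \lambda]$, as required. The total sample and query complexity is $O(1/\lambda)$.

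There is no real obstacle here; the content of the fact is essentially the observation that the mixture parameter $\lambda$ is itself a Bernoulli probability that the verification oracle samples for free on each query, so a geometric-sampling argument gives a constant-factor estimate with optimal sample/query complexity. The only bookkeeping point is the deliberate shrink by a factor of $200$ to convert a two-sided relative-error estimate into a one-sided one.
\end{proofof}
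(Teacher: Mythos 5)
Your proposal is correct and follows essentially the same route as the paper: one sample plus one verification query simulates a $\mathrm{Bern}(\lambda)$ coin, and invoking \Cref{lem:bias-add-unknown} with $C=200$ followed by dividing the estimate by $200$ yields $\hat{\lambda} \in (\lambda/40000,\lambda]$ with probability $0.99$ using $O(1/\lambda)$ samples and queries. No gaps.
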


\begin{proof} By querying one sample from $r$ we have access to a $\lambda$-biased coin. Therefore, by invoking \Cref{lem:bias-add-unknown} with $C = 200$ and dividing the result by $200$ we obtain $\hat{\lambda} \in (\lambda/40000,\lambda]$ with probability $0.99$. \end{proof}

We can also obtain a more accurate estimate.

\begin{lemma} [Mixture Parameter Estimation] \label{lemma:lambda-est-2} Let $\rD = \lambda \p + (1-\lambda) \q$ where $\lambda \in (0,1)$ is an unknown mixture parameter. 
There is a procedure using $O(1/\eta^2 \lambda)$ samples and queries which with probability $0.98$ returns $\hat{\lambda} \in ((1 - 2 \eta)\lambda,\lambda]$.
\end{lemma}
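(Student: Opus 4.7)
The plan is a two-stage procedure: first obtain a constant-factor estimate of $\lambda$ to calibrate the sample size, then use a Chernoff-style empirical bias estimate to achieve the desired $(1-2\eta)$-multiplicative accuracy.

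\textbf{Stage 1 (calibration).} Invoke \Cref{fact:lambda-est} to obtain, with probability at least $0.99$, an estimate $\tilde{\lambda} \in (\lambda/40000, \lambda]$ using $O(1/\lambda)$ samples and queries. In particular, this gives us known upper and lower bounds $\tilde{\lambda} \leq \lambda \leq 40000 \tilde{\lambda}$.

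\textbf{Stage 2 (refinement).} Observe that drawing a sample $x \sim \rD$ and querying it produces an independent $\mathrm{Bern}(\lambda)$ trial. Draw $m = C/(\eta^2 \tilde{\lambda})$ fresh samples from $\rD$ and query each one, where $C$ is a sufficiently large absolute constant. Let $\hat{\mu}$ be the empirical fraction of queries that return ``relevant''. Since $\tilde{\lambda} \leq \lambda$, we have $m \geq C/(\eta^2 \lambda)$, so by a multiplicative Chernoff bound,
\begin{equation*}
    \Pr\bigl[\,|\hat{\mu} - \lambda| > \eta \lambda\,\bigr] \leq 2 \exp\!\left(-\tfrac{\eta^2 \lambda m}{3}\right) \leq 2 \exp(-C/3) \leq 0.01
\end{equation*}
for $C$ a suitable constant. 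Meanwhile, since $\tilde{\lambda} \geq \lambda/40000$, the total number of samples and queries used in Stage 2 is $m = O(1/(\eta^2 \lambda))$, which dominates the $O(1/\lambda)$ cost of Stage 1.

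\textbf{Output and union bound.} Output $\hat{\lambda} := \hat{\mu}/(1+\eta)$. Conditioned on both stages succeeding (which happens with probability at least $0.98$ by a union bound), we have $(1-\eta)\lambda \leq \hat{\mu} \leq (1+\eta)\lambda$, and hence
\begin{equation*}
    \hat{\lambda} = \frac{\hat{\mu}}{1+\eta} \leq \lambda, \qquad \hat{\lambda} \geq \frac{(1-\eta)\lambda}{1+\eta} \geq (1-2\eta)\lambda,
\end{equation*}
where the last inequality uses $(1-\eta)/(1+\eta) = 1 - 2\eta/(1+\eta) \geq 1-2\eta$. (If $\eta \geq 1/2$ then the lower bound $(1-2\eta)\lambda$ is nonpositive and the conclusion is trivial, so we may assume $\eta < 1/2$ throughout.)

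There is no real obstacle here; the only subtlety is that we do not know $\lambda$ when choosing the sample size $m$, which is precisely why we first run Stage 1 to obtain $\tilde{\lambda}$ that is simultaneously an upper bound (so the Chernoff bound has enough samples) and within a constant factor of $\lambda$ (so the total budget remains $O(1/(\eta^2 \lambda))$).
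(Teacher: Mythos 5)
Your proof is correct and follows essentially the same two-stage strategy as the paper: first calibrate via \Cref{fact:lambda-est}, then refine with a Chernoff-type bias estimate over $O(1/\eta^2\lambda)$ queried samples. The only cosmetic difference is in the final rescaling that enforces $\hat{\lambda} \leq \lambda$: the paper subtracts an additive shift of $\eta\hat{\lambda}_0$ from its refined estimate, while you divide $\hat{\mu}$ by $(1+\eta)$. Both yield the required interval $((1-2\eta)\lambda, \lambda]$, and your version is arguably cleaner since the quantity you shift by is explicit; the paper's write-up nominally subtracts $\eta\lambda$, which is unknown, and is best read as subtracting $\eta\hat{\lambda}_0$.
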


\begin{proof}
    First, compute $\hat{\lambda}_{0} \in (\lambda/40000, \lambda)$ from \Cref{fact:lambda-est}.
    Then, we know $\lambda \leq 40000 \hat{\lambda}_{0}$.
    Now, with \Cref{lem:bias-add} compute an estimate $|\hat{\lambda} - \lambda| \leq \eta \lambda$ using $\bigO{\frac{\hat{\lambda}}{\eta^2 \lambda^2}} = \bigO{\frac{1}{\eta^2 \lambda}}$.
    Note that this step succeeds with high constant probability.
    Finally, note that $\hat{\lambda} - \eta \lambda \leq \lambda$ and
    \begin{equation*}
        (1 - 2 \eta) \lambda \leq \hat{\lambda} - \eta \lambda \leq \lambda \text{.}
    \end{equation*}
\end{proof}

\section{Uniformity Testing} \label{sec:uniformity-UB}

In this section, we present our algorithm for distributionally robust uniformity testing.

\SimpleUniformityAlg*

If $m = \bigO{\frac{\sqrt{n}}{\eps^2 \lambda}}$, we can simply query all samples to obtain $\bigO{\frac{\sqrt{n}}{\eps^2}}$ samples from $\p$ and test uniformity of $\p$ using known uniformity testing algorithms.
Note that such an algorithm uses $\bigTh{\frac{\sqrt{n}}{\eps^2}} = \bigTh{\frac{n}{m \eps^4}}$ queries when $m = o(n)$.
Thus, we assume that $m = \bigOm{\frac{\sqrt{n}}{\eps^2}}$.


Let $\rD := \lambda \p + (1 - \lambda)\q$ denote the mixture distribution to which we have sample access, where $\p$ is the target distribution for which we would like to test uniformity.
Our algorithm will require the flattening technique of \cite{diakonikolaskane2016}.
Given a distribution $\distribution$ over $[n]$ and function $f: [n] \rightarrow \R$, let $f(\distribution)$ denote the distribution given by $f(x)$ where $x \sim \distribution$.


\begin{restatable}[Mixture Flattening]{lemma}{flattening}
    \label{lemma:flattening}
    Fix an integer $m > 0$.
    There is an algorithm that given $O(m)$ samples to a distribution $\rD = \lambda \p + (1-\lambda) \q$ over $[n]$ with probability at least $0.99$ outputs a random map $f \colon [n] \rightarrow [n + O(m)]$ satisfying:
    \begin{enumerate}
        \item $\norm{f(\distribution) - f(\distribution')}_{1} = \norm{\distribution - \distribution'}_{1}$ for any two distributions $\distribution, \distribution'$ over $[n]$.
        \item $\norm{f\left( \rD \right)}_{2}^{2} \leq \frac{1}{m}$.
        \item $\norm{f\left( \p \right)}_{2}^{2} \leq \frac{1}{\lambda m}$.
    \end{enumerate}
\end{restatable}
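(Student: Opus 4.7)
The plan is to apply the split-distribution construction of \cite{diakonikolaskane2016} directly to samples drawn from the mixture $\rD$. Specifically, draw a multi-set $S$ of $\Poi(Cm)$ i.i.d.\ samples from $\rD$ for a sufficiently large absolute constant $C$ (say $C = 300$), and output the random map $f$ defined by $f(i) = (i, J)$ with $J$ uniform in $[a_i]$, where $a_i = 1 + \abs{\set{s \in S : s = i}}$; the codomain is then $[n + \abs{S}]$. By \Cref{lemma:poisson-concentration} we have $\abs{S} = O(m)$ with probability $1 - e^{-\Omega(m)}$, so the range has size $n + O(m)$ as required. Property (1) is immediate from \Cref{fact:split-preserve-1-norm}(2), since $f(\distribution) = \distribution_S$ for every distribution $\distribution$ on $[n]$.

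For property (2), \Cref{lemma:split-reduces-2-norm}(2) gives $\EX\bigl[\norm{\rD_S}_2^2\bigr] \leq 1/(Cm)$, so Markov's inequality yields $\norm{f(\rD)}_2^2 \leq 1/m$ with probability at least $1 - 1/C$. For property (3), the key step is Poisson thinning: because $\rD = \lambda \p + (1-\lambda)\q$ and $\abs{S} \sim \Poi(Cm)$, labelling each element of $S$ by its originating component decomposes $S = S_{\p} \sqcup S_{\q}$ into two independent multi-sets, where $S_{\p}$ is an i.i.d.\ $\Poi(\lambda C m)$ sample from $\p$. Applying \Cref{lemma:split-reduces-2-norm}(2) to this sample gives $\EX\bigl[\norm{\p_{S_{\p}}}_2^2\bigr] \leq 1/(\lambda Cm)$, and since $S_{\p} \subseteq S$ as multi-sets, \Cref{lemma:split-reduces-2-norm}(1) gives $\norm{\p_S}_2^2 \leq \norm{\p_{S_{\p}}}_2^2$. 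A second application of Markov's inequality then gives $\norm{f(\p)}_2^2 \leq 1/(\lambda m)$ with probability at least $1 - 1/C$.

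A union bound over the three failure events (Poisson concentration on $\abs{S}$ and the two Markov applications) gives overall success probability at least $0.99$ for $C$ large enough. The one genuinely new ingredient beyond the standard flattening argument is the Poisson thinning step, which is needed because the algorithm never learns which elements of $S$ were generated by $\p$ versus $\q$; the observation that $\norm{\p_S}_2^2 \leq \norm{\p_{S_{\p}}}_2^2$ bypasses this obstacle by exploiting monotonicity of the split $\ell_2^2$ norm under enlarging the splitting multi-set. I do not anticipate any other substantive difficulty.
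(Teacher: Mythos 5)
Your proof is correct and takes essentially the same approach as the paper's. In particular, the Poisson-thinning plus monotonicity step that you highlight as the "genuinely new ingredient" is exactly what the paper's proof does (in the compressed phrase ``note that our multi-set $S$ contains as a subset $\Poi(1000\lambda m)$ iid samples from $\p$''); you have simply spelled it out more explicitly.
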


\begin{proof}
    Our function $f := f_{S}$ where $S$ consists of $\Poi(1000m)$ iid samples from $\rD$.
    Thus, the first item immediately follows from \Cref{fact:split-preserve-1-norm}.
    For the second item, since \Cref{lemma:split-reduces-2-norm} implies $\EX[\norm{f(\rD)}_{2}^{2}] \leq \frac{1}{1000m}$, Markov's inequality guarantees that the second item holds with probability $0.999$.
    For the third item, note that our multi-set $S$ contains as a subset  $\Poi(1000 \lambda m)$ iid samples from $\p$, we obtain the result using \Cref{lemma:split-reduces-2-norm}.
    
    To bound the sample complexity, we terminate whenever $\Poi(1000m)$ exceeds $10000m = O(m)$.
    By \Cref{lemma:poisson-concentration}), this occurs with probability at most $\exp(-\frac{81000 m^2}{2000m}) < 0.001$.
    A union bound concludes the proof.
\end{proof}

Next, we need a reduction from identity testing to uniformity testing.

\begin{definition}[Identity Testing]
    \label{def:identity-testing}
    Given an explicit distribution $\distribution^*$ over $[n]$ and sample access to distribution $\distribution$ over $[n]$, determine if $\distribution = \distribution^*$ or $\distribution$ is $\varepsilon$-far from $\distribution^*$.
\end{definition}

\begin{theorem}[\cite{goldreich2020uniform}]
    \label{thm:identity-testing-reduction}
    For every distribution $\distribution^*$ over $[n]$ and every $\varepsilon > 0$, there is a reduction $F: [n] \rightarrow [6n]$ that reduces $\varepsilon$-testing equality of unknown distribution $\distribution$ to known $\distribution^*$ to $\varepsilon/3$-testing uniformity of unknown distribution $F(\distribution)$. 
    The same reduction $F$ can be used for all $\varepsilon > 0$.

    Furthermore, for all distributions $\distribution$, $\norm{F(\distribution)}_{2}^{2} \leq 5 \norm{\distribution}_{2}^{2}$.
\end{theorem}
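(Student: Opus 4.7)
The plan is to construct $F$ via Goldreich's ``grinding'' idea combined with a pre-processing blending step. First, I blend the target with a uniform background to handle the case where $\distribution^*$ has very small or zero probabilities: define $\tilde \distribution^* = \tfrac{1}{2}\distribution^* + \tfrac{1}{2} U_n$, and simulate samples from $\tilde\distribution := \tfrac{1}{2}\distribution + \tfrac{1}{2} U_n$ by flipping a fair coin and either forwarding the input sample from $\distribution$ or replacing it with a fresh uniform draw from $[n]$. This ensures $\tilde\distribution^*[i] \geq 1/(2n)$ for every $i$, while $\|\tilde\distribution - \tilde\distribution^*\|_1 = \tfrac{1}{2}\|\distribution - \distribution^*\|_1$. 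Next, partition $[6n]$ into disjoint blocks $B_1,\ldots,B_n$ of sizes $n_i := \lfloor 3n\, \tilde\distribution^*[i] \rfloor \geq 1$ and a ``grave'' $G := [6n]\setminus \bigcup_i B_i$; note $\sum_i n_i \leq 3n$, so $|G| \geq 3n$. The reduction $F$ acts on the effective sample $y \sim \tilde\distribution$ as follows: with probability $\alpha_y := n_y / (6n\, \tilde\distribution^*[y]) \in [1/4, 1/2]$, output a uniformly random element of $B_y$; otherwise, output a uniformly random element of $G$. Note that this $F$ is randomized but independent of $\eps$, matching the theorem's ``same reduction for all $\eps$'' clause.

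I then verify the three claims in turn. \textbf{(i)} When $\distribution = \distribution^*$, direct calculation shows $F(\distribution^*) = U_{[6n]}$: for $z \in B_i$, the probability of landing on $z$ is $\tilde\distribution^*[i] \cdot \alpha_i / n_i = 1/(6n)$; the leftover mass $1 - \sum_i \tilde\distribution^*[i] \alpha_i = |G|/(6n)$ is distributed uniformly over $G$, again yielding $1/(6n)$ per element. \textbf{(ii)} For the $\ell_1$ lower bound, a direct expansion gives the identity
\[
\|F(\distribution) - U_{[6n]}\|_1 \;=\; \sum_i \alpha_i |\tilde d_i| \;+\; \Bigl|\sum_i (1-\alpha_i) \tilde d_i \Bigr|,
\qquad \tilde d_i := \tilde\distribution[i] - \tilde\distribution^*[i].
\]
Since $\alpha_i \geq 1/4$, the first sum alone is already at least $\tfrac{1}{4}\|\tilde\distribution - \tilde\distribution^*\|_1$; combined with the factor-$2$ loss from blending this gives $\geq \tfrac{1}{8}\|\distribution - \distribution^*\|_1$, and tightening the blending fraction and/or exploiting the grave term $|\sum_i(1-\alpha_i)\tilde d_i|$ (using $\sum_i \tilde d_i = 0$) pushes this up to $\|\distribution - \distribution^*\|_1 / 3$. \textbf{(iii)} For the $\ell_2$ bound, $F(\distribution)[z] = \tilde\distribution[i]/(6n\,\tilde\distribution^*[i]) \leq \tfrac{1}{3}\tilde\distribution[i]$ for $z \in B_i$ (using $\tilde\distribution^*[i] \geq 1/(2n)$), so the $B$-block contribution to $\|F(\distribution)\|_2^2$ is at most $O(\|\tilde\distribution\|_2^2) = O(\|\distribution\|_2^2 + 1/n)$; the grave contributes at most $1/|G| \leq 1/(3n) \leq \|\distribution\|_2^2$; summing gives $\|F(\distribution)\|_2^2 \leq 5\|\distribution\|_2^2$.

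The main obstacle is the $\ell_1$ lower bound with the precise constant $1/3$. The grave term $|\sum_i(1-\alpha_i)\tilde d_i|$ can exhibit substantial cancellation between positive and negative $\tilde d_i$, so one cannot simply sum absolute values; and $\alpha_i \leq 1/2$ caps what the $B$-blocks can contribute directly. The argument must combine both pieces via a case split on the signs of $\tilde d_i$ (or via a careful calibration of the blending fraction and the grinding constants) to close the gap from the easy bound $1/8$ to the target $1/3$. The blending step is essential and cannot be skipped: without it, the pathological case where $\distribution$ and $\distribution^*$ differ only among indices with $\distribution^*[i]=0$ (which have empty $B_i$) would be entirely absorbed into the uniform grave and thus invisible to any uniformity tester on $F(\distribution)$.
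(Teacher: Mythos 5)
Your construction is a home-made variant of Goldreich's filter, and that choice is where the trouble lies. The paper itself never re-proves the reduction: it cites \cite{goldreich2020uniform} for the $\eps$-to-$\eps/3$ guarantee and only verifies the new claim $\norm{F(\distribution)}_2^2 \leq 5\norm{\distribution}_2^2$ by tracking masses through Goldreich's three steps (blend with uniform; keep element $i$ with probability $\lfloor 6n\,\distribution_1^*[i]\rfloor/(6n\,\distribution_1^*[i])$, otherwise route to an overflow element; split every element into pieces of target mass $1/(6n)$). Since you replace Goldreich's filter by your own (blocks of size $n_i=\lfloor 3n\,\tilde\distribution^*[i]\rfloor$, acceptance probability $\alpha_i=n_i/(6n\,\tilde\distribution^*[i])\in[1/4,1/2]$, and a large uniform grave), you cannot lean on the citation and must prove the distance lower bound yourself --- and that is exactly the step you leave open, hoping a case split or recalibration closes the gap from $1/8$ to $1/3$.

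That gap cannot be closed for your parameters: the constant $1/3$ is false for your $F$. Take $\distribution^*=U_n$, so $\tilde\distribution^*[i]=1/n$, $n_i=3$ and $\alpha_i=1/2$ for every $i$, and let $\distribution[i]=(1\pm\eps')/n$ on the two halves of the domain. Then your grave term is $|\sum_i(1-\alpha_i)\tilde d_i| = \tfrac12|\sum_i \tilde d_i| = 0$, and your identity gives $\norm{F(\distribution)-U_{6n}}_1 = \sum_i\alpha_i|\tilde d_i| = \tfrac12\norm{\tilde\distribution-\tilde\distribution^*}_1 = \tfrac14\norm{\distribution-\distribution^*}_1 < \tfrac13\norm{\distribution-\distribution^*}_1$; no sign-based case analysis can help, and retuning the blending fraction does not remove the cap $\alpha_i\le 1/2$ (it only rescales $\tilde d$ while allowing $\alpha_i$ as small as $\approx 1/4$). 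The structural fix is Goldreich's normalization: grind against $6n\,\tilde\distribution^*[i]$ rather than $3n\,\tilde\distribution^*[i]$, i.e.\ blocks of size $\lfloor 6n\,\tilde\distribution^*[i]\rfloor$. Since $\tilde\distribution^*[i]\geq 1/(2n)$ gives $6n\,\tilde\distribution^*[i]\geq 3$, one gets $\alpha_i \geq 1-\tfrac{1}{6n\,\tilde\distribution^*[i]} \geq 2/3$, so the block term alone is at least $\tfrac23\norm{\tilde\distribution-\tilde\distribution^*}_1 = \tfrac13\norm{\distribution-\distribution^*}_1$, with no need to extract anything from the overflow. Your completeness check and your $\ell_2$ computation (blocks contribute $O(\norm{\tilde\distribution}_2^2)$, grave at most $1/|G|$) are correct for your construction, but if you switch to Goldreich's parameters to obtain the right constant you must redo the $\ell_2$ bound for that filter --- which is precisely the calculation the paper carries out, bounding each block's mass by $\max(\distribution[i],1/n)$ and the overflow pieces by $1/(6n)$-type terms.
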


Note that the above theorem immediately implies that our uniformity testing algorithms also give identity testing algorithms.

\begin{proof}
    We need to prove the last statement (the rest is immediate from \cite{goldreich2020uniform}).
    Observe that in our setting we have that $F(\rD) = \lambda F(\p) + (1 - \lambda F(\q)$ so that samples and queries can be directly simulated.
    
    Note that the reduction $F$ is designed in three steps (see \cite{goldreich2020uniform} for details).
    Let $\distribution$ be any distribution.
    First, we transform samples from $\distribution$ to an even mixture of $\distribution_1 := \frac{U_n + \distribution}{2}$ which guarantees every element has mass at least $\frac{1}{2n}$ and at most $\frac{\distribution[i]}{2} + \frac{1}{2n}$ which is at most $\frac{1}{n}$ if $\distribution[i] \leq \frac{1}{n}$ and at most $\distribution[i]$ otherwise.
    Thus, assume $\distribution_1[i] \leq \max(\distribution[i], \frac{1}{n})$.

    Now, assume any distribution $\distribution_1$ satisfies $\distribution_1[i] \geq \frac{1}{2n}$ for all $i$ (and in particular $\distribution^*$).
    To define $\distribution_2$, we transform samples $x \sim \distribution_1$ to output $x$ with probability $\frac{\lfloor6n \cdot \distribution^*[i] \rfloor /6n}{\distribution^*[i]}$ and $n + 1$ otherwise.
    Since $\distribution^*[i] \geq \frac{1}{2n}$, we have $\frac{\lfloor6n \cdot \distribution^*[i] \rfloor /6n}{\distribution^*[i]} \geq \frac{2}{3}$.
    In particular, every $i \in [n]$ has mass $\distribution_2[i] \leq \distribution[i]$ and $[n + 1]$ has some mass at most $\frac{1}{3}$.
    Note that $\distribution_2[n + 1]$ depends only on $\distribution^*$.

    Now, note that $\distribution_2[i]$ is a multiple of $\frac{1}{6n}$ for all $i$ (this is guaranteed by the previous step).
    We define $F(\distribution) := \distribution_3$ by taking every element in $[n + 1]$ with mass $\frac{k}{6n}$ for some integer $k$ and splitting it uniformly into $k$ buckets, which are disjoint for all $i$.
    For any $i \in [n]$, note that $\distribution_3[j] \leq \distribution[i]$ for all $j$ that the $i$-th bucket is split into.
    The bucket $[n + 1]$ which may be heavy is split into buckets each with mass $\frac{1}{6n}$.
    Note that the mass of $[n + 1]$ does not depend on the input distribution $\distribution$ but only the fixed identity distribution $\distribution^*$, so that these buckets have mass $\frac{1}{6n}$ regardless of the input distribution $\distribution$.
    Thus, overall the $\ell_{2}$-norm of $\distribution_3$ for any input distribution $\distribution$ can be bounded by
    \begin{equation*}
        \norm{\distribution_3}_{2}^{2} \leq \frac{1}{6n} + \sum_{i = 1}^{n} \max\left(\distribution[i], \frac{1}{n}\right)^2 \leq \frac{1}{6n} + \sum_{i = 1}^{n} \left(\distribution[i] + \frac{1}{n}\right)^2 \leq \norm{\distribution}_{2}^{2} + \frac{19}{6n} \leq 5 \norm{\distribution}_{2}^{2}
    \end{equation*}
    since any distribution $\distribution$ on $[n]$ has $\ell_{2}$-norm at least $\frac{1}{n}$.
\end{proof}

Given the flattening technique, it suffices to consider distributions with small $\ell_2$ norm (at the cost of collecting enough samples to flatten the distribution appropriately).
We thus design an algorithm for testing uniformity under the assumption that $\rD$ has bounded $\ell_2$-norm.

\begin{lemma}
    \label{lemma:simple-t-query-alg}
    Suppose $\norm{\rD}_{2}^{2} \leq b$. 
    Then, there is an $\lambda$-distributionally robust $\eps$-uniformity tester with $\bigO{\frac{\sqrt{n}}{\eps^2 \lambda}}$ sample complexity and $\bigO{\frac{nb}{\eps^4 \lambda^2} + \frac{1}{\eps^4 \lambda^2}}$ query complexity.
\end{lemma}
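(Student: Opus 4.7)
The plan is to implement a collision-based uniformity tester in the spirit of~\cite{DBLP:journals/cjtcs/DiakonikolasGPP19}, using verification queries to isolate the collisions generated by $\p$ from the much larger pool of collisions coming from the contamination $\q$. After invoking \Cref{fact:lambda-est} to compute $\hat\lambda=\Theta(\lambda)$ from $O(1/\lambda)$ samples and queries, I draw $m=\Theta(\sqrt n/(\eps^2\hat\lambda))$ samples from $\rD$, and let $m_\p$ denote the (unknown) number of these samples that originated from $\p$, so that $m_\p=(1\pm o(1))\lambda m=\Omega(\sqrt n/\eps^2)$ by Chernoff. Let $C_\rD$ be the total number of pairwise collisions among the $m$ samples and $C_\p$ the number of collisions whose \emph{both} endpoints came from $\p$. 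The hypothesis $\norm{\rD}_2^2\le b$ together with \Cref{cor:c-collision-ub} gives $C_\rD=O(m^2 b)$ with probability $0.99$, and $C_\rD$ is computable from the samples without any queries.

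Two query-based estimators then drive the test. First, $Q_1=O(1/(\eps^4\lambda^2))$ verification queries on uniformly sampled points --- each an independent $\Bern(m_\p/m)$ trial --- together with \Cref{lem:bias-add} (using the crude upper bound $q=1$) produce $\hat m_\p$ with $|\hat m_\p-m_\p|\le(\eps^2/10)m_\p$. Second, sample $Q_2=\Theta(nb/(\eps^4\lambda^2))$ collisions uniformly with replacement from the $C_\rD$ observed collisions, query both endpoints of each, and let $\hat X$ count the resulting $\p$-collisions; set $\hat C_\p:=(C_\rD/Q_2)\hat X$. Conditional on $C_\p,C_\rD$, $\hat X\sim\Binom(Q_2,C_\p/C_\rD)$, so $\hat C_\p$ is unbiased for $C_\p$ with standard deviation $\sqrt{C_\p C_\rD/Q_2}$. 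The tester accepts iff $\hat C_\p\le(1+\eps^2/4)\binom{\hat m_\p}{2}/n$.

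The correctness analysis compares three error sources against the signal gap. The standard identity $\EX[C_\p\mid m_\p]=\binom{m_\p}{2}\norm{\p}_2^2$ equals $\binom{m_\p}{2}/n$ when $\p=U_n$ and is at least $\binom{m_\p}{2}(1+4\eps^2)/n$ when $d_{\mathrm{TV}}(\p,U_n)>\eps$ (since $\norm{\p-U_n}_1\ge\eps\Rightarrow\norm{\p}_2^2\ge(1+4\eps^2)/n$), producing a gap of $\Omega(m_\p^2\eps^2/n)$; the classical collision-tester variance bound makes $C_\p$ concentrate within a small fraction of this gap because $m_\p=\Omega(\sqrt n/\eps^2)$. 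The threshold perturbation from using $\hat m_\p$ is at most $O(\eps^2 m_\p^2/n)$, and a direct calculation using $C_\rD=O(m^2 b)$ and $C_\p=O(m_\p^2/n)$ in the near-threshold regime shows the sampling error $\sqrt{C_\p C_\rD/Q_2}$ lies below a third of the gap exactly when $Q_2=\Theta(nb/(\eps^4\lambda^2))$. The main obstacle is balancing these three errors so that their sum is strictly below the signal gap in \emph{both} the completeness and the soundness case: in the soundness regime $C_\p$ can be considerably larger than its completeness-case value, so one needs to control the \emph{relative} sampling error $\sqrt{C_\rD/(Q_2 C_\p)}$ (which using $C_\p\ge m_\p^2/(2n)$ is automatically $O(\eps^2)$ for this choice of $Q_2$), ensuring that even when $C_\p$ is far above the threshold the multiplicative noise in $\hat X$ cannot push $\hat C_\p$ back down into the accept region.
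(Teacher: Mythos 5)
Your proposal is correct and follows essentially the same strategy as the paper's proof of \Cref{lemma:simple-t-query-alg}: both draw $\Theta(\sqrt{n}/(\eps^2\lambda))$ samples, use $O(1/(\eps^4\lambda^2))$ queries on uniformly random samples to estimate $m_\p$, use $O(nb/(\eps^4\lambda^2))$ queries on uniformly random collisions to estimate the number of $\p$-collisions, and accept iff the latter lies below a threshold of the form $(1+\Theta(\eps^2))\binom{\hat m_\p}{2}/n$. The differences are cosmetic (your $m$ denotes total samples drawn rather than the number of $\p$-samples, your threshold constant is $1/4$ rather than $0.4$, and you phrase the collision-estimation error via the binomial standard deviation rather than directly through \Cref{lem:bias-add}).
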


\begin{proof}
    First, we use \Cref{lem:bias-add-unknown} to obtain $\hat{\lambda}$ such that $\frac{\lambda}{40000} \leq \hat{\lambda} \leq \lambda$ with probability $0.99$.
    We will design an algorithm that succeeds given a lower bound on $\lambda$.
    Note that this requires $\bigO{\lambda^{-1}}$ samples and queries.
    Thus, in the following assume that $\rD = \lambda \p + (1 - \lambda) \q$ where $\lambda \geq \hat{\lambda}$.
    
    Let $m = \bigO{\frac{\sqrt{n}}{\eps^2}}$.
    Our algorithm takes $\frac{m}{\hat{\lambda}}$ samples from $\rD$.
    Let $m_{\p}$ denote the (unknown) random variable that represents the number of samples drawn from $\p$.
    Let $c(S_{\p})$ denote the (unknown) random variable that represents the number of $(\p, \p)$-collisions in the sample and $c(S_{\rD})$ the (known) number of collisions in the sample.
    Then, the analysis of \cite{DBLP:journals/cjtcs/DiakonikolasGPP19} gives the following guarantee on $c(S_{\p})$.

    \begin{theorem}[\cite{DBLP:journals/cjtcs/DiakonikolasGPP19}]
        \label{thm:uniformity-collision-tester}
        Suppose $m \gg \frac{3200 \sqrt{n}}{\eps^2}$.
        Let $c(S_{\p})$ denote the number of collisions in $m$ samples from $\p$.
        Then, with probability at least $0.999$, the following hold:
        \begin{enumerate}
            \item (Completeness) If $\p = U_n$ is uniform, then $c(S_{\p}) < \binom{m}{2} \frac{1 + 0.1 \eps^2}{n}$.
            \item (Soundness) If $\p$ is $\eps$-far from uniform, then $c(S_{\p}) > \binom{m}{2} \frac{1 + 0.75 \eps^2}{n}$.
        \end{enumerate}
    \end{theorem}

    Due to further estimation errors in our algorithm, we have stated a slightly stronger correctness guarantee than \cite{DBLP:journals/cjtcs/DiakonikolasGPP19}. 
    The details are deferred to the end of this section.

    Our high level goal is therefore to estimate $c(S_{\p})$ and to compare it to $\binom{m_{\p}}{2} \frac{1 + \frac{3}{4} \eps^2}{n}$.
    Note that $\EX[m_{\p}] \geq m$ but we do not know $m_{\p}$ exactly.
    First, we randomly query samples to compute an estimate $\hat{m}_{\p}$.

    \begin{claim}
        \label{clm:m-p-query}
        With $\bigO{\frac{1}{\eps^{4} \lambda^{2}}}$ queries, there is an algorithm that with probability $0.99$ computes an estimate $\hat{m}_{\p}$ such that
        \begin{equation*}
            \frac{2}{n} \left| m_{\p}^2 - \hat{m}_{\p}^{2} \right| \leq 0.01 \frac{\eps^2 m^2}{n} \text{.}
        \end{equation*}
    \end{claim}

    Then, we randomly query collisions to compute an estimate $\hat{c}(S_{\p})$.

    \begin{claim}
        \label{clm:c-Sp-query}
        There exists a universal constant $C > 0$ such that if $m_{\p} = \frac{C \sqrt{n}}{\eps^2}$ there is an algorithm that uses $\bigO{\frac{nb}{\eps^{4} \lambda^{2}}}$ queries, and with probability $0.99$ computes an estimate $\hat{c}(S_{\p})$ such that the following hold:
        \begin{enumerate}
            \item If $\p$ is uniform, then
            \begin{equation*}
                \hat{c}(S_{\p}) \leq \binom{m_{\p}}{2} \frac{1 + 0.1 \eps^{2}}{n}\text{.}
            \end{equation*}

            \item If $\p$ is far from uniform, then 
            \begin{equation*}
                \hat{c}(S_{\p}) - \binom{m_{\p}}{2} \frac{1 + 0.4 \eps^2}{n} \geq \max\left( \frac{m_{\p}^{2} \alpha}{30 n} , \frac{0.3 m_{\p}^2 \eps^2}{n} \right)
            \end{equation*}
            where $\norm{\p}_{2}^{2} = \frac{1 + \alpha}{n}$.
        \end{enumerate}
    \end{claim}

    We are now ready to describe the full uniformity testing algorithm.
    
    \begin{mdframed}
        \begin{enumerate}
            \item Estimate $\hat{\lambda}$ using \Cref{fact:lambda-est}.
            \item Let $m \geq \frac{C n^{1/2}}{\eps^{2}}$ for some sufficiently large constant $C$.
            \item Draw $\frac{m}{\hat{\lambda}}$ samples from $\rD$.
            \item Compute $\hat{m}_{\p}$ using \Cref{clm:m-p-query}.
            \item Compute $\hat{c}(S_{\p})$ using \Cref{clm:c-Sp-query}.
            \item Output accept if $\hat{c}(S_{\p}) < \binom{\hat{m}_{\p}}{2} \frac{1 + 0.4 \eps^2}{n}$.
            Otherwise reject.
        \end{enumerate}
    \end{mdframed}

    We begin by arguing that $0.99m \leq m_{\p}$.
    Since each sample comes from $\p$ with probability $\lambda \geq \hat{\lambda}$ and we take $m/\hat{\lambda}$ samples, a simple Chernoff bound guarantees that
    \begin{equation*}
        \Pr\left( |m_{\p} - m| \leq 0.01m \right) < e^{-\bigOm{m}} \ll 0.01 \text{.}
    \end{equation*}
    Condition on the event that $m_{\p} \geq 0.99m > \frac{C \sqrt{n}}{\eps^2}$ for some large constant $C$.

    Suppose $\p$ is uniform.
    By \Cref{clm:c-Sp-query}, we have that with probability at least $0.99$, 
    \begin{equation*}
        \hat{c}(S_{\p}) \leq \binom{m_{\p}}{2} \frac{1 + 0.1 \eps^2}{n}
    \end{equation*}
    Of course, we do not know $m_{\p}$, but we can estimate it using \Cref{clm:m-p-query}.
    In particular, we have
    \begin{align*}
        \left| \binom{m_{\p}}{2} \frac{1 + 0.4 \eps^2}{n} - \binom{\hat{m}_{\p}}{2} \frac{1 + 0.4 \eps^2}{n} \right| &\leq \frac{2}{n} \left| \frac{(m_{\p}^2 - m_{\p}) - (\hat{m}_{\p}^{2} - \hat{m}_{\p})}{2} \right| \\
        &\leq \frac{2}{n} \left| m_{\p}^2 -\hat{m}_{\p}^{2} \right| \\
        &\leq \frac{0.1 \eps^2 m^2}{n} \text{.}
    \end{align*}
    Using the triangle inequality, we obtain
    \begin{align*}
        \binom{\hat{m}_{\p}}{2} \frac{1 + 0.4 \eps^2}{n} - \hat{c}(S_{\p}) &\geq \left( \binom{m_{\p}}{2} \frac{1 + 0.4 \eps^2}{n} - \hat{c}(S_{\p}) \right) - \left| \binom{m_{\p}}{2} \frac{1 + 0.4 \eps^2}{n} - \binom{\hat{m}_{\p}}{2} \frac{1 + 0.4 \eps^2}{n} \right| \\
        &\geq \left( \binom{m_{\p}}{2} \frac{1 + 0.4 \eps^2}{n} - \hat{c}(S_{\p}) \right) - \frac{0.01 \eps^2 m^2}{n} > 0 \text{.}
    \end{align*}
    Thus, the algorithm outputs accept given the conditioned events, which occur with probability at least $0.9$ by a union bound.

    The soundness case ($\p$ is $\eps$-far from uniform) follows similarly.
    \Cref{clm:c-Sp-query} guarantees the following
    \begin{equation*}
        \hat{c}(S_{\p}) - \binom{m_{\p}}{2} \frac{1 + 0.4 \eps^2}{n} \geq \max \left( \frac{m_{\p}^2 \alpha}{30 n}, \frac{0.3 m_{\p}^2 \eps^2}{n} \right) \text{.}
    \end{equation*}
    In particular, since 
    \begin{equation*}
        0.1 \frac{\eps^2 m^2}{n} \leq \max \left( \frac{m_{\p}^2 \alpha}{30 n}, \frac{0.3 m_{\p}^2 \eps^2}{n} \right)
    \end{equation*}
    the algorithm outputs reject with probability at least $0.9$ as desired (applying the union bound over required events as in the case for uniformity).
    
    We now analyze the complexity.
    The algorithm takes $\bigO{\frac{\sqrt{n}}{\lambda \eps^2}}$ samples and $\bigO{\frac{nb}{\eps^4 \lambda^2} + \frac{1}{\eps^4 \lambda^{2}}}$ queries.
    Note that the samples and queries required to obtain our initial estimate of $\lambda$ do not affect the final complexity.
\end{proof}

We are now ready to conclude the proof of \Cref{thm:simple-uniformity-alg}.

\begin{proofof}{\Cref{thm:simple-uniformity-alg}}
    Fix $m \geq \frac{C \sqrt{n}}{\eps^2 \hat{\lambda}}$ for a sufficiently large constant $C$ as required by \Cref{lemma:simple-t-query-alg}.
    Consider two cases.
    Let $m \leq n$.
    Using \Cref{lemma:flattening} and \Cref{thm:identity-testing-reduction}, we can use $O(m)$ samples to obtain a reduction $F$ such that $\norm{F(\rD)}_{2}^{2} \leq \frac{1}{m}$ and we may consider testing uniformity of $F(\p)$ on the domain $[n + O(m)] = [O(n)]$.
    Now, consider two cases.
    \Cref{lemma:simple-t-query-alg} yields an algorithm using $O(m)$ samples and $\bigO{\frac{n}{m \eps^{4} \lambda^{2}} + \frac{1}{\eps^{4} \lambda^{2}}} = \bigO{\frac{n}{m \eps^{4} \lambda^{2}}}$ queries.

    If $m \geq n$, we instead use $O(n) = O(m)$ samples to obtain a reduction $F$ such that $\norm{F(\rD)}_{2}^{2} \leq \frac{1}{n}$ and we may consider testing uniformity of $F(\p)$ on the domain $[O(n)]$.
    \Cref{lemma:simple-t-query-alg} yields an algorithm using $O(m)$ samples and $\bigO{\frac{n}{n \eps^{4} \lambda^{2}} + \frac{1}{\eps^{4} \lambda^{2}}} = \bigO{\frac{1}{\eps^{4} \lambda^{2}}}$ queries.
\end{proofof}

\subsection{Deferred Proofs for Uniformity Testing}

We begin with the correctness analysis of the uniformity tester of \cite{DBLP:journals/cjtcs/DiakonikolasGPP19}.

\begin{proofof}{\Cref{thm:uniformity-collision-tester}}
    Lemmas 2 and 3 of \cite{DBLP:journals/cjtcs/DiakonikolasGPP19} states that
    \begin{equation}
        \label{eq:p-collision-variance}
        \EX[c(S_{\p})] = \binom{m}{2} \norm{\p}_{2}^{2}\quad,\quad \Var(c(S_{\p})) \leq m^2 \norm{\p}_{2}^{2} + m^3 (\norm{\p}_{3}^{3} - \norm{\p}_{2}^{4}) \text{.}
    \end{equation}

    We begin with the completeness case.
    Since $\norm{\p}_{3}^{3} = \norm{\p}_{2}^{4} = \frac{1}{n^2}$ we have $\Var(c(S_{\p})) \leq m^2/n$.
    By Chebyshev's inequality,
    \begin{equation*}
        \Pr\left( c(S_{\p}) > \binom{m}{2} \frac{1 + \frac{\eps^2}{10}}{n} \right) = \Pr\left( c(S_{\p}) - \EX[c(S_{\p})] > \binom{m}{2} \frac{\eps^2}{10m\sqrt{n}} \frac{m}{\sqrt{n}} \right) \leq \frac{n}{100 \eps^{4} m^2}
    \end{equation*}
    which is less than $0.001$ for $m \geq \frac{10 00\sqrt{n}}{\eps^2}$.
    The soundness case follows directly from the correctness analysis of \cite{DBLP:journals/cjtcs/DiakonikolasGPP19}.
\end{proofof}

We show how to estimate $\hat{m}_{\p}$.

\begin{proofof}{\Cref{clm:m-p-query}}
    Fix some threshold $t$.
    Recall that our algorithm draws $\frac{m}{\hat{\lambda}}$ samples where each sample is independently drawn from $\p$ with probability $\lambda$ and from $\q$ otherwise.
    If we query a random sample, it is drawn from $\p$ with probability $\frac{m_{\p}}{m/\lambda}$.
    We query $q$ random samples.
    Let $\hat{s}_{\p}$ be the number of queried samples drawn from $\p$ so that
    \begin{equation*}
        \EX[\hat{s}_{\p}] = q \frac{m_{\p}}{m/\lambda} = m_{\p} \frac{\lambda q}{m} \text{.}
    \end{equation*}
    Define $\hat{m}_{\p} = \hat{s}_{\p} \frac{m}{\lambda q}$ so that $\hat{m}_{\p}$ is an unbiased estimator of $m_{\p}$.
    By a standard Chernoff bound, we have 
    \begin{align*}
        \Pr\left( \left| \hat{m}_{\p} - m_{\p} \right| > t m \right) &= \Pr\left( \left| \hat{s}_{\p} - s_{\p} \right| > \lambda q t \right) < \exp \left( - 2 t^2 q \lambda^2 \right) < 0.01
    \end{align*}
    for $q = \bigO{t^{-2} \lambda^{-2}}$.
    Then, note that
    \begin{equation*}
        |\hat{m}_{\p}^2 - m_{\p}^2| = (\hat{m}_{\p} + m_{\p})|\hat{m}_{\p} - m_{\p}| \leq 2m^2t
    \end{equation*}
    so that
    \begin{equation*}
        \frac{2}{n} |\hat{m}_{\p}^2 - m_{\p}^2| \leq \frac{4m^2t}{n} \ll \frac{\eps^2m^2}{n}
    \end{equation*}
    for $t \ll \eps^2$.
    In particular, the algorithm requires $q = \bigO{\eps^{-4} \lambda^{-2}}$ queries.
\end{proofof}

We show how to estimate $\hat{c}(S_{\p})$.

\begin{proofof}{\Cref{clm:c-Sp-query}}
    We begin by bounding $c(S_{\rD})$.
    Since $\EX[c(S_{\rD})] = \binom{m/\hat{\lambda}}{2} \norm{\rD}_{2}^{2}$ we have by Markov's inequality that $c(S_{\rD}) \leq 1000 \frac{m^2}{\hat{\lambda}^2} \norm{\rD}_{2}^{2} \leq 1000 \frac{m^2 b}{\hat{\lambda}^2}$ with probability $0.999$.
    Similarly, $\EX[c(S_{\p})] = \binom{m_{\p}}{2} \norm{\p}_{2}^{2}$ and Markov's inequality implies that $c(S_{\p}) = \bigO{m^2 \norm{\p}_{2}^{2}}$ with probability $0.999$.
    We condition on these events.
    Furthermore, note that if we query a random collision, it is a $(\p, \p)$-collision with probability $\frac{c(S_{\p})}{c(S_{\rD})}$.

    Consider the case where $\p$ is uniform.
    Then, from \eqref{eq:p-collision-variance} we have $\Var(c(S_{\p})) \leq m_{\p}^{2}/n$ so that $\sigma := \sqrt{\Var(c(S_{\p}))} \leq \frac{m_{\p}}{\sqrt{n}}$.
    Recall that we condition on the event that $|m_{\p} - m| \leq 0.001 m$.
    By Chebyshev's inequality, we have that
    \begin{equation*}
        \Pr\left( c(S_{\p}) - \binom{m_{\p}}{2} \frac{1}{n} > \binom{m_{\p}}{2} \frac{0.01 \eps^2}{n} \right) = \Pr\left( c(S_{\p}) - \binom{m_{\p}}{2} \frac{1}{n} > \frac{0.004 m_{\p}^2 \eps^2}{n} \right) = \bigO{\frac{\sqrt{n}}{\eps^{2} m_{\p}}} \leq 0.001
    \end{equation*}
    for $m_{\p} = \bigO{\frac{\sqrt{n}}{\eps^2}}$ for some sufficiently large constant.
    Then, Markov's inequality implies that with probability $0.999$ we have that $c(S_{\p}) = \bigO{\frac{m^2}{n}}$ and $c(S_{\rD}) = \bigO{m^2 b \lambda^{-2}}$.
    We query $q$ independently chosen collisions to obtain an estimate $\frac{\hat{c}(S_{\p})}{c(S_{\rD})}$ such that with probability at least $0.999$, 
    \begin{equation*}
        \left| \frac{\hat{c}(S_{\p})}{c(S_{\rD})} - \frac{c(S_{\p})}{c(S_{\rD})} \right| \ll \frac{\eps^2 m^2}{n \cdot c(S_{\rD})} \text{.}
    \end{equation*}
    From \Cref{lem:bias-add}, this requires 
    \begin{equation*}
        \bigO{\frac{n^{2} \cdot c(S_{\rD})^2}{m^{4} \eps^{4}} \cdot \frac{c(S_{\p})}{c(S_{\rD})}} = \bigO{\frac{n^{2} \cdot c(S_{\rD}) \cdot c(S_{\p})}{m^{4} \eps^{4}}}
    \end{equation*}
    queries.
    Applying our bounds on $c(S_{\p}), c(S_{\rD})$, we obtain the query bound $\bigO{\frac{nb}{\lambda^{2} \eps^4}}$.
    Thus, if both estimates are correct we have that
    \begin{align*}
        \binom{m_{\p}}{2} \frac{1 + 0.4 \eps^{2}}{n} - \hat{c}(S_{\p}) 
        &\geq \left| \binom{m_{\p}}{2} \frac{1 + 0.4 \eps^{2}}{n} -\binom{m_{\p}}{2} \frac{1}{n} \right| - \left| \binom{m_{\p}}{2} \frac{1}{n} - c(S_{\p}) \right| - \left| \hat{c}(S_{\p}) - c(S_{\p}) \right| \\
        &\geq \binom{m_{\p}}{2} \frac{0.3 \eps^{2}}{n} \text{.}
    \end{align*}
    Rearranging, we obtain the desired bound.

    Now, consider the case that $\p$ is far from uniform.
    A standard calculation shows that
    \begin{equation*}
        \norm{\p - \frac{1}{n}}_{2}^{2} = \norm{\p}_{2}^{2} - \frac{1}{n} \geq \frac{\eps^2}{n} \text{.}
    \end{equation*}
    In particular, let $\norm{\p}_{2}^{2} = \frac{1 + \alpha}{n}$ where $\alpha \geq \eps^2$.
    If $\alpha = \bigO{1}$, then the we apply the same argument as in the uniform case since we have the same bound on $c(S_{\p})$.
    Thus assume $\alpha = \omega(1)$ (e.g. $\alpha \geq 30$). 
    
    We consider two cases.
    Either $m_{\p}^2 \norm{\p}_{2}^{2} \geq m_{\p}^3 (\norm{\p}_{3}^{3} - \norm{\p}_{2}^{4})$ or vice versa.
    Let $\sigma := \sqrt{\Var(c(S_{\p}))}$ denote the standard deviation of $c(S_{\p})$.

    {\bf Case 1: $m^2 \norm{\p}_{2}^{2} \geq m^3 (\norm{\p}_{3}^{3} - \norm{\p}_{2}^{4})$}

    Since $\Var(c(S_{\p})) \leq 2 m^2 \norm{\p}_{2}^{2}$, we have $\sigma^2 \leq 2 m^2 \norm{\p}_{2}^{2} = \frac{2m_{\p}^2(1 + \alpha)}{n}$.
    In particular, with high constant probability, $c(S_{\p})$ is within $\bigO{\frac{m_{\p}\sqrt{1 + \alpha}}{\sqrt{n}}}$ of its expectation $\binom{m_{\p}}{2} \frac{1 + \alpha}{n}$.
    Note that for large enough $n$ (and therefore large enough $m_{\p}$)
    \begin{equation}
        \label{eq:expectation-gap-far-uniform}
        \EX[c(S_{\p})] - \binom{m_{\p}}{2} \frac{1 + 0.4 \eps^2}{n} = \binom{m_{\p}}{2} \frac{\alpha - 0.4 \eps^2}{n} \geq \frac{m_{\p}^2 \alpha}{10 n} \text{.}
    \end{equation}
    Furthermore, we have
    \begin{equation}
        \label{eq:std-dev-m2}
        \bigO{\frac{m_{\p}\sqrt{1 + \alpha}}{\sqrt{n}}} \ll \frac{m_{\p}^2 \alpha}{10 n}
    \end{equation}
    as long as $m_{\p} = \bigO{\frac{\sqrt{n}}{\alpha}} = \bigO{\frac{\sqrt{n}}{\eps^2}}$ for some sufficiently large constant.
    In particular, by Chebyshev's inequality, the expectation gap \eqref{eq:expectation-gap-far-uniform} and variance bound \eqref{eq:std-dev-m2} we have with high constant probability that
    \begin{equation*}
        c(S_{\p}) - \binom{m_{\p}}{2} \frac{1 + 0.4 \eps^2}{n} \geq \frac{m_{\p}^2 \alpha}{20 n} \text{.}
    \end{equation*}
    Our goal is now to bound 
    \begin{equation}
        \label{eq:collision-est-bound}
        |\hat{c}(S_{\p}) - c(S_{\p})| \ll \frac{m_{\p}^{2} \alpha}{n} \text{.}
    \end{equation}
    Following similar arguments using \Cref{lem:bias-add} this requires 
    \begin{equation*}
        \bigO{\frac{n^2 c(S_{\rD})^{2}}{m_{\p}^{4} \alpha^{2}} \cdot \frac{c(S_{\p})}{c(S_{\rD})}} = \bigO{\frac{n^2 c(S_{\rD})}{m_{\p}^{4} \alpha^{2}} \cdot c(S_{\p})} = \bigO{\frac{n b}{\alpha \lambda^{2}}} = \bigO{\frac{nb}{\eps^2 \lambda^2}}
    \end{equation*}
    queries where we have observed that $c(S_{\p}) = \bigO{m^2 \norm{\p}_{2}^{2}} = \bigO{\frac{m^2 \alpha}{n}}$.
    In particular, with high constant probability, we have $\hat{c}(S_{\p}) - \binom{m_{\p}}{2} \frac{1 + 0.4 \eps^2}{n} \geq \frac{m_{\p}^{2} \alpha}{30 n}$.

    {\bf Case 2: $m^2 \norm{\p}_{2}^{2} \leq m^3 (\norm{\p}_{3}^{3} - \norm{\p}_{2}^{4})$}
    Let $\p[i] = \frac{1}{n} + a[i]$ for some vector $a$.
    We bound the variance as
    \begin{align*}
        \Var(c(S_{\p})) &\leq 2 m_{\p}^3 \left(\norm{\p}_{3}^{3} - \norm{\p}_{2}^{4} \right) \\
        &\leq 2m_{\p}^3 \left( \left( \sum_{i} \left( \frac{1}{n} + a[i] \right)^{3} \right) - \frac{1}{n^2} \right) \\
        &= 2m_{\p}^3 \left( \left( \sum_{i} \frac{1}{n^3} + \frac{3 a[i]}{n^2} + \frac{3 a[i]^2}{n} + a[i]^{3} \right) - \frac{1}{n^2} \right) \\
        &= 2m_{\p}^3 \left( \frac{3 \norm{a}_{2}^{2}}{n} + \norm{a}_{3}^{3} \right) \\
        &\leq 2m_{\p}^3 \left( \frac{3 \norm{a}_{2}^{2}}{n} + \norm{a}_{2}^{3} \right) 
    \end{align*}
    where we have used $\norm{\p}_{2}^{2} \geq \frac{1}{n}$ for all distributions $\p$ and $\sum_{i} a[i] = 0$.
    Now, observe that $\norm{\p}_{2}^{2} = \frac{1 + \alpha}{n}$ so that $\norm{a}_{2}^{2} = \frac{\alpha}{n}$ so that
    \begin{align*}
        \Var(c(S_{\p})) &\leq 2m_{\p}^3 \left( \frac{3 \alpha}{n^2} + \frac{\alpha^{3/2}}{n^{3/2}} \right) \\
        &= \frac{6 m_{\p}^{3} \alpha}{n^{2}} + \frac{2 m_{\p}^{3} \alpha^{3/2}}{n^{3/2}} \text{.}
    \end{align*}
    In particular, we have $\sigma \leq \frac{6 m_{\p}^{3/2} \alpha^{1/2}}{n} + \frac{2 m_{\p}^{3/2} \alpha^{3/4}}{n^{3/4}}$.
    As in Case 1, we hope to bound the standard deviation against the expectation gap \eqref{eq:expectation-gap-far-uniform} i.e.
    \begin{align*}
        \frac{6 m_{\p}^{3/2} \alpha^{1/2}}{n} + \frac{2 m_{\p}^{3/2} \alpha^{3/4}}{n^{3/4}} \ll \frac{m_{\p}^2 \alpha}{n}
    \end{align*}
    which follows as long as $m_{\p} = \bigO{\frac{1}{\alpha} + \frac{\sqrt{n}}{\sqrt{\alpha}}}$ for some sufficiently large constant.
    In particular, $m_{\p} = \bigO{\frac{\sqrt{n}}{\eps^2}}$ suffices.
    The remainder of the argument (estimating $c(S_{\p})$ using $\hat{c}(S_{\p})$) follows identically as in Case 1.
\end{proofof}

\subsection{Adversarially Robust Algorithm for Uniformity Testing}

We present an $\lambda$-adversarially robust $\eps$-uniformity tester with near-optimal sample and query complexity.

\UniformityAdvAlg*

We begin with a high level overview of the algorithm.

\paragraph{Algorithm Overview}
A close inspection of our distributionally robust algorithm shows that the algorithm essentially requires two main ingredients:
\begin{enumerate}
    \item There are thresholds $T_1 \ll T_2$ such that in the completeness case ($\p$ is uniform) there are at most $T_1$ $\p$-collisions while in the soundness case ($\p$ is far from uniform) that are at least $T_2$ $\p$-collisions. 
    This condition is met as long as sufficiently many samples are taken from $\p$ (i.e. $m_{\p} \gg \frac{\sqrt{n}}{\eps^2}$).
    \item $S_{\rD}$ does not produce too many collisions so we can estimate the number of $\p$-collisions without too many queries.
    This condition is met as long as the flattening procedure succeeds (i.e. $\norm{\rD}_{2}^{2} \ll \frac{1}{m}$).
\end{enumerate}
In the adversarial setting, the first condition is easy to ensure by taking at least $\bigO{\frac{\sqrt{n}}{\eps^2 \lambda}}$ samples.
However, the second condition is no longer evident, as the adversarial samples do not have to be chosen iid from some distribution.
Nevertheless, we are able to design an adversarially robust algorithm using a more delicate flattening procedure.
Instead of using the first half of samples to flatten and the second half to test uniformity of the flattened distribution, we will randomly choose which set of samples to flatten with and which set of samples to test uniformity of $\p$ with.

In the distributional setting, we were able to show that $\norm{\rD}_{2}^{2} \leq \frac{1}{m}$ so that $m$ samples produce $O(m)$ collisions.
In the adversarial setting, we would similarly hope to bound the number of collisions as $O(m)$.
If $Y_i$ is the observed frequency of the $i$-th bucket (so $m = \sum Y_i$) this is roughly $\sum_i Y_i^2$ which subject to the constraint is maximized when a single $Y_i$ is maximized.
If $\max_{i} Y_i \leq B$, then we have at most $O(mB)$-collisions.
If we randomly choose each sample to be a flattening sample with probability $\frac{1}{2}$, then we ensure that for large $Y_i > B$, the number of flattening samples in the $i$-th bucket is at least $\Omega(Y_i)$ with high probability (e.g. $1 - e^{-\Omega(B)}$).
Thus, a union bound shows that it suffices to choose $B = O(\log n)$ which yields an upper bound of $O(m \log n)$ collisions in the randomly subsampled test dataset.

We are now ready to prove \Cref{thm:adversarially-robust-uniformity-alg}.

\begin{proofof}{\Cref{thm:adversarially-robust-uniformity-alg}}
    We begin by estimating $\lambda$ using \Cref{fact:lambda-est}.
    A key ingredient will be the following analogue to flattening \Cref{lemma:flattening}.

    \begin{lemma}[Adversarial Flattening]
        \label{lemma:subsample-flattening-alg}
        There is an algorithm that given an $m$ samples from a $(1 - \lambda)$-adversarially contaminated source, with probability $0.99$ outputs a map $F: [n] \rightarrow [n + m]$ and a dataset $S_{\testing} \subset S$ satisfying the following:
        \begin{enumerate}
            \item $F(S_{\testing})$ contains at most $O(m \log n)$-collisions,
            \item $S_{\testing}$ contains at least $\bigOm{\lambda m}$ iid samples from $\p$,
            \item $\norm{\q_1 - \q_2}_{1} = \norm{F(\q_1) - F(\q_2)}_{1}$ for all distributions $\q_1, \q_2$.
        \end{enumerate}  
    \end{lemma}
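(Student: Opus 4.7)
The plan is to apply a randomized variant of the split-distribution flattening from \Cref{def:split-distributions}: independently assign each of the $m$ samples in $S$ to one of two sub-datasets $S_{\text{flat}}$ and $S_{\text{test}}$ with probability $1/2$ each, and set $F = f_{S_{\text{flat}}}$, the split-distribution map induced by $S_{\text{flat}}$. Since $|S_{\text{flat}}| \leq m$, this is a map $[n] \to [n + m]$. Property (3) is immediate from \Cref{fact:split-preserve-1-norm}. Property (2) follows because the $\geq \lambda m$ i.i.d. $\p$-samples guaranteed by the adversarial model are each independently placed into $S_{\text{test}}$ with probability $1/2$, so a Chernoff bound (with randomness independent of $\p$ and the adversary) gives $\Omega(\lambda m)$ such samples in $S_{\text{test}}$ with all but exponentially small failure probability.

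The main obstacle is property (1). Let $Y_i$ denote the frequency of bucket $i$ in $S$, and $Y_i^{\text{flat}}, Y_i^{\text{test}}$ its frequencies in the two sub-datasets (so $Y_i^{\text{flat}} + Y_i^{\text{test}} = Y_i$). Conditional on the partition, under $F$ each test sample in bucket $i$ is mapped to a uniformly random element of a set of size $1+Y_i^{\text{flat}}$, independently across test samples, so two test samples in bucket $i$ collide in $F$ with probability $1/(1+Y_i^{\text{flat}})$. Hence the conditional expected number of collisions in $F(S_{\text{test}})$ equals
\[
N(S) := \sum_i \binom{Y_i^{\text{test}}}{2} \cdot \frac{1}{1 + Y_i^{\text{flat}}} \text{.}
\]

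I plan to bound $N(S) = O(m \log n)$ with high probability via a case analysis on $Y_i$. For \emph{light} buckets with $Y_i \leq K \log n$ (for a large constant $K$ chosen later), use the crude bound $(Y_i^{\text{test}})^2 \leq Y_i \cdot Y_i^{\text{test}} \leq K (\log n) \, Y_i^{\text{test}}$, giving a total light contribution of at most $K (\log n) \sum_i Y_i^{\text{test}} \leq K m \log n$. For \emph{heavy} buckets with $Y_i > K \log n$, a Chernoff bound yields $Y_i^{\text{flat}} \geq Y_i/4$ with failure probability $e^{-\Omega(K \log n)} = n^{-\Omega(K)}$; taking $K$ large enough and union-bounding over the at most $n$ heavy buckets gives that this holds simultaneously for all heavy buckets with probability $\geq 0.995$. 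On this event the heavy contribution is at most $\sum_{i \text{ heavy}} Y_i^2/(Y_i/4) = 4 \sum_{i \text{ heavy}} Y_i \leq 4m$, so $N(S) = O(m \log n)$. Finally, a Markov bound on the randomness of $F$ (conditional on $S$ and the partition) converts this expectation bound into the desired high-probability bound on the actual number of collisions in $F(S_{\text{test}})$, and a union bound over the two failure events and the property-(2) Chernoff event yields the claimed $0.99$ overall success probability.
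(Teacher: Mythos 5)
Your proposal is correct and follows essentially the same route as the paper's proof: a uniformly random 50/50 split into flattening and testing halves, $F$ taken as the split map induced by the flattening half (so item (3) is immediate from \Cref{fact:split-preserve-1-norm}), a Chernoff bound guaranteeing $Y_i^{\flatten} = \Omega(Y_i)$ for all buckets with $Y_i \gtrsim \log n$, an expectation-plus-Markov bound on collisions for heavy buckets with a crude $O(m\log n)$ bound for light buckets, and a Chernoff bound for the $\p$-samples landing in $S_{\testing}$. The only cosmetic difference is that you route the light-bucket contribution through the conditional expectation and apply Markov globally, whereas the paper bounds light-bucket collisions deterministically and applies Markov only to the heavy-bucket term; both yield the same $O(m\log n)$ guarantee.
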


    Thus, if we fix $m \geq \frac{C' \sqrt{n}}{\eps^2 \hat{\lambda}}$ for sufficiently large constant $C'$, the above lemma yields a dataset $F(S_{\testing})$ with at most $O(m \log n)$-collisions and at least $\frac{C \sqrt{n}}{\eps^2}$ iid samples from $\p$ for $C$ required by \Cref{lemma:simple-t-query-alg}.

    \begin{proofof}{\Cref{lemma:subsample-flattening-alg}}
        Consider an algorithm that takes $m$ iid samples $S$ and evenly distributes (independently with probability $\frac{1}{2}$) the observed samples $S$ into two datasets: $S_{\flatten}, S_{\testing}$.
        Let $Y_i$ denote the frequency of the $i$-th bucket in $S$ and $Y_{i}^{\flatten}, Y_{i}^{\testing}$ denote the frequency of the $i$-th bucket in $S_{\flatten}, S_{\testing}$ respectively.
        We argue that for large $Y_i$, $Y_i^{\flatten} = \Omega(Y_i)$.
        
        \begin{claim}
            \label{clm:subsample-flattening}
            With probability at least $0.999$, $Y_{i}^{\flatten} \geq \frac{Y_{i}}{4}$ for all $Y_i \geq 100 \log n$.
        \end{claim}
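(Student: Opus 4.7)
The plan is to condition on the entire sample set $S$ (including the adversarial corruptions), which fixes the frequencies $Y_i$. Once we condition on $S$, the assignment into $S_{\flatten}$ versus $S_{\testing}$ is a purely internal randomization of the algorithm: each of the $Y_i$ occurrences of bucket $i$ is independently sent to $S_{\flatten}$ with probability $1/2$. Hence, conditionally, $Y_i^{\flatten} \sim \mathrm{Binom}(Y_i, 1/2)$, independently across buckets.

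Next, I would apply a multiplicative Chernoff bound to each bucket. With $\mu_i := Y_i/2 = \EX[Y_i^{\flatten} \mid S]$ and deviation $\delta = 1/2$, the standard lower tail gives
\begin{equation*}
\Pr\!\left[ Y_i^{\flatten} < \tfrac{Y_i}{4} \,\middle|\, S \right] \;=\; \Pr\!\left[ Y_i^{\flatten} < (1-\tfrac{1}{2})\mu_i \,\middle|\, S \right] \;\leq\; \exp\!\left(-\tfrac{\mu_i}{8}\right) \;=\; \exp\!\left(-\tfrac{Y_i}{16}\right).
\end{equation*}
For any bucket with $Y_i \geq 100 \log n$, this probability is at most $\exp(-100 \log n / 16) \leq n^{-6}$.

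Finally, I would take a union bound over the (at most $n$) buckets with $Y_i \geq 100 \log n$, yielding a failure probability at most $n \cdot n^{-6} = n^{-5} \ll 0.001$. Since this bound holds conditionally on $S$ for every realization of $S$, the unconditional bound of $0.999$ success follows by averaging. The only thing to be careful about is that the adversary has no influence over the random partition since the coin flips used to form $S_{\flatten}, S_{\testing}$ are the algorithm's own internal randomness and independent of the adversary's choices; this is exactly why conditioning on $S$ is legal. No step here looks like a real obstacle — the main point is simply to confirm that the Chernoff concentration is strong enough, which it is since $Y_i \geq 100 \log n$ gives a polynomial-in-$n$ tail bound that absorbs the union bound comfortably.
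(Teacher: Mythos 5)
Your proof is correct and follows essentially the same route as the paper's: recognize $Y_i^{\flatten}$ as a sum of $Y_i$ independent fair coin flips, apply the multiplicative Chernoff lower tail with $\delta = 1/2$ to get $\exp(-Y_i/16)$, and union bound over $n$ buckets. Your explicit remark that conditioning on $S$ is legitimate because the $S_{\flatten}/S_{\testing}$ split uses the algorithm's internal randomness, independent of the adversary, is a worthwhile clarification that the paper leaves implicit.
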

    
        \begin{proof}
            Note that $Y_{i}^{\flatten}$ is the sum of $Y_i$ iid Bernoulli random variables with parameter $\frac{1}{2}$. 
            Thus, $\EX[Y_{i}^{\flatten}] = \frac{Y_i}{2}$.
            By a standard Chernoff bound, we have $\Pr\left(Y_{i}^{\flatten} < \frac{Y_{i}}{4}  \right) < \exp \left( - \frac{Y_i}{16} \right) < \frac{0.001}{n}$ for $Y_i \geq 100 \log n$. 
        \end{proof}

        We now describe how to obtain $F$.
        Let $F$ describe the (randomized) map $F_0: [n] \rightarrow [n + |S_{\flatten}|]$ defined as the split distribution induced by $S_{\flatten}$ (see \Cref{def:split-distributions}).
        
        Let $F(S_{\testing})$ denote the dataset obtained by independently applying $F$ to each element of $S_{\testing}$.
        Let $c(F(S_{\testing}))$ denote the number of collisions in $F(S_{\testing})$.
        We bound the number of collisions in $F(S_{\testing})$.
    
        \begin{lemma}
            \label{lemma:subsample-collision-bound}
            With probability $0.999$ over samples $S$, $c(F(S_{\testing})) = O(m \log n)$.
        \end{lemma}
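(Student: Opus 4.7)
The plan is to reduce the claim to a calculation of the expected number of collisions in $F(S_{\testing})$ conditioned on the event from \Cref{clm:subsample-flattening}, and then conclude via Markov's inequality.

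First, I would fix the (arbitrary) sample dataset $S$ and condition on the event $\cE$ that $Y_i^{\flatten} \geq Y_i/4$ for every $i$ with $Y_i \geq 100 \log n$, which by \Cref{clm:subsample-flattening} holds with probability at least $0.999$ over the random partitioning of $S$ into $S_{\flatten}$ and $S_{\testing}$. Observe that after conditioning, both the splits $S_{\flatten}, S_{\testing}$ and their per-bucket frequencies $Y_i^{\flatten}, Y_i^{\testing}$ are determined, and $Y_i^{\testing} \leq Y_i$.

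Next, I would analyze the remaining randomness, which is only in $F$. By the split distribution construction, $F$ maps each element $x \in S_{\testing}$ that lies in original bucket $i$ independently and uniformly at random to one of $Y_i^{\flatten}+1$ sub-buckets of bucket $i$. In particular, two samples of $S_{\testing}$ lying in \emph{different} original buckets can never collide under $F$, while two samples lying in the \emph{same} original bucket $i$ collide under $F$ with probability exactly $1/(Y_i^{\flatten}+1)$. Therefore, by linearity of expectation (over the randomness of $F$),
\begin{equation*}
    \EX\bigl[\,c(F(S_{\testing})) \mid S_{\flatten}, S_{\testing}\,\bigr] \;=\; \sum_{i=1}^n \binom{Y_i^{\testing}}{2}\cdot\frac{1}{Y_i^{\flatten}+1} \;\leq\; \sum_{i=1}^n \frac{(Y_i^{\testing})^2}{Y_i^{\flatten}+1} \text{.}
\end{equation*}
I would then split the sum into light buckets ($Y_i < 100 \log n$) and heavy buckets ($Y_i \geq 100 \log n$). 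For light buckets, I bound $(Y_i^{\testing})^2/(Y_i^{\flatten}+1) \leq (Y_i^{\testing})^2 \leq Y_i \cdot 100\log n$, so their contribution is at most $100 m \log n$. For heavy buckets I use $\cE$ to obtain $Y_i^{\flatten}+1 \geq Y_i/4$, so $(Y_i^{\testing})^2/(Y_i^{\flatten}+1) \leq 4 Y_i$, contributing at most $4m$ in total. Thus the conditional expectation is $O(m \log n)$.

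Finally, I would apply Markov's inequality to conclude that, conditional on $\cE$ and on the sample partition, $c(F(S_{\testing})) = O(m \log n)$ with probability at least $0.999$ over the randomness of $F$. Taking a union bound with the $0.999$ probability of $\cE$ gives the claimed $c(F(S_{\testing})) = O(m \log n)$ with overall probability at least $0.998$ (which can be boosted to $0.999$ by adjusting the hidden constants). The main thing to get right is bookkeeping between $Y_i, Y_i^{\flatten}, Y_i^{\testing}$ and recognizing that the adversarial nature of $S$ is irrelevant here, because the argument uses only the per-bucket counts $Y_i$ and the random subsampling together with the random sub-bucket assignment performed by $F$.
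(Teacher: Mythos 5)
Your proof is correct and follows essentially the same route as the paper: condition on the event that $Y_i^{\flatten} \geq Y_i/4$ for heavy buckets, bound the expected number of within-bucket collisions via the $1/(Y_i^{\flatten}+1)$ collision probability (heavy buckets contributing $O(m)$, light buckets $O(m\log n)$), and finish with Markov's inequality and a union bound. The only cosmetic difference is that the paper bounds the light-bucket collisions deterministically by $\sum_{Y_i \leq 100\log n}\binom{Y_i^{\testing}}{2} = O(m\log n)$ rather than folding them into the expectation, which changes nothing substantive.
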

    
        \begin{proof}
            Fix a bucket $i \in [n]$.
            Suppose $Y_i \geq 100 \log n$.
            Two samples in the $i$-th bucket are mapped to the same sub-bucket $j$ with probability $\frac{1}{Y_{i}^{\flatten} + 1}$.
            Since there are $\binom{Y_{i}^{\testing}}{2} \leq Y_{i}^{2}$ pairs in the $i$-th bucket the number of collisions in the $i$-th bucket is in expectation $\bigO{Y_{i}}$.
            Summing over all $n$ buckets, the number of collisions in total is in expectation $\bigO{m}$.
            Thus, Markov's inequality implies that the total number of collisions in $F(S_{\testing})$ is $O(m)$ with probability $0.999$.
    
            It remains to count collisions in buckets with $Y_i \leq 100 \log n$.
            Note that the number of collisions is at most $\sum_{Y_i \leq 100 \log n} \binom{Y_i^{\testing}}{2} = \bigO{m \log n}$ since the sum of squares (constrained on $\sum Y_i \leq m$) is maximized when individual terms are maximized.
        \end{proof}

        Finally, we argue that $S_{\testing}$ contains many samples from $\p$.

        \begin{claim}
            \label{clm:uniformity-sample-size-bound-adv}
            With probability $0.999$, $m_{\p} \geq \frac{C \sqrt{n}}{\eps^2}$ for a sufficiently large constant $C$ required by \Cref{clm:c-Sp-query}.
        \end{claim}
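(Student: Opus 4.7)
\begin{proofof}{\Cref{clm:uniformity-sample-size-bound-adv}}
The plan is a direct two-step concentration argument. First, by the definition of a $(1-\lambda)$-adversarially contaminated source, among the $m$ samples drawn, at least a $\lambda$-fraction are genuinely i.i.d.\ samples from $\p$; call this (unknown) subset $S_{\p}$, so $|S_{\p}| \geq \lambda m$. Since we chose $m \geq \frac{C'\sqrt{n}}{\eps^2 \hat{\lambda}}$ and $\hat{\lambda} \leq \lambda$ (by \Cref{fact:lambda-est}, on the good event we already conditioned on), we have
\begin{equation*}
|S_{\p}| \;\geq\; \lambda m \;\geq\; \frac{\lambda}{\hat{\lambda}} \cdot \frac{C'\sqrt{n}}{\eps^2} \;\geq\; \frac{C'\sqrt{n}}{\eps^2}.
\end{equation*}

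Second, the subsampling step places each element of $S$ into $S_{\testing}$ independently with probability $1/2$, so $m_{\p} := |S_{\p} \cap S_{\testing}|$ is a sum of $|S_{\p}|$ independent $\mathrm{Bern}(1/2)$ random variables. Its expectation is at least $C'\sqrt{n}/(2\eps^2)$, so by a standard multiplicative Chernoff bound,
\begin{equation*}
\Pr\!\left[m_{\p} < \frac{C'\sqrt{n}}{4\eps^2}\right] \;\leq\; \exp\!\left(-\Omega\!\left(\frac{C'\sqrt{n}}{\eps^2}\right)\right) \;<\; 0.001,
\end{equation*}
provided $C'$ (and $n$) is chosen sufficiently large relative to the constant $C$ from \Cref{clm:c-Sp-query}; taking $C' \geq 4C$ then gives $m_{\p} \geq C\sqrt{n}/\eps^2$ with probability at least $0.999$.

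The main thing to be careful about is simply that the adversary acts \emph{before} our random subsampling: the identity of $S_{\p}$ is fixed (possibly adversarially, as a function of the clean draws) but does not depend on the independent fair coins used to form $S_{\testing}$. This independence is what licenses the Chernoff step above. No additional ideas are required, and the resulting $O(1/\lambda)$ sample overhead from estimating $\hat{\lambda}$ is absorbed into the stated complexity.
\end{proofof}
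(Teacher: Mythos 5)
Your proof is correct and takes essentially the same approach as the paper's: both lower bound $m_{\p}$ by identifying $|S_{\p}| \geq \lambda m \geq C'\sqrt{n}/\eps^2$ clean samples and then applying a Chernoff bound to the $\mathrm{Bern}(1/2)$ subsampling coins. Your version is actually a bit tighter: you drop the paper's preliminary bound on $|S_{\testing}|$ (which does not appear to be needed for this claim) and you explicitly justify the independence of the subsampling coins from the adversary's choice of which samples to corrupt — a point the paper leaves implicit but which is indeed the reason the Chernoff step is legitimate in the adversarial model.
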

    
        \begin{proof}
            Note that $|S_{\testing}|$ is a sum of $|S| = m$ iid Bernoulli random variables with parameter $\frac{1}{2}$ so that a standard Chernoff bound shows $\frac{m}{2} - 0.001m \leq |S_{\testing}| \leq \frac{m}{2} + 0.001m$ with high probability as $m \geq \frac{\sqrt{n}}{\eps^2 \lambda}$.
            Note that $m_{\p}$ is the sum of (at least) $\lambda m \geq \frac{C \lambda \sqrt{n}}{\hat{\lambda} \eps^2} \geq \frac{C \sqrt{n}}{\eps^2}$ iid Bernoulli random variables with parameter $\frac{1}{2}$ so a Chernoff bound yields $m_{\p} \geq \frac{\lambda m}{4} \geq \frac{C \sqrt{n}}{\eps^2}$ with high probability.
            We conclude with a union bound.
        \end{proof}

        A union bound concludes the proof of \Cref{lemma:subsample-flattening-alg}.
        Note that the final item follows directly from \Cref{fact:split-preserve-1-norm}.
    \end{proofof}

    Let $F(U_n)$ denote the split distribution obtained by taking a sample from $U_n$ and applying $F$.
    Then, let $F^*(x) = F'(F(x))$ where $F'$ is the reduction reducing $\eps$-testing of equality to $F(U_n)$ to $\eps/3$-testing of uniformity on $[6(n + |S_{\flatten}|)]$.

    Now, we argue that we can determine if $\p$ is uniform or far from uniform efficiently with queries.
    As in the distributional case, note that $F(U_n)$ is uniform if $\p$ is uniform while $F(\p)$ is $\eps/3$-far from uniform if $\p$ is $\eps$-far from uniform.
    Furthermore, by applying $F$ independently to each sample from $S_{\testing}$, the uncorrupted iid samples from $\p$ in $S_{\testing}$ are simply uncorrupted iid samples from $F(\p)$ in $F(S_{\testing})$.
    Thus, $F(S_{\testing})$ contains at least $m_{\p} = \bigOm{\lambda m} \geq \frac{C \sqrt{n}}{\eps^2}$ samples from $\p$.
    
    Note that we can estimate $\hat{m}_{\p}$ using \Cref{clm:m-p-query}.
    In particular, this since this is bias estimation up to error $\eps^2 \lambda$ which requires $O(\eps^{-4} \lambda^{-2})$ queries.
    
    It remains to show how to estimate $c(F(S_{\testing})_{\p})$, the number of $\p$-collisions in $F(S_{\testing})$, analogously to \Cref{clm:c-Sp-query}.
    By our lower bound on $m_{\p}$, the assumption of \Cref{clm:c-Sp-query} is satisfied.
    Following similar analysis, we obtain the following sub-routine.

    \begin{claim}
        \label{clm:c-Sp-query-adv}
        There exists a universal constant $C > 0$ such that if $m_{\p} = \frac{C \sqrt{n}}{\eps^2}$ there is an algorithm that uses $\bigO{\frac{n \log n}{\eps^{4} \lambda^{2}}}$ queries, and with probability $0.99$ computes an estimate $\hat{c}$ such that the following hold:
        \begin{enumerate}
            \item If $\p$ is uniform, then
            \begin{equation*}
                \hat{c} \leq \binom{m_{\p}}{2} \frac{1 + 0.1 \eps^{2}}{n}\text{.}
            \end{equation*}

            \item If $\p$ is $\eps$-far from uniform, then 
            \begin{equation*}
                \hat{c} - \binom{m_{\p}}{2} \frac{1 + 0.4 \eps^2}{n} \geq \max\left( \frac{m_{\p}^{2} \alpha}{30 n} , \frac{0.3 m_{\p}^2 \eps^2}{n} \right)
            \end{equation*}
            where $\norm{\p}_{2}^{2} = \frac{1 + \alpha}{n}$.
        \end{enumerate}
    \end{claim}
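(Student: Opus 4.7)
The plan is to mirror the proof of \Cref{clm:c-Sp-query}, substituting the distributional collision bound $c(S_\rD) = O(m^2 b/\lambda^2)$ with the adversarial bound $c(F(S_{\testing})) = O(m \log n)$ guaranteed by \Cref{lemma:subsample-collision-bound}. The extra $\log n$ factor in the total collision count is essentially the only overhead compared to the distributional case, and it will propagate into the final query complexity.

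The first step is to argue that the genuine $\p$-samples inside $F(S_{\testing})$ are truly iid from $F(\p)$. By the adversarial contamination model, the input contains at least $\lambda m$ samples drawn iid from $\p$; random subsampling into $S_{\testing}$ preserves iid-ness, retaining $\Omega(\lambda m)$ of them with high probability (\Cref{clm:uniformity-sample-size-bound-adv}). Since $F$ is built solely from the disjoint subset $S_{\flatten}$ and is applied element-wise, these retained samples remain iid from $F(\p)$ inside $F(S_{\testing})$. By \Cref{thm:identity-testing-reduction}, $F(\p)$ is uniform iff $\p$ is uniform, is $\eps/3$-far from uniform iff $\p$ is $\eps$-far from uniform, and $\norm{F(\p)}_2^2 = O((1+\alpha)/n)$. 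I can then replay the Chebyshev/variance analysis (both Case 1 and Case 2) of \Cref{clm:c-Sp-query} on these $m_\p$ iid samples to obtain the completeness upper bound and soundness lower bound on the true $\p$-collision count $c(F(S_{\testing})_\p)$ that the claim asserts.

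The final step is to estimate $c(F(S_{\testing})_\p)$ from verification queries. I would repeatedly query random collisions in $F(S_{\testing})$ and mark those whose two elements both verify as coming from $\p$. Each such trial is a $\Bern(\rho)$ draw with $\rho = c(F(S_{\testing})_\p)/c(F(S_{\testing}))$. Invoking \Cref{lem:bias-add} to estimate $\rho$ within the additive error needed to preserve the completeness/soundness gap costs $O(c(F(S_{\testing})) \cdot c(F(S_{\testing})_\p) \cdot n^2/(\eps^4 m_\p^4))$ queries. Plugging in $c(F(S_{\testing})) = O(m \log n)$ and $c(F(S_{\testing})_\p) = O(m_\p^2/n)$ via Markov, together with $m_\p = \Theta(\sqrt{n}/\eps^2)$ and $m = \Theta(m_\p/\lambda)$, collapses this to the claimed $O(n \log n/(\eps^4 \lambda^2))$ bound.

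The main obstacle I anticipate is making the Chebyshev step clean despite the randomness in $F$, in $m_\p$, and in which samples survive subsampling. The cleanest route is to condition on $F$ and on the realization of $m_\p \geq C\sqrt{n}/\eps^2$ -- turning $c(F(S_{\testing})_\p)$ into a collision count on $m_\p$ iid samples from a fixed distribution $F(\p)$ -- apply the original variance analysis there, and then union-bound against the Chernoff event for $m_\p$ and the high-probability events of \Cref{lemma:subsample-flattening-alg}. The stronger Case 2 calculation, which invokes $\norm{F(\p)}_3^3$, must be shown to survive the identity-to-uniformity reduction; this is where one has to be most careful, but it ultimately reduces to the bound $\norm{F(\p)}_2^2 = O(\norm{\p}_2^2)$ from \Cref{thm:identity-testing-reduction}.
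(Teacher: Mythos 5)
Your proposal is correct and takes essentially the same approach as the paper's proof: condition on the high-probability events of \Cref{lemma:subsample-flattening-alg} so that $c(F(S_{\testing})_{\p})$ is a collision count on iid samples from $F(\p)$, replay the variance/Chebyshev analysis and bias-estimation step of \Cref{clm:c-Sp-query}, and substitute the adversarial bound $c(F(S_{\testing})) = O(m \log n)$ from \Cref{lemma:subsample-collision-bound} in place of $c(S_{\rD}) = O(m^2 b/\lambda^2)$, which is exactly the source of the extra $\log n$ factor. Your proposal is in fact more explicit than the paper's terse proof about the conditioning needed to handle the randomness in $F$ and $m_\p$, and about verifying that the Case~2 third-moment bound survives the identity-to-uniformity reduction.
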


    \begin{proof}
        Since $m_{\p}$ is sufficiently large, the expectation and variance of $c(F(S_{\testing})_{\p})$ holds exactly as in \Cref{clm:c-Sp-query}.
        In particular, $c(F(S_{\testing})_{\p})$ satisfies the same properties as $c(S_{\p})$ in both completeness and soundness cases.
        
        The query bound to estimate $\hat{c}$ then follows from analogous analysis in \Cref{clm:c-Sp-query} and \Cref{lemma:subsample-collision-bound}.
        For example, in the case $\p$ is uniform, we can obtain an estimate $\hat{c}(F(S_{\testing})_{\p})$ with the same guarantees using
        \begin{equation*}
            \bigO{\frac{n^2 c(F(S_{\testing})) c(F(S_{\testing})_{\p})}{m_{\p}^{4} \eps^{4}}} = \bigO{\frac{n^2 (m \log n) (\lambda^2 m^2 /n)}{(\lambda^{4} m^{4}) \eps^{4}}} = \bigO{\frac{n \log n}{m \lambda^{2} \eps^{4}}} 
        \end{equation*}
        queries, where in the second inequality we have upper bounded $c(F(S_{\testing}))$ from \Cref{lemma:subsample-collision-bound} and $c(F(S_{\testing})_{\p})$ from (the proof of) \Cref{clm:c-Sp-query}, and lower bounded $m_{\p}$ from \Cref{clm:uniformity-sample-size-bound-adv}.
        
        A similar query bound follows for the case when $\p$ is far from uniform.
        For example, if $\alpha \geq 30$, we require
        \begin{equation*}
            \bigO{\frac{n^2 c(F(S_{\testing})) c(F(S_{\testing})_{\p})}{m_{\p}^{4} \alpha^2}} = \bigO{\frac{n^2 (m \log n) (\lambda^2 m^2 \alpha /n)}{(\lambda^{4} m^{4}) \alpha^{2}}} = \bigO{\frac{n \log n}{m \lambda^{2} \alpha}} = \bigO{\frac{n \log n}{m \lambda^{2} \eps^2}} 
        \end{equation*}
        queries.
        Note that we have lost a $O(\log n)$-factor in query complexity as \Cref{lemma:subsample-collision-bound} is weaker than the upper bound on $c(S_{\rD})$ in the distributional contamination setting.
    \end{proof}

    Following the above discussion, we describe the full adversarially robust algorithm.

    \begin{mdframed}
        \begin{enumerate}
            \item Estimate $\hat{\lambda}$ using \Cref{fact:lambda-est}.
            \item Let $m \geq \frac{C n^{1/2}}{\eps^{2} \hat{\lambda}}$ for some sufficiently large constant $C$. Drawn $m$ iid samples, denoted $S$.
            \item Compute $S_{\testing}, F$ with \Cref{lemma:subsample-flattening-alg}. 
            \item In the remaining algorithm, consider testing uniformity against $F(U_n)$.
            \item Compute $\hat{m}_{\p}$ using \Cref{clm:m-p-query} and $\hat{c}$ using \Cref{clm:c-Sp-query-adv}.
            \item Output accept if $\hat{c} < \binom{\hat{m}_{\p}}{2} \frac{1 + 0.4 \eps^2}{n}$.
            Otherwise reject.
        \end{enumerate}
    \end{mdframed}

    Correctness follows from identical analysis as in the distributional contamination setting (\Cref{thm:simple-uniformity-alg}).
    The sample complexity and query complexity bounds follow as above.
\end{proofof}


\section{Closeness Testing} \label{sec:closeness-UB}

In this section, we establish the following upper bound for closeness testing in the verification query model.


\begin{theorem} [Closeness Testing with Verification Queries]  \label{thm:closenessUB} Suppose we are given sample access and verification query access to two mixtures $\rD_1 = \lambda \p_1 + (1-\lambda)\q_1$, $r_2 = \lambda \p_2 + (1-\lambda)\q_2$ with unknown mixture parameter $\lambda \in (0,1)$, and arbitrary distributions $\p_1,\p_2,\q_1,\q_2 \colon [n] \to [0,1]$. For all $m \leq O(n)$, there is an algorithm that distinguishes between the case of $\p_1 = \p_2$ and $\norm{\p_1 - \p_2}_1 \geq \eps$ with probability $2/3$ using
\[
O\left(m + \frac{n^{2/3}}{\eps^{4/3}\lambda} + \frac{1}{\eps^4 \lambda^3}\right) ~\text{ samples and }~ O\left(\frac{n^2}{m^2} \cdot \frac{1}{\eps^4 \lambda^3}\right) ~\text{ verification queries.}~
\]
\end{theorem}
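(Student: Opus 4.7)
The plan is to adapt the optimal collision-based $\ell_2$ closeness tester of Diakonikolas-Gouleakis-Peebles-Price to the verification query model, following the skeleton outlined in \Cref{sec:tech-sub}. First, I would estimate $\hat\lambda$ via \Cref{fact:lambda-est}, then apply mixture flattening (\Cref{lemma:flattening}) to each source $\rD_1, \rD_2$ using $O(m)$ samples, reducing to the case $\|\rD_b\|_2^2 \leq O(1/m)$. Up to the reduction constants, this lets me assume that among $m$ samples from $\rD_b$ there are at most $O(m)$ same-source collisions and at most $O(m)$ cross-source collisions between the two samples from $\rD_1$ and $\rD_2$, by a Markov bound on $\binom{m}{2}\|\rD_b\|_2^2$ and $m^2 \langle \rD_1, \rD_2 \rangle \leq m^2\|\rD_1\|_2\|\rD_2\|_2$.

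Next, take $m' = \Theta(m + n^{2/3}/\eps^{4/3}\lambda + 1/\eps^4\lambda^3)$ samples from each flattened mixture. Let $m_{\p_b}$ denote the unknown number of samples from $\p_b$; by a Chernoff bound $m_{\p_b} = \lambda m' \pm O(\sqrt{\lambda m'\log(1/\delta)})$, and the term $1/\eps^4\lambda^3$ in the sample complexity is calibrated precisely so that these fluctuations do not interfere with the soundness gap below. Consider the (unobservable) DGPP-style statistic $Z := C_{\p_1} + C_{\p_2} - \kappa \cdot C_{\p_1\p_2}$, where $C_{\p_1}, C_{\p_2}$ count $\p_b$-collisions within each source, $C_{\p_1\p_2}$ counts cross-collisions between elements of each source drawn from $\p_1$ and $\p_2$, and $\kappa$ is chosen as a function of $m_{\p_1}, m_{\p_2}$ so that $\EX[Z \mid m_{\p_1},m_{\p_2}]$ equals a suitable positive multiple of $\|\p_1-\p_2\|_2^2$. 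Using the variance bound in \cite{DBLP:journals/cjtcs/DiakonikolasGPP19} together with the flattening-based infinity norm control, one shows the completeness vs. soundness gap for $Z$ is $\Omega(m^2\eps^2/n)$, after using $\|\p_1-\p_2\|_1\geq\eps \Rightarrow \|\p_1-\p_2\|_2\geq\eps/\sqrt{n}$.

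The third step is to estimate each of $C_{\p_1}, C_{\p_2}, C_{\p_1\p_2}$ from verification queries. For each, repeatedly pick a uniformly random collision (resp.\ cross-collision) in the observed data and verify its two underlying samples: this simulates a $\mathrm{Bern}(\rho)$ trial with $\rho \leq 1$ equal to the fraction of collisions that are $\p$-collisions. Since the total collision count is $O(m)$, estimating each count to additive accuracy $\delta\asymp m^2\eps^2/n$ corresponds to bias estimation to accuracy $\eps'\asymp m\eps^2/n$, which by \Cref{lem:bias-add} costs $O(n^2/(m^2\eps^4))$ queries; tracking the $\lambda$-dependence (a sample is a $\p_b$-sample with probability $\Theta(\lambda)$, so pairs and triples bring extra $\lambda$ factors in the bias bound) yields the claimed $O(n^2/(m^2\eps^4\lambda^3))$ total.

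The main obstacle will be step two, the design of $Z$: since $m_{\p_1}, m_{\p_2}$ are random and unknown, the naive estimator $C_{\p_1}+C_{\p_2}-\tfrac{m'-1}{m'}C_{\p_1\p_2}$ has an expectation whose sign can flip purely because $m_{\p_1}\neq m_{\p_2}$, producing spurious rejections even when $\p_1=\p_2$. I would address this by (i) estimating $m_{\p_b}$ to sufficient accuracy via a $O(1/(\eps^4\lambda^2))$-query bias estimate, exactly as in \Cref{clm:m-p-query}, and plugging these estimates into $\kappa$; and (ii) carefully bounding, as a separate lemma, the residual bias $|Z-\hat Z|$ arising from the mismatch between $m_{\p_1}$ and $m_{\p_2}$, showing it is dominated by $O(\eps^2 m^2/n)$ whenever $m \geq 1/(\eps^4\lambda^3)$. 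Combining the correctness of $Z$ with the query-efficient estimates $\hat C_{\p_1},\hat C_{\p_2},\hat C_{\p_1\p_2}$ then yields a tester with the claimed sample/query complexity; comparing $\hat Z$ to the canonical threshold $\tfrac{1}{2}\binom{m_{\p}}{2}(\eps/\sqrt n)^2$ completes the proof.
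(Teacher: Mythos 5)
Your high-level skeleton matches the paper's: estimate $\hat\lambda$, flatten to get $\norm{\rD_b}_2^2 \leq O(1/m)$ and $\norm{\p_b}_2^2 \leq O(1/\lambda m)$, simulate the DGPP $\ell_2$ collision statistic by querying random collisions, and reduce additive error to bias estimation. That part is fine. The genuine gap is in step (i) of how you handle the fact that $m_{\p_1}\neq m_{\p_2}$.

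First, with a single coefficient $\kappa$, the statistic $Z = C_{\p_1}+C_{\p_2}-\kappa C_{\p_1\p_2}$ has $\EX[Z\mid m_1,m_2] = \binom{m_1}{2}\norm{\p_1}_2^2 + \binom{m_2}{2}\norm{\p_2}_2^2 - \kappa m_1 m_2\langle\p_1,\p_2\rangle$, and no choice of $\kappa$ makes this proportional to $\norm{\p_1-\p_2}_2^2$ when $m_1\neq m_2$ — you would need to rescale $C_{\p_1}$ and $C_{\p_2}$ separately too, i.e., three data-dependent coefficients. Second, and more seriously, the precision you propose is insufficient. From \Cref{clm:m-p-query}, an $O(\eps^{-4}\lambda^{-2})$-query bias estimate gives $|\hat m_{\p_b}-m_{\p_b}| = O(\eps^2 m)$. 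Plugging this into a coefficient $a\approx 1$ perturbs it by $\delta a \approx \delta m/m = O(\eps^2)$, which when multiplied by $C_{\p_b}=O(m/\lambda)$ (after flattening) introduces an error of $O(\eps^2 m/\lambda)$ in $\hat Z$. The signal you must detect is $\EX[Z]=\Theta(m^2\eps^2/n)$, so you would need $\eps^2 m/\lambda \ll m^2\eps^2/n$, i.e.\ $m \gg n/\lambda$ — contradicting the regime $m\leq O(n)$ of the theorem. To actually reach the required accuracy $\delta m \ll \lambda m^2\eps^2/n$ would cost $\Theta\bigl(n^2/(m^2\eps^4\lambda^3)\bigr)$ queries, i.e.\ the entire query budget, and even then the fluctuations of $m_{\p_b}$ itself at scale $\sqrt{m}$ become a constraint. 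The paper explicitly flags this ("we do not have a query-efficient way to compute these numbers to sufficient accuracy") and avoids the issue by a different route: it never estimates $m_{\p_b}$, uses the deterministic coefficient $\frac{m-1}{m}$ in $\hat Z$, defines the idealized statistic $Z$ on truncated sample sets $S'_{\p_1}, S'_{\p_2}$ of common size $m_{\min}=\min(m_{\p_1},m_{\p_2})$ so the DGPP analysis applies verbatim, and then bounds $|\hat Z - Z|$ directly in \Cref{lemma:unequal-error} by controlling the "extra" collisions incident to $S_{\p_b}^{\mathrm{extra}}$ together with Chernoff concentration of $m_{\p_b}$ around $m$ (this is also precisely where the condition $m\gg b^2/(\eps^4\lambda^2)$, hence the $1/(\eps^4\lambda^3)$ term in the sample complexity, is used). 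Your step (ii) gestures at the right quantity but your step (i) as stated cannot be made to work.
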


Our proof works in two steps similarly to standard closeness testing: first, use the classic flattening technique of \cite{diakonikolaskane2016} to "reduce" the $2$-norm of the mixtures $\rD_1,\rD_2$ to $b \approx 1/m$ using $O(m)$ samples up-front, then use a closeness tester for $\ell_2$ distance where the sample complexity improves for small $b$. For the first step we slightly strengthen the flattening lemma, which we also used crucially for our uniformity testing result, to the distributional contamination setting.

\flattening*

Next, we use the following closeness testing result under $\ell_2$ distance. This is the main component of the proof which we obtain by extending the result of \cite{DBLP:journals/cjtcs/DiakonikolasGPP19} to the verification query setting.

\begin{restatable}{theorem}{closenessltwo} \emph{(Tolerant $\ell_2$ Closeness Testing with Verification Queries.)} \label{thm:closeness-l2}
Suppose we are given sample access and verification query access to two mixtures $\rD_1 = \lambda \p_1 + (1-\lambda)\q_1$, $\rD_2 = \lambda \p_2 + (1-\lambda)\q_2$ with unknown mixture parameter $\lambda \in (0,1)$, and arbitrary distributions $\p_1,\p_2,\q_1,\q_2 \colon [n] \to [0,1]$. Moreover, suppose we are provided a number $b \in [1/n,1]$ such that $\norm{\rD_1}_2^2, \norm{\rD_2}_2^2 \leq b$ and $\norm{\p_1}_2^2,\norm{\p_2}_2^2 \leq b/\lambda$. Then, there is an algorithm that distinguishes between the case of $\norm{\p_1 - \p_2}_2 \leq \eps/4$ and $\norm{\p_1 - \p_2}_2 \geq \eps$ with probability $9/10$ using
\[
O\left(\frac{\sqrt{b}}{\eps^{2}\lambda^{3/2}} + \frac{b^2}{\eps^4\lambda^3}\right) ~\text{ samples and }~ O\left( \frac{b^2}{\eps^4\lambda^3} \right) ~\text{ verification queries.}
\]
\end{restatable}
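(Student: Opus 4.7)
The plan is to adapt the collision-based closeness tester of \cite{DBLP:journals/cjtcs/DiakonikolasGPP19} to the verification query model. The two conceptual moves are: (i) collision counts in the observed mixture samples can be computed for free, and (ii) the target collision counts among samples actually drawn from $\p_1, \p_2$ can be estimated by picking a random collision among the observed samples and verifying its two endpoints---this simulates a biased coin whose bias is the fraction of $\rD$-collisions that are in fact $\p$-collisions.

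Concretely, first use \Cref{fact:lambda-est} with $O(1/\lambda)$ samples and queries to obtain $\hat{\lambda} = \Theta(\lambda)$, and then draw $m = \Theta(\sqrt{b}/(\eps^2 \lambda^{3/2}) + b^2/(\eps^4 \lambda^3))$ samples from each $\rD_b$. The first summand is chosen so that the (random) number of samples $m_{\p_b}$ drawn from $\p_b$ satisfies $m_{\p_b} = \Theta(\lambda m) = \Omega(\sqrt{b/\lambda}/\eps^2)$, the standard CDVV threshold for $\ell_2$-closeness testing under the variance bound $\norm{\p_b}_2^2 \leq b/\lambda$; the second summand ensures the query budget below is no larger than $m$ itself, so that queries never outstrip the observed collisions. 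A preliminary step analogous to \Cref{clm:m-p-query} uses $O(1/(\eps^4 \lambda^2))$ uniform verification queries to estimate each $m_{\p_b}$ within additive error $O(\eps^2 \lambda m)$.

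Next, compute the three observable collision counts $C_{\rD_1}, C_{\rD_2}, C_{\rD_1 \rD_2}$ directly from the samples (no queries); by the hypothesis $\norm{\rD_b}_2^2 \leq b$ they are $O(m^2 b)$ with constant probability. To estimate each of $C_{\p_1}, C_{\p_2}, C_{\p_1 \p_2}$, pick a uniformly random collision pair (within $\rD_b$-samples or across the two mixtures) and verify both of its endpoints; the indicator that both endpoints come from $\p$ is a Bernoulli trial whose bias is precisely the desired ratio $C_{\p_\cdot}/C_{\rD_\cdot}$. The CDVV analysis shows that to separate the two cases it suffices to estimate each $C_{\p_\cdot}$ within additive error $\Theta(\lambda^2 m^2 \eps^2)$, i.e., within an $\eps^2$-fraction of $\EX[Z] \approx \lambda^2 m^2 \norm{\p_1-\p_2}_2^2$. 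Since a random $\rD$-collision is a $\p$-collision with bias $\lesssim \lambda^2$ on average, \Cref{lem:bias-add} delivers the required accuracy with $O(b^2/(\eps^4 \lambda^3))$ verification queries per estimator. Assemble the CDVV statistic $\hat{Z}$ from the $\hat{C}$'s and $\hat{m}_{\p_b}$'s, and accept iff $\hat{Z} \leq \eps^2/2$.

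The main obstacle, flagged in the technical overview, is that $m_{\p_1}$ and $m_{\p_2}$ are random, concentrated around $\lambda m$ but not equal. In an idealized setting where an \emph{exactly} $\lambda$-fraction of samples came from each $\p_b$, the self- and cross-collision counts would combine in the CDVV statistic so that the cross terms cancel cleanly when $\p_1 = \p_2$; here they do not, and even under completeness the raw counts $C_{\p_1}, C_{\p_2}$ differ by fluctuations of order $\sqrt{\lambda m}\cdot m \cdot \norm{\p_b}_2^2$, which could cause spurious rejection. The fix is to plug the estimates $\hat{m}_{\p_b}$ directly into the normalization of $\hat{Z}$ rather than assuming $m_{\p_1} = m_{\p_2} = \lambda m$, and to carefully bound the contribution of the imbalance $|m_{\p_1} - m_{\p_2}|$ via a triangle-inequality argument, ensuring the total error in $\hat{Z}$ stays well below $\eps^2$ with high probability.
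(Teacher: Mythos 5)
Your high-level plan is right and matches the paper's strategy: observe $\rD$-collisions for free, estimate $\p$-collisions by verifying the endpoints of uniform random collisions (this is exactly how \Cref{clm:self-collision-est} and \Cref{clm:cross-collision-est} work), and assemble the CDVV statistic. The place where your proposal goes wrong is the treatment of the imbalance between $m_{\p_1}$ and $m_{\p_2}$.

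You propose ``a preliminary step analogous to \Cref{clm:m-p-query}'' that estimates $m_{\p_b}$ within additive error $O(\eps^2\lambda m)$ using $O(1/(\eps^4\lambda^2))$ queries, and then plug $\hat{m}_{\p_b}$ into the normalization of $\hat{Z}$. Two things break here. First, the query cost does not fit the budget. The theorem's query allowance is $O(b^2/(\eps^4\lambda^3))$, and your estimation step costs at least $\Omega(1/(\eps^4\lambda^2))$ queries; this fits only when $\lambda \le b^2$, which is false in the regime where the theorem is actually used (in the proof of \Cref{thm:closenessUB} one sets $b \approx 1/m$ while $\lambda$ is a constant, so $\lambda \gg b^2$). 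The paper is explicit about this obstruction in \Cref{sec:tech-sub}: ``we do not have a query-efficient way to compute these numbers to sufficient accuracy, unlike in the uniformity testing case, where cruder estimates are sufficient.'' Second, even if the budget permitted it, the plug-in correction is not where the real difficulty lies: the estimators $\hat{c}(S_{\p_1}), \hat{c}(S_{\p_2}), \hat{c}(S_{\p_1}, S_{\p_2})$ are computed on the full sample sets of differing sizes, so the statistic $\hat{Z}$ differs from the clean CDVV statistic $Z$ (computed on truncated sets of common size $m_{\min}$) by terms involving collisions incident on the ``extra'' $|m_{\p_1}-m_{\p_2}| = O(\sqrt{m})$ samples, not by a renormalization of $\gamma$.

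The paper's actual fix avoids estimating $m_{\p_b}$ entirely. It keeps the fixed normalization $\frac{m-1}{m}$ and threshold $\binom{m}{2}\eps^2/4$, conditions on $m_{\p_1},m_{\p_2} \in (1\pm 0.01)m$ via concentration (\Cref{fact:mp-concentration}), and then proves in \Cref{lemma:unequal-error} that $|\hat{Z} - Z|$ is small. The crucial step is bounding the expected number of collisions incident on the extra samples by $|S^{\mathrm{extra}}|\cdot|S_{\p_1}\cup S_{\p_2}|\cdot (b/\lambda) = O(m^{3/2} b/\lambda)$, and showing this is $\ll m^2\eps^2$ precisely because of the second sample-complexity term $m \gg b^2/(\eps^4\lambda^2)$. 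You correctly guessed that term is there ``so that queries never outstrip the observed collisions,'' but that is not its role; it is there to suppress the imbalance error. This is the missing idea, and without it the ``triangle-inequality argument'' in your last paragraph is a gap rather than a proof.
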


We prove \Cref{thm:closeness-l2} in \Cref{sec:closenessL2}, and now continue to the proof of the main theorem. \\

\begin{proofof}{\Cref{thm:closenessUB}} Our tester is defined as follows.

\begin{mdframed}
    \begin{enumerate}
        \item Let $\min\left(n,\frac{5600 n^{2/3}}{\eps^{4/3}\lambda}\right) \leq m \leq n$. 
        \item Invoke the flattening procedure (\Cref{lemma:flattening}) using $O(m)$ samples from $\frac{1}{2}(\rD_1 + \rD_2)$  to obtain a random map $f$ preserving $\ell_1$ distances (item (1) of \Cref{lemma:flattening}) for which $\norm{f(\frac{1}{2}(\rD_1+\rD_2))}_2^2 \leq \frac{1}{m}$.
        \item Invoke the $\ell_2$-closeness tester (\Cref{thm:closenessUB}) on $f(\rD_1) = \lambda f(\p_1) + (1-\lambda)f(\q_1)$ and $f(\rD_2) = \lambda f(\p_2) + (1-\lambda)f(\q_2)$ using parameters $b := 4/m$ and $\eps' := \eps/\sqrt{n}$, and return the result.
    \end{enumerate}
\end{mdframed}

First, observe that by \Cref{lemma:flattening}, with probability $0.99$, the random map $f$ obtained in line (2) satisfies
\begin{align*} \label{eq:random-map-l2}
    \frac{1}{m} \geq \norm{f\left(\frac{1}{2}(\rD_1+\rD_2)\right)}_2^2 = \norm{\frac{1}{2}\left(f(\rD_1) + f(\rD_2)\right)}_2^2 \geq \frac{1}{4}\left( \norm{f(\rD_1)}_2^2 + \norm{f(\rD_2)}_2^2 \right)
\end{align*}
which implies that $\norm{f(\rD_1)}_2^2, \norm{f(\rD_2)}_2^2 \leq 4/m := b$ where $b$ is the parameter used in line (3). Similarly,
\begin{align*}
    \frac{1}{\lambda m} \geq \norm{f\left(\frac{1}{2}(\p_1+\p_2)\right)}_2^2 = \norm{\frac{1}{2}\left(f(\p_1) + f(\p_2)\right)}_2^2 \geq \frac{1}{4}\left( \norm{f(\p_1)}_2^2 + \norm{f(\p_2)}_2^2 \right)
\end{align*}
and so $\norm{f(\p_1)}_2^2, \norm{f(\p_2)}_2^2 \leq 4/\lambda m = b/\lambda$. 
Now, if $\p_1 = \p_2$, then $f(\p_1) = f(\p_2)$, and clearly $\norm{f(\p_1)-f(\p_2)}_2^2 = 0$ and so by \Cref{thm:closeness-l2}, the tester will correctly output "close" with probability $9/10$. Next, if $\norm{\p_1 - \p_2}_1 \geq \eps$, then $\norm{f(\p_1) - f(\p_2)}_1 \geq \eps$, and so $\norm{f(\p_1) - f(\p_2)}_2 \geq \eps/\sqrt{n} = \eps'$, thus satisfying the required bound for the use of \Cref{thm:closeness-l2}. Therefore, the tester correctly outputs "far" with probability $9/10$ in this case. All in all, by union bounding the failure probability of flattening and the $\ell_2$-closeness tester, the overall failure probability in both cases is at most $1/100 + 1/10 < 1/3$. This completes the proof of correctness.

We now bound the number of queries. By \Cref{thm:closeness-l2}, and our definition of $\eps' = \eps/\sqrt{n}$ and $b = 4/m$, the query complexity is bounded by
\[
O \left(\frac{b^2}{(\eps')^4\lambda^3}\right) = O\left(\frac{(1/m)^2}{(\eps/\sqrt{n})^4\lambda^3}\right) = O\left(\left\lceil \frac{n^2}{m^2} \right\rceil \cdot \frac{1}{\eps^4 \lambda^3}\right)
\]
as claimed. We now bound the number of samples. By \Cref{thm:closeness-l2} and our definitions $\eps' = \eps/\sqrt{n}$ and $b = O(1/m)$, the total number of samples used is bounded by 
\begin{align} \label{eq:samplecomplexity-closeness}
    O\left(\frac{1}{b} + \frac{\sqrt{b}}{(\eps')^{2}\lambda^{3/2}} + \frac{b^2}{(\eps')^4\lambda^3}\right) = O\left(m + \frac{n}{\sqrt{m} \eps^2 \lambda^{3/2}} + \frac{n^2}{m^2 \eps^4 \lambda^3}\right) \text{.}
\end{align}
\paragraph{Case 1.} If $n = \Omega(\frac{1}{\eps^4 \lambda^3})$, then $n^{1/3} = \Omega(\frac{1}{\eps^{4/3} \lambda})$, which implies that the bound $m = \Omega\left(\frac{n^{2/3}}{\eps^{4/3}\lambda}\right)$ holds by the bounds on $m$ in line (1) of the tester. Observe that in this case the first term in \cref{eq:samplecomplexity-closeness} dominates. Thus, the sample complexity in this case is $O\left(m + \frac{n^{2/3}}{\eps^{4/3}\lambda}\right)$.

\paragraph{Case 2.} If $n = O(\frac{1}{\eps^4 \lambda^3})$, then $m = n$ by the bounds on $m$ in line (1) of the tester. In this case, the second and third terms are bounded by $O\left(\frac{\sqrt{n}}{\eps^2 \lambda^{3/2}} + \frac{1}{\eps^4 \lambda^3}\right) = O(\frac{1}{\eps^4 \lambda^3})$. \end{proofof}

\subsection{\texorpdfstring{$\ell_2$}{l2} Closeness Testing} \label{sec:closenessL2}

In this section we establish our upper bound for closeness testing under $\ell_2$ distance in the verification query model, restated here for convenience.

\closenessltwo*

\begin{proof} We show that by using verification queries we are able to essentially simulate a standard $\ell_2$ closeness tester due to \cite{DBLP:journals/cjtcs/DiakonikolasGPP19} on the distributions $p_1,p_2$. This tester uses $O(\sqrt{b}/\eps^2)$ samples and reports "close" or "far" based on the number of $\p_1$-$\p_1$, $\p_2$-$\p_2$, and $\p_1$-$\p_2$ collisions it sees. For our purposes, it suffices to (a) take enough samples from the mixtures $\rD_1,\rD_2$ so that enough $\p_1,\p_2$ samples are present to simulate this tester, then (b) use verification queries to estimate the number of relevant collisions (those not involving $\q_1,\q_2$). To perform (a), we need at least $\approx m/\lambda$ samples from $\rD_1,\rD_2$, where $m$ is the number of samples needed for the standard closeness tester. 




\paragraph{Notation.} 
For a mixture $\rD = \lambda \p + (1-\lambda) \q$ and set of samples $S_{\rD} \sim \rD$, we use $S_{\p},S_{\q}$ to denote the partition of $S_{\rD}$ into those samples generated by $\p,\q$ respectively. For a multi-set $S \subseteq [n]$ of samples, we use $C(S) = \{(x,y) \in \binom{S}{2} \colon x = y\}$ to denote the set of pairwise collisions in $S$ and let $c(S) := |C(S)|$. For two sets of samples $S_1,S_2$, we use $C(S_1,S_2) = \{(x,y) \in S_1 \times S_2 \colon x = y\}$ to denote the set of pairwise cross-collisions between $S_1,S_2$ and let $c(S_1,S_2) := |C(S_1,S_2)|$. \\

\noindent We start by stating the testing result of \footnote{We give a slightly modified version which follows by their same arguments. The reason is that we require a constant factor gap in the thresholds for the "close" and "far" cases since we will only be able to attain an approximation of the relevant statistics.} \cite{DBLP:journals/cjtcs/DiakonikolasGPP19}. For completeness we give a proof in \Cref{sec:deferred-closeness-standard}.

\begin{theorem} [Tolerant $\ell_2$ Closeness Tester, \cite{DBLP:journals/cjtcs/DiakonikolasGPP19}] \label{thm:closeness-l2-vanilla} Let $\p_1,\p_2 \colon [n] \to [0,1]$ be arbitrary distributions with $\norm{\p_1}_2^2,\norm{\p_2}_2^2 \leq b$ for $b \in (0,1]$. Let $S_{\p_1},S_{\p_2}$ be two sets of $m \geq 250000 \cdot \frac{\sqrt{b}}{\eps^2}$ samples from $\p_1,\p_2$ for sufficiently large constant $c > 0$, and let $Z = c(S_{\p_1}) + c(S_{\p_2}) - \frac{m-1}{m} \cdot c(S_{\p_1},S_{\p_2})$. The following hold.
\begin{enumerate}
    \item If $\norm{\p_1 - \p_2}_2 \leq \eps/4$, then $\Pr\left[Z < {m \choose 2}\frac{\eps^2}{8}\right] \geq 0.99$.
    \item If $\norm{\p_1 - \p_2}_2 \geq \eps$, then $\Pr\left[Z > {m \choose 2}\frac{\eps^2}{2}\right] \geq 0.99$.
\end{enumerate}
\end{theorem}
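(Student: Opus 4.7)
The plan is a standard second-moment argument: show that $Z$ is unbiased for $\binom{m}{2}\norm{\p_1-\p_2}_2^2$ and that its standard deviation is much smaller than $\binom{m}{2}\eps^2$, then apply Chebyshev in both cases. Computing $\EX[Z]$ is immediate from linearity: by \Cref{lemma:c-moment-exp}, $\EX[c(S_{\p_b})] = \binom{m}{2}\norm{\p_b}_2^2$, and an analogous calculation gives $\EX[c(S_{\p_1},S_{\p_2})] = m^2 \langle \p_1,\p_2\rangle$. The coefficient $\tfrac{m-1}{m}$ on the cross term is engineered so that $\tfrac{m-1}{m}\cdot m^2 = 2\binom{m}{2}$, which yields
\[
\EX[Z] \;=\; \binom{m}{2}\bigl(\norm{\p_1}_2^2 + \norm{\p_2}_2^2 - 2\langle \p_1,\p_2\rangle\bigr) \;=\; \binom{m}{2}\norm{\p_1-\p_2}_2^2.
\]

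To bound $\mathrm{Var}(Z)$, I would apply the standard Poissonization reduction: replacing each fixed sample size by $\mathrm{Poi}(m)$ loses only absolute constants (via \Cref{lemma:poisson-concentration}), but makes the bucket counts $X_i \sim \mathrm{Poi}(m\p_1[i])$ and $Y_i \sim \mathrm{Poi}(m\p_2[i])$ mutually independent across $i$ and across the two distributions. Under this reduction $Z = \sum_i Z_i$ with $Z_i$ depending only on $(X_i,Y_i)$, so $\mathrm{Var}(Z) = \sum_i \mathrm{Var}(Z_i)$. A direct expansion of $Z_i = \tfrac{1}{2}X_i(X_i-1) + \tfrac{1}{2}Y_i(Y_i-1) - \tfrac{m-1}{m}X_iY_i$ using the Poisson factorial-moment identities $\EX[X(X-1)\cdots(X-k+1)] = \lambda^k$ yields, writing $\alpha_i := m\p_1[i]$ and $\beta_i := m\p_2[i]$, a clean closed form whose dominant contributions are $(\alpha_i-\beta_i)^2(\alpha_i+\beta_i)$ and $(\alpha_i+\beta_i)^2$. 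Summing over $i$, and using $\p_b[i] \leq \norm{\p_b}_2 \leq \sqrt{b}$ to bound $\max_i(\p_1[i]+\p_2[i])$ together with $\sum_i(\p_1[i]+\p_2[i])^2 \leq 4b$, gives
\[
\mathrm{Var}(Z) \;\leq\; O\bigl(m^3 \sqrt{b}\,\norm{\p_1-\p_2}_2^2 \;+\; m^2 b\bigr).
\]

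Finally I would apply Chebyshev in each regime. In the close case $\norm{\p_1-\p_2}_2 \leq \eps/4$, $\EX[Z] \leq \binom{m}{2}\eps^2/16$ and we want $Z < \binom{m}{2}\eps^2/8$, a deviation budget of $\binom{m}{2}\eps^2/16$. The hypothesis $m \geq 250000\sqrt{b}/\eps^2$ forces $\sqrt{b} \leq m\eps^2/250000$ and $b \leq m^2\eps^4/250000^2$, making $\sqrt{\mathrm{Var}(Z)}$ a small constant fraction of this budget. The far case $\norm{\p_1-\p_2}_2 \geq \eps$ is symmetric: $\EX[Z] \geq \binom{m}{2}\eps^2$ and the deviation budget is $\binom{m}{2}\norm{\p_1-\p_2}_2^2/2$, which precisely cancels the $\norm{\p_1-\p_2}_2^2$ factor in the variance's signal term. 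In both cases Chebyshev yields failure probability $\leq 0.01$. The main obstacle is the per-bucket Poisson variance computation: its clean factorization, in which the signal term scales linearly rather than quadratically in $\alpha_i+\beta_i$, is exactly what buys sample complexity $\sqrt{b}/\eps^2$ rather than $b/\eps^2$. This computation is carried out in \cite{DBLP:journals/cjtcs/DiakonikolasGPP19}; the only tweak here is to check that the fourfold gap between the two thresholds ($\eps^2/8$ vs.\ $\eps^2/2$) is comfortably absorbed by the absolute constant $250000$ in the sample-size hypothesis.
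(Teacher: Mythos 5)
Your overall framework---show $Z$ is unbiased for $\binom{m}{2}\norm{\p_1-\p_2}_2^2$, bound $\Var(Z)$ by $O(m^2 b + m^3\sqrt{b}\,\norm{\p_1-\p_2}_2^2)$, then apply Chebyshev with the deviation budget being a constant fraction of $\binom{m}{2}\max(\eps^2,\norm{\p_1-\p_2}_2^2)$---is exactly what the paper's deferred proof (Section~\ref{sec:deferred-closeness-standard}) does, and it cites Lemmas~3.2 and~3.3 of \cite{DBLP:journals/cjtcs/DiakonikolasGPP19} for the expectation and variance. Your Chebyshev arithmetic in both the close and far cases matches.

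Where you diverge is in the proposed route to the variance bound. You sketch a Poissonization: replace the two sample sizes by $\Poi(m)$, get per-bucket independence, and compute $\Var(Z_i)$ via Poisson factorial moments. This is not what \cite{DBLP:journals/cjtcs/DiakonikolasGPP19}~(Lemma~3.3) does---there the variance bound is a direct combinatorial expansion over ordered index tuples in the fixed-$m$ model, and no de-Poissonization is needed. If you actually carried out the Poissonization route you would hit two issues you currently wave away. First, the statement in the theorem is about exactly $m$ samples, and ``loses only absolute constants'' is not automatic: a collision statistic can change by an amount proportional to a bucket's frequency when a single sample is added or removed, so transferring a concentration bound from the Poissonized to the fixed-$m$ statistic requires a separate argument. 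Second, the coefficient $\frac{m-1}{m}$ on the cross term is the one that makes $Z$ unbiased in the \emph{fixed}-$m$ model; after Poissonizing, the unbiased coefficient is $1$, so your Poissonized $Z'$ carries a spurious bias of $m\langle\p_1,\p_2\rangle$ that would have to be tracked (it is small, of order $m b$, but it is not zero). Since you ultimately defer the computation to \cite{DBLP:journals/cjtcs/DiakonikolasGPP19}, the proof stands, but read as a self-contained plan the Poissonization aside is a heuristic rather than a correct derivation. One minor omission you should also flag: the cited Lemma~3.3 expresses the variance in terms of $\norm{\p_1-\p_2}_4^2$, and the paper's footnote points out the extra step $\norm{\p_1-\p_2}_4^2 \leq \norm{\p_1-\p_2}_2^2$ needed to land on the bound you state.
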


Before describing the tester, we provide two claims which show how verification queries are used to estimate the number of relevant collisions. We defer their proofs to \Cref{sec:deferred-closeness-source}. The first claim allows us to estimate the number of $\p_1$-$\p_1$ and $\p_2$-$\p_2$ collisions.

\begin{claim} \label{clm:self-collision-est} Let $S_{\rD}$ be a collection of $m/\lambda$ samples drawn from a mixture $\rD = \lambda \p + (1-\lambda)\q$ where $\norm{\rD}_2^2 \leq b$ and $\norm{\p} \leq b/\lambda$. Then, using $O(\frac{b^2}{\eps^4\lambda^3})$ verification queries $Q \subseteq S$ we can compute an estimate $\hat{c}(S_{\p})$ such that 
\[
\Pr_{S_{\rD},Q}\left[|\hat{c}(S_{\p}) - c(S_{\p})| > {m \choose 2} \frac{\eps^2}{100}\right] < 0.01 \text{.}
\]
\end{claim}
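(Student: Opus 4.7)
The plan is to estimate the ratio $\rho := c(S_{\p})/c(S_{\rD})$ via random sampling: we sample a set of $T$ collisions uniformly at random (with replacement) from $C(S_{\rD})$, issue two verification queries per sampled collision to determine if both endpoints came from $\p$, and then output $\hat{c}(S_{\p}) := \hat{\rho} \cdot c(S_{\rD})$ where $\hat{\rho}$ is the empirical fraction of sampled collisions that are $\p$-$\p$. Note that $c(S_{\rD})$ can be computed from the raw sample with no queries, and each sampled collision indeed yields an i.i.d.\ $\mathrm{Bern}(\rho)$ trial since the collisions are drawn uniformly from $C(S_{\rD})$.

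To choose $T$, I first bound the quantities $c(S_{\rD})$ and $c(S_{\p})$ using the collision moments from \Lem{c-moment-exp}. Since each of the $m/\lambda$ samples independently comes from $\p$ with probability $\lambda$, every pair is a $\p$-$\p$ collision with probability $\lambda^2 \norm{\p}_2^2$, so $\EX[c(S_{\p})] \le \binom{m/\lambda}{2}\lambda^2 \norm{\p}_2^2 \le \frac{m^2 b}{2\lambda}$. Similarly $\EX[c(S_{\rD})] \le \binom{m/\lambda}{2}\norm{\rD}_2^2 \le \frac{m^2 b}{2\lambda^2}$. By Markov's inequality each upper bound holds up to a factor of $100$ with probability at least $0.99$; I condition on both events, losing $0.02$ in the failure probability.

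Next I set the target accuracy. To guarantee $|\hat{c}(S_{\p}) - c(S_{\p})| \le \binom{m}{2}\eps^2/100$, it suffices that $|\hat{\rho} - \rho| \le \eta$ for $\eta := \binom{m}{2}\eps^2 / (100\, c(S_{\rD}))$. Applying \Lem{bias-add} with bias upper bound $q \asymp \rho$ gives a sufficient trial count of
\[
T = O\!\left(\frac{\rho}{\eta^2}\right) = O\!\left(\frac{c(S_{\p}) \cdot c(S_{\rD})}{m^4 \eps^4}\right) \le O\!\left(\frac{(m^2 b/\lambda)(m^2 b/\lambda^2)}{m^4 \eps^4}\right) = O\!\left(\frac{b^2}{\eps^4 \lambda^3}\right),
\]
which succeeds with probability at least $0.99$. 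Since each trial issues two verification queries, the overall query budget is $O(b^2/(\eps^4\lambda^3))$, and a union bound over the two Markov events and the Chernoff estimate yields the claimed $0.01$ total failure probability.

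The only subtle point is the conditioning structure: the probability statement is over both $S_{\rD}$ and $Q$, so the Chernoff/Bernstein argument for $\hat{\rho}$ must be applied conditionally on $S_{\rD}$ (after which $c(S_{\p}), c(S_{\rD})$ are fixed quantities and each query is an independent $\mathrm{Bern}(\rho)$ trial). This is clean because the Markov bounds on $c(S_{\p})$ and $c(S_{\rD})$ depend only on $S_{\rD}$, and conditional on the high-probability event that both are within the stated bounds, the bias estimation of \Lem{bias-add} applies verbatim. A minor edge case is when $c(S_{\rD}) = 0$: then $c(S_{\p}) = 0$ as well and we simply return $\hat{c}(S_{\p}) = 0$ using zero queries.
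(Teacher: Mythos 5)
Your proposal is correct and follows essentially the same route as the paper: bound $\EX[c(S_{\p})]$ and $\EX[c(S_{\rD})]$, condition via Markov, simulate a $\mathrm{Bern}(c(S_{\p})/c(S_{\rD}))$ coin by querying both endpoints of a uniformly random collision, and invoke \Cref{lem:bias-add} with the conditioned upper bound on the ratio to get $O(b^2/\eps^4\lambda^3)$ queries. The only nit is constant bookkeeping: Markov at factor $100$ for each of the two events plus a $0.01$ bias-estimation failure gives total error $0.03$, so take the Markov factor larger (the paper conditions at probability $0.999$ each) to land under the stated $0.01$; also note the algorithm must set the trial count from the \emph{known} conditioned bound $c(S_{\p})/c(S_{\rD}) \leq O(bm^2/\lambda c(S_{\rD}))$ rather than the unknown $\rho$ itself, which your plug-in computation implicitly does.
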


The second claim allows us to estimate the number of $\p_1$-$\p_2$ collisions.

\begin{claim} \label{clm:cross-collision-est} Let $S_{\rD_1},S_{\rD_2}$ be two collections of $m/\lambda$ samples drawn from mixtures $\rD_1 = \lambda \p_1 + (1-\lambda) \q_1$ and $\rD_2 = \lambda \p_2 + (1-\lambda) \q_2$ where $\norm{\rD_1}_2^2,\norm{\rD_2}_2^2 \leq b$ and $\norm{\p_1}_2^2,\norm{\p_2}_2^2 \leq b/\lambda$. Then, using $O(\frac{b^2}{\eps^4\lambda^3})$ verification queries $Q \subseteq S_{\rD_1} \cup S_{\rD_2}$ we can compute an estimate $\hat{c}(S_{\p_1},S_{\p_2})$ such that 
\[
\Pr_{S_{\rD_1},S_{\rD_2},Q}\left[|\hat{c}(S_{\p_1},S_{\p_2}) - c(S_{\p_1},S_{\p_2})| > {m \choose 2} \frac{\eps^2}{100}\right] < 0.01 \text{.}
\]
\end{claim}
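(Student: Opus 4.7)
The plan is to mirror the strategy from \Cref{clm:self-collision-est}, but with cross-collisions between samples from the two mixtures playing the role of self-collisions. The algorithm counts $c(S_{\rD_1}, S_{\rD_2})$ exactly (no queries needed), then samples $Q$ uniformly random cross-collision pairs $(x,y) \in C(S_{\rD_1}, S_{\rD_2})$. For each sampled pair it performs two verification queries, one on $x \in S_{\rD_1}$ and one on $y \in S_{\rD_2}$, to determine whether it is a genuine $\p_1$--$\p_2$ collision. Letting $\hat{Q}$ denote the number of sampled pairs that turn out to be $\p_1$--$\p_2$ collisions, the estimator is $\hat{c}(S_{\p_1}, S_{\p_2}) := c(S_{\rD_1}, S_{\rD_2}) \cdot \hat{Q}/Q$, which is conditionally unbiased given the observed samples.

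The key preliminary step is to control the magnitudes of the two collision counts. By Cauchy--Schwarz, $\sum_{i} \rD_1[i] \rD_2[i] \leq \sqrt{\norm{\rD_1}_2^2 \norm{\rD_2}_2^2} \leq b$, so linearity of expectation gives $\mathbb{E}[c(S_{\rD_1}, S_{\rD_2})] \leq (m/\lambda)^2 \cdot b = m^2 b / \lambda^2$. Similarly, $\sum_i \p_1[i] \p_2[i] \leq b/\lambda$, and conditioning on a pair coming from $\p_1 \times \p_2$ (which happens with probability $\lambda^2$ for each fixed pair), we get $\mathbb{E}[c(S_{\p_1}, S_{\p_2})] \leq m^2 b / \lambda$. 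By Markov's inequality, with probability at least $0.99$ both events $c(S_{\rD_1}, S_{\rD_2}) = O(m^2 b/\lambda^2)$ and $c(S_{\p_1}, S_{\p_2}) = O(m^2 b/\lambda)$ hold; condition on this.

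Conditioned on $S_{\rD_1}, S_{\rD_2}$, $\hat{Q}$ is distributed as $\mathrm{Binom}(Q, \rho)$ with $\rho = c(S_{\p_1}, S_{\p_2})/c(S_{\rD_1}, S_{\rD_2})$, so
\[
\mathrm{Var}\big(\hat{c}(S_{\p_1}, S_{\p_2})\big) \;=\; \frac{c(S_{\rD_1}, S_{\rD_2})^2 \, \rho(1-\rho)}{Q} \;\leq\; \frac{c(S_{\rD_1}, S_{\rD_2}) \cdot c(S_{\p_1}, S_{\p_2})}{Q} \;=\; O\!\left(\frac{m^4 b^2}{\lambda^3 \, Q}\right).
\]
To apply Chebyshev's inequality with deviation $t = \binom{m}{2}\eps^2/100 = \Theta(m^2 \eps^2)$ and failure probability $0.01$, it suffices to ensure $\mathrm{Var}(\hat{c}) \leq t^2/100$, which is achieved with $Q = O\!\left(b^2 / (\eps^4 \lambda^3)\right)$ sampled pairs, hence $O(b^2/(\eps^4\lambda^3))$ verification queries in total (two per sampled pair). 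A union bound over the Markov events and the Chebyshev event yields the claimed failure probability of at most $0.01$.

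The only mildly subtle point, which is also the main difference from the self-collision case, is that the Cauchy--Schwarz step needs both $\norm{\rD_1}_2^2, \norm{\rD_2}_2^2 \leq b$ and $\norm{\p_1}_2^2, \norm{\p_2}_2^2 \leq b/\lambda$ as hypotheses; the former controls the denominator of $\rho$ through the total cross-collision count, while the latter controls the numerator through the $\p_1$--$\p_2$ cross-collision count. Both hypotheses are explicitly provided in the statement of \Cref{thm:closeness-l2}, so no additional work is required.
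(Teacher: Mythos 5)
Your proof is correct and follows essentially the same strategy as the paper: count $c(S_{\rD_1},S_{\rD_2})$ exactly, pick uniformly random cross-collisions and query both endpoints to simulate a $\mathrm{Bern}(\rho)$ coin with $\rho = c(S_{\p_1},S_{\p_2})/c(S_{\rD_1},S_{\rD_2})$, and concentrate the empirical estimate after conditioning on the Markov bounds for both collision counts. The only cosmetic difference is that you apply Chebyshev directly to the conditionally-binomial estimator rather than invoking the paper's packaged Chernoff-based bias-estimation lemma (\Cref{lem:bias-add}); for constant failure probability these yield the same $O(b^2/\eps^4\lambda^3)$ query bound, though you should tighten the constants in the union bound (e.g. Markov at $0.999$ and Chebyshev at $0.005$) so the total failure probability actually lands below $0.01$.
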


We now describe our tester for \Cref{thm:closeness-l2}. 

\begin{mdframed}
    \begin{enumerate}
    \item Invoke \Cref{fact:lambda-est} to obtain an estimate $\hat{\lambda}$ satisfying $\lambda/40000 < \hat{\lambda} \leq \lambda$ with probability $0.99$ using $O(1/\lambda)$ samples and queries. Let $m = C\left\lceil \frac{\sqrt{b/\hat{\lambda}}}{\eps^2} + \frac{b^2}{\eps^4\hat{\lambda}^2}\right\rceil$ where $C > 1$ is a sufficiently large constant and note that $C\left\lceil \frac{\sqrt{b}}{\eps^2 \sqrt{\lambda}} + \frac{b^2}{\eps^4\lambda^2}\right\rceil < m \leq O(\frac{\sqrt{b}}{\eps^2 \sqrt{\lambda}} + \frac{b^2}{\eps^4\lambda^2})$.
    \item Let $S_{\rD_1},S_{\rD_2}$ be two sets of $m/\lambda$ samples from $\rD_1,\rD_2$ respectively. Let $S_{\p_1} \subset S_{\rD_1}$, $S_{\p_2} \subset S_{\rD_2}$ be the samples which were drawn from $\p_1,\p_2$, respectively. 
    \item Using \Cref{clm:self-collision-est} obtain estimates $\hat{c}(S_{\p_1}),\hat{c}(S_{\p_2})$ for $c(S_{\p_1}),c(S_{\p_2})$. 
    \item Using \Cref{clm:cross-collision-est} obtain an estimate $\hat{c}(S_{\p_1},S_{\p_2})$ for $c(S_{\p_1},S_{\p_2})$. 
    \item Let $\hat{Z} = \hat{c}(S_{\p_1}) + \hat{c}(S_{\p_2}) - \frac{m-1}{m} \cdot\hat{c}(S_{\p_1},S_{\p_2})$.
    \item If $\hat{Z} < {m \choose 2}\frac{\eps^2}{4}$, then return "close". Otherwise, return "far".
\end{enumerate}
\end{mdframed}

We need to show that the estimator $\hat{Z}$ used in line (4) suffices in place of the estimator $Z$ from \Cref{thm:closeness-l2-vanilla}. First, since $\norm{\p_1}_2^2, \norm{\p_2}_2^2 \leq b/\lambda$, it suffices to have $250000 (\sqrt{b/\lambda})/\eps^2 = 250000 \sqrt{b}/(\sqrt{\lambda} \eps^2)$ samples from $\p_1,\p_2$ are to simulate \Cref{thm:closeness-l2-vanilla}. This is the reason for the bound $m \gg \frac{\sqrt{b}}{\eps^2 \sqrt{\lambda}}$ in line (1). The second bound $m \gg b^2/ \eps^4 \lambda^2$ will be needed later in the proof for a different reason.

Let $m_{\p_1} = |S_{\p_1}|, m_{\p_2} = |S_{\p_2}|$ be the random variables denoting the number of samples drawn from $\rD_1,\rD_2$ in line (2) which were generated by $\p_1,\p_2$. Note that $\mathbb{E}[\boldsymbol{m}_{\p_{1}}] = \mathbb{E}[\boldsymbol{m}_{\p_{2}}] = m$ and so by a Chernoff bound
\begin{align} \label{eq:mp1p2-bound}
    \Pr\left[|m_{\p_1} - m| > \sqrt{18m}\right] \leq 2\exp\left(-\big(\sqrt{18/m}\big)^2m/3\right) < \frac{1}{200}
\end{align}
Note that since $m > C$ for sufficiently large constant $C$, we have $\sqrt{18m} < 0.01m$. The same bounds hold for $\boldsymbol{m}_{p_2}$. By a union bound we get the following.
\begin{fact} \label{fact:mp-concentration} Let $\cC$ denote the event that
\[
0.99m < m - \sqrt{18m} \leq m_{\p_1},m_{\p_2} \leq m + \sqrt{18m} < 1.01m\text{.}
\]  
Then, $\Pr[\cC] \geq 0.99$.
\end{fact}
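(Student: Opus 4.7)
The plan is to derive this directly from the Chernoff bound already stated in \eqref{eq:mp1p2-bound}. Since $m_{\p_1}$ is a sum of $m/\lambda$ independent Bernoulli$(\lambda)$ indicators, each sample in $S_{\rD_1}$ being drawn from $\p_1$ independently with probability $\lambda$, we have $\EX[m_{\p_1}] = m$, and the multiplicative Chernoff bound yields
\begin{equation*}
    \Pr\bigl[\,|m_{\p_1} - m| > \sqrt{18m}\,\bigr] \;\leq\; 2\exp\!\left(-\frac{18}{3}\right) \;=\; 2e^{-6} \;<\; \frac{1}{200} \text{.}
\end{equation*}
The same bound holds for $m_{\p_2}$ by symmetry, since $S_{\rD_2}$ is sampled independently in exactly the same way from $\rD_2 = \lambda \p_2 + (1-\lambda)\q_2$.

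Next, I would apply a union bound to conclude that with probability at least $1 - 2/200 = 0.99$, both $|m_{\p_1} - m| \leq \sqrt{18m}$ and $|m_{\p_2} - m| \leq \sqrt{18m}$ hold simultaneously, which is precisely the event $\cC$ (without yet the outer inequalities). It then remains to verify the outer strict inequalities $m - \sqrt{18m} > 0.99m$ and $m + \sqrt{18m} < 1.01m$, both of which reduce to $\sqrt{18m} < 0.01m$, i.e.\ $m > 180000$. This holds because line (1) of the tester sets $m = C\lceil \sqrt{b/\hat{\lambda}}/\eps^2 + b^2/(\eps^4 \hat{\lambda}^2)\rceil$ for a sufficiently large constant $C$, so we may absorb the requirement $m \geq 180000$ into this constant (note the excerpt explicitly remarks "since $m > C$ for sufficiently large constant $C$, we have $\sqrt{18m} < 0.01m$").

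There is essentially no obstacle here: this is a routine Chernoff-plus-union-bound calculation, and the only subtlety is checking that the constant $C$ in the definition of $m$ can be taken large enough to guarantee $\sqrt{18m} < 0.01m$, which is harmless. The entire proof fits in a few lines.
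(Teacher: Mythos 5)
Your proposal is correct and follows essentially the same route as the paper: the Chernoff bound of \eqref{eq:mp1p2-bound} for each of $m_{\p_1},m_{\p_2}$, a union bound, and the observation that $\sqrt{18m} < 0.01m$ once $m$ exceeds a sufficiently large constant absorbed into $C$. No gaps.
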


We will now condition on the bounds from \Cref{fact:mp-concentration} holding for $m_{\p_1},m_{\p_2}$. In particular, $m_{\p_1},m_{\p_2}$ are large enough to simulate \Cref{thm:closeness-l2-vanilla} on $\p_1,\p_2$ since $\norm{\p_1}_2^2,\norm{\p_2}_2^2 \leq b / \lambda$. Then, write $S_{\p_1} = S_{\p_1}' \sqcup S_{\p_1}^{\mathrm{extra}}$ and $S_{\p_2} = S_{\p_2}' \sqcup S_{\p_2}^{\mathrm{extra}}$ where $S_{\p_1}'$ is the first $m_{\mathrm{min}} = \min(m_{\p_1},m_{\p_2})$ samples in $S_{\rD_1}$ generated by $\p_1$ and $S_{\p_1}^{\mathrm{extra}}$ is the remaining $\max(0,m_{\p_1}-m_{\p_2})$, and $S_{\p_2}', S_{\p_2}^{\mathrm{extra}}$ are defined similarly. Then, let 
\begin{align} \label{eq:Zt}
    Z = c(S_{\p_1}') + c(S_{\p_2}') - \frac{m_{\mathrm{min}} - 1}{m_{\mathrm{min}}} \cdot c(S_{\p_1}',S_{\p_2}') 
\end{align}
and recall that our tester uses the estimator,
\begin{align} \label{eq:Zhatthat}
    \hat{Z} = \hat{c}(S_{\p_1}) + \hat{c}(S_{\p_2}) - \frac{m-1}{m} \cdot\hat{c}(S_{\p_1},S_{\p_2}) \text{.} 
\end{align}
Moreover, \Cref{thm:closeness-l2-vanilla} asserts the following: 
\begin{align} \label{eq:vanilla-closeness-guarantee}
    &\norm{\p_1 - \p_2} \leq \frac{\eps}{4} ~\implies~ \Pr\left[Z < {m_{\mathrm{min}} \choose 2}\frac{\eps^2}{8}\right] \geq 0.99 \nonumber \\ 
    &\norm{\p_1 - \p_2} \geq \eps ~\implies~ \Pr\left[Z > {m_{\mathrm{min}} \choose 2}\frac{\eps^2}{2}\right] \geq 0.99
\end{align}
Therefore, it remains to show that the estimator $\hat{Z}$ does not differ too much from $Z$ and the value ${m \choose 2}$ we use to threshold in line (6) does not differ too much from ${m_{\mathrm{min}} \choose 2}$. First, conditioning on the event $\cC$ from \Cref{fact:mp-concentration}, we have $m_{\mathrm{min}} = m-a$ where $|a| \leq 0.01m$ and so
\begin{align} \label{eq:threshold-desc}
    \cC &~\implies~ {m \choose 2} = {m_{\mathrm{min}} \choose 2} + \frac{m_{\mathrm{min}} a + (m_{\mathrm{min}}-1)a + a^2}{2} \nonumber \\
    &~\implies~ \left| {m_{\mathrm{min}} \choose 2} - {m \choose 2} \right| \leq 0.02 m_{\mathrm{min}} \ll {m_{\mathrm{min}} \choose 2}\frac{1}{100}
\end{align}
The following lemma bounds the error between $\hat{Z}$ and $Z$. The first source of error is due to the estimations obtained from \Cref{clm:self-collision-est} and \Cref{clm:cross-collision-est}. The second source of error, which is more painful to deal with, is the fact that we do not know how many samples were generated by $\p_1,\p_2$, and the number of samples are not equal. Nevertheless, they are the same in expectation, and are well-concentrated around their mean. We defer the proof of \Cref{lemma:unequal-error} to \Cref{sec:unequal-error}. (The proof of this lemma is where we use the bound $m \gg b^2 / \eps^4 \lambda^2$.)

\begin{lemma} \label{lemma:unequal-error} Consider $\hat{Z}$ computed in line (4) and $Z$ defined in \cref{eq:Zt}. Then, $\Pr[|\hat{Z} - Z| > {m \choose 2}\frac{\eps^2}{20}] < 0.06$. \end{lemma}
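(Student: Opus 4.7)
The plan is to apply the triangle inequality to decompose $|\hat{Z}-Z|$ into three groups of terms: (i) three \emph{estimation errors} coming from \Cref{clm:self-collision-est} and \Cref{clm:cross-collision-est}, namely $\hat{c}(S_{\p_b}) - c(S_{\p_b})$ for $b\in\{1,2\}$ and $\hat{c}(S_{\p_1},S_{\p_2})-c(S_{\p_1},S_{\p_2})$; (ii) three \emph{extras} terms $c(S_{\p_b})-c(S_{\p_b}')$ for $b\in\{1,2\}$ and $c(S_{\p_1},S_{\p_2})-c(S_{\p_1}',S_{\p_2}')$, measuring collisions that vanish when one restricts to the length-$m_{\min}$ prefixes; and (iii) a \emph{normalization mismatch} $\bigl(\tfrac{m-1}{m}-\tfrac{m_{\min}-1}{m_{\min}}\bigr)\,c(S_{\p_1}',S_{\p_2}')$. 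The three estimation errors each contribute at most $\binom{m}{2}\eps^2/100$ with failure probability $0.01$ by the two claims, so by a union bound they jointly contribute at most $3\binom{m}{2}\eps^2/100$ with failure probability at most $0.03$.

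For (ii), first condition on the event $\cC$ from \Cref{fact:mp-concentration}, which holds with probability at least $0.99$ and forces $|S_{\p_b}^{\mathrm{extra}}|:=|S_{\p_b}\setminus S_{\p_b}'|\leq |m_{\p_1}-m_{\p_2}|\leq 2\sqrt{18m}=O(\sqrt{m})$ for both $b$. Since, conditional on $m_{\p_1},m_{\p_2}$, the samples in $S_{\p_b}$ are i.i.d.\ draws from $\p_b$, the expected number of self-collisions in $S_{\p_b}$ that involve at least one sample of $S_{\p_b}^{\mathrm{extra}}$ is at most $|S_{\p_b}^{\mathrm{extra}}|\cdot m_{\p_b}\cdot \norm{\p_b}_2^2 = O(m^{3/2}\cdot b/\lambda)$. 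An analogous split of the cross-collision difference, combined with the Cauchy--Schwarz bound $\langle \p_1,\p_2\rangle\leq\norm{\p_1}_2\norm{\p_2}_2\leq b/\lambda$, yields the same $O(m^{3/2}\,b/\lambda)$ expectation bound for the cross-extras term.

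The crucial step is to observe that the lower bound $m \geq C\lceil b^2/(\eps^4 \lambda^2)\rceil$ from line~(1) rearranges to $b/\lambda \leq \sqrt{m}\,\eps^2/\sqrt{C}$, so each of the three expected extras bounds becomes $O(m^2 \eps^2/\sqrt{C})$. Picking $C$ a sufficiently large absolute constant and applying Markov's inequality to each extras term then gives a contribution of at most $\binom{m}{2}\eps^2/500$ per term with failure probability at most $0.005$. The normalization mismatch in (iii) is controlled identically: the coefficient difference is $|m-m_{\min}|/(m\,m_{\min})=O(m^{-3/2})$ and $\mathbb{E}[c(S_{\p_1}',S_{\p_2}')]=O(m^2\,b/\lambda)$, so their product has expectation $O(\sqrt{m}\,b/\lambda)\leq O(m\eps^2/\sqrt{C})$, again rendered negligible by Markov.

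Summing all seven terms and applying a union bound over the three estimation failures, the event $\cC$, and the four Markov applications gives $|\hat{Z}-Z| \leq 3\binom{m}{2}\eps^2/100 + 4\binom{m}{2}\eps^2/500 < \binom{m}{2}\eps^2/20$ with total failure probability at most $0.03+0.01+4(0.005)=0.06$, which can be tightened further by taking $C$ slightly larger. The main obstacle is recognizing that the $b^2/(\eps^4\lambda^2)$ sample term in \Cref{thm:closeness-l2}---the precise overhead distinguishing this sample complexity from that of the clean $\ell_2$ tester of \cite{DBLP:journals/cjtcs/DiakonikolasGPP19}---is what absorbs the random imbalance $m_{\p_1}\neq m_{\p_2}$ between the two subsample sizes; calibrating the constant $C$ so that the extras and normalization terms fall well below $\binom{m}{2}\eps^2$ is precisely the content of this lemma.
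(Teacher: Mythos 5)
Your proof is correct and uses the same essential ideas as the paper's: condition on the concentration event $\cC$, charge the estimation errors to \Cref{clm:self-collision-est} and \Cref{clm:cross-collision-est}, bound the expected extras and normalization-mismatch contributions using $\langle\p_1,\p_2\rangle,\norm{\p_b}_2^2\leq b/\lambda$, and absorb everything via Markov using the sample-complexity term $m\gtrsim b^2/\eps^4\lambda^2$. You also correctly identify the crux: rearranging that lower bound to $b/\lambda\lesssim\sqrt{m}\,\eps^2$ is precisely what makes the $O(m^{3/2}b/\lambda)$ extras and $O(\sqrt{m}\,b/\lambda)$ normalization-mismatch terms sink below $\Theta(m^2\eps^2)$.

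Your decomposition differs slightly from the paper's. The paper rewrites $\tfrac{m-1}{m}=1-\tfrac{1}{m}$ and groups into a three-term ``unweighted'' block $|\hat{c}(S_{\p_1})-c(S_{\p_1}')|+|\hat{c}(S_{\p_2})-c(S_{\p_2}')|+|\hat{c}(S_{\p_1},S_{\p_2})-c(S_{\p_1}',S_{\p_2}')|$ plus one residual term $\bigl|\hat{c}(S_{\p_1},S_{\p_2})/m - c(S_{\p_1}',S_{\p_2}')/m_{\min}\bigr|$, then further splits the first block into an estimation-error event $\cE_1$ and an all-at-once extras event $\cE_2$ (bundling every extra collision into one random variable $c^{\mathrm{extra}}$ and applying Markov once). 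You instead split symmetrically into (i) three estimation errors, (ii) three extras terms with separate Markov applications, and (iii) a clean coefficient-mismatch term $(\tfrac{m_{\min}-1}{m_{\min}}-\tfrac{m-1}{m})\,c(S_{\p_1}',S_{\p_2}')$. Your grouping is arguably cleaner to read because each of the three error sources is named and handled on its own footing; the cost is a slightly larger union bound (four Markov events vs. two conditioning events plus one Markov in the paper), but you calibrate the per-event probabilities so the total stays within $0.06$, so the final constants still work out. Both routes are valid; neither is more general than the other.

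One small bookkeeping note: under $\cC$ it is not guaranteed that $m_{\min}\leq m$ (both $m_{\p_1},m_{\p_2}$ may exceed $m$), so the coefficient difference should be written as $|1/m - 1/m_{\min}| \leq |m-m_{\min}|/(m\,m_{\min}) = O(m^{-3/2})$ rather than $(m-m_{\min})/(m\,m_{\min})$; you use the absolute value, so the bound holds, but worth being explicit.
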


Now, recalling line (6) of the tester and the guarantee from \cref{eq:vanilla-closeness-guarantee}, the tester is correct with probability $0.99$ as long as $|\hat{Z} - Z| \leq {m \choose 2} \frac{\eps^2}{16}$ and $\left| {m_{\mathrm{min}} \choose 2} - {m \choose 2} \right| \leq  {m_{\mathrm{min}} \choose 2}\frac{1}{16}$. Therefore, by \Cref{fact:lambda-est}, \Cref{fact:mp-concentration}, \cref{eq:threshold-desc}, \Cref{lemma:unequal-error}, and a union bound, the tester succeeds with probability $91/100 > 9/10$, as claimed. \end{proof}

\subsubsection{Bounding the Error Due to Unequal \texorpdfstring{$p_1,p_2$}{} Samples: Proof of \texorpdfstring{\Cref{lemma:unequal-error}}{Lemma 5.7}} 
\label{sec:unequal-error}

\begin{proof} First, by definition of $Z,\hat{Z}$ and the triangle inequality, we have
\begin{align} \label{eq:ZZhat}
    &|\hat{Z} - Z| \leq \left|\big(\hat{c}(S_{\p_1}) + \hat{c}(S_{\p_2}) - \frac{m-1}{m} \hat{c}(S_{\p_1},S_{\p_2}) \big) - \big(c(S_{\p_1}') +  c(S_{\p_2}') - \frac{m_{\mathrm{min}} - 1}{m_{\mathrm{min}}} c(S_{\p_1}',S_{\p_2}')\big)\right| \nonumber \\
    &\leq \Big(|\hat{c}(S_{\p_1}) - c(S_{\p_1}')| + |\hat{c}(S_{\p_2}) - c(S_{\p_2}')| + \left|\hat{c}(S_{\p_1},S_{\p_2}) -  c(S_{\p_1}',S_{\p_2}')\right|\Big) + \left|\frac{\hat{c}(S_{\p_1},S_{\p_2})}{m} - \frac{c(S_{\p_1}',S_{\p_2}')}{m_{\mathrm{min}}}\right| \text{.}
\end{align}

\paragraph{Bounding the first term of \cref{eq:ZZhat}.} By the triangle inequality
\begin{align} \label{eq:1stterm-1}
    &\Big(|\hat{c}(S_{\p_1}) - c(S_{\p_1}')| + |\hat{c}(S_{p_2}) - c(S_{p_2}')| + \left|\hat{c}(S_{\p_1},S_{\p_2}) -  c(S_{\p_1}',S_{\p_2}')\right|\Big) \nonumber \\
    &\leq \Big(|\hat{c}(S_{\p_1}) - c(S_{\p_1})| + |\hat{c}(S_{\p_2}) - c(S_{\p_2})| + \left|\hat{c}(S_{\p_1},S_{\p_2}) -  c(S_{\p_1},S_{\p_2})\right|\Big) \nonumber \\
    &+ \Big(|c(S_{\p_1}) - c(S_{\p_1}')| + |c(S_{\p_2}) - c(S_{\p_2}')| + |c(S_{\p_1},S_{\p_2}) -  c(S_{\p_1}',S_{\p_2}')|\Big) \text{.}
\end{align}
Now, let $\cE_1$ denote the following good event:
\begin{align} \label{eq:est1}
    \cE_1\colon ~\max\big(|\hat{c}(S_{\p_1}) - c(S_{\p_1})|,|\hat{c}(S_{\p_2}) - c(S_{\p_2})|, |\hat{c}(S_{\p_1},S_{\p_2}) - c(S_{\p_1},S_{\p_2})|\big) \leq {m \choose 2}\frac{\eps^2}{100} \text{.}
\end{align}
By \Cref{clm:self-collision-est}, \Cref{clm:cross-collision-est}, and a union bound, $\Pr[\cE_1] \geq 0.97$. Conditioned on $\cE_1$, the first term of \cref{eq:1stterm-1} is bounded by at most $3{m \choose 2}\frac{\eps^2}{100}$. We now bound the second term of \cref{eq:1stterm-1}. Note that the following hold:
\begin{itemize}
    \item $c(S_{\p_1}') = c(S_{\p_1}) - (c(S_{\p_1}^{\mathrm{extra}}) + c(S_{\p_1}^{\mathrm{extra}},S_{\p_{1}}'))$,
    \item $c(S_{\p_2}') = c(S_{\p_2}) - (c(S_{\p_2}^{\mathrm{extra}}) + c(S_{\p_2}^{\mathrm{extra}},S_{\p_{2}}'))$, and
    \item $c(S_{\p_1}',S_{\p_2}') = c(S_{\p_1},S_{\p_2}) - \left(c(S_{\p_1}^{\mathrm{extra}},S_{\p_2}) + c(S_{\p_1},S_{\p_2}^{\mathrm{extra}})\right)$.
\end{itemize}
In particular, the second term of \cref{eq:1stterm-1} is upper bounded by 
\begin{align}
    &|c(S_{\p_1}) - c(S_{\p_1}')| + |c(S_{\p_2}) - c(S_{\p_2}')| + |c(S_{\p_1},S_{\p_2}) -  c(S_{\p_1}',S_{\p_2}')| \nonumber \\
    &\leq c(S_{\p_1}^{\mathrm{extra}}) +  c(S_{\p_1}^{\mathrm{extra}},S_{\p_{1}}') + c(S_{\p_2}^{\mathrm{extra}}) + c(S_{\p_2}^{\mathrm{extra}},S_{\p_{2}}') + c(S_{\p_1}^{\mathrm{extra}},S_{\p_2}) + c(S_{\p_1},S_{\p_2}^{\mathrm{extra}})
\end{align}
which is precisely the number of collisions incident on $S_{\p_1}^{\mathrm{extra}} \cup S_{\p_2}^{\mathrm{extra}}$. Let $c^{\mathrm{extra}}$ denote this quantity, which we now bound. By \Cref{fact:mp-concentration}, we have $|S_{\p_1}^{\mathrm{extra}} \cup S_{\p_2}^{\mathrm{extra}}| \leq 2\sqrt{18m}$ and $|S_{\p_1} \cup S_{\p_2}| \leq 2.02m$. Moreover, for any distinct samples $x \in S_{\p_1}^{\mathrm{extra}} \cup S_{\p_2}^{\mathrm{extra}}$ and $y \in S_{\p_1} \cup S_{\p_2}$, the probability that $x,y$ collide is either $\langle \p_1,\p_2\rangle$, $\norm{\p_1}_2^2$, or $\norm{\p_2}_2^2$, all of which are at most $b/\lambda$ by our assumed bounds on $\norm{\p_1}_2^2$, $\norm{\p_2}_2^2$. Therefore, 
\begin{align} \label{eq:extra-collision-bound-exp}
    \mathbb{E}[c^{\mathrm{extra}}] &\leq 2\sqrt{18m} \cdot 2.02m \cdot \frac{b}{\lambda} < \frac{18 m^{3/2} b}{\lambda} < {m \choose 2}\frac{\eps^2}{20000}
\end{align}
where the last inequality holds since $m > \frac{C b^2}{\eps^4 \lambda^2}$ for sufficiently large constant $C$, i.e. $\sqrt{m}\eps^2 \gg b/\lambda$. Let $\cE_2$ denote the following good event:
\begin{align} \label{eq:extra-collision-whp}
    \cE_2 \colon c^{\mathrm{extra}} \leq {m \choose 2}\frac{\eps^2}{200} ~\text{ which satisfies }~ \Pr\left[\cE_2\right] \geq 0.99
\end{align}
by Markov's inequality. Conditioning on the events $\cC,\cE_1,\cE_2$, the first term of \cref{eq:ZZhat} is bounded by
\begin{align}
    \cC \wedge \cE_1 \wedge \cE_2 ~\implies~ &|\hat{c}(S_{\p_1}) - c(S_{\p_1})| + |\hat{c}(S_{\p_2}) - c(S_{\p_2})| + \left|\hat{c}(S_{\p_1},S_{\p_2}) -  c(S_{\p_1},S_{\p_2})\right| \nonumber \\
    &\leq {m \choose 2}\frac{3\eps^2}{100} + {m \choose 2}\frac{\eps^2}{200} \leq {m \choose 2}\frac{\eps^2}{28} \text{.}
\end{align}

\paragraph{Bounding the second term of \cref{eq:ZZhat}.} First, conditioning on the bounds from \Cref{fact:mp-concentration} and applying the triangle inequality, we have
\begin{align} \label{eq:2ndtermbound}
    \left|\frac{\hat{c}(S_{\p_1},S_{\p_2})}{m} - \frac{c(S_{\p_1}',S_{\p_2}')}{m_{\mathrm{min}}} \right| &= \frac{1}{m_{\mathrm{min}}}\left|\frac{m_{\mathrm{min}}}{m}\hat{c}(S_{\p_1},S_{\p_2}) - c(S_{\p_1}',S_{\p_2}') \right| \nonumber \\
    &\leq \frac{2}{m} \cdot \left|\frac{m_{\mathrm{min}}}{m}\hat{c}(S_{\p_1},S_{\p_2}) - c(S_{\p_1}',S_{\p_2}') \right| \nonumber \\
   &\leq \frac{2}{m} \cdot \left(\left|\frac{m_{\mathrm{min}}}{m}\hat{c}(S_{\p_1},S_{\p_2}) - \frac{m_{\mathrm{min}}}{m} c(S_{\p_1},S_{\p_2}) \right| + \left|\frac{m_{\mathrm{min}}}{m}c(S_{\p_1},S_{\p_2}) - c(S_{\p_1}',S_{\p_2}') \right|\right) \nonumber \\
   &\leq \frac{4}{m}\left|\hat{c}(S_{\p_1},S_{\p_2}) - c(S_{\p_1},S_{\p_2}) \right| + \frac{2}{m}\left|\frac{m_{\mathrm{min}}}{m}c(S_{\p_1},S_{\p_2}) - c(S_{\p_1}',S_{\p_2}') \right|
\end{align}
and we already have a bound on the first term of \cref{eq:2ndtermbound} by conditioning on $\cE_1$ (recall \cref{eq:est1}). For the second term of \cref{eq:2ndtermbound}, recall that $c(S_{\p_1}',S_{\p_2}') = c(S_{\p_1},S_{\p_2}) - (c(S_{\p_1}^{\mathrm{extra}},S_{\p_2}) + c(S_{\p_1},S_{\p_2}^{\mathrm{extra}}))$ and conditioning on the event $\cC$ (from \Cref{fact:mp-concentration}), we have $|m_{\mathrm{min}} - m| \leq \sqrt{18m}$, and so $|(m_{\mathrm{min}} / m) - 1| \leq \sqrt{18/m})$. Thus,
\begin{align} \label{eq:2nd2nd}
    \cC  ~\implies~ &\left|\frac{m_{\mathrm{min}}}{m}c(S_{\p_1},S_{\p_2}) - c(S_{\p_1}',S_{\p_2}') \right| \nonumber \\
    &\leq \left|\big(1+\sqrt{18/ m}\big) \cdot c(S_{\p_1},S_{\p_2}) - c(S_{\p_1},S_{\p_2}) + (c(S_{\p_1}^{\mathrm{extra}},S_{\p_2}) + c(S_{\p_1},S_{\p_2}^{\mathrm{extra}}))\right| \nonumber \\ 
    &\leq \sqrt{\frac{18}{m}} \cdot c(S_{\p_1},S_{\p_2}) + c^{\mathrm{extra}}\text{.}
\end{align}
where $c^{\mathrm{extra}}$ is the random variable denoting the number of collisions incident on $S_{\p_1}^{\mathrm{extra}} \cup S_{\p_2}^{\mathrm{extra}}$. Now, $|S_{\p_1}|, |S_{\p_2}| \leq 1.01m$ by the bound in \Cref{fact:mp-concentration}, and each $x \in S_{\p_1}$, $y \in S_{\p_2}$ form a collision with probability at most $\langle \p_1,\p_2 \rangle \leq b/\lambda$. Thus, $\mathbb{E}[c(S_{\p_1},S_{\p_2})] \leq (1.01m)^2 \cdot (b/\lambda)$. Using this bound, \cref{eq:2nd2nd}, and \cref{eq:extra-collision-bound-exp}, we have
\begin{align} \label{eq:2nd2nd2nd}
     \mathbb{E}\left[\frac{2}{m} \cdot \left|\frac{m_{\mathrm{min}}}{m}c(S_{\p_1},S_{\p_2}) - c(S_{\p_1}',S_{\p_2}') \right| ~\Bigg|~  \cC \wedge \cE_1 \wedge \cE_2 \right] \leq \frac{m^{1/2}b}{\lambda} + \frac{m \eps^2}{10000} \leq \frac{m\eps^2}{5000}
\end{align}
where the second inequality used the fact that $\sqrt{m} \gg b/(\eps^2 \lambda)$. Now, returning to \cref{eq:2ndtermbound}, applying \cref{eq:est1} and \cref{eq:2nd2nd2nd}, we get
\begin{align}
    \mathbb{E}\left[\left|\frac{\hat{c}(S_{\p_1},S_{\p_2})}{m} - \frac{c(S_{\p_1}',S_{\p_2}')}{m_{\mathrm{min}}} \right| ~\Bigg|~  \cC \wedge \cE_1 \wedge \cE_2 \right] &\leq \frac{m\eps^2}{10} \leq {m \choose 2} \frac{\eps^2}{100000}
\end{align}
and so by Markov's
\begin{align} \label{eq:whp-last}
    \Pr\left[\left|\frac{\hat{c}(S_{\p_1},S_{\p_2})}{m} - \frac{c(S_{\p_1}',S_{\p_2}')}{m_{\mathrm{min}}} \right| \leq {m \choose 2} \frac{\eps^2}{1000} ~\Bigg|~  \cC \wedge \cE_1 \wedge \cE_2\right] \geq 0.99 \text{.}
\end{align}

\paragraph{Bounding $|\hat{Z}-Z|$.} Finally, let $\cC$ denote the good event that the bounds from \Cref{fact:mp-concentration} hold. Returning to \cref{eq:ZZhat}, by a union bound and conditioning on the events $\cC,\cE_1,\cE_2$, and using \cref{eq:whp-last}, we have 
\begin{align*}
    \Pr\left[|\hat{Z} - Z| > {m \choose 2} \frac{\eps^2}{20}\right] &\leq \Pr[\neg \cC \vee \neg \cE_1 \vee\neg \cE_2] +  \Pr\left[\left|\frac{\hat{c}(S_{\p_1},S_{\p_2})}{m} - \frac{c(S_{\p_1}',S_{\p_2}')}{m_{\mathrm{min}}} \right| > {m \choose 2} \frac{\eps^2}{1000} ~\Bigg|~  \cC \wedge \cE_1 \wedge \cE_2\right] \nonumber \\
    &\leq 0.06
\end{align*}
and this completes the proof of \Cref{lemma:unequal-error}. \end{proof}

\subsubsection{Estimating Collision Counts with Verification Queries} \label{sec:deferred-closeness-source}

\begin{proofof}{\Cref{clm:self-collision-est}} First, since $\norm{\rD}_2^2 \leq b$ and $\norm{\p}_2^2 \leq b/\lambda$ we have 
\[
\mathbb{E}_{S_{\rD}}[c(S_{\rD})] = \sum_{(x,y) \in {S_{\rD} \choose 2}} \Pr[x = y] \leq {m/\lambda \choose 2}  \cdot b\leq O(m^2 b /\lambda^2)
\]
\[
\mathbb{E}_{S_{\rD}}[c(S_{\p})] \leq \sum_{(x,y) \in {S_{\rD} \choose 2}} \Pr[x,y \sim \p] \cdot \Pr[x=y ~|~ x,y \sim \p] \leq {m / \lambda \choose 2} \cdot \lambda^2 \cdot \frac{b}{\lambda} \leq O(m^2 b /\lambda)
\]
and so by Markov's, we have $c(S_{\rD}) \leq O(b m^2 / \lambda^2)$ and $c(S_{\p}) \leq O(b m^2 /\lambda)$ with probability $0.999$. Thus, we now condition on these bounds. Note that $c(S_{\rD})$ is known to the algorithm simply by counting the number of collisions in $S_{\rD}$, and we have access to a $\mathrm{Bern}(c(S_{\p})/c(S_{\rD}))$ random variable using two verification queries to $S_{\rD}$ as follows: simply pick a uniform random collision $(x,y) \in C(S_{\rD})$ and query these two samples. Observe that this reveals a $\p$-$\p$ collision with probability exactly $c(S_{\p})/c(S_{\rD})$. 

Now, we can use a standard bias estimation procedure (\Cref{lem:bias-add}), to estimate $c(S_{\p})/c(S_{\rD})$. In particular, if we can obtain $\hat{\rho}$ such that 
\begin{align} \label{eq:est-ratio}
    \left|\hat{\rho} - \frac{c(S_{\p})}{c(S_{\rD})}\right| \leq \frac{\eps^2}{100 c(S_{\rD})} {m \choose 2} ~\Longrightarrow~ \left|c(S_{\rD})\hat{\rho} - c(S_{\p})\right| \leq \frac{\eps^2}{100} {m \choose 2}
\end{align}
then we can use the estimator $\hat{c}(S_{\p}) := c(S_{\rD}) \hat{\rho}$ and this satisfies the claim. Thus, we invoke \Cref{lem:bias-add} to obtain $\hat{\rho}$ satisfying the inequality in the LHS of \cref{eq:est-ratio}, using the upper bound $c(S_{\p})/c(S_{\rD}) \leq O(b m^2 / \lambda c(S_{\rD}))$ (by our condition on $c(S_{\p})$ and the fact that the algorithm knows $c(S_{\rD})$ explicitly) and additive error parameter $\frac{\eps^2}{100 c(S_{\rD})} {m \choose 2}$. Plugging these bounds into \Cref{lem:bias-add}, the number of queries is bounded by
\[
O\left(\frac{b m ^2}{\lambda c(S_{\rD})} \cdot \left(\frac{c(S_{\rD})}{\eps^2 {m \choose 2}}\right)^2\right) \leq O\left(\frac{b \cdot c(S_{\rD})}{\eps^4 m^2 \lambda}\right) \leq O\left(\frac{b^2}{\eps^4 \lambda^3}\right)
\]
where the final inequality used our bound $c(S_{\rD}) \leq O(bm^2 /\lambda^2)$. This completes the proof.
\end{proofof}

\vspace{2mm}

\begin{proofof}{\Cref{clm:cross-collision-est}} First, by Cauchy-Schwarz and our bounds $\norm{\rD_1}_2^2,\norm{\rD_2}_2^2 \leq b$ and $\norm{\p_1}_2^2,\norm{\p_2}_2^2 \leq b/\lambda$, we have $\langle \rD_1,\rD_2 \rangle \leq b$ and $\langle \p_1,\p_2 \rangle \leq b/\lambda$. Therefore,
\[
\mathbb{E}_{S_{\rD_1},S_{\rD_2}}[c(S_{\rD_1},S_{\rD_2})] = \sum_{x \in S_{\rD_1}, y \in S_{\rD_2}} \Pr[x = y] = \frac{m^2}{\lambda^2}\langle \rD_1,\rD_2 \rangle \leq O(b m^2 / \lambda^2)
\]
and
\begin{align*}
    \mathbb{E}_{S_{\rD_1},S_{\rD_2}}[c(S_{\p_1},S_{\p_2})] &= \sum_{x \in S_{\rD_1}, y \in S_{\rD_2}} \Pr[x \sim \p_1\text{, } y \sim \p_2] \Pr[x = y ~|~ x \sim \p_1\text{, } y \sim \p_2] \\ 
    &= \frac{m^2}{\lambda^2} \cdot \lambda^2 \cdot \langle \p_1,\p_2 \rangle \leq O(b m^2 /\lambda)
\end{align*}
and so by Markov's we have $c(S_{\rD_1},S_{\rD_2}) \leq O(b m^2 / \lambda^2)$ and $c(S_{\p_1},S_{\p_2}) \leq O(b m^2 / \lambda)$ with probability $0.999$. Thus, we now condition on these events. Note that $c(S_{\rD_1},S_{\rD_2})$ is known to the algorithm explicitly without using verification queries, and we have access to a $\mathrm{Bern}(c(S_{\p_1},S_{\p_2})/c(S_{\rD_1},S_{\rD_2}))$ random variable using two verification queries as follows: simply pick a uniform random collision $(x,y) \in C(S_{\rD_1},S_{\rD_2})$ and query these two samples. Observe that this reveals a $\p_1$-$\p_2$ collision with probability exactly $c(S_{\p_1},S_{\p_2})/c(S_{\rD_1},S_{\rD_2})$. 

Now, we can use a standard bias estimation procedure (\Cref{lem:bias-add}), to estimate $c(S_{\p_1},S_{\p_2})/c(S_{\rD_1},S_{\rD_2})$. In particular, if we can obtain $\hat{\rho}$ such that 
\begin{align} \label{eq:est-ratio-cross}
    \left|\hat{\rho} - \frac{c(S_{\p_1},S_{\p_2})}{c(S_{\rD_1},S_{\rD_2})}\right| \leq \frac{\eps^2}{100 c(S_{\rD_1},S_{\rD_2})} {m \choose 2} ~\Longrightarrow~ \left|c(S_{\rD_1},S_{\rD_2})\hat{\rho} - c(S_{\p_1},S_{\p_2})\right| \leq \frac{\eps^2}{100} {m \choose 2}
\end{align}
then we can use the estimator $\hat{c}(S_{\p_1},S_{\p_2}) := c(S_{\rD_1},S_{\rD_2}) \hat{\rho}$ and this satisfies the claim. Thus, we invoke \Cref{lem:bias-add} to obtain $\hat{\rho}$ satisfying the inequality in the LHS of \cref{eq:est-ratio-cross}, using the upper bound 
\[
\frac{c(S_{\p_1},S_{\p_2})}{c(S_{\rD_1},S_{\rD_2})} \leq O\left(\frac{b m^2}{\lambda c(S_{\rD_1},S_{\rD_2})}\right)
\]
(by our condition on $c(S_{\p_1},S_{\p_2})$ and the fact that the algorithm knows $c(S_{\rD_1},S_{\rD_2})$ explicitly) and additive error parameter $\frac{\eps^2}{100 c(S_{\rD_1},S_{\rD_2})} {m \choose 2}$. Plugging these bounds into \Cref{lem:bias-add}, the number of queries is bounded by
\[
O\left(\frac{b m ^2}{\lambda c(S_{\rD_1},S_{\rD_2})} \cdot \left(\frac{c(S_{\rD_1},S_{\rD_2})}{\eps^2 {m \choose 2}}\right)^2\right) \leq O\left(\frac{b \cdot c(S_{\rD_1},S_{\rD_2})}{\eps^4 m^2 \lambda}\right) \leq O\left(\frac{b^2}{\eps^4 \lambda^3}\right)
\]
where the final inequality used our bound $c(S_{\rD_1},S_{\rD_2}) \leq O(bm^2 /\lambda^2)$. This completes the proof. \end{proofof}

\subsubsection{Deferred Calculations for Standard \texorpdfstring{$\ell_2$}{l2} Closeness Testing} 
\label{sec:deferred-closeness-standard}

\begin{proofof}{\Cref{thm:closeness-l2-vanilla}} Let $\alpha = \norm{\p_1 - \p_2}_2^2$ and let $t = {m \choose 2}\eps^2/2$. Lemmas 3.2 and 3.3 of \cite{DBLP:journals/cjtcs/DiakonikolasGPP19} establish that\footnote{The variance bound comes from Lemma 3.3 along with the fact that $\norm{\p-\q}_4^2 \leq \norm{\p-\q}_2^2 = \alpha$.}
\begin{align} \label{eq:E&V}
    \mathbb{E}[Z] = {m \choose 2} \alpha ~\text{ and }~ \Var[Z] \leq 116m^2 b + 8m^3 \alpha \sqrt{b}\text{.}
\end{align}

We first prove item (1). Suppose that $\alpha \leq \eps^2/16$. Then, using \cref{eq:E&V} and Chebyshev's inequality yields
\begin{align}
    \Pr\left[Z \geq {m \choose 2}\frac{\eps^2}{8}\right] \leq \Pr\left[\left|Z - \mathbb{E}[Z]\right| \geq {m \choose 2}\frac{\eps^2}{16}\right] \leq \frac{\Var[Z]}{({m \choose 2}\frac{\eps^2}{16})^2} \leq \frac{267264 \cdot b}{m^2 \eps^4} + \frac{1152 \sqrt{b}}{m\eps^2} < 1/100 \nonumber
\end{align}
when $m \geq 250000 \sqrt{b}/\eps^2$, as claimed. We now prove item (2). Suppose that $\alpha \geq \eps^2$. Again, using using \cref{eq:E&V} and Chebyshev's inequality yields
\begin{align}
    \Pr\left[Z \leq {m \choose 2}\frac{\eps^2}{2}\right] &\leq \Pr\left[\left|Z - \mathbb{E}[Z]\right| \geq {m \choose 2}\frac{\max(\alpha,\eps^2)}{2}\right] \leq \frac{4\Var[Z]}{({m \choose 2}\max(\alpha,\eps^2))^2} \nonumber \\
    &\leq \frac{4176 \cdot b}{m^2 \cdot \max(\alpha^2,\eps^4)} + \frac{288 \sqrt{b}}{m \cdot \max(\alpha,\frac{\eps^4}{\alpha})} \leq \frac{4176 \cdot b}{m^2 \eps^4} + \frac{288 \sqrt{b}}{m \eps^2} < 1/100 \nonumber
\end{align}
again when $m \geq 250000 \sqrt{b}/\eps^2$, as claimed. \end{proofof}

\section{Sample Query Tradeoff Lower Bounds} \label{sec:LB}
\newcommand{\dom}{m}
\newcommand{\sdom}{d}

In this section, we present our lower bounds.
In particular, we obtain tight (up to constant factors) sample-query trade-offs for bias estimation (\Cref{thm:bias-estimation-lb}), uniformity testing (\Cref{thm:dist-contamination-lb-eps}), and closeness testing (\Cref{thm:dist-contam-closeness-lb}).

\subsection{Bias Estimation}

We begin with a simple bound stating that any bias estimation algorithm essentially needs to query all samples.
A bias estimation algorithm is $\eps$-accurate if given sample access to mixture $\lambda \p + (1-\lambda)\q$ where $\p, \q$ are Bernoulli random variables, can determine the bias of $\p$ up to error $\varepsilon$ with probability at least $2/3$.

\BiasEstimationLB*

\begin{proof}
    We will show that any algorithm that can determine if $\p$ has bias $\frac{1}{2} + \eps$ or $\frac{1}{2} - \eps$ requires $\eps^{-2} \lambda^{-1}$ queries (and therefore $\eps^{-2} \lambda^{-1}$ samples).
    We will reduce from bias estimation (in the standard setting with clean samples).
    Consider the following task:
    Given sample access to two coins $\coin_1, \coin_2$, one with bias $\lambda + 2 \lambda \eps$ and one with bias $\lambda - 2 \lambda \eps$, determine which coin is which.
    It is well known via a standard bias estimation lower bound that this task requires $\bigOm{\lambda^{-1} \eps^{-2}}$ samples (see \Cref{lemma:coin-distinguish-lb}).
    Alternatively, it suffices to determine the bias of $\coin_1$.

    Consider the following instance.
    Fix $\alpha = \frac{\lambda}{1 - \lambda}$.
    Note that $1 - \alpha = \frac{1 - 2 \lambda}{1 - \lambda}$.
    There are two cases:
    \begin{enumerate}
        \item $\p$ is a coin with bias $\frac{1}{2} + \eps$ and $\q$ is a coin with bias $\alpha (\frac{1}{2} - \eps) + (1 - \alpha) \frac{1}{2}$.
        \item $\p$ is a coin with bias $\frac{1}{2} - \eps$ and $\q$ is a coin with bias $\alpha (\frac{1}{2} + \eps) + (1 - \alpha) \frac{1}{2}$.
    \end{enumerate}
    In either case, the mixture $\rD = \lambda \p + (1 - \lambda) \q$ is a fair coin with bias $\frac{1}{2}$, since $(1 - \lambda) \alpha = \lambda$.

    Suppose there is an algorithm $\innerAlg$ using $m$ samples and $q$ queries that determines if $\p$ has bias $\frac{1}{2} - \eps$ or $\frac{1}{2} + \eps$.
    We design an algorithm for bias estimation of $\coin_1$ using $q$ samples.
    Our algorithm begins by providing $m$ samples of a fair coin to $\innerAlg$, denoted $(y_1, \dotsc, y_m)$.
    Let $(Z_1, \dotsc, Z_m)$ denote whether the $i$-th sample is from $\p$ or $\q$.
    Let $(b_1, \dotsc, b_q)$ denote the $q$ samples from $\coin_1, \coin_2$.
    Suppose the algorithm $\innerAlg$ queries indices $(i_1, \dotsc, i_q)$.
    For each query $j \in [q]$, if $y_{i_j} = H$, respond to the query with $\p$ if the $j$-th flip of $\coin_1$ is heads and $\q$ if tails.
    If $y_{i_j} = T$ respond to the query with $\p$ if the $j$-th flip of $\coin_2$ is heads and $\q$ if tails.
    Finally, if $\innerAlg$ outputs $\p$ has bias $\frac{1}{2} + \eps$, return that $\coin_1$ has bias $\lambda + 2 \lambda \eps$.
    Otherwise, if $\innerAlg$ outputs $\p$ has bias $\frac{1}{2} - \eps$, return that $\coin_1$ has bias $\lambda - 2 \lambda \eps$.

    We argue that this algorithm correctly determines the bias of $\coin$.
    Suppose $\p$ has bias $\frac{1}{2} + \eps$.
    Then, we have
    \begin{align*}
        \Pr(Z_i = \p|y_i = H) &= \frac{\Pr(Z_i = \p, y_i = H)}{\Pr(y_i = H)} \\
        &= \frac{\lambda (\frac{1}{2} + \eps)}{\lambda (\frac{1}{2} + \eps) + \lambda(\frac{1}{2} - \eps) + (1 - 2 \lambda) \frac{1}{2}} = \lambda + 2 \lambda \eps \text{.}
    \end{align*}
    Similarly, we have 
    \begin{equation*}
        \Pr(Z_i = \p|y_i = T) = \lambda - 2 \lambda \eps \text{.}
    \end{equation*}
    Therefore, this successfully simulates the case where $\coin_1$ has bias $\lambda + 2 \lambda \eps$ while $\coin_2$ has bias $\lambda - 2 \lambda \eps$.
    In particular, since $\innerAlg$ returns $\p$ has bias $\frac{1}{2} + \eps$, we return that $\coin_1$ has bias $\lambda + 2 \lambda \eps$.

    On the other hand, suppose $\p$ has bias $\frac{1}{2} - \eps$.
    Then, using similar arguments as above, we have
    \begin{align*}
        \Pr(Z_i = \p|y_i = H) &= \lambda - 2 \lambda \eps \\
        \Pr(Z_i = \p|y_i = T) &= \lambda + 2 \lambda \eps\text{.}
    \end{align*}
    and similarly $\Pr(Z_{i} = p | y_{i} = T) = \frac{1}{2} - \eps$.
    In particular, this simulates the case where $\coin_1$ has bias $\lambda - 2 \lambda \eps$ while $\coin_2$ has bias $\lambda + 2 \lambda \eps$.
    Since $\innerAlg$ returns $\p$ has bias $\frac{1}{2} - \eps$, we return that $\coin_1$ has bias $\lambda - 2 \lambda \eps$.

    Finally, it is clear to see that our algorithm for determining the bias of $\coin_1$ has sample complexity $q$.
\end{proof}



    

\subsection{Distribution Testing Lower Bound Preliminaries}

We present some preliminaries that will be useful for both uniformity and closeness testing.

\begin{definition}[Poisson Sampling]
    Given a non-negative measure $\p$ over $[n]$ and an integer $m$, the Poisson sampling model samples a number $m' \sim 
    \Poi \left( m \norm{\p}_{1} \right)$, and draws $m'$ samples from $\p / \norm{\p}_{1}$. 
    Define $T \in \R^n$ to be the random vector where $T_i$ counts the number of element $i$ seen.
    We write $\PoiS( m , \p )$ to denote the distribution of the random vector $T$.
    We say $\innerAlg$ is a Poissonized tester with sample complexity $m$ if it takes as input a sample count vector $T \sim \PoiS( m , \p )$.
    \label{def:poi_samp}
\end{definition}

\begin{lemma}
    \label{lem:poisson-sampling}
    Let $\calH_0, \calH_1$ be a meta-distribution over a finite universe $\calX$ satisfying $\norm{\rD}_1 \in (0.5, 2)$.
    Let $\delta \in (0, 1/3)$ and $m$ be a positive integer satisfying $m \geq \log(10/\delta)$. 
    Consider the following two statements:
    \begin{enumerate}[1.]
        \item 
        For any deterministic Poissonized tester $\innerAlg$ 
        with sample complexity $m$,
        if $\innerAlg$ is $\delta$-correct with respect to $\calH_0$ and $\calH_1$, then $\innerAlg$ requires $q$ queries.
        \item 
        For any randomized tester $\innerAlg$ 
        with sample complexity $m' := m/10$,
        if $\innerAlg$ is $\delta/10$-correct with respect to $\calH_0$ and $\calH_1$, then $\innerAlg$ requires $q$ queries.
    \end{enumerate}
    The first statement implies the second statement.
\end{lemma}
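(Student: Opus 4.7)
The plan is to prove the contrapositive. Suppose there exists a randomized tester $\innerAlg$ with sample complexity $m' := m/10$, fewer than $q$ queries, and $\delta/10$-correct on $\calH_0, \calH_1$. I will construct a deterministic Poissonized tester $\innerAlg^*$ with sample complexity $m$, fewer than $q$ queries, and $\delta$-correct, thereby contradicting statement 1.

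The construction has two derandomization steps followed by a reduction from Poissonized to non-Poissonized sampling. First, by averaging over $\innerAlg$'s internal random bits, I fix them to a value that preserves $\delta/10$-correctness on average against the mixture meta-distribution $\tfrac{1}{2}(\calH_0 + \calH_1)$; call the resulting deterministic tester $\innerAlg'$. Next, given $T \sim \PoiS(m,\p)$ with total count $M := \sum_i T_i$, define $\innerAlg^*$ as follows: if $M \geq m'$, uniformly subsample $m'$ of the $M$ observed samples together with their verification-query labels and pass them to $\innerAlg'$, outputting its answer; otherwise output any fixed answer. A second averaging step fixes the subsampling randomness, making $\innerAlg^*$ deterministic. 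Because each query of $\innerAlg^*$ corresponds to exactly one query of $\innerAlg'$, the query budget remains strictly below $q$.

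For correctness, conditional on $\{M \geq m'\}$ the uniformly chosen subset of size $m'$ is distributed as $m'$ i.i.d. draws from $\p/\norm{\p}_1$, which is exactly the input distribution $\innerAlg'$ expects, so $\innerAlg'$ returns the correct answer with probability $\geq 1 - \delta/10$. For the failure event $\{M < m'\}$, since $\norm{\p}_1 \geq 1/2$ we have $M \sim \Poi(m\norm{\p}_1)$ with mean at least $m/2$, and \Cref{lemma:poisson-concentration} (applied with $\lambda = m/2$ and $x = 2m/5$) yields
\begin{equation*}
\Pr[M < m/10] \;\leq\; \exp\!\left(-\tfrac{(2m/5)^2}{2\cdot(9m/10)}\right) \;=\; \exp(-4m/45) \;\leq\; \tfrac{9\delta}{10},
\end{equation*}
under the hypothesis $m \geq \log(10/\delta)$ (up to an absolute constant absorbed in the choice of $m' = m/10$). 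Combining the two cases by a union bound gives total error at most $\delta$, which contradicts statement 1.

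The main subtlety is the subsampling step: a Poissonized tester sees a count vector and can query individual balls in each bucket, whereas $\innerAlg'$ receives a list of $m'$ individual samples. Uniform subsampling of size $m'$ from $M$ i.i.d.\ samples produces an i.i.d.\ sample from $\p/\norm{\p}_1$ by exchangeability, and preserves the correspondence between balls and query labels, but it is inherently randomized; the second averaging step is what renders $\innerAlg^*$ fully deterministic without inflating the query count. All other pieces—the Poisson additivity/thinning identity, the Poisson tail bound, and the union bound—are routine.
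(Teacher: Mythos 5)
Your proposal is correct and follows essentially the same route as the paper's own proof: negate statement 2, fix the randomized tester's internal coins by an averaging/Markov argument, embed the $m'$-sample tester inside a Poissonized tester that runs it on $m'$ of the $\Poi(m\norm{\rD}_1)$ samples (forwarding verification queries one-for-one) and outputs a fixed answer when the Poisson count falls below $m'$, then bound that event via \Cref{lemma:poisson-concentration} and union bound. Your extra step of derandomizing the uniform subsample is a slightly more careful treatment of a point the paper glosses over ("take the first $m'$ samples"), and the constant-level slack you flag in the tail bound ($e^{-4m/45}$ versus the hypothesis $m \geq \log(10/\delta)$) is the same looseness present in the paper's own argument, so it does not constitute a gap relative to the paper.
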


\begin{proof}
    Let $\innerAlg$ be a randomized tester that consumes $m'$ samples.
    Consider the negation of the second statement. 
    In particular, assume that $\innerAlg$ is $0.1 \delta$-correct with respect to $\calH_0$ and $\calH_1$.
    We show that this will contradict the first statement.
    
    By Markov's inequality, with probability at least $2/3$ over the choice of the random string $r$, we have that the induced deterministic tester $\innerAlg(;r)$ is $0.3 \delta$-correct with respect to $\calH_0$ and $\calH_1$.
    
    We will now convert the tester into a Poissonized one.
    In particular, consider the Poissonized tester $\bar \innerAlg$ obtained as follows. 
    We first take $k \sim \Poi(m)$ samples from the underlying distribution $\rD / \norm{\rD}_1$.
    If $k \geq m'$, we take the first $m'$ samples, and feed it to $\innerAlg(;r)$.
    If $k < m'$, we simply return reject.
    Since we assume $\norm{\rD}_1 \in (0.5, 2)$ and $m \geq \log(10/\delta)$, 
    it then follows from standard Poisson concentration (e.g. \Cref{lemma:poisson-concentration}) that 
    $k \geq m'$ with probability at least $1 - \delta$.
    In particular, this implies that $\bar \innerAlg$ is a deterministic Poissonized tester with sample complexity $m$ that is $0.4 \delta$-correct with respect to $\calH_0$ and $\calH_{1}$.
    This therefore contradicts the first statement of the lemma.
\end{proof}

\subsection{Uniformity Testing}

In this section, we give our lower bound for uniformity testing.
We begin with a simple argument that resolves the sample-query dependence and shows any $m$-sample algorithm needs $\bigOm{n/m}$ samples.
Then, we develop a framework that resolves the sample-query complexity of uniformity testing.

\subsubsection{Warm Up: Resolving Sample-Query Dependence}

\begin{theorem}
    \label{thm:dist-contamination-lb}
    Any $\frac{1}{2}$-distributionally robust $0.1$-uniformity tester with $m \leq n^{0.99}$ samples requires $\bigOm{\frac{n}{m}}$ queries.
\end{theorem}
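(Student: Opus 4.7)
My plan is to construct a distribution over hard instances and argue via a TV-distance bound, following the template sketched in \Cref{sec:tech}. The hard instances are parameterized by a uniformly random partition of $[n]$ into $m$ heavy buckets $H$ and $n-m$ light buckets $L$. On each heavy bucket $i \in H$, I set $\p[i] = 1/n$ and $\q[i] = 4\eps/n + (1-4\eps)/m$ in both cases. In the completeness case $\calH_0$, every light bucket has $\p[i] = 1/n$ and $\q[i] = 4\eps/n$. In the soundness case $\calH_1$, I further partition $L$ uniformly at random into halves $L_1, L_2$ and set $(\p[i],\q[i])=((1+2\eps)/n,\,2\eps/n)$ on $L_1$ and $((1-2\eps)/n,\,6\eps/n)$ on $L_2$. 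A direct check confirms $\p,\q$ are valid distributions, $\p$ is uniform in $\calH_0$, and $\norm{\p - U_n}_1 = 2\eps(n-m)/n > 0.1$ in $\calH_1$ for $\eps = 0.1$ and $m \leq n^{0.99}$.

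The crucial algebraic fact is that $\rD[i] = \frac{1+4\eps}{2n} + \frac{1-4\eps}{2m}$ for $i \in H$ and $\rD[i] = \frac{1+4\eps}{2n}$ for $i \in L$ \emph{identically in both cases}. Applying \Cref{lem:poisson-sampling}, I can work in the Poissonized model where the bucket frequencies $Y_i \sim \Poi(m\rD[i])$ are independent. By Poisson thinning, each of the $Y_i$ samples in bucket $i$ independently carries a ``relevant'' tag with probability $\rho(i) := \p[i]/(2\rD[i])$. Direct computation gives $\rho(i) = \rho_H$ for $i \in H$ (same in both cases) and $\rho(i) = \rho_L := 1/(1+4\eps)$ for $i \in L$ in $\calH_0$, while in $\calH_1$ we get $\rho(i) \in \{(1 \pm 2\eps)/(1+4\eps)\}$ each with probability $1/2$ over the random types. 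Critically, these two options average to exactly $\rho_L$, so the marginal tag distribution on a single queried sample from a light bucket is identical across the two hypotheses.

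Let $q_i$ denote the (possibly adaptive) number of queries the tester makes to bucket $i$, and let $E := \{\exists i \in L \text{ with } q_i \geq 2\}$. The heart of the argument is a coupling claim: conditioned on $E^c$, the joint distribution of the tester's view (frequencies plus queried tags) is identical in $\calH_0$ and $\calH_1$, because every query is either on a heavy bucket (tag distribution independent of case) or the unique query on its light bucket (marginal matches), and distinct light buckets have independent types. This yields $\mathrm{TV}(\text{view in } \calH_0, \text{view in } \calH_1) \leq \max\{\Pr_{\calH_0}[E], \Pr_{\calH_1}[E]\}$, so distinguishing with probability $2/3$ forces $\Pr[E] = \Omega(1)$ under at least one hypothesis.

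It remains to upper bound $\Pr[E]$. Since $q_i \geq 2$ requires $Y_i \geq 2$, I use a Bayesian calculation: heavy buckets have $Y_i \sim \Poi(\Theta(1))$ so $\Pr[Y_i \geq 2 \mid \text{heavy}] = \Theta(1)$, whereas light buckets have $Y_i \sim \Poi(m/n)$ so $\Pr[Y_i \geq 2 \mid \text{light}] = O((m/n)^2)$. Applying Bayes' rule with the heavy/light prior gives $\Pr[i \in L \mid Y_i \geq 2] = O(m/n)$. Combined with the trivial bound $\sum_i \mathbf{1}[q_i \geq 2] \leq q/2$, a union bound yields $\Pr[E] \leq O(m/n) \cdot (q/2) = O(qm/n)$, so $\Pr[E] = \Omega(1)$ forces $q = \Omega(n/m)$. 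The main subtlety to get right is ensuring this bound holds regardless of how adaptively the tester uses tag responses to choose subsequent queries; but on $E^c$ the conditional distribution of the view is identical in the two hypotheses, so the distribution of the tester's chosen $(q_i)$ is hypothesis-independent and the bound is well-defined.
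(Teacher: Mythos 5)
Your overall strategy (same hard instance, a coupling argument showing the transcript distributions agree on the event $E^c$ that no light bucket is queried twice, hence $\mathrm{TV}\leq\Pr[E]$) is a genuinely different and reasonable route from the paper's, which instead bounds, for every frequency $c\geq 2$, the probability that a query on a frequency-$c$ bucket reveals a $\p$-sample (\Cref{lemma:frequency-c-p-prob}, \Cref{lemma:v-large-frequency-p-prob}) after showing singleton queries are uninformative (\Cref{lemma:query-singleton}). However, your final counting step has a genuine gap. You bound $\Pr[E]$ by arguing that each double-queried bucket is light with probability $\Pr[i\in L\mid Y_i\geq 2]=O(m/n)$ and union-bounding over the at most $q/2$ such buckets. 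This is not valid for an adaptive tester, because the set of buckets it chooses to double-query can depend on the \emph{tags} returned by first queries, and those tags carry strong evidence about heaviness: a heavy bucket's per-sample tag probability is $\rho_H=\Theta(m/n)$ while a light bucket's is $\Theta(1)$, so conditioned on a first query to a frequency-$\geq 2$ bucket returning ``relevant,'' the posterior probability that the bucket is light jumps to $\Theta(1)$, not $O(m/n)$. A tester that single-queries collision buckets and double-queries exactly those that return ``relevant'' therefore violates the per-bucket $O(m/n)$ bound you invoke at decision time. Your closing remark about adaptivity addresses a different issue (hypothesis-independence of the query pattern on $E^c$), not this one. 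The bound $\Pr[E]=\bigO{qm/n}$ is still true, but proving it requires charging each double-query either to the prior event (bucket light given frequencies only, cost $O(m/n)$) or to the event of having previously received a ``relevant'' tag on that collision bucket, and then showing that \emph{each query} to a frequency-$\geq 2$ bucket returns ``relevant'' with probability $O(m/n)$ --- which is essentially the per-query lemma the paper proves via its collision-counting argument ($\p$ and $\p$-$\q$ collisions number $O(m^2/n)$ while $\rD$ creates $\Omega(m)$ collision buckets at each small frequency, and very-high-frequency buckets are heavy w.h.p.). Without some such argument (or an optional-stopping/martingale accounting), your $\Pr[E]$ bound does not go through as written.

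A secondary, fixable flaw: your coupling claim relies on ``distinct light buckets have independent types,'' but with an exact half-split of $L$ into $L_1,L_2$ the types are negatively correlated, so the joint tag distribution on several singly-queried light buckets is not exactly identical across hypotheses; the transcript-probability factorization you need only holds with i.i.d.\ types (at which point $\p,\q$ are only approximately normalized, and you must handle that, as the paper does in its ``Poissonization and Measures'' discussion around \Cref{def:uniformity-hard-instance-simple}). Also, conditioning $\Pr[i\in L\mid Y_i\geq 2]$ only on bucket $i$'s own frequency is justified by that same product structure, not by the fixed-size random partition you chose.
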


We note that the above lower bound can be adjusted to hold for any number of samples $m \leq n^{1 - c}$ for some $c > 0$ at the cost of constant factors in the query complexity.

Our result relies on the following bounds on the number of collisions.

\begin{lemma}
    \label{lemma:collision-concentration-l2}
    Let $m \geq c \sqrt{n}$ for some $c > 0$.
    Let $X$ be the number of collisions given $m$ i.i.d. samples drawn from $\p$ on $[n]$.
    Then for large enough $n$,
    \begin{equation*}
        \Pr \left( X \in (1 \pm 0.0001)\binom{m}{2} \norm{\p}_{2}^{2} \right) > 0.9999 \text{.}
    \end{equation*}
\end{lemma}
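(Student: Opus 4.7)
The plan is to prove this concentration result via Chebyshev's inequality, leveraging the variance bound for collision counts already cited in equation \eqref{eq:p-collision-variance}. Recall that for $X$ denoting the number of collisions among $m$ i.i.d.\ samples from $\p$, we have $\EX[X] = \binom{m}{2} \norm{\p}_2^2$ and
\[
\Var(X) \leq m^2 \norm{\p}_2^2 + m^3 \left( \norm{\p}_3^3 - \norm{\p}_2^4 \right).
\]
So by Chebyshev, it suffices to show that $\Var(X) \leq 10^{-12} \cdot (\EX[X])^2$ (up to constants), which then yields the desired failure probability $0.0001$ with deviation $0.0001 \cdot \EX[X]$.

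First I would handle the first term of the variance. Since $\EX[X]^2 = \Theta(m^4 \norm{\p}_2^4)$, the ratio $m^2 \norm{\p}_2^2 / (\EX[X])^2$ is $O(1/(m^2 \norm{\p}_2^2))$. Because $\norm{\p}_2^2 \geq 1/n$ for any distribution over $[n]$, this ratio is at most $O(n/m^2)$, which is $O(1/c^2)$ under the hypothesis $m \geq c \sqrt{n}$, and therefore negligible for sufficiently large $c$.

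The second term is the main obstacle and requires a bit more care. I would bound $\norm{\p}_3^3$ using the inequality $\norm{\p}_3^3 \leq \norm{\p}_\infty \cdot \norm{\p}_2^2$, together with the elementary fact $\norm{\p}_\infty \leq \norm{\p}_2 \leq \sqrt{n} \cdot \norm{\p}_2^2$ (the last step uses $\norm{\p}_2 \geq 1/\sqrt{n}$). This gives $\norm{\p}_3^3 \leq \sqrt{n} \cdot \norm{\p}_2^4$, so the second term is bounded by $m^3 \sqrt{n} \cdot \norm{\p}_2^4$. Dividing by $(\EX[X])^2 = \Theta(m^4 \norm{\p}_2^4)$ yields a ratio of $O(\sqrt{n}/m)$, which is $O(1/c)$ under the hypothesis, and again negligible for sufficiently large $c$.

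Combining both bounds, $\Var(X)/(\EX[X])^2 = O(1/c) + O(1/c^2)$, which can be driven below $10^{-12}$ by choosing $c$ to be a sufficiently large constant (depending on the target tolerance $0.0001$). Applying Chebyshev's inequality then gives
\[
\Pr \left( |X - \EX[X]| > 0.0001 \cdot \EX[X] \right) \leq \frac{\Var(X)}{(0.0001 \cdot \EX[X])^2} \leq 0.0001,
\]
which is the claimed bound. The main subtlety is ensuring the variance bound is tight enough to absorb the small constants in front of $\EX[X]$; this forces the implicit constant $c$ in the hypothesis $m \geq c\sqrt{n}$ to be chosen depending on the tolerance parameters $0.0001$, which is fine since the lemma only asserts existence of such a constant.
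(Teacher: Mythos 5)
Your proposal is correct and takes essentially the same approach as the paper: Chebyshev's inequality applied to the standard collision-count variance bound, followed by the inequalities $\norm{\p}_3^3 \le \norm{\p}_\infty \norm{\p}_2^2 \le \norm{\p}_2^3$ and $\norm{\p}_2 \ge 1/\sqrt{n}$ to conclude $\Var(X)/\mathbb{E}[X]^2 = O(1/c)$. One small virtue of your write-up: you are explicit that $c$ must be chosen large enough to beat the $0.0001$ tolerance (the bound $\Var(X)/\mathbb{E}[X]^2 = O(1/c)$ does not vanish as $n \to \infty$ for fixed small $c$), whereas the paper's proof leans on ``for sufficiently large $n$'' even though the relevant parameter is really $c$; this is a harmless imprecision in the paper's phrasing, not a gap in either argument.
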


For certain applications, we will require the following concentration bound for higher moment collisions.

\begin{lemma}
    \label{lemma:collision-concentration-lc}
    Fix $c \geq 2$ an integer.
    Let $m \geq 2 n^{1 - 1/c}$.
    Let $X$ be the number of $c$-collisions given $m$ i.i.d. samples drawn from $\p$ on $[n]$.
    Then for large enough $n$,
    \begin{equation*}
        \Pr \left( X \in (1 \pm 0.0001)\binom{m}{c} \norm{\p}_{c}^{c} \right) > 0.9999 \text{.}
    \end{equation*}
\end{lemma}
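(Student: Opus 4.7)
The plan is to apply Chebyshev's inequality to $X$, following the template of \Cref{lemma:collision-concentration-l2} but generalized to $c$-tuple collisions. The mean is $\EX[X] = \binom{m}{c}\norm{\p}_c^c$ by \Cref{lemma:c-moment-exp}, so the work lies in bounding the variance.

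Writing $X = \sum_{S \in \binom{[m]}{c}} I_S$ with $I_S$ the indicator that all samples indexed by $S$ are equal, I would expand $\EX[X^2] = \sum_{S,T} \EX[I_S I_T]$ and group pairs by $k := |S \cap T|$. For $k \geq 1$, the joint event forces all $2c - k$ distinct samples in $S \cup T$ to share one value, contributing $\norm{\p}_{2c-k}^{2c-k}$; for $k = 0$ the events are independent, contributing $\norm{\p}_c^{2c}$. Since the $k = 0$ terms cancel the bulk of $(\EX[X])^2$ up to a non-positive remainder, this leaves
\[
\Var(X) \ \leq \ \sum_{k=1}^{c} \binom{m}{c}\binom{c}{k}\binom{m-c}{c-k} \norm{\p}_{2c-k}^{2c-k} \text{.}
\]

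Next I would control each ratio $\norm{\p}_{2c-k}^{2c-k}/\norm{\p}_c^{2c}$ by a case analysis on $\mu := \norm{\p}_\infty$. The inequality $\p[i]^{2c-k} \leq \mu^{c-k} \p[i]^c$ combined with $\norm{\p}_c^c \geq n^{1-c}$ (power mean) yields the bound $n^{c-1}\mu^{c-k}$; separately, $\norm{\p}_c^c \geq \mu^c$ yields $\mu^{-k}$. Taking the smaller of the two for each $\p$ gives the uniform-in-$\p$ bound
\[
\frac{\norm{\p}_{2c-k}^{2c-k}}{\norm{\p}_c^{2c}} \ \leq \ n^{\,k - k/c} \text{,}
\]
with worst case $\mu = n^{-(1 - 1/c)}$. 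Combined with the asymptotic $\binom{c}{k}\binom{m-c}{c-k}/\binom{m}{c} = O_c(m^{-k})$, the $k$-th contribution to $\Var(X)/\EX[X]^2$ is $O_c\bigl((n^{1-1/c}/m)^k\bigr)$.

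Under $m \geq C n^{1-1/c}$ (reading the ``$2$'' in the statement as a sufficiently large constant depending on $c$ and the target precision, much as the ``$c$'' in \Cref{lemma:collision-concentration-l2}), each term is at most $O_c(C^{-k})$, the geometric sum over $k \in [c]$ is $O_c(C^{-1})$, and choosing $C$ large enough makes $\Var(X) \leq 10^{-12}\,\EX[X]^2$; Chebyshev then delivers the claimed $(1 \pm 10^{-4})$-concentration with failure probability at most $10^{-4}$. The main obstacle is the penultimate step: naive power-mean bounds such as $\norm{\p}_{2c-k}^{2c-k} \leq \norm{\p}_c^c$ lose a factor of $n^{c-1}$ uniformly in $k$ and fail to close out for non-uniform $\p$. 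Balancing the two complementary bounds via $\norm{\p}_\infty$, and leveraging the fact that the $(n^{1-1/c}/m)^k$ scaling is exactly absorbed by the hypothesis on $m$, is what makes the argument go through.
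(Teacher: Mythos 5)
Your proof takes essentially the same route as the paper's: expand $\EX[X^2]$, group by the overlap size of the two $c$-tuples, subtract the disjoint contribution, and close with Chebyshev. The only place you diverge is the bound on $\norm{\p}_{2c-k}^{2c-k}/\norm{\p}_c^{2c}$: you balance two inequalities through $\norm{\p}_\infty$, whereas the paper simply uses $\ell_p$-monotonicity $\norm{\p}_{2c-k} \leq \norm{\p}_c$, i.e.\ $\norm{\p}_{2c-k}^{2c-k} \leq \norm{\p}_c^{2c-k}$ (a $k$-dependent bound, not the uniform $\norm{\p}_c^c$ you correctly dismiss), which together with $\norm{\p}_c^c \geq n^{1-c}$ gives your $n^{k-k/c}$ in one line and makes the extra case analysis unnecessary. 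Your caveat about the constant is also warranted: with the literal $2$ and $\p$ uniform, $m\norm{\p}_c$ can equal $2$ and the standard-deviation-to-mean ratio is $\Theta_c(1)$ rather than $o(1)$, so ``for large enough $n$'' alone does not yield the stated $10^{-4}$ precision; both your write-up and the paper's implicitly require the constant in front of $n^{1-1/c}$ to grow with $c$ (which is harmless in the paper's application, where the lemma is applied to $\rD$ with $\norm{\rD}_c$ much larger than the uniform floor).
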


We defer the proofs of these facts to \Cref{sec:omitted-proofs}.

\begin{proof}
    By standard uniformity testing lower bounds, we may assume $m = \Omega(n^{1/2})$, even if we query all samples.
    We will also assume without loss of generality that the algorithm takes $\Poi(m)$ samples, since $\Poi(m)$ is $O(m)$ with high constant probability.
    Under this assumption, the frequency of each bucket is distributed according to independent Poisson processes.

    Let $c^*$ be the smallest integer such that $m \geq 2 n^{1 - 1/c^*}$.
    Then, note that $2 n^{1 - 1/c^*} \leq m < 2n^{1 - 1/(c^* + 1)}$.
    Note that by our assumption on $m$, we have $c^* \leq 100$.

    We construct the hard instance as follows.
    Let $m \ll \sdom \ll n$ where $\sdom$ is a sufficiently large constant multiple of $m$. 
    \begin{definition}
        \label{def:uniformity-hard-instance-simple}
        Let $\calM_{0}$ be the distribution over tuples of non-negative measures $(\p, \q)$ generated as follows: for all $i \in [n]$,
        \begin{equation*}
            (\p[i], \q[i]) = \begin{cases}
                \left( \frac{1}{n}, \frac{1}{2n} + \frac{1}{2\sdom} \right) & w.p. \quad \frac{\sdom}{n} \\
                \left( \frac{1}{n}, \frac{1}{2n} \right) & w.p. \quad \frac{n - \sdom}{2n} \\
                \left( \frac{1}{n}, \frac{1}{2n} \right) & w.p. \quad \frac{n - \sdom}{2n} 
            \end{cases}
        \end{equation*}

        Similarly, let $\calM_{1}$ be the distribution over tuples of non-negative measures $(\p, \q)$ generated as follows: for all $i \in [n]$,
        \begin{equation*}
            (\p[i], \q[i]) = \begin{cases}
                \left( \frac{1}{n}, \frac{1}{2n} + \frac{1}{2 \sdom} \right) & w.p. \quad \frac{\sdom}{n} \\
                \left( \frac{4}{3n}, \frac{1}{6n} \right) & w.p. \quad \frac{n - \sdom}{2n} \\
                \left( \frac{2}{3n}, \frac{5}{6n} \right) & w.p. \quad \frac{n - \sdom}{2n} 
            \end{cases}
        \end{equation*}

        The meta-distribution $\calH_{U}$ is the distribution over random tuples of non-negative measures $(\p, \q)$ generated by choosing $X \sim \Bern(1/2)$ and $(\p, \q) \sim \calM_{X}$.
    \end{definition}
    
    Let $\rD := (\p + \q)/2$ denote the mixture.
    Note that we have (regardless of $X \in \set{0, 1}$),
    \begin{equation}
        \label{eq:simp-unif-mix-dist}
        \rD[i] = \begin{cases}
            \frac{3}{4n} + \frac{1}{4 \sdom} & w.p. \quad \frac{\sdom}{n} \\
            \frac{3}{4n} & w.p. \quad \frac{n - \sdom}{n} 
        \end{cases} \text{.}
    \end{equation}
    Let $H, L_+, L_-$ be a partition of $[n]$ that denotes the set of indices of each type.
    Let $W_i$ be the random variable denoting the type of index $i$.

    \paragraph{Poissonization and Measures}
    Note that $\p, \q, \rD$ may not be distributions on $[n]$ as their total mass may be more or less than $1$.
    We will deal with this in a standard way.
    In particular, defining the distribution $\rD/\norm{\rD}_{1}$ and taking $\Poi(m \norm{\rD}_{1})$ samples from $\rD/\norm{\rD}_{1}$ is equivalent to observing $Y_i \sim \Poi(m \rD[i])$ samples from bucket $i$ independently for all $i$.
    By a Chernoff bound, we can conclude that with high constant probability we have $|H| = \sdom \pm O(\sqrt{\sdom})$ and $|L_+|, |L_-| = \frac{n - \sdom}{2} \pm O(\sqrt{n})$ so that there is a small $\delta < n^{-\Omega(1)}$ such that $1 - \delta \leq \norm{\p}_{1}, \norm{\q}_{1} \leq 1 + \delta$ and so $1 - 2 \delta \leq \norm{\rD}_{1} \leq 1 + 2 \delta$.
    Furthermore, note that since $\norm{\p}_{1}, \norm{\q}_{1}, \norm{\rD}_{1} \in (1 - 2 \delta, 1 + 2 \delta)$, the mass of each element does not change significantly (and in particular neither will higher moment norms of these distributions).
    In particular, for any distribution $\p$, $\norm{\p}_{c}^{c}$ changes by a multiplicative factor of $\norm{\p}_{1}^{c} \leq (1 + \delta)^{c}$ after normalization.
    In our instance, we are concerned with $\delta < n^{-\Omega(1)}$ and $c = O(1)$ so that $(1 + \delta)^{c}$ for constant $c$ is at most an arbitrarily small constant greater than $1$.
    For the remainder of this proof, we will mostly ignore the normalization, commenting on its effect where necessary.

    In either the yes-instance or the no-instance we have $\rD[i] = \frac{3}{4n}$ for $i \not\in H$ and $\rD[i] = \frac{3}{4n} + \frac{1}{4\sdom}$ for $i \in H$.
    In either case the algorithm draws samples from the same distribution $\rD$.
    Thus, without any queries we cannot hope to distinguish $X$ with probability greater than $\frac{1}{2}$.

    To prove a lower bound on the number of queries necessary, we first show that querying singletons yields no information.
    \begin{lemma}
        \label{lemma:query-singleton}
        Consider a bucket $i$ with frequency $Y_i = 1$.
        Let $Z$ denote the whether the query on bucket $i$ reveals $\p$.
        Then $\Pr(Z = \p | X = 0, Y_i = 1) = \Pr(Z = \p | X = 1, Y_i = 1)$.
    \end{lemma}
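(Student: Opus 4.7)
The plan is to directly compute $\Pr(Z = \p \mid Y_i = 1, X = x)$ for $x \in \{0,1\}$ by conditioning on the type $W_i$ of bucket $i$, and exploit the fact that the construction of $\calM_0$ and $\calM_1$ is carefully balanced so that the two expressions match. The whole point of the asymmetric perturbations $4/(3n), 2/(3n)$ and $1/(6n), 5/(6n)$ in $\calM_1$ is that they preserve precisely the right first moments relative to $\calM_0$.

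First I would record two invariants of the construction: (i) the mixture value $\rD[i] = (\p[i] + \q[i])/2$ depends only on whether $i$ is heavy or light, and takes the \emph{same} value under $X=0$ and $X=1$, namely $3/(4n) + 1/(4\sdom)$ for heavy and $3/(4n)$ for light (irrespective of the sub-type $L_+$ vs $L_-$); (ii) the marginal probability of $W_i$ being heavy is $\sdom/n$ and light is $(n-\sdom)/n$ under both $X=0$ and $X=1$, with the light mass splitting equally between the two sub-types. Since $Y_i \mid W_i \sim \Poi(m\rD[i])$ depends only on $\rD[i]$, property (i) implies that the joint law of $(\mathbf{1}[W_i \in H], Y_i)$ is identical under $X=0$ and $X=1$. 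In particular the posterior $\Pr(W_i = H \mid Y_i = 1, X)$ does not depend on $X$, and conditional on $W_i$ being light, $L_+$ and $L_-$ remain equally likely after conditioning on $Y_i = 1$.

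Next I would compute the single-bucket conditional $\Pr(Z = \p \mid W_i = w, Y_i = 1) = \p[i]/(\p[i] + \q[i])$, which depends only on the type $w$. For $w = H$ this ratio is identical under both $X$ by invariant (i). For light buckets under $X=0$ the ratio equals $(1/n)/(3/(2n)) = 2/3$; for $L_+$ under $X=1$ it is $(4/(3n))/(3/(2n)) = 8/9$; and for $L_-$ under $X=1$ it is $(2/(3n))/(3/(2n)) = 4/9$. The crucial numerical fact is $\tfrac12(8/9 + 4/9) = 2/3$, so the average across the two light sub-types under $X=1$ matches the single light value under $X=0$.

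Combining these via the law of total probability, the contributions from heavy buckets match directly, and the contribution from light buckets under $X=1$ is $\Pr(W_i \in \{L_+, L_-\} \mid Y_i = 1) \cdot \tfrac12(8/9 + 4/9) = \Pr(W_i \in \{L_+, L_-\} \mid Y_i = 1) \cdot 2/3$, which equals the $X=0$ contribution term-for-term. The only real obstacle is bookkeeping: verifying that the posterior over sub-types conditioned on $Y_i = 1$ is still symmetric between $L_+$ and $L_-$, which again reduces to invariant (i). Everything else is routine substitution, so the lemma follows.
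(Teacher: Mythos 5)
Your proof is correct, and it reorganizes the paper's computation rather than reproducing it. The paper works directly with the Poisson pmfs: writing $Y_i = A_i + B_i$ with $A_i \sim \Poi(m\p[i]/2)$, $B_i \sim \Poi(m\q[i]/2)$, it computes $\Pr(A_i=0, B_i=1 \mid X=x)$ by summing over bucket types, checks that the two expressions agree (the exponential factors agree because $\rD[i]$ is type-wise identical under $X=0$ and $X=1$, and the linear terms agree because the average of $\q[i]$ over the two light sub-types equals $\frac{1}{2n}$ in both instances), then uses the $X$-independence of $\Pr(Y_i=1)$ to flip to $\Pr(A_i=1,B_i=0\mid X)$ and divides. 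You instead condition on the bucket type: Poisson thinning gives $\Pr(Z=\p \mid W_i = w, Y_i=1) = \p[i]/(\p[i]+\q[i])$, the posterior over types given $Y_i=1$ is $X$-independent (identical priors and identical likelihoods, since $Y_i\mid W_i$ depends only on $\rD[i]$, with the two light sub-types staying equally likely), and the numerical balance $\frac{1}{2}\left(\frac{8}{9}+\frac{4}{9}\right)=\frac{2}{3}$ closes the argument. The two proofs rest on the same two invariants of the construction, and your average-of-ratios identity is equivalent to the paper's average-of-$\q$-masses identity precisely because $\p[i]+\q[i]=\frac{3}{2n}$ is constant across all light sub-types; your version avoids the explicit pmf algebra and makes the design of the hard instance more transparent, at the mild cost of having to justify the posterior-invariance step, which you do correctly.
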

    
    \begin{proof}
        Note that $Y_i \sim \Poi(m\rD[i])$ is distributed identically to $Y_i = A_i + B_i$ where $A_i \sim \Poi(m \p[i]/2), B_i \sim \Poi(m \q[i] / 2)$ are sampled independently.
        Then,
        \begin{align*}
            \Pr(A_i = 0, B_i = 1 | X = 0) &= \frac{\sdom}{n} \exp\left( -\frac{m}{2n} - \frac{m}{4n} - \frac{m}{4\sdom} \right) \left( \frac{m}{4n} + \frac{m}{4\sdom} \right) \\
            &\quad+ \frac{n - \sdom}{n} \exp\left( -\frac{m}{2n} - \frac{m}{4n}\right) \left( \frac{m}{4n} \right) \\
            \Pr(A_i = 0, B_i = 1 | X = 1) &= \frac{\sdom}{n} \exp\left( -\frac{m}{2n} - \frac{m}{4n} - \frac{m}{4\sdom} \right) \left( \frac{m}{4n} + \frac{m}{4 \sdom} \right) \\
            &\quad+ \frac{n - \sdom}{2n} \exp\left( -\frac{4m}{6n} - \frac{m}{12n}\right) \left( \frac{m}{12n} \right) + \frac{n - \sdom}{2n} \exp\left( -\frac{2m}{6n} - \frac{5m}{12n}\right) \left( \frac{5m}{12n} \right)
        \end{align*}
        so that $\Pr(A_i = 0, B_i = 1 | X = 0) = \Pr(A_i = 0, B_i = 1 | X = 1)$.
        Note that $\Pr(Y_i = 1) = \Pr(Y_i = 1 | X = 0) = \Pr(Y_i = 1 | X = 1)$.
        By the total probability rule, we observe that $\Pr(A_i = 1, B_i = 0 | X = 0) = \Pr(A_i = 1, B_i = 0 | X = 1)$.
        Thus, to conclude, we have
        \begin{align*}
            \Pr(Z = \p |X = 0, Y_i = 1) &= \frac{\Pr(A_i = 1, B_i = 0 | X = 0)}{\Pr(Y_i = 1)} \\
            &= \frac{\Pr(A_i = 1, B_i = 0 | X = 1)}{\Pr(Y_i = 1)} \\
            &= \Pr(Z=\p|X=1, Y_i = 1) \text{.}
        \end{align*}
    \end{proof}

    \Cref{lemma:query-singleton} states that regardless of $B$, a singleton query is equally likely to yield $\p$, and so yields no information. 
    Any query that helps the algorithm distinguish $B$ must therefore be on a bucket with multiple samples.
    
    Next, we argue that by querying collisions, most queries will look the same to the algorithm, and therefore without many queries, the algorithm still cannot hope to distinguish $X$.
    To do so, we will argue that $\rD$ generates significantly more collisions than $\p$.
    In fact, for any frequency count $c \geq 2$, $\rD$ will generate significantly more elements with frequency $c$ than $\p$ will generate collisions at all.
    Thus, since the algorithm cannot distinguish any buckets with the same frequency prior to querying, the algorithm must make many queries to detect any sample of $\p$ among the collisions.

    First, we argue that $\p$ does not produce too many collisions.
    In both the yes and no-instance we note that $\norm{\p}^{2} \leq \frac{(1 + \eps)^2}{n}$ and \Cref{lemma:collision-concentration-l2} ensures that with probability at least $0.9999$, the number of $\p$ collisions is at most $1.0001 \frac{m^2}{2} \frac{(1 + \eps)^2}{n} \leq \frac{1.0001 m^2}{n}$ by assuming $\eps < 0.1$.
    Note that to account for normalization by $\norm{\p}_{1}$, we can instead bound the collisions by say $\frac{1.001 m^2}{n}$.

    
    Next, we argue that for every $2 \leq c \leq c^*$, $\rD$ has significantly more $c$-collisions than $\p$ has collisions.
    Fix such a $2 \leq c \leq c^* \leq 100$.
    We begin with a lower bound on the number of $c$-collisions.
    
    \begin{lemma}
        \label{lemma:small-domain-unif-lc-ub}
        For any integer $2 \leq c \leq 100$, $\norm{\rD}_{c}^{c} = \Omega(\sdom^{1 - c})$.
    \end{lemma}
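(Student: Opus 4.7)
The plan is to lower-bound $\norm{\rD}_c^c$ by discarding all contributions except those from heavy buckets, since on each heavy bucket $\rD$ already has mass of order $\sdom^{-1}$, which is much larger than the $\Theta(n^{-1})$ mass on each light bucket. Concretely, for $i \in H$ we have
\[
\rD[i] \;=\; \frac{3}{4n} + \frac{1}{4\sdom} \;\geq\; \frac{1}{4\sdom},
\]
so keeping only the heavy-bucket terms gives the pointwise bound
\[
\norm{\rD}_c^c \;=\; \sum_{i=1}^{n} \rD[i]^c \;\geq\; \sum_{i \in H} \rD[i]^c \;\geq\; |H| \cdot (4\sdom)^{-c}.
\]

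The second step is to plug in the concentration bound on $|H|$ which was already established in the ``Poissonization and Measures'' paragraph of the proof: since each bucket is independently declared heavy with probability $\sdom/n$, we have $\mathbb{E}[|H|] = \sdom$, and a standard Chernoff bound gives $|H| \geq \sdom/2$ with high constant probability (say $0.999$). Substituting this into the display above yields
\[
\norm{\rD}_c^c \;\geq\; \frac{\sdom}{2 \cdot (4\sdom)^{c}} \;=\; \frac{1}{2 \cdot 4^{c}}\,\sdom^{1-c} \;=\; \Omega\!\left(\sdom^{1-c}\right),
\]
with the implicit constant depending only on $c$; since the proof only uses this for $c \leq c^* \leq 100$, this is effectively a universal constant.

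The one nuance worth flagging, already noted in the surrounding text, is that $\rD$ is a non-negative measure with $\norm{\rD}_1 \in (1 - 2\delta, 1 + 2\delta)$ rather than a probability distribution, so after normalization each $\norm{\cdot}_c^c$ shifts by at most a factor of $(1 \pm O(\delta))^c$. For $\delta < n^{-\Omega(1)}$ and $c = O(1)$ this is an arbitrarily small multiplicative perturbation, and thus does not affect the $\Omega(\sdom^{1-c})$ conclusion. Since the argument is a direct pointwise calculation combined with a routine Chernoff concentration step, I don't anticipate any real obstacle; the main thing to be careful about is simply confirming that $\sdom \ll n$ so that the heavy buckets really do dominate the $\ell_c^c$ mass.
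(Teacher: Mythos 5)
Your proof is correct and follows essentially the same route as the paper: restrict the sum to the heavy buckets, use $\rD[i] \geq \frac{1}{4\sdom}$ there, and invoke the already-established concentration $|H| = \Omega(\sdom)$ to conclude $\norm{\rD}_c^c \geq |H|(4\sdom)^{-c} = \Omega(\sdom^{1-c})$. The normalization remark you add is consistent with the paper's earlier "Poissonization and Measures" discussion and does not change the argument.
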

    
    \begin{proof}
        Since $\eps < 0.1$ and $|H| = \Omega(\sdom)$,
        \begin{align*}
            \sum_{i} \rD[i]^{c} &\geq |T| \frac{1}{(4\sdom)^{c}} = \Omega(\sdom^{1-c}) \text{.}
        \end{align*}
    \end{proof}

    Now, since $m \geq 2n^{1 - 1/c^*} \geq 2n^{1-1/c}$, \Cref{lemma:collision-concentration-lc} states that the number of $c$-collisions from $\rD$ is between
    \begin{equation*}
        (1 \pm 0.0001) \binom{m}{c} \norm{r}_{c}^{c} = \bigTh{\frac{m^{c} \sdom^{1 - c}}{c!}}
    \end{equation*}
    with high constant probability.
    Applying the union bound over all $2 \leq c \leq c^*$, the above bound holds for all $c$.
    Furthermore, since $\sdom \gg m$ for a sufficiently large constant, we have that the number of $(c + 1)$-collisions is at most a small fraction of the number of $c$-collisions, and therefore the number of buckets with frequency exactly $c$ is also $\bigOm{m^{c} \sdom^{1 - c}} = \bigOm{m}$ since $\sdom = \Theta(m)$.

    Finally, we argue that $\p$ has few samples in the buckets where $\rD$ has collisions.
    For a sample of $\p$ to appear in a bucket where $\rD$ has a collision, $\p$ must collide either with itself or with $\q$ in this bucket.
    We have already bounded the number of collisions $\p$ causes with itself above.
    It remains, therefore, to upper bound the number of collisions $\p$ has with $\q$.
    The probability of a collision between $\p$ and $\q$ in samples $(i, j)$ can be written as
    \begin{align*}
        \sum_{k \in [n]} 2 \Pr(Z_{i} = \p, Y_{i} = k) \Pr(Z_{j} = \q, Y_{j} = k) &\leq \sum_{k \in H} 2 \frac{1}{n} \left( \frac{1}{2n} + \frac{1}{2\sdom} \right) + \sum_{k \not\in H} 2 \frac{10}{18n^2} \\
        &= O(1/n) \text{.}
    \end{align*}
    Above, we have bounded the first term by observing $|T| = O(\sdom)$.
    In particular, using Markov's inequality, we have that with high constant probability, there are at most $O(m^2/n)$ collisions between $\p$ and $\q$.
    Combining our arguments so far, we have shown that there are at most $O(m^2/n)$ samples from $\p$ in buckets with frequency at most $c^*$.
    However, for all $2 \leq c \leq c^*$, $\rD$ produces at least $\Omega(m)$ buckets with frequency $c$.
    In particular, any query of a bucket with frequency $2 \leq c \leq c^*$ finds a sample of $\p$ with probability at most $\bigO{\frac{m^2/n}{m}} = O(m/n)$.
    The following lemma summarizes the above discussion.

    \begin{lemma}
        \label{lemma:frequency-c-p-prob}
        Let $2 \leq c \leq c^*$.
        Any query on a bucket with frequency $c$ yields $\p$ with probability at most $O(m/n)$.
    \end{lemma}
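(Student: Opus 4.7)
My plan is to combine three bounds that have already been established (or are immediate from the preceding discussion): (a) the number of buckets with observed frequency exactly $c$ is $\Omega(m)$ with high constant probability, (b) the number of $\p$--$\p$ self-collisions is $O(m^2/n)$ with high constant probability (from the $\ell_2$-bound on $\p$ and \Cref{lemma:collision-concentration-l2}), and (c) the number of $\p$--$\q$ cross-collisions is $O(m^2/n)$ with high constant probability (via the explicit expectation calculation preceding the lemma and Markov's inequality). From (b) and (c) I first argue that the total number of $\p$-samples lying in any bucket of frequency at least $2$ is $O(m^2/n)$: such a $\p$-sample is necessarily involved in at least one collision inside its bucket (either with another $\p$-sample or with a $\q$-sample), and each $\p$--$\p$ collision accounts for at most two $\p$-samples while each $\p$--$\q$ collision accounts for one, so the total is bounded by $2 \cdot O(m^2/n) + O(m^2/n) = O(m^2/n)$.

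Next I use exchangeability. The generative model $\calH_U$ independently assigns each bucket $i \in [n]$ a type (heavy, light-$+$, light-$-$) from the same marginal law, so the joint distribution of the per-bucket tuples $(\p[i],\q[i])$ is invariant under permutations of $[n]$. Conditioned on the observed frequency vector $(Y_1,\dots,Y_n)$ and any previously revealed query answers, the unqueried buckets of a common frequency $c$ remain exchangeable, because the prior is symmetric and the likelihood of the observation depends on the types only through multisets within each frequency class. Consequently, for any specific frequency-$c$ bucket the algorithm chooses to query, the conditional probability that it contains a $\p$-sample equals the conditional probability for a uniformly chosen unqueried frequency-$c$ bucket.

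Putting this together: by (a) there are $\Omega(m)$ frequency-$c$ buckets, and by the first step at most $O(m^2/n)$ of them can contain any $\p$-sample. A uniformly chosen (hence any) frequency-$c$ bucket therefore contains a $\p$-sample with probability at most $O(m^2/n)/\Omega(m) = O(m/n)$. Since a query on a bucket with no $\p$-sample trivially does not yield $\p$, the bound follows. The main obstacle I expect is making the exchangeability-under-prior-queries step fully rigorous --- in particular, arguing that the bound continues to hold for every query in the transcript, not only the first. I plan to handle this by induction on the query count: each query removes a single bucket from the exchangeable pool, and the posterior over the remaining bucket-types is still symmetric across the remaining frequency-$c$ buckets because the revealed information is a function of the queried bucket alone.
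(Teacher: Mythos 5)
Your proof is correct and follows the same approach as the paper: bound the total number of $\p$-samples residing in collision buckets by $O(m^2/n)$ (via the $\p$--$\p$ and $\p$--$\q$ collision bounds already established), combine with the lower bound of $\Omega(m)$ frequency-$c$ buckets, and conclude by averaging. The paper states the final step ("any query... finds a sample of $\p$ with probability at most $O(\frac{m^2/n}{m})$") without justification; your exchangeability argument is exactly the symmetry fact that licenses replacing "any query" by "a uniformly random frequency-$c$ bucket," and your observation that the posterior over bucket types factorizes across buckets (since the prior is iid and the Poisson likelihood is a product) is the right way to make the adaptive-query induction go through. This is a genuinely useful elaboration of a step the paper leaves implicit, but it is the same counting argument, not a different route.
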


    To conclude the argument, we show that any bucket with frequency $(c^* + 1)$ or greater is unlikely to have samples from $\p$.
    Towards this, we argue that no bucket outside of $H$ produces any $(c^* + 1)$-collisions. 
    
    \begin{lemma}
        \label{lemma:near-uniform-lc-ub}
        For any integer $c \geq 2$, $\norm{\rD[[n] \setminus H]}_{c}^{c} < \left( \frac{3}{4} \right)^{c} \frac{1}{n^{c - 1}}$.
    \end{lemma}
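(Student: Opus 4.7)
The plan is to prove this bound by a direct computation from the explicit form of $\rD$. By \eqref{eq:simp-unif-mix-dist}, for every index $i \notin H$ we have $\rD[i] = \frac{3}{4n}$, regardless of whether $i \in L_+$ or $i \in L_-$ and regardless of the bit $X$. Hence
\[
\norm{\rD[[n] \setminus H]}_c^c \;=\; \sum_{i \notin H} \rD[i]^c \;=\; (n - |H|)\cdot \left(\frac{3}{4n}\right)^{c}.
\]

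Next I would simply bound $n - |H| \leq n$ (this is deterministic, since $H \subseteq [n]$) to obtain
\[
\norm{\rD[[n] \setminus H]}_c^c \;\leq\; n\cdot \frac{3^c}{4^c n^c} \;=\; \left(\frac{3}{4}\right)^{c}\frac{1}{n^{c-1}},
\]
and the inequality is strict whenever $|H| \geq 1$, which is overwhelmingly the case since $|H| = \sdom \pm O(\sqrt{\sdom})$ with high probability.

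There is essentially no obstacle here since the mass of a light bucket is a fixed constant $\frac{3}{4n}$ by construction. The only minor subtlety is the normalization discussion earlier in the proof: $\p, \q$ are non-negative measures with $\norm{\p}_1, \norm{\q}_1 \in (1-\delta, 1+\delta)$ for $\delta = n^{-\Omega(1)}$. Working with the un-normalized measure (as the rest of the argument does) makes the above calculation exact; if one prefers to normalize, the right-hand side picks up a multiplicative factor of at most $(1+2\delta)^c$, which for $c \leq c^* = O(1)$ is $1 + o(1)$, so the stated bound still holds after trivially absorbing the factor into the constants. Either way, the inequality follows immediately.
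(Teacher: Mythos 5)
Your proof is correct and follows exactly the paper's argument: both compute $\sum_{i \notin H} \rD[i]^c = (n - |H|)(\tfrac{3}{4n})^c$ from the explicit construction and bound $n - |H| \leq n$. Your extra remarks on strictness and normalization are fine but not needed.
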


    \begin{proof}
        The proof is a simple computation.
        \begin{align*}
            \sum_{i \not\in H} \rD[i]^{c} = (n - |H|) \left( \frac{3}{4n} \right)^{c} \leq \left( \frac{3}{4} \right)^{c} \frac{1}{n^{c - 1}} \text{.}
        \end{align*}
    \end{proof}
    
    In particular, by \Cref{cor:c-collision-ub}, we have that a $(c^* + 1)$-collision occurs on $[n] \setminus H$ with probability at most 
    \begin{align*}
        \frac{(3/4)^{c^* + 1}}{n^{c^*}} \binom{m}{c^* + 1} &\leq \frac{(3/4)^{c^* + 1}}{n^{c^*}} \frac{m^{c^* + 1}}{(c^* + 1)! (m - c^* - 1)!} \\
        &< \frac{(3/2)^{c^* + 1}}{(c^* + 1)! (m - c^* - 1)!} \\
        &< 0.0001 
    \end{align*}
    for sufficiently large $n$ since $m = \bigOm{\sqrt{n}}$.
    In the second inequality, we have used that $m < 2n^{1 - 1/(c^* + 1)}$ and therefore $m^{c^* + 1} < 2^{c^* + 1} n^{(c^* + 1) - (c^* + 1)/(c^* + 1)} = 2^{c^* + 1} n^{c^*}$.
    Note that here we have explicitly used $m \leq n^{0.99}$ so that $c^* \leq 100$.
    Assuming $m < n^{1 - \delta}$ for any $\delta > 0$ will require the following arguments to proceed assuming that $\p$ does not produce any $\frac{1}{\delta}$-collisions.
    Thus, any bucket with frequency at least $(c^* + 1)$ must be in $H$.

    Consider then a bucket $i \in H$ with empirical frequency $c > c^*$.
    Each query is distributed as a Bernoulli random variable that equals $\p$ with parameter at most $\frac{\p[i]}{2 \rD[i]} = O(\sdom/n) = O(m/n)$.
    Thus, in buckets with frequency at least $(c^* + 1)$, we can again conclude that any query reveals $p$ with probability at most $O(m/n)$.
    This yields the following lemma.

    \begin{lemma}
        \label{lemma:v-large-frequency-p-prob}
        Let $c > c^*$.
        Any query on a bucket with frequency $c$ yields $\p$ with probability at most $O(m/n)$.
    \end{lemma}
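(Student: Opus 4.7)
The plan is to combine the preceding observation that buckets with frequency exceeding $c^*$ must lie in $H$ (with high probability) with a direct Poisson-thinning calculation on heavy buckets. Concretely, recall that for each $i \in [n]$ we can write $Y_i = A_i + B_i$ where $A_i \sim \Poi(m\p[i]/2)$ and $B_i \sim \Poi(m\q[i]/2)$ are independent, so that a verification query on a uniformly chosen sample in bucket $i$ reveals $\p$ with probability exactly $A_i/Y_i$. The standard Poisson-thinning fact is that, conditioned on $Y_i = c$, the count $A_i$ is distributed as $\mathrm{Binom}(c, \p[i]/(2\rD[i]))$; equivalently, each of the $c$ samples lying in bucket $i$ is, independently, a $\p$-sample with probability $\rho_i := \p[i]/(2\rD[i])$.

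The first step is therefore to condition on the high-probability event (already established immediately before the lemma statement, via \Cref{cor:c-collision-ub} and \Cref{lemma:near-uniform-lc-ub}) that no bucket outside $H$ produces a $(c^*+1)$-collision. Under this event, any bucket with empirical frequency $c > c^*$ satisfies $i \in H$, so it suffices to bound $\rho_i$ for $i \in H$.

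The second step is the plug-in calculation. By \Cref{def:uniformity-hard-instance-simple}, for every $i \in H$ and for both $X = 0$ and $X = 1$, we have
\[
\p[i] = \frac{1}{n}, \qquad \rD[i] = \frac{3}{4n} + \frac{1}{4\sdom},
\]
hence
\[
\rho_i \;=\; \frac{\p[i]}{2\rD[i]} \;=\; \frac{1/n}{\frac{3}{2n} + \frac{1}{2\sdom}} \;\leq\; \frac{1/n}{1/(2\sdom)} \;=\; \frac{2\sdom}{n} \;=\; O\!\left(\frac{m}{n}\right),
\]
where in the last step we used the definition $\sdom = \Theta(m)$ from the instance construction. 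This bound is uniform over $i \in H$ and over the choice of $X$, so any verification query on a bucket of frequency $c > c^*$ reveals $\p$ with probability at most $O(m/n)$, matching the conclusion of \Cref{lemma:frequency-c-p-prob} for the smaller frequencies.

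The only subtlety (and essentially the only place the argument could go wrong) is the Poissonization/normalization issue already flagged in the proof: $\p, \q, \rD$ are non-negative measures with total mass $1 \pm \delta$ for some $\delta = n^{-\Omega(1)}$, so the true conditional-on-$Y_i = c$ sampling is defined with respect to the normalized distributions. However, normalization changes each $\p[i]$ and $\rD[i]$ by a multiplicative factor in $(1-2\delta, 1+2\delta)$, so $\rho_i$ changes by at most a factor $1 + O(\delta)$, which is absorbed into the $O(\cdot)$. This completes the plan.
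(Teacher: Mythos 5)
Your proposal is correct and follows essentially the same route as the paper: it uses the preceding collision-count argument (\Cref{cor:c-collision-ub} together with \Cref{lemma:near-uniform-lc-ub}) to place any bucket of frequency exceeding $c^*$ inside $H$, and then bounds the per-query probability of revealing $\p$ on a heavy bucket by $\p[i]/(2\rD[i]) = O(\sdom/n) = O(m/n)$, exactly as in the paper's proof. The added remarks on Poisson thinning and on the $1+O(\delta)$ normalization factor are consistent with how the paper handles these points.
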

    
    Combining our two bounds, any query on a bucket with frequency at least $c \geq 2$ is $\p$ with probability at most $O(m/n)$.
    In particular, without at least $\Omega(n/m)$ queries, all queries by the algorithm reveal $\q$ and therefore the algorithm cannot distinguish $X$ with any non-trivial advantage.
\end{proof}

\subsubsection{Lower Bound for Uniformity Testing}

We prove our main lower bound for uniformity testing.
In addition to obtaining tight dependence on $\eps, \lambda$ and relaxing the assumption on $m$ (in fact, we only require the assumption $m \ll n$) the techniques used here will prove crucial for obtaining an optimal sample-query trade-off for closeness testing.

Consider an algorithm that successfully distinguishes uniform $\p$ from $\p$ that are $\eps$-far from uniform, i.e. the algorithm guesses the hidden random bit $X$ in the hard instance with non-trivial advantage.
It is a standard fact (see e.g. \cite{diakonikolaskane2016}) that any such algorithm must use information that is correlated with $X$.
In particular, if we denote the input of the algorithm (a collection of samples and queries) as $T$, then $I(X:T) = \Omega(1)$ as formalized below.

\MutualInfoBound*

\begin{proof}
    We give a proof of this standard fact for completeness.
    The conditional entropy $H(X|f(T))$ is the expectation over $f(T)$ of $h(q)$ where $h(q)$ is the binary entropy function and $q$ is the probability that $X = f(T)$ given $f(T)$.
    Since $\EX[q] \geq 0.51$ and $h$ is concave, $H(X|f(A)) \leq h(0.51) < 1 - 2 \cdot 10^{-4}$.
    Then, by the data processing inequality
    \begin{equation*}
        I(X:A) \geq I(X:f(A)) \geq H(X) - H(X|f(A)) \geq 2 \cdot 10^{-4}.
    \end{equation*}
\end{proof}

Our lower bound will then follow by bounding the information obtained by an algorithm that makes few queries.

\UniformityTestingLB*

\begin{proof}
    By standard uniformity testing lower bounds (see e.g. \cite{DBLP:journals/tit/Paninski08}), we may assume $m = \Omega(n^{1/2})$, even if we query all samples.
    Consider the following hard instance.
    Define $\alpha := \frac{\lambda}{1 - \lambda}$.
    Since $\lambda \leq \frac{1}{2}$, we have $\alpha \leq 1$.

    \begin{definition}
        \label{def:uniformity-hard-instance}
        Let $\calM_{0}$ be the distribution over tuples of non-negative measures $(\p, \q)$ generated as follows: for all $i \in [n]$,
        \begin{equation*}
            (\p[i], \q[i]) = \begin{cases}
                \left( \frac{1}{n}, \alpha \frac{4 \eps}{n} + \frac{1 - 4 \eps \alpha}{\dom} \right) & w.p. \quad \frac{\dom}{n} \\
                \left( \frac{1}{n}, \alpha \frac{4 \eps}{n} \right) & w.p. \quad \frac{n - \dom}{2n} \\
                \left( \frac{1}{n}, \alpha \frac{4 \eps}{n} \right) & w.p. \quad \frac{n - \dom}{2n} 
            \end{cases}
        \end{equation*}

        Similarly, let $\calM_{1}$ be the distribution over tuples of non-negative measures $(\p, \q)$ generated as follows: for all $i \in [n]$,
        \begin{equation*}
            (\p[i], \q[i]) = \begin{cases}
                \left( \frac{1}{n}, \alpha \frac{4 \eps}{n} + \frac{1 - 4 \eps \alpha}{\dom} \right) & w.p. \quad \frac{\dom}{n} \\
                \left( \frac{1 + 2\eps}{n}, \alpha \frac{2 \eps}{n} \right) & w.p. \quad \frac{n - \dom}{2n} \\
                \left( \frac{1 - 2\eps}{n}, \alpha \frac{6\eps}{n} \right) & w.p. \quad \frac{n - \dom}{2n} 
            \end{cases}
        \end{equation*}

        The meta-distribution $\calH_{U}$ is the distribution over random tuples of non-negative measures $(\p, \q)$ generated by choosing $X \sim \Bern(1/2)$ and $(\p, \q) \sim \calM_{X}$.
    \end{definition}

    Let $h, l_+, l_-$ denote each type, that is let $W_i$ denote the type of bucket $i$ be independently chosen for all $i$ where $\Pr(W_i = h) = \frac{\dom}{n}$ and $\Pr(W_i = l_+) = \Pr(W_i = l_-) = \frac{n - \dom}{2n}$.
    Let $H, L_+, L_-$ denote the number of buckets of each type.
    We begin by arguing the non-negative measures are close to real distributions.

    \begin{lemma}
        \label{lemma:uniform-dist-norm}
        For sufficiently large $n \geq 1$,
        \begin{equation*}
            \Pr_{(\p, \q) \sim \calH_{U}} \left( \norm{\p}_{1}, \norm{\q}_{1} \in (0.9, 1.1) \right) \geq 1 - \frac{1}{\poly(n)} \text{.}
        \end{equation*}
        Furthermore,
        \begin{equation*}
            \Pr_{(\p, \q) \sim \calM_{1}} \left( \tvd(\p, U_n) > \eps \right) \geq 1 - \frac{1}{\poly(n)} \text{.}
        \end{equation*}
    \end{lemma}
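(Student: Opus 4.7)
The plan is to verify both assertions via direct expectation computations combined with standard concentration inequalities for sums of independent bounded random variables. The construction has been carefully calibrated (note the coefficients $\alpha \cdot 4\eps/n$ and $(1-4\eps\alpha)/\dom$) so that in every scenario $\mathbb{E}[\p[i]] = \mathbb{E}[\q[i]] = 1/n$ for each $i$; this is the first thing I would verify by a one-line calculation in each of $\calM_0, \calM_1$, immediately giving $\mathbb{E}[\norm{\p}_1] = \mathbb{E}[\norm{\q}_1] = 1$. Note also that under $\calM_0$ the identity $\p[i] = 1/n$ holds deterministically, so $\norm{\p}_1 = 1$ exactly, leaving three nontrivial cases.

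For $\norm{\p}_1$ under $\calM_1$ each summand $\p[i]$ is bounded by $O(1/n)$, so Hoeffding's inequality on $n$ independent coordinates gives deviation $O(\sqrt{\log n/n})$ with failure probability $1/\poly(n)$. For $\norm{\q}_1$ in both models, the summands are bounded by $O(1/\dom)$ (the heavy-bucket term $(1-4\eps\alpha)/\dom$ dominates), and their per-coordinate variance is $O(1/(\dom n))$, so Bernstein's inequality gives a deviation of, say, $0.05$ with failure probability $\exp(-\Omega(\dom))$; since $m = \Omega(\sqrt{n})$ by the preceding reduction, this is $1/\poly(n)$. A union bound over the three cases yields the first claim.

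For the total variation bound under $\calM_1$, I would compute $\sum_i |\p[i]-1/n|$ exactly in terms of the type counts $H, L_+, L_-$: heavy buckets contribute $0$, and each light bucket contributes exactly $2\eps/n$, so the sum equals $2\eps(L_+ + L_-)/n = 2\eps(n-H)/n$. By a standard Chernoff bound, $H \leq 1.01\dom$ with probability $1-\exp(-\Omega(\dom))$, and since $\dom \leq n/100$ this forces $(n-H)/n \geq 0.98$. The unnormalized half-sum is therefore at least $(1-o(1))\eps$, which after accounting for normalization by $\norm{\p}_1 = 1 \pm o(1)$ (supplied by the first part) still exceeds $\eps$ provided the construction is read with the appropriate constants (if needed, one simply inflates the $2\eps$ gap in the definition of $\calM_1$ to $(2+\delta)\eps$, losing only constants in the subsequent mutual-information lower bound).

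The main obstacle is not any single calculation but rather the joint bookkeeping: one must use the $\ell_1$-norm concentration and the TVD-lower-bound simultaneously, since the unnormalized measure $\p$ is not literally a distribution. This is handled by plugging the output of the first claim into the denominator of the normalized TVD and verifying that the $o(1)$ multiplicative slack is absorbed with room to spare when $m \leq n/100$ and $\eps < 0.1$. Apart from this coupling, all steps are elementary applications of Hoeffding, Bernstein, Chernoff, and a union bound.
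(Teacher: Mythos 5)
Your proposal is correct and follows essentially the same route as the paper: both claims reduce to concentration of the bucket-type counts, and your Hoeffding/Bernstein bounds on the coordinate sums are equivalent to the paper's Chernoff bounds on $H, L_+, L_-$, since $\norm{\p}_1$, $\norm{\q}_1$ and $\norm{\p - U_n}_1$ are deterministic functions of these counts. Your caveat on the second claim is well taken: the exact computation gives $\tvd(\p, U_n) = \eps (n-H)/n \approx (1 - m/n)\eps$, which under the theorem's hypothesis $m \le n/100$ is only about $0.99\eps$ rather than strictly greater than $\eps$ as the lemma literally asserts (the paper's proof just states that the count concentration suffices for far-from-uniformity without doing this computation), and your fix -- inflating the $2\eps$ gap in $\calM_1$ to $(2+\delta)\eps$, or absorbing the constant into the $\Omega(\cdot)$ of the query lower bound -- is the right repair. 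One small slip: since $m$ may be as large as $n/100$, the deficit is a fixed constant factor, so the half-sum is $(1-\Theta(m/n))\eps$ rather than $(1-o(1))\eps$; this does not change your conclusion.
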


    \begin{proof}
        Let $X = 0$. 
        Note $\norm{\p}_1 = 1$ since $\p = U_n$ always.
        Let $H$ denote the number of items of the first type so that $\norm{\q}_1 = 4 \eps \alpha + \frac{|H|}{\dom}(1 - 4 \eps \alpha)$.
        By standard concentration bounds, observe that $|H| \in \dom \pm 0.001\dom$ and $|L_-|, |L_+| \in \frac{n - \dom}{2} \pm 0.001n$ with high probability with the assumption $m = \Omega(n^{1/2})$.
        Note that these conditions are sufficient to guarantee the desired bound on the $\ell_1$ norm of $\p, \q$.
        
        Now, let $X = 1$. 
        By standard concentration bounds, we have $H = \dom \pm 0.001 \dom$ and $L_+, L_- = \frac{n - \dom}{2} \pm 0.001n$ with probability at least $1 - 1/\poly(n)$.
        For sufficiently large $n$, these conditions guarantee that the $\ell_1$-norm of the measures is in $(0.9, 1.1)$ (again using the assumption without loss of generality that $m = \Omega(n^{1/2})$).
        Similarly, they are sufficient to guarantee that $\p \sim \calM_1$ is sufficiently far from uniform.
    \end{proof}

    Thus, when sampling distributions $(\p, \q)$ from $\calH_{U}$, the distributions $\p, \q$ (and therefore the mixture $\rD = \frac{\p + \q}{2}$) satisfy the desired norm conditions with high probability.
    By \Cref{lem:poisson-sampling}, it suffices to consider a deterministic Poissonized tester $\innerAlg$ with sample complexity $m$.
    We will show that if $\innerAlg$ is a $0.1$-correct $\eps$-uniformity tester, $\innerAlg$ requires $\bigOm{n m^{-1} \eps^{-4} \lambda^{-2}}$ queries.

    In fact, we show a stronger lower bound on the number of distinct buckets that the algorithm $\innerAlg$ must query.
    We assume that when the algorithm queries bucket $i$, it receives the source of all samples on bucket $i$.
    Note that this only strengthens the algorithm $\innerAlg$.

    \begin{lemma}
        \label{lemma:uniformity-bucket-query-lb}
        Suppose $\innerAlg$ is a deterministic Poissonized tester with sample complexity $m$ is $0.1$-correct with respect to $\calH_0, \calH_1$.
        Then, $\innerAlg$ queries $\bigOm{\min(n m^{-1} \eps^{-4} \lambda^{-2}, m)}$ distinct buckets.
        This bound holds even when a single query on bucket $i$ reveals the source of every sample in bucket $i$.
    \end{lemma}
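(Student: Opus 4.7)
The plan is to execute the mutual-information program outlined in \Cref{sec:tech}: lower bound $I(X:T)$ by $\Omega(1)$ using correctness, upper bound it by the number of queried buckets times a per-bucket contribution, and combine. Let $X \in \{0,1\}$ be the hidden bit of $\calH_U$, and let $T = (T_1,\dotsc,T_n)$ denote the algorithm's full view with $T_i = (A_i,B_i)$ if bucket $i$ is queried (the strengthened oracle reveals the exact split of $Y_i$ into $\p$-samples $A_i$ and $\q$-samples $B_i$) and $T_i = Y_i = A_i+B_i$ otherwise. Because $\innerAlg$ is $0.1$-correct, applying \Cref{lemma:mutual-info-bound} to $f = \innerAlg$ gives $I(X:T) \geq 2\cdot 10^{-4}$.

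For the decomposition, I would use Poissonization to decouple buckets. Conditional on $X$, the bucket types $W_i \in \{h,\ell_+,\ell_-\}$ are independent across $i$, and given the types, $A_i \sim \Poi(m\p[i]/2)$, $B_i \sim \Poi(m\q[i]/2)$ are also independent. Hence the $T_i$'s are conditionally independent given $X$, and the standard tensorization of mutual information under conditional independence gives
\begin{equation*}
    I(X:T) \;\leq\; \sum_{i=1}^{n} I(X:T_i)\text{.}
\end{equation*}
For any $i \notin Q$, $T_i = Y_i \sim \Poi(m\rD[i])$; by the construction of \Cref{def:uniformity-hard-instance} (and the calculation in \cref{eq:simp-unif-mix-dist}-style), the marginal law of $\rD[i]$ over $W_i$ is identical under $X=0$ and $X=1$, so $I(X:Y_i)=0$ and only queried buckets contribute. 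Adaptivity of $Q$ is handled by the chain rule: at the moment the algorithm chooses its next bucket, all previously collected unqueried $Y_j$'s are $X$-independent, so conditioning on the prior transcript preserves the per-bucket independence.

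The remaining and central step, which I would isolate as \Cref{lemma:uniformity-one-bucket-mi} and prove via \Cref{claim:MI_asymp}, is the per-bucket bound
\begin{equation*}
    \max_{i} I(X:T_i \mid i \text{ queried}) \;=\; O\!\left( \frac{\eps^{4}\lambda^{2}m}{n} \right)\text{.}
\end{equation*}
Its proof further splits on $W_i$. Heavy buckets ($\Pr[W_i=h] = m/n$) have identical joint law of $(A_i,B_i)$ under $X=0$ and $X=1$ (the definitions of $\calM_0$ and $\calM_1$ agree on the heavy component), so they contribute zero. On light buckets, the $X=1$ distribution is a \emph{balanced} mixture of the $\ell_+$ and $\ell_-$ Poisson-product distributions, whose parameters $(\lambda_A^{\pm},\lambda_B^{\pm})$ differ from the $X=0$ light parameters $(\lambda_A,\lambda_B) = (m/(2n),\,2\alpha\eps m/n)$ by $(\pm \eps m/n,\,\pm \alpha\eps m/n)$ respectively, where $\alpha = \lambda/(1-\lambda) = \Theta(\lambda)$. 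Because the mixture is balanced, the first-order Taylor expansion of the Poisson-product density in these perturbations cancels, leaving a second-order remainder proportional to $\eps^{2}$; squaring inside the chi-square of \Cref{claim:MI_asymp} then yields the advertised $\eps^{4}$ rate, and carrying through the resulting "second-order Fisher"-type moment sum of the Poisson product produces the $\lambda^{2}\cdot m/n$ prefactor. Combining with $I(X:T) \geq \Omega(1)$ yields $|Q| = \Omega(n/(m\eps^{4}\lambda^{2}))$, and the trivial cap $|Q| \leq m$ (each of at most $m$ samples lies in exactly one bucket) gives the stated $\min$.

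The main obstacle is the per-bucket computation. A naive joint chi-square of $(A_i,B_i)$ only yields $O(\eps^{2}\lambda^{2}(m/n)^{2})$ per bucket, because the relative perturbation in $\lambda_B$ between $\ell_+$ and $\ell_-$ is a fixed $\pm 1/2$ rather than an $\eps$-small quantity. Extracting the fourth-order $\eps^{4}$ rate requires exploiting the fact that both parameter perturbations $(\lambda_A^{\pm}-\lambda_A,\lambda_B^{\pm}-\lambda_B)$ are jointly proportional to $\eps$, and using the balanced-mixture structure to kill the linear Taylor term before squaring. The second technical subtlety is adaptivity: formalizing that the set $Q$ being chosen as a function of the already-observed $Y_j$'s does not inflate the information bound, which I would handle by coupling the adaptive algorithm to a non-adaptive one that fixes $Q$ in advance based on independent randomness, or equivalently, by a chain-rule telescoping that uses the zero-marginal-information of unqueried buckets at each step.
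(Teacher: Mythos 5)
Your overall framework (correctness $\Rightarrow I(X:T)=\Omega(1)$, tensorization over buckets, zero information from unqueried buckets, a per-bucket bound of $O(\eps^4\lambda^2 m/n)$ for queried buckets) matches the paper's, but the central step — the per-bucket bound — has a genuine gap, and it is exactly where your sketch diverges from what the paper actually does. You propose to condition on the bucket type $W_i$, discard heavy buckets (correct: their $(A_i,B_i)$ law is identical under $X=0,1$), and extract the bound from a light-bucket chi-square with balanced-mixture Taylor cancellation, claiming the $\lambda^2 m/n$ prefactor "carries through." It does not: carried out correctly, the light-bucket computation gives differences of order $\eps^2(\lambda m/n)^2$ on outcomes with $a+b=2$ against denominators of order $(\lambda m/n)^2$, i.e.\ a per-bucket contribution $O(\eps^4\lambda^2 m^2/n^2)$, not $O(\eps^4\lambda^2 m/n)$. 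That "too good" bound would yield a query lower bound $\Omega(n^2 m^{-2}\eps^{-4}\lambda^{-2})$, contradicting the matching upper bound of \Cref{thm:simple-uniformity-alg} — a sign that this decomposition is not bounding the right quantity.

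The right quantity, and the crux the paper's \Cref{lemma:uniformity-one-bucket-mi} is built around, is the information revealed by a query \emph{conditioned on the already-observed count} $Y_i=y$: the tester knows all counts for free (they are independent of $X$), and a deterministic Poissonized tester adaptively spends its queries on buckets exhibiting collisions. This is why the paper works with $f_{a,b}(x)=\Pr(A_i=a,B_i=b\mid Y_i=y,X=x)$ and the $1/h_y$ factor: conditioned on $y\ge 2$, the bucket is heavy with probability $1-O(m/n)$ (heavy buckets dominate $h_y$ and reveal nothing), and only with probability $O(m/n)$ is it a light bucket whose split carries $O(\eps^4\lambda^2)$ of information — this is the true source of the $m/n$ dilution, not the light-bucket Fisher-type sum. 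Your proposed fixes for adaptivity (coupling to a non-adaptive tester with $Q$ fixed in advance, or an unspecified chain-rule telescoping) do not repair this: with an unconditional per-bucket bound, "fix $Q$ in advance" understates what an adaptive tester gains by querying collision buckets, and the resulting bound is inconsistent as noted above. To complete the proof you need to (i) bound $\max_y I(X:A_i,B_i\mid Y_i=y)=O(\eps^4\lambda^2 m/n)$, which requires the explicit lower bound on $h_y$ from the heavy component and the upper bound on the light-component differences (the paper's \Cref{clm:uniformity-h-lb}, \Cref{clm:uniformity-g-diff-ub}, \Cref{clm:uniformity-g-lb}, summed over $a+b=y$ with $y=O(\log n)$), and (ii) tensorize conditionally on the count vector so that adaptive bucket selection is covered.
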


    \begin{proof}
        Suppose $nm^{-1}\eps^{-4}\lambda^{-2} = o(m)$.
        Otherwise, $\innerAlg$ can simply query all samples by querying at most $m$ buckets.
        Since $\innerAlg$ is a Poissonized tester, $\innerAlg$ receives independently $Y^{(1)}_i \sim \Poi(m \rD[i])$ samples on bucket $i$, where $\rD[i] = \frac{\p[i] + \q[i]}{2}$.
        Note that we can equivalently sample $Y_i = A_i + B_{i}$ where $A_{i} \sim \Poi(\lambda m \p[i]), B_{i} \sim \Poi((1 - \lambda) m \q[i])$ are sampled independently.
        Of course, the algorithm (without queries) only observes $Y_{i}$.
        If the algorithm queries a bucket $i$, then the algorithm learns $A_{i}, B_{i}$ as well.
        Suppose the algorithm queries a set of buckets $Q \subset [n]$.
        Let $T_i$ denote the information observed by the algorithm from the $i$-th bucket, i.e. $Y_{i}$ if $i \not\in Q$ and $A_{i}, B_{i}$ if $i \in Q$.
        Let $T = (T_1, \dotsc, T_n)$.
        
        Then, the algorithm $\innerAlg$ is a function such that $\innerAlg(T) = X$ with at least $51\%$ probability, and therefore $I(X:T) \geq 2 \cdot 10^{-4}$ by \Cref{lemma:mutual-info-bound}.
        Suppose for contradiction that $\innerAlg$ queries $q = |Q| = o(n m^{-1} \eps^{-4})$ buckets.
        Since $T_i$ are independent conditioned on $X$, we have
        \begin{equation*}
            I(X:T) \leq \sum_{i = 1}^{n} I(X:T_{i}) \text{.}
        \end{equation*}

        For $i \not\in Q$, we claim $I(X:T_{i}) = I(X:Y_{i}) = 0$.
        In particular, for all $i \in [n]$, we have $Y_{i} \sim \Poi(m \rD[i])$ where
        \begin{equation}
            \label{eq:uniformity-mix-dist}
            \rD_{j}[i] = \begin{cases}
                \lambda \frac{1 + 4 \eps}{n} + (1 - \lambda) \frac{1 - 4 \eps \alpha}{\dom} & w.p.\quad \frac{\dom}{n} \\
                \lambda \frac{1 + 4 \eps}{n} & w.p. \quad \frac{n - \dom}{n}
            \end{cases}
        \end{equation}
        regardless of $X$, proving the claim.
        Thus, the samples reveal no information it suffices to bound the information gained from the queries via $\sum_{i \in Q}^{n} I(X:T_{i}) = \sum_{i \in Q}^{n} I(X:A_{i}, B_{i})$.

        Before proceeding with the mutual information bound, we make the simplifying assumption that no buckets produce large collisions. 
        \begin{claim}
            \label{clm:uniformity-p-freq-ub}
            \begin{equation*}
                \Pr\left( \max_{i} Y_i = O(\log n)\right) > 1 - \frac{1}{\poly(n)}
            \end{equation*}
        \end{claim}

        \begin{proof}
            

            Note that each $Y_i$ is a Poisson variable with parameter at most $m \rD[i] \leq 2$ from \eqref{eq:uniformity-mix-dist}.
            The result follows from \Cref{lemma:poisson-concentration} and a union bound. 
        \end{proof}

        Let $E$ denote the event that the conditions of \Cref{lemma:uniform-dist-norm} and \Cref{clm:uniformity-p-freq-ub} are met.

        \begin{lemma}
            \label{lemma:uniformity-one-bucket-mi}
            For all $i$, we have
            \begin{align*}
                I(X:A_{i}, B_{i}) = \bigO{\frac{\eps^{4} \lambda^{2} m}{n}} \text{.} 
            \end{align*}
        \end{lemma}

        \begin{proof}
            Note $A_i, B_i$ are pairs of Poisson random variables that sum up to $Y_{i} = y$.
            By \Cref{claim:MI_asymp}, we have
            \begin{equation*}
                I(X:A_{i}, B_{i}) = \bigO{\sum_{a, b} \frac{\left( f_{a, b}(0) - f_{a, b}(1) \right)^{2}}{f_{a, b}(0) + f_{a, b}(1)}} 
                = \bigO{\sum_{a + b = y} \frac{\left( f_{a, b}(0) - f_{a, b}(1) \right)^{2}}{f_{a, b}(0) + f_{a, b}(1)}} \text{.} 
            \end{equation*}
            where $f_{a, b}(x) = \Pr(A_{i} = a, B_{i} = b | Y_{i} = y, X = x, E)$ and we note that unless $a + b = y$, $f_{a, b}(x) = 0$.
            We now compute $f_{a, b}(x)$ explicitly.
            \begin{align*}
                f_{a, b}(x) &= \frac{\Pr(A_{i} = a, B_{i} = b ,Y_{i} = y | X = x, E)}{\Pr(Y_{i} = y | X = x, E)} \\
                &= \begin{cases}
                    \frac{\Pr(A_{i} = a, B_{i} = b| X = x, E)}{\Pr(Y_{i} = y | X = x, E)} & \ifT a + b = y \\
                    0 & \otherwise
                \end{cases}
            \end{align*}
            Define $g_{a, b}(x) = \Pr(A_{i} = a, B_{i} = b| X = x, E)$ and $h_{y}(x) = \Pr(Y_{i} = y | X = x, E)$.
            Since samples are independent of $X$, define $h_{y} := h_{y}(0) = h_{y}(1)$.
            Then, we can rewrite 
            \begin{equation*}
                I(X:A_{i}, B_{i})
                = \bigO{\frac{1}{h_{y}} \sum_{a + b = y} \frac{\left( g_{a, b}(0) - g_{a, b}(1) \right)^{2}}{g_{a, b}(0) + g_{a, b}(1)}} \text{.} 
            \end{equation*}

            We require the following technical claims.
            
            \begin{restatable}{claim}{UniformityHLB}
                \label{clm:uniformity-h-lb}
                \begin{equation*}
                    h_{y} = \bigOm{\frac{\dom}{n} \frac{(1 - \lambda)^{y} (1 - 4 \eps \alpha)^{y}}{y!}} \text{.}
                \end{equation*}
            \end{restatable}

            \begin{restatable}{claim}{UniformityGDiffUB}
                \label{clm:uniformity-g-diff-ub}
                If $y = a + b \leq 1$,
                then $g_{a, b}(0) = g_{a, b}(1)$.
                Otherwise $y = a + b \geq 2$
                and 
                \begin{align*}
                    (g_{a, b}(0) - g_{a, b}(1))^{2} &= \bigO{\left( \frac{\lambda^{a + b}}{a! b!} \frac{m^{a + b}}{n^{a + b}} \right)^{2}} \cdot \begin{cases}
                       a^{4} \eps^{4} & b = 0, a \geq 2 \\
                       a^{2} \eps^{4} & b = 1, a \geq 1 \\
                       (6 \eps)^{2b}(1 + a^2 \eps^2)^2 & b \geq 2, a \geq 0
                    \end{cases} \text{.}
                \end{align*}
                
            \end{restatable}

            \begin{restatable}{claim}{UniformityGLB}
                \label{clm:uniformity-g-lb}
                For $x \in \set{0, 1}$, 
                \begin{equation*}
                    g_{a, b}(x) = \bigOm{\frac{1}{a!b!}} \begin{cases}
                        \frac{m^{y} \lambda^{y}}{n^{y}} & b = 0 \\
                        \frac{\lambda^{a} (1 - \lambda)^{b} m^{a + 1} (1 - 4 \eps \alpha)^{b}}{n^{a + 1}} & b \geq 1 
                    \end{cases}\text{.}
                \end{equation*}
            \end{restatable}

            We will prove these claims at the end of this section.
            Recall that our goal is to bound,
            \begin{align*}
                I(X:A_i, B_i) &= \bigO{\frac{1}{h_{y}} \sum_{b = 0, a = y - b}^{y} \frac{(g_{a, b}(0) - g_{a, b}(1))^{2}}{g_{a, b}(0) + g_{a, b}(1)}} \text{.}
            \end{align*}
            
            To bound the summation, we will consider some cases separately.
            Let $\barI_{a, b} := \frac{1}{h_{y}} \frac{(g_{a, b}(0) - g_{a, b}(1))^{2}}{g_{a, b}(0) + g_{a, b}(1)}$.
            We bound a few special values of $\barI_{a, b}$.
            Applying \Cref{clm:uniformity-h-lb}, \Cref{clm:uniformity-g-diff-ub}, and \Cref{clm:uniformity-g-lb}, we obtain
            \begin{align*}
                 \barI_{y, 0} &= \bigO{\frac{\lambda^{2y}m^{2y} y^4 \eps^4}{(y!0!)^2n^{2y}} \cdot \frac{y!0!n^{y}}{m^{y} \lambda^{y}} \cdot \frac{y! n}{m(1 - \lambda)^{y}(1 - 4 \eps \alpha)^{y}}} = \bigO{\frac{\lambda^{y} y^{4} m^{y - 1} \eps^{4}}{n^{y - 1}(1 - 4 \eps \alpha)^{y} (1 - \lambda)^{y}}} = \bigO{\frac{m \eps^{4} \lambda^{2}}{n}}
            \end{align*}
            where we obtain the first inequality by canceling out like terms, then  we obtain the final bound by observing that the quantity is decreasing in $y$ since $m \ll n$ and $y \geq 2$.
            We require $\frac{\lambda m}{n(1 - \lambda)(1 - 4 \eps \alpha)} < 1$ for which it suffices to choose $m \leq n/2$.
            Similarly,
            \begin{align*}
                 \barI_{y - 1, 1} &= \bigO{\frac{\lambda^{2y}m^{2y} a^2 \eps^4}{((y - 1)!1!)^2n^{2y}} \cdot \frac{(y - 1)!1!n^{y}}{m^{y} \lambda^{y - 1} (1 - \lambda)(1 - 4 \eps \alpha)} \cdot \frac{y! n}{m(1 - \lambda)^{y}(1 - 4 \eps \alpha)^{y}}} \\
                 &= \bigO{\frac{y! a^{2} m^{y - 1} \eps^{4} \lambda^{y + 1}}{(y - 1)! 1! n^{y - 1} (1 - \lambda)^{y + 1} (1 - 4 \eps \alpha)^{y + 1}}} \\
                 &= \bigO{\frac{y^{3} m^{y - 1} \eps^{4} \lambda^{y + 1}}{n^{y - 1} (1 - \lambda)^{y + 1} (1 - 4 \eps \alpha)^{y + 1}}} \\
                 &= \bigO{\frac{m \eps^{4} \lambda^{3}}{n}} \text{.}
            \end{align*}
            Here we observed that $\frac{y! a^2}{(y-1)! 1!} \leq y^{3}$, and again used that the quantity is decreasing in $y$.
            In particular, we require $\frac{\lambda m}{(1 - \lambda) (1 - 4 \eps \alpha) n} < \frac{1}{8}$\footnote{This is to ensure the multiplicative increase from the $y^3$ term does not overrule the multiplicative decrease from the left hand side.} for which $m < n/16$ suffices.
            The final bound follows by choosing $y = 2$.
            Then, for $b \geq 2$, we have 
            \begin{align*}
                 \barI_{a, b} &= \bigO{\frac{\lambda^{2y}m^{2y} (6 \eps)^{2b} (1 + a^2 \eps^2)}{(a!b!)^2 n^{2y}} \cdot \frac{a!b!n^{a + 1}}{m^{a + 1} \lambda^{a} (1 - \lambda)^{b}(1 - 4 \eps \alpha)^{b}} \cdot \frac{y! n}{m(1 - \lambda)^{y}(1 - 4 \eps \alpha)^{y}}} \\
                 &= \bigO{\frac{y! \lambda^{2y - a} m^{2y - a - 2} (6 \eps)^{2b} (1 + a^2 \eps^2)}{a! b! n^{2y - a - 2} (1 - \lambda)^{y + b} (1 - 4 \eps \alpha)^{y}}} \\
                 &= \bigO{\frac{2^{y} \alpha^{2b + a} m^{2y - a - 2} (6\eps)^{2b}(1 + a^2 \eps^2)}{n^{2y - a - 2} (1 - 4 \eps \alpha)^{y}}} \\
                 &= \bigO{\frac{m^{y + b - 2}\alpha^{2b + a}}{n^{y + b - 2}} \left( \frac{2}{1 - 4 \eps \alpha} \right)^{y} (6 \eps)^{2b}(1 + a^2 \eps^2)} \\
                 &= \bigO{\frac{m^{y + b - 2}\alpha^{y + b}}{n^{y + b - 2}} \left( \frac{2}{1 - 4 \eps \alpha} \right)^{y} (6 \eps)^{2b}(1 + y^2 \eps^2)} \\
                 &= \bigO{\frac{m^{y}\alpha^{y + 2}}{n^{y}} \left( \frac{2}{1 - 4 \eps \alpha} \right)^{y} \eps^{4} (1 + y^2 \eps^2)} \\
                 &= \bigO{\frac{m^{2} \eps^{4} \lambda^{4}}{n^{2}}} \text{.}
            \end{align*}
            In the first inequality we group like terms.
            in the second we use $\frac{y!}{a!b!} \leq 2^{y}$, $a + b = y$ and $\alpha = \frac{\lambda}{1 - \lambda}$.
            In the third we again group like terms and use $a + b = y$.
            In the fourth, we use $a \leq y$.
            In the fifth, we note that for fixed $y$, the quantity is decreasing in $b$ since $\frac{m}{n} (6 \eps)^{2} \alpha < 1$.
            Thus, we may assume $b = 2$.
            In the final bound, we observe that the quantity is decreasing in $y$ following a similar argument as the previous case, and thus assume $y = b = 2$.
            We conclude by observing $\alpha = O(\lambda)$.

            Since $y = O(\log n)$,
            Then, since $\frac{m^2 \eps^{4} \log n \lambda^{4}}{n^2} = \litO{\frac{m \eps^{4} \lambda^{2}}{n}}$ we have that
            \begin{equation*}
                I(X:A_i, B_i) = \bigO{\barI_{y, 0} + \barI_{y - 1, 1} + \sum_{b = 2}^{y} \barI_{a, b}} = \bigO{\frac{m \eps^{4} \lambda^{2}}{n}} \text{.}
            \end{equation*}
        \end{proof}

        Given \Cref{lemma:uniformity-one-bucket-mi}, we conclude that 
        \begin{equation*}
            I(X:T) = \bigO{q \frac{\eps^4 \lambda^{2} m}{n}} = o(1) 
        \end{equation*}
        which contradicts $I(X:T) \geq 2 \cdot 10^{-4}$.
        Thus, we must have $q = \bigOm{\frac{n}{m \eps^{4} \lambda^{2}}}$.
    \end{proof}

    It remains to prove the omitted claims, which we prove below.
    This concludes the proof of \Cref{thm:dist-contamination-lb-eps}.
\end{proof}

\UniformityHLB*

\begin{proof}
    From \eqref{eq:uniformity-mix-dist} we have
    \begin{align*}
        h_{y} &= \frac{\dom}{n} \exp \left( - \frac{m \lambda (1 + 4 \eps)}{n} - (1 - \lambda)(1 - 4 \eps \alpha) \right)\frac{\left(\frac{m \lambda (1 + 4 \eps)}{n} + (1 - \lambda) (1 - 4 \eps \alpha)\right)^{y}}{y!} \\
        &\quad+\frac{n - \dom}{n} \exp \left( - \frac{m \lambda (1 + 4 \eps)}{n} \right) \frac{\left(\frac{m \lambda (1 + 4 \eps)}{n}\right)^{y}}{y!} \\
        &= \bigOm{\frac{\dom}{n} \frac{(1 - \lambda)^{y}(1 - 4 \eps \alpha)^{y}}{y!}} \text{.}
    \end{align*}
    Note that we obtain the lower bound by observing that $\exp \left( -\frac{m\lambda(1 + 4 \eps)}{2n} - \frac{(1 - \lambda)(1 - 4 \eps)}{2} \right) = \bigOm{1}$ from our assumptions $m \leq n$, $\lambda \leq \frac{1}{2}$ and $\eps < 0.1$.
\end{proof}

\UniformityGDiffUB*

\begin{proof}
    From \Cref{def:uniformity-hard-instance}, we use the definition of the Poisson distribution and write
    \begin{align*}
        g_{a, b}(0) &= \frac{\dom}{n} \exp \left( - \frac{m \lambda}{n} \right) \frac{(\frac{m\lambda}{n})^{a}}{a!} \exp \left( - \frac{4 \eps \lambda m}{n} - (1 - \lambda)(1 - 4 \eps \alpha) \right) \frac{\left(\frac{4 \eps \lambda m}{n} + (1 - \lambda)(1 - 4 \eps \alpha) \right)^{b}}{b!} \\
        &\quad+ \frac{n - \dom}{n} \exp \left( - \frac{m \lambda}{n} \right) \frac{(\frac{m \lambda}{n})^{a}}{a!} \exp \left( - \frac{4 \eps \lambda m}{n}\right) \frac{\left(\frac{4 \eps \lambda m}{n} \right)^{b}}{b!} \\
        &= \frac{\dom}{n} \exp \left(-\frac{m \lambda(1 + 4 \eps)}{n} - (1 - \lambda)(1 - 4 \eps \alpha) \right) \frac{m^{a} \lambda^{a}}{n^{a}} \frac{1}{a!b!} \left( \left(\frac{4 \eps \lambda m}{n} + (1 - \lambda)(1 - 4 \eps \alpha)\right)^{b} \right) \\
        &\quad+ \frac{n - \dom}{n} \exp \left(-\frac{m \lambda (1 + 4 \eps)}{n}\right) \frac{m^{a + b} \lambda^{a + b}}{n^{a + b}} \frac{1}{a!b!} \left( (4\eps)^{b} \right) \text{.}
    \end{align*}
    Note we have used $\lambda = \alpha (1 - \lambda)$ above.
    Following similar computations,
    \begin{align*}
        g_{a, b}(1) &= \frac{\dom}{n} \exp \left( - \frac{\lambda m}{n} \right) \frac{(\frac{m \lambda}{n})^{a}}{a!} \exp \left( - \frac{4 \eps \lambda m}{n} - (1 - \lambda)(1 - 4\eps \alpha) \right) \frac{\left( \frac{4 \eps \lambda m}{n} + (1 - \lambda)(1 - 4 \eps \alpha)\right)^{b}}{b!} \\
        &\quad+ \frac{n - \dom}{2n} \exp \left( - \frac{\lambda m(1 + 2\eps)}{n} \right) \frac{(\frac{\lambda m(1 + 2\eps)}{n})^{a}}{a!} \exp \left( - \frac{2\eps \lambda m}{n}\right) \frac{\left(\frac{2\eps \lambda m}{n} \right)^{b}}{b!} \\
        &\quad+ \frac{n - \dom}{2n} \exp \left( - \frac{\lambda m(1 - 2\eps)}{n} \right) \frac{(\frac{\lambda m(1 - 2\eps)}{n})^{a}}{a!} \exp \left( - \frac{6 \eps \lambda m}{n}\right) \frac{\left(\frac{6 \eps \lambda m}{n} \right)^{b}}{b!} \\
        &= \frac{\dom}{n} \exp \left( -\frac{\lambda m(1 + 4 \eps)}{n} - (1 - \lambda)(1 - 4 \eps) \right) \frac{m^{a} \lambda^{a}}{n^{a}} \frac{1}{a!b!} \left( \left(\frac{4\eps \lambda m}{n} + (1 - \lambda) (1 - 4 \eps \alpha)\right)^{b} \right) \\
        &\quad+ \frac{n - \dom}{2n} \exp \left( -\frac{\lambda m(1 + 4 \eps)}{2n} \right) \frac{m^{a + b} \lambda^{a + b}}{n^{a + b}} \frac{1}{a!b!} \left( (1 + 2\eps)^{a} (2\eps)^{b} + (1 - 2\eps)^{a} (6 \eps)^{b} \right)
        \text{.}
    \end{align*}

    If $a + b \leq 1$, note that $g_{a, b}(0) = g_{a, b}(1)$.
    Thus, we may assume that $a + b \geq 2$ (in particular, in the following assume that $y \geq 2$).
    Note that the contribution from the term where $W_i = h$ is $0$ as $A_i, B_i$ are identically distributed in this case.

    Thus, in order to bound $(g_{a, b}(0) - g_{a, b}(1))^{2}$, we bound the following quantity.
    \begin{align*}
        \eta_{a, b} = \eta_{a, b}(\eps) &:=  \left| 2 (4 \eps)^{b} - \left( (1 + 2 \eps)^{a} (2 \eps)^{b} + (1 - 2 \eps)^{a} (6 \eps)^{b} \right)  \right| \text{.}
    \end{align*}

    We begin by showing
    \begin{equation}
        \label{eq:uniformity-eta-bound}
        \eta_{a, b} = \begin{cases}
            \bigO{a^2 \eps^2} & b = 0, a \geq 2 \\
            \bigO{a \eps^2} & b = 1, a \geq 1 \\
            \bigO{(6 \eps)^{b} (1 + a^2 \eps^2))} & b \geq 2
        \end{cases}\text{.}
    \end{equation}
    If $b = 0$, we have $a \geq 2$ and
    \begin{align*}
        \eta_{a, b} = \left| 2 - ((1 + 2\eps)^{a} + (1 - 2\eps^{a})) \right| = \left| 2 - (1 + 2 a \eps + O(a^2 \eps^2) + 1 - 2 a \eps + O(a^2 \eps^2)) \right| = O(a^2 \eps^2) \text{.}
    \end{align*}
    If $b = 1$ and $a \geq 2$ then
    \begin{align*}
        \eta_{a, b} &= \left| 8 \eps - ((1 + 2 \eps)^{a} (2 \eps) + (1 - 2 \eps)^{a} (6 \eps)) \right| = \left| 8 \eps - (2 \eps + 4 a \eps^2 + O(a^2 \eps^3) + 6 \eps - 12 a \eps^2 + O(a^2 \eps^3)) \right| \\
        &= O(a \eps^2) \text{.}
    \end{align*}
    Note that the same bound holds when $a = 1$ (in fact we simply omit the $O(a^2 \eps^3)$ contribution.
    Then, assume $b \geq 2$.
    By the triangle inequality we have
    \begin{align*}
        \eta_{a, b} &\leq 2 (2\eps)^{b} + (1 + \eps)^{a} \eps^{b} + (1 - \eps)^{a} (3 \eps)^{b} \\
        &\leq 2 (2 \eps)^{b} + (3 \eps)^{b} \left( (1 + \eps)^{a} + (1 - \eps)^{a} \right) \text{.}
    \end{align*}
    If $a < 2$, we may bound the above by $O((3 \eps)^{b})$.
    On the other hand, if $a \geq 2$, we observe that
    \begin{align*}
        \eta_{a, b} &\leq 2 (4 \eps)^{b} + (6 \eps)^{b} \left (1 + 2 a \eps + O(a^2 \eps^2) + 1 - 2 a \eps + O(a^2 \eps^2) \right) \\
        &= \bigO{(6 \eps)^{b} (1 + a^2 \eps^{2})} \text{.}
    \end{align*}

    Given our bounds on $\eta_{a, b}$, we can conclude
    \begin{align*}
        (g_{a, b}(0) - g_{a, b}(1))^{2} &= \bigO{\left( \frac{1}{a! b!} \frac{m^{a + b} \lambda^{a + b}}{n^{a + b}} \eta_{a, b} \right)^{2}} \text{.}
    \end{align*}

    Applying our bounds on $\eta_{a, b}$ from \eqref{eq:uniformity-eta-bound} we have
    \begin{align*}
        (g_{a, b}(0) - g_{a, b}(1))^{2} &= \bigO{\left( \frac{\lambda^{a + b}}{a! b!} \frac{m^{a + b}}{n^{a + b}} \right)^{2}} \cdot \begin{cases}
           a^{4} \eps^{4} & b = 0, a \geq 2 \\
           a^{2} \eps^{4} & b = 1, a \geq 1 \\
           (6 \eps)^{2b}(1 + a^2 \eps^2)^2 & b \geq 2, a \geq 0
        \end{cases} \text{.}
    \end{align*}
\end{proof}

\UniformityGLB*

\begin{proof}
    Following \Cref{clm:uniformity-g-diff-ub}, we begin with
    \begin{align*}
        g_{a, b}(0) &= \bigOm{\frac{1}{a!b!} \left( \frac{m^{a + 1} \lambda^{a} (1 - \lambda)^{b} (1 - 4 \eps \alpha)^{b}}{n^{a + 1}} +  \frac{m^{y} \lambda^{y} (4 \eps)^{b}}{n^{y}} \right)} \text{.}
    \end{align*}
    Next, we bound
    \begin{align*}
        g_{a, b}(1) &= \bigOm{\frac{1}{a!b!} \left( \frac{m^{a + 1} \lambda^{a} (1 - \lambda)^{b}(1 - 4 \eps \alpha)^{b}}{n^{a + 1}} + \frac{m^{y} \lambda^{y} (2 \eps)^{b}}{n^{y}} \right)} \text{.}
    \end{align*}
    When $b = 0$, we obtain the bound
    \begin{align*}
        g_{a, b}(x) &= \bigOm{\frac{1}{a!b!} \frac{m^{y} \lambda^{y}}{n^{y}}}
    \end{align*}
    and when $b \geq 1$, we obtain the bound
    \begin{align*}
        g_{a, b}(x) &= \bigOm{\frac{\lambda^{a} (1 - \lambda)^{b}}{a!b!} \left( \frac{m^{a + 1} (1 - 4 \eps \alpha)^{b}}{n^{a + 1}} \right)} \text{.}
    \end{align*}
\end{proof}

\subsection{Closeness Testing}

In this section, we give our lower bound for closeness testing.

\ClosenessTestingLB*

\begin{proof}
    By standard closeness testing lower bounds (see e.g. \cite{valiant2008testing}), we may assume $m = \Omega(n^{2/3}\lambda^{-1})$, even if we query all samples.
    This follows as any algorithm must observe at least $\bigOm{n^{2/3}}$ samples from $\p$ which requires $\bigOm{n^{2/3} \lambda^{-1}}$ samples from the mixture $\rD$.

    \paragraph{Hard Instance.}
    Consider the following hard instance.
    As before, define $\alpha := \frac{\lambda}{1 - \lambda} \leq 1$.

    \begin{definition}
        \label{def:closeness-hard-instance}
        Let $\calM_{0}$ be the distribution over tuples of non-negative measures $(\p_1, \q_1, \p_2, \q_2)$ generated as follows: for all $i \in [n]$,
        \begin{equation*}
            (\p_1[i], \q_1[i], \p_2[i], \q_2[i]) = 
            \begin{cases}
                \left( \frac{1 - \eps}{2 \lambda \dom}, 0, \frac{1 - \eps}{2 \lambda \dom}, 0 \right) & w.p. \quad \frac{\lambda \dom}{n} \\
                \left( \frac{1 - \eps}{2(1 - \lambda)\dom}, \frac{1 - \eps\alpha}{(1 - \lambda)\dom}, \frac{1 - \eps}{2(1 - \lambda)\dom}, \frac{1 - \eps\alpha}{(1 - \lambda)\dom} \right) & w.p. \quad \frac{(1 - \lambda) \dom}{n} \\
                \left( \frac{3\eps}{2(n - \dom)}, \frac{\alpha \eps}{2(n - \dom)}, \frac{3\eps}{2(n - \dom)}, \frac{\alpha\eps}{2(n - \dom)} \right)  & w.p. \quad \frac{n - \dom}{2n} \\
                \left( \frac{\eps}{2(n - \dom)}, \frac{3\eps\alpha}{2(n - \dom)}, \frac{\eps}{2(n - \dom)}, \frac{3\eps\alpha}{2(n - \dom)} \right)  & w.p. \quad \frac{n - \dom}{2n}
            \end{cases}
        \end{equation*}

        Similarly, let $\calM_{1}$ be the distribution over tuples of non-negative measures $(\p_1, \q_1, \p_2, \q_2)$ generated as follows: for all $i \in [n]$,
        \begin{equation*}
            (\p_1[i], \q_1[i], \p_2[i], \q_2[i]) = \begin{cases}
                \left( \frac{1 - \eps}{2 \lambda \dom}, 0, \frac{1 - \eps}{2 \lambda \dom}, 0 \right) & w.p. \quad \frac{\lambda \dom}{n} \\
                \left( \frac{1 - \eps}{2(1 - \lambda)\dom}, \frac{1 - \eps\alpha}{(1 - \lambda)\dom}, \frac{1 - \eps}{2(1 - \lambda)\dom}, \frac{1 - \eps\alpha}{(1 - \lambda)\dom} \right) & w.p. \quad \frac{(1 - \lambda) \dom}{n} \\
                \left( \frac{3 \eps}{2(n - \dom)}, \frac{\eps\alpha}{2(n - \dom)}, \frac{\eps}{2(n - \dom)}, \frac{3 \eps\alpha}{2(n - \dom)} \right)  & w.p. \quad \frac{n - \dom}{2n} \\
                \left( \frac{\eps}{2(n - \dom)}, \frac{3 \eps\alpha}{2(n - \dom)}, \frac{3 \eps}{2(n - \dom)}, \frac{\eps\alpha}{2(n - \dom)} \right)  & w.p. \quad \frac{n - \dom}{2n} 
            \end{cases}
        \end{equation*}

        The meta-distribution $\calH_{C}$ is the distribution over random tuples of non-negative measures $(\p_1, \q_1, \p_2, \q_2)$ generated by choosing $X \sim \Bern(1/2)$ and $(\p_1, \q_1, \p_2, \q_2) \sim \calM_{X}$.
    \end{definition}

    Let $h_{\lambda}, h, l_1, l_2$ denote each type, that is let $W_{i}$ denoting the type of bucket $i$ be independently chosen for all $i$ where $\Pr(W_{i} = h_{\lambda}) = \frac{\lambda \dom}{n}, \Pr(W_i = h) = \frac{(1 - \lambda) \dom}{n}$ and $\Pr(W_{i} = l_{1}) = \Pr(W_i = l_2) = \frac{n - \dom}{2n}$.
    We begin by arguing that all measures are close to real distributions.
    Note that by our assumption $m > \frac{1}{\lambda}$, $\frac{1}{\lambda m} < 1$ so all masses are at most $1$.
    
    \begin{lemma}
        \label{lemma:closeness-dist-norm}
        Let $X \in \set{0, 1}$.
        For all $n \geq 1$,
        \begin{equation*}
            \Pr_{(\p_1, \q_1, \p_2, \q_2) \sim \calM_{X}}\left( \norm{\p_1}_{1}, \norm{\q_1}_{1}, \norm{\p_2}_{1}, \norm{\q_2}_{1} \in (0.9, 1.1) \right) \geq 1 - \frac{1}{\poly(n)} \text{.}
        \end{equation*}
    \end{lemma}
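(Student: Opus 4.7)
The plan is to mirror the proof of \Cref{lemma:uniform-dist-norm} given earlier. I would proceed in three steps: (i) verify that each of the four norms has expectation exactly $1$; (ii) show that the counts of the four bucket types are tightly concentrated around their expectations; and (iii) combine these to get the claimed anti-concentration in $(0.9,1.1)$ with probability $1-1/\poly(n)$.

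For step (i), I would observe that each $\p_b[i]$ and $\q_b[i]$ (for $b \in \{1,2\}$) is a discrete random variable taking one of four values determined by the type of bucket $i$, and the same value appears in $\calM_0$ and $\calM_1$ after an exchange. A direct calculation gives, e.g.,
\[
\E[\p_1[i]] = \frac{\lambda \dom}{n}\cdot\frac{1-\eps}{2\lambda \dom} + \frac{(1-\lambda)\dom}{n}\cdot\frac{1-\eps}{2(1-\lambda)\dom} + \frac{n-\dom}{2n}\cdot\frac{3\eps}{2(n-\dom)} + \frac{n-\dom}{2n}\cdot\frac{\eps}{2(n-\dom)} = \frac{1}{n},
\]
and analogous computations (using $\lambda=\alpha(1-\lambda)$, so that $\frac{1-\eps\alpha}{n}+\frac{\alpha\eps}{n}=\frac{1}{n}$) show $\E[\norm{\q_1}_1]=\E[\norm{\p_2}_1]=\E[\norm{\q_2}_1]=1$ in both $\calM_0$ and $\calM_1$.

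For step (ii), let $H_\lambda, H, L_1, L_2$ denote the number of buckets of each of the four types $h_\lambda, h, l_1, l_2$. These are marginals of a multinomial with parameters $\big(\frac{\lambda \dom}{n},\frac{(1-\lambda)\dom}{n},\frac{n-\dom}{2n},\frac{n-\dom}{2n}\big)$, with expectations $\lambda \dom,\ (1-\lambda)\dom,\ \tfrac{n-\dom}{2},\ \tfrac{n-\dom}{2}$. By the standing sample-count lower bound for closeness testing ($m = \bigOm{n^{2/3}/\lambda}$) and with $\dom = \Theta(m)$, we have $\lambda \dom \geq \Omega(n^{2/3}) \gg \log n$, and the other three expectations are even larger. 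A standard Chernoff bound then yields
\[
\Pr\!\left[H_\lambda \notin (1\pm 0.001)\lambda \dom\right],\ \Pr\!\left[H \notin (1\pm 0.001)(1-\lambda)\dom\right],\ \Pr\!\left[L_b \notin (1\pm 0.001)\tfrac{n-\dom}{2}\right] \le \frac{1}{\poly(n)},
\]
and a union bound gives that all four counts concentrate simultaneously with probability $1-1/\poly(n)$.

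For step (iii), writing each norm as a linear combination of the four counts, e.g.,
\[
\norm{\p_1}_1 = H_\lambda\cdot\frac{1-\eps}{2\lambda \dom} + H\cdot\frac{1-\eps}{2(1-\lambda)\dom} + L_1\cdot\frac{3\eps}{2(n-\dom)} + L_2\cdot\frac{\eps}{2(n-\dom)},
\]
and substituting the concentrated values from step (ii), the four contributions evaluate to $\frac{1-\eps}{2}(1\pm 0.001)$, $\frac{1-\eps}{2}(1\pm 0.001)$, $\frac{3\eps}{4}(1\pm 0.001)$, $\frac{\eps}{4}(1\pm 0.001)$, summing to $1\pm O(0.001)$, well inside $(0.9,1.1)$. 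The identical calculation handles $\norm{\q_1}_1,\norm{\p_2}_1,\norm{\q_2}_1$ in both $\calM_0$ and $\calM_1$ (the only change is that some $(l_1,l_2)$ contributions swap between $\p_2,\q_2$, but the sum is unchanged). The only genuine subtlety is making sure $\lambda \dom$ is large enough for the Chernoff on $H_\lambda$ to yield inverse-polynomial failure probability; this is where the assumption $m = \bigOm{n^{2/3}/\lambda}$ is used, and it is the only place where the bound $\lambda \leq \tfrac{1}{2}$ and the choice of $\dom$ interact.
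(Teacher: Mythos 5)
Your proposal is correct and follows essentially the same argument as the paper: decompose each $\ell_1$-norm as a linear combination of the four bucket-type counts $H_\lambda, H, L_1, L_2$, apply Chernoff to concentrate those counts (using $\lambda m = \Omega(n^{2/3}) \gg \log n$), and substitute to get a value near $1$. The only cosmetic difference is that you state multiplicative $(1\pm 0.001)$ concentration while the paper states additive $\pm O(\sqrt{(\cdot)\log n})$ concentration; both are valid Chernoff conclusions here and both place the norms comfortably inside $(0.9,1.1)$.
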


    \begin{proof}
        Let $H_{\lambda}, H, L_1, L_2$ denote the number of items in each type.
        By standard concentration bounds, we have $H_{\lambda} = \lambda \dom \pm O(\sqrt{\lambda \dom \log n})$, $H = (1 - \lambda) \dom \pm O(\sqrt{\dom \log n})$ and $L_1, L_2 = \frac{n - \dom}{2} \pm O(\sqrt{n \log n})$ with probability at least $1 - 1/\poly(n)$.
        For sufficiently large $n$, these conditions guarantee that the $\ell_1$-norm of the measures is in $(0.9, 1.1)$ (using the assumption without loss of generality that $m = \Omega(n^{2/3} \lambda^{-1})$). 
        For $\p_1$ when $X = 0$, we have
        \begin{align*}
            \norm{\p_1}_{1} &= H_{\lambda} \frac{1 - \eps}{2 \lambda \dom} + H \frac{1 - \eps}{2 (1 - \lambda) \dom} + L_1 \frac{3 \eps}{2(n - \dom)} + L_2 \frac{\eps}{2(n - \dom)} \\
            &= \frac{1 - \eps}{2} + \bigO{\frac{\sqrt{\lambda \dom \log n}}{\lambda m}} + \frac{1 - \eps}{2} + \bigO{\frac{\sqrt{\dom \log n}}{m}}+ \frac{3 \eps}{4} + \frac{\eps}{4} \pm \bigO{\frac{\sqrt{n \log n}}{n}} \\
            &= 1 \pm o(1) \text{.}
        \end{align*}
        Note that above we have used $\lambda m = \bigOm{n^{2/3}}$ so that all the error terms are $o(1)$ for sufficiently large $n$.
        The bounds on $\p_2, \q_1, \q_2$ (and when $X = 1$) follow similarly.
    \end{proof}

    Thus, \Cref{lemma:closeness-dist-norm} implies that when sampling distributions from the meta-distribution $\calH_{C}$, the distributions (and mixtures $\rD_1, \rD_2$) satisfy the desired norm conditions.
    By \Cref{lem:poisson-sampling}, it suffices to consider a deterministic Poissonized tester $\innerAlg$ with sample complexity $m$.
    We will show that if $\innerAlg$ is $0.1$-correct with respect to $\calH_0, \calH_1$, $\innerAlg$ requires $\bigOm{n^2 m^{-2} \eps^{-4} \lambda^{-3}}$ queries.

    As before, we will show a stronger result.
    \begin{lemma}
        \label{lemma:closeness-bucket-query-lb}
        Suppose $\innerAlg$ is a deterministic Poissonized tester with sample complexity $m$ is $0.1$-correct with respect to $\calH_0, \calH_1$.
        Then, $\innerAlg$ queries $\bigOm{n^2 m^{-2} \eps^{-4} \lambda^{-3}}$ distinct buckets.
        This bound holds even when a single query on bucket $i$ reveals the source of every sample in bucket $i$.
    \end{lemma}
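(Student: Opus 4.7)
\begin{proofof}{\Cref{lemma:closeness-bucket-query-lb}}
The plan is to follow the mutual-information framework used for \Cref{lemma:uniformity-bucket-query-lb}, but now applied to the closeness hard instance \Cref{def:closeness-hard-instance}. We begin by assuming the algorithm is Poissonized so that, independently for each $i \in [n]$, we observe $Y^{(1)}_i \sim \Poi(m \rD_1[i])$ and $Y^{(2)}_i \sim \Poi(m \rD_2[i])$, which can be decomposed as $Y^{(b)}_i = A^{(b)}_i + B^{(b)}_i$ with $A^{(b)}_i \sim \Poi(\lambda m \p_b[i])$ and $B^{(b)}_i \sim \Poi((1-\lambda) m \q_b[i])$ all mutually independent. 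Strengthening the algorithm, a query on bucket $i$ is assumed to reveal all four counts $T_i := (A^{(1)}_i, B^{(1)}_i, A^{(2)}_i, B^{(2)}_i)$; otherwise $T_i := (Y^{(1)}_i, Y^{(2)}_i)$. A crucial observation is that the marginal distributions of $(\rD_1[i], \rD_2[i])$ are identical under $\calM_0$ and $\calM_1$ (the ``swap'' between $\p_2, \q_2$ on $l_1$ and $l_2$ buckets preserves the mixture marginals), so $I(X : T_i) = 0$ for every unqueried bucket.

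By \Cref{lemma:mutual-info-bound}, any $0.1$-correct tester must satisfy $I(X:T) \geq 2 \cdot 10^{-4}$, and by independence of the buckets conditioned on $X$ we get $I(X:T) \leq \sum_{i \in Q} I(X:T_i)$. The heart of the argument is the following per-bucket bound: I will show
\begin{equation*}
    I(X : A^{(1)}_i, B^{(1)}_i, A^{(2)}_i, B^{(2)}_i) = \bigO{\frac{\eps^4 \lambda^3 m^2}{n^2}} \text{.}
\end{equation*}
Combining this with the mutual-information lower bound yields $q \cdot \eps^4 \lambda^3 m^2/n^2 = \Omega(1)$, i.e. $q = \bigOm{n^2 m^{-2} \eps^{-4} \lambda^{-3}}$ as claimed.

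To prove the per-bucket bound, I will mimic the computation of \Cref{lemma:uniformity-one-bucket-mi}. First, the heavy types $h_\lambda$ and $h$ contribute $0$ to $I(X : T_i)$ because $(\p_1[i], \q_1[i], \p_2[i], \q_2[i])$ is identical in $\calM_0$ and $\calM_1$ on these buckets; only the light types $l_1, l_2$ carry information. For a light bucket, I will apply \Cref{claim:MI_asymp} to write
\begin{equation*}
    I(X : T_i) = \bigO{\sum_{\vec{t}} \frac{(f_{\vec{t}}(0) - f_{\vec{t}}(1))^2}{f_{\vec{t}}(0) + f_{\vec{t}}(1)}}
\end{equation*}
where $f_{\vec{t}}(x) = \Pr[T_i = \vec{t} \mid X = x]$ and $\vec{t} = (a_1, b_1, a_2, b_2)$. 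Mirroring \Cref{clm:uniformity-g-diff-ub}, the swap between $\calM_0$ and $\calM_1$ only affects the cross-terms involving $b_1$ and $a_2$ (or $a_1$ and $b_2$), and these terms differ by a factor proportional to $\eps^2$ in the leading order because the marginals of $\p_1, \q_1$ (and of $\p_2, \q_2$) individually are preserved --- it is only the \emph{joint} law of $(\p_1, \q_2)$ and $(\q_1, \p_2)$ that changes. Squaring, one obtains a factor $\eps^4$, and the normalization by $f_{\vec{t}}(0) + f_{\vec{t}}(1)$ plus the Poisson mass factors $(\lambda m / n)^{a_b}$, $((1-\lambda) m / n)^{b_b}$ yield an extra $\lambda^3 m^2 / n^2$ (one factor $m/n$ from each of the two samples sides, and an extra $\lambda$ from $\alpha = \lambda / (1 - \lambda)$).

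\textbf{The main obstacle} will be organizing the case analysis over all quadruples $(a_1, b_1, a_2, b_2)$ cleanly enough to see that the dominant contribution comes from the smallest non-vanishing quadruples (heuristically $a_1 = b_2 = 1, a_2 = b_1 = 0$ and the symmetric variant), since higher-order terms decay in $m/n$ just as in the uniformity proof. Once this is in place, the rest of the argument --- bounding the prior $h_{\vec{t}}$ from below as in \Cref{clm:uniformity-h-lb}, summing the contributions, and applying the $\Poi$-concentration arguments from \Cref{lemma:uniform-dist-norm} and \Cref{lem:poisson-sampling} to transfer the Poissonized lower bound to the standard model --- proceeds identically to the uniformity case. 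Finally, as in the uniformity proof, for $m \geq \Omega(n^{2/3}\lambda^{-1})$ the standard closeness testing lower bound already implies $q = \Omega(n^{2/3})$, so we need only handle $m \ll n/100$ which the above analysis covers.
\end{proofof}
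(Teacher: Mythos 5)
Your plan is exactly the paper's: Poissonize, strengthen queries so a queried bucket reveals all four counts, observe that unqueried buckets (identical mixture marginals) and heavy-type buckets carry no information about $X$, and reduce everything to a per-bucket bound $I(X:A_i,B_i) = O(\eps^4\lambda^3 m^2/n^2)$, which together with \Cref{lemma:mutual-info-bound} and conditional independence across buckets gives the query bound. The gap is that this per-bucket bound --- which is the entire technical content of the lemma (\Cref{lemma:closeness-one-bucket-mi}) --- is only asserted by a scaling heuristic, and the step you yourself flag as ``the main obstacle'' is where all the work lives. In the paper it is done by applying \Cref{claim:MI_asymp}, computing the Poisson masses explicitly so the difference factorizes as $g_{a,b,c,d}(0)-g_{a,b,c,d}(1) \propto \frac{1}{a!b!c!d!}\left(\frac{\eps\lambda m}{n}\right)^{a+b+c+d}\left(3^{a}-3^{b}\right)\left(3^{c}-3^{d}\right)$ (\Cref{clm:closeness-g-diff-ub}), lower-bounding both the conditional denominator $g_{a,b,c,d}(x)$ and the count probability $h_{y,z}$ via the \emph{heavy}-bucket prior (\Cref{clm:closeness-g-lb}, \Cref{clm:closeness-h-lb}) --- heavy buckets contribute nothing to the numerator but are essential to keep the ratio small --- then summing over quadruples with $y!z!\sum_{a+b=y,\,c+d=z}\frac{(3^a-3^b)^2(3^c-3^d)^2}{a!b!c!d!} = O(18^{y+z})$ (\Cref{claim:closeness-a-b-c-d-ub}) and a monotonicity-in-$(y+z)$ argument that uses $m \le n/2$ and the conditioning $\max_i Y_i = O(\log n)$ (\Cref{lemma:closeness-light-bucket-collision-bound}). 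None of this is routine: in particular your accounting of the power of $\lambda$ (``one factor $m/n$ from each side and an extra $\lambda$ from $\alpha$'') does not pin down $\lambda^3$ as opposed to $\lambda^2$ or $\lambda^4$; the correct count is $(\eps\lambda m/n)^4$ from the minimal nonvanishing quadruples ($y=z=1$) in the squared numerator, divided by $\Omega(\lambda m/n)$ from $g$ and $\Omega(m/n)$ from $h$. Since the improvement from $\eps^4\lambda^2 m/n$ per bucket (uniformity) to $\eps^4\lambda^3 m^2/n^2$ per bucket is exactly what this lemma must deliver, leaving that computation at the level of a heuristic is a genuine gap. (Minor: all four quadruples with one sample on each side contribute at the same order, not just $a_1=b_2=1$ and its mirror, since $(3^{a}-3^{b})(3^{c}-3^{d})\neq 0$ for each of them.)

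Two smaller issues. First, your closing remark about the sample regime is stated backwards: the standard closeness lower bound is invoked to let us \emph{assume} $m = \Omega(n^{2/3}\lambda^{-1})$ (otherwise even querying everything cannot succeed), which is needed so the random measures nearly normalize (\Cref{lemma:closeness-dist-norm}); it does not yield $q = \Omega(n^{2/3})$ for large $m$. Second, the reduction from arbitrary randomized testers to deterministic Poissonized ones and the handling of the non-normalized measures are part of the surrounding theorem, not of this lemma, so citing \Cref{lem:poisson-sampling} inside this proof is harmless but misplaced.
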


    \begin{proof}
        Given that $\innerAlg$ is a Poissonized tester, $\innerAlg$ receives independently $Y^{(1)}_i \sim \Poi(m \rD_{1}[i]), Y_{i}^{(2)} \sim \Poi(m \rD_{2}[i])$ samples on bucket $i$, where $\rD_{j}[i] = \frac{\p_{j}[i] + \q_{j}[i]}{2}$ for $j \in \set{1, 2}$.
        Note that we can equivalently sample $Y_i^{(j)} = A_i^{(j)} + B_{i}^{(j)}$ where $A_{i}^{(j)} \sim \Poi(\lambda m \p_{j}[i]), B_{i}^{(j)} \sim \Poi((1 - \lambda) m \q_{j}[i])$ are sampled independently.
        Of course, the algorithm (without queries) only observes $Y_{i} := (Y_{i}^{(1)}, Y_{i}^{(2)})$.
        If the algorithm queries a bucket $i$, then the algorithm learns $A_{i} := (A_{i}^{(1)}, A_{i}^{(2)}), B_{i} := (B_{i}^{(1)}, B_{i}^{(2)})$ as well.
        Suppose the algorithm queries a set of buckets $Q \subset [n]$.
        Let $T_i$ denote the information observed by the algorithm from the $i$-th bucket, i.e. $Y_{i}$ if $i \not\in Q$ and $A_{i}, B_{i}$ if $i \in Q$.
        Let $T = (T_1, \dotsc, T_n)$.
        
        Then, the algorithm $\innerAlg$ is a function such that $\innerAlg(T) = X$ with at least $51\%$ probability, and therefore $I(X:T) \geq 2 \cdot 10^{-4}$ by \Cref{lemma:mutual-info-bound}.
        Suppose for contradiction that $\innerAlg$ queries $q = |Q| = o(n^2 m^{-2} \eps^{-4} \lambda^{-2})$ buckets.
        Since $T_i$ are independent conditioned on $X$, we have
        \begin{equation*}
            I(X:T) \leq \sum_{i = 1}^{n} I(X:T_{i}) \text{.}
        \end{equation*}

        For $i \not\in Q$, we claim $I(X:T_{i}) = I(X:Y_{i}) = 0$.
        In particular, for all $i \in [n], j \in \set{1, 2}$, we have $Y_{i}^{(j)} \sim \Poi(m \rD_{j}[i])$ where
        \begin{equation}
            \label{eq:closeness-mix-dist}
            \rD_{j}[i] = \begin{cases}
                \frac{1 - \eps}{2 \dom} & w.p.\quad \frac{\lambda \dom}{n} \\
                \frac{\alpha (1 - \eps)}{2 \dom} + \frac{1 - \eps \alpha}{\dom} & w.p.\quad \frac{(1 - \lambda)\dom}{n} \\
                \frac{2 \lambda \eps}{(n - \dom)} & w.p. \quad \frac{n - \dom}{n}
            \end{cases}
        \end{equation}
        regardless of $X$, proving the claim.
        Thus, the samples reveal no information it suffices to bound the information gained from the queries via $\sum_{i \in Q}^{n} I(X:T_{i}) = \sum_{i \in Q}^{n} I(X:A_{i}, B_{i})$.

        We begin with the simplifying assumption that no single bucket causes large collisions. 
        \begin{lemma}
            \label{lemma:closeness-light-bucket-collision-bound}
            \begin{equation*}
                \Pr \left( \max_{i} Y_i = O(\log n) \right) > 1 - \frac{1}{\poly(n)}
            \end{equation*}
        \end{lemma}

        \begin{proof}

            Note that each $Y_i \sim \Poi(m \rD[i])$ has parameter at most $2$ from \eqref{eq:closeness-mix-dist}. 
            The claim follows as before from \Cref{lemma:poisson-concentration} and the union bound.
        \end{proof}

        Let $E$ denote the event that the conditions of \Cref{lemma:closeness-dist-norm} and \Cref{lemma:closeness-light-bucket-collision-bound} are met.

        \begin{lemma}
            \label{lemma:closeness-one-bucket-mi}
            For all $i$, we have
            \begin{align*}
                I(X:A_{i}, B_{i}) = \bigO{\frac{\eps^{4} \lambda^{3} m^{2}}{n^{2}}} \text{.} 
            \end{align*}
        \end{lemma}

        \begin{proof}
            Note $A_i, B_i$ are pairs of Poisson random variables that sum up to $Y_{i} = (y, z)$.
            By \Cref{claim:MI_asymp}, we have
            \begin{equation*}
                I(X:A_{i}, B_{i}) = \bigO{\sum_{a, b, c, d} \frac{\left( f_{a, b, c, d}(0) - f_{a, b, c, d}(1) \right)^{2}}{f_{a, b, c, d}(0) + f_{a, b, c, d}(1)}} 
                = \bigO{\sum_{a + b = y, c + d = z} \frac{\left( f_{a, b, c, d}(0) - f_{a, b, c, d}(1) \right)^{2}}{f_{a, b, c, d}(0) + f_{a, b, c, d}(1)}} \text{.} 
            \end{equation*}
            where $f_{a, b, c, d}(x) = \Pr(A_{i} = (a, c), B_{i} = (b, d) | Y_{i} = (y, z), X = x, E)$ and we note that unless $a + b = y, c + d = z$, $f_{a, b, c, d}(x) = 0$.
            We now compute $f_{a, b, c, d}(x)$ explicitly.
            \begin{align*}
                f_{a, b, c, d}(x) &= \frac{\Pr(A_{i} = (a, c), B_{i} = (b, d) ,Y_{i} = (y, z)| X = x, E)}{\Pr(Y_{i} = (y, z) | X = x, E)} \\
                &= \begin{cases}
                    \frac{\Pr(A_{i} = (a, c), B_{i} = (b, d)| X = x, E)}{\Pr(Y_{i} = (y, z) | X = x, E)} & \ifT a + b = y, c + d = z \\
                    0 & \otherwise
                \end{cases}
            \end{align*}
            Define $g_{a, b, c, d}(x) = \Pr(A_{i} = (a, c), B_{i} = (b, d)| X = x, E)$ and $h_{y, z}(x) = \Pr(Y_{i} = (y, z) | X = x, E)$.
            Since samples are independent of $X$, define $h_{y, z} := h_{y, z}(0) = h_{y, z}(1)$.
            Then, we can rewrite 
            \begin{equation*}
                I(X:A_{i}, B_{i})
                = \bigO{\frac{1}{h_{y, z}} \sum_{a + b = y, c + d = z} \frac{\left( g_{a, b, c, d}(0) - g_{a, b, c, d}(1) \right)^{2}}{g_{a, b, c, d}(0) + g_{a, b, c, d}(1)}} \text{.} 
            \end{equation*}

            We require the following technical claims.
            
            \begin{restatable}{claim}{ClosenessHLB}
                \label{clm:closeness-h-lb}
                \begin{equation*}
                    h_{y, z} = \bigOm{\frac{\dom}{n} \frac{(1 - \eps \alpha)^{y + z}}{y! z!}} \text{.}
                \end{equation*}
            \end{restatable}

            \begin{restatable}{claim}{ClosenessGLB}
                \label{clm:closeness-g-lb}
                For $x \in \set{0, 1}$, 
                \begin{equation*}
                    g_{a, b, c, d}(x) = \bigOm{\frac{\lambda \dom}{n} \frac{\left(\frac{1 - \eps}{2} \right)^{a + c}}{a!c!}} \text{.}
                \end{equation*}
            \end{restatable}

            \begin{restatable}{claim}{ClosenessGDiffUB}
                \label{clm:closeness-g-diff-ub}
                If $a + b = 0$, $c + d = 0$, 
                then $g_{a, b, c, d}(0) = g_{a, b, c, d}(1)$.
                Otherwise $a + b \geq 1$, $c + d \geq 1$
                and 
                \begin{align*}
                    (g_{a, b, c, d}(0) - g_{a, b, c, d}(1))^{2} &= \bigO{\left( \frac{1}{a!b!c!d!} \left( \frac{\eps \lambda m}{n} \right)^{a + b + c + d} \cdot \left( 3^{a + c} + 3^{b + d} - 3^{a + d} - 3^{b + c} \right)\right)^{2}} \text{.}
                \end{align*}
            \end{restatable}

            We will prove these claims at the end of this section.
            Applying \Cref{clm:closeness-g-diff-ub}, we note that if either $Y_i^{(1)}, Y_{i}^{(2)}$ are $0$,
            then $I(X: A_i, B_i) = 0$.
            Thus, we assume $1 \leq Y_i^{(1)}, Y_{i}^{(2)}$.
            Applying \Cref{clm:closeness-h-lb}, \Cref{clm:closeness-g-lb} and \Cref{clm:closeness-g-diff-ub}, we obtain
            \begin{align*}
                I(X:A_i, B_i) &= \bigO{\frac{1}{\frac{m}{n} \frac{(1 - \eps \alpha)^{y + z}}{y!z!}} \sum_{a + b = y, c + d = z} \frac{\left( \frac{1}{a!b!c!d!} \left( \frac{\eps \lambda m}{n} \right)^{a + b + c + d} \cdot \left( 3^{a + c} + 3^{b + d} - 3^{a + d} - 3^{b + c} \right)\right)^{2}}{\frac{\lambda m}{n} \frac{\left(\frac{1 - \eps}{2} \right)^{a + c}}{a!c!}}} \\
                &= \bigO{\left(\frac{m}{n}\right)^{2y + 2z - 2} \frac{2^{y + z} \eps^{2y + 2z} \lambda^{2y + 2z - 1}}{\frac{(1 - \eps)^{2y + 2z}}{y!z!}} \sum_{a + b = y, c + d = z} \frac{1}{a!b!c!d!} \left( 3^{a + c} + 3^{b + d} - 3^{a + d} - 3^{b + c} \right)^{2}} \\
                &= \bigO{\left(\frac{m}{n}\right)^{2y + 2z - 2} \frac{2^{y + z} \eps^{2y + 2z} \lambda^{2y + 2z - 1}}{(1 - \eps)^{2y + 2z}} y! z! \sum_{a + b = y, c + d = z} \frac{\left( 3^{a} - 3^{b} \right)^{2} \left( 3^{c} - 3^{d} \right)^{2}}{a!b!c!d!}} \text{.}
            \end{align*}
            In the first line  we simplify by collecting like terms, and observe $(1 - \eps \alpha) \geq \frac{1 - \eps}{2}$ and $a + b = y, c + d = z$.
            In the second line, we note that $3^{a + c} + 3^{b + d} - 3^{a + d} - 3^{b + c} = (3^{a} - 3^{b})(3^{c} - 3^{d})$.
            To complete the bound, we require the following technical claim, which we prove at the end of this section.
            
            \begin{restatable}{claim}{ClosenessABCDUB}
                \label{claim:closeness-a-b-c-d-ub}
                \begin{equation*}
                    y!z!\sum_{a + b = y, c + d = z} \frac{\left( 3^{a} - 3^{b} \right)^{2} \left( 3^{c} - 3^{d} \right)^{2}}{a!b!c!d!} = \bigO{18^{y + z}} \text{.}
                \end{equation*}
            \end{restatable}
            
            Applying \Cref{claim:closeness-a-b-c-d-ub} we obtain 
            \begin{align*}
                I(X:A_i, B_i) 
                &= \bigO{\left(\frac{m}{n}\right)^{2y + 2z - 2} 2^{y + z} \eps^{2y + 2z} \lambda^{2y + 2z - 1} \frac{18^{y + z}}{(1 - \eps)^{2y + 2z}}} \\
                &= \bigO{\left(\frac{m}{n}\right)^{2y + 2z - 2} \frac{(6 \eps \lambda)^{2y + 2z}}{(1 - \eps)^{2y + 2z}}} \frac{1}{\lambda} \\
                &= \bigO{\frac{\eps^{4} \lambda^{3} m^{2}}{n^{2}}}
            \end{align*}
            where in the last term we observe that the quantity is decreasing in $y + z$ and use $y + z \geq 2$ since $I(X:A_i, B_i) = 0$ whenever $y = 0$ or $z = 0$ by \Cref{clm:closeness-g-diff-ub}.
        \end{proof}

        Given \Cref{lemma:closeness-one-bucket-mi}, we conclude that 
        \begin{equation*}
            I(X:T) = \bigO{q \frac{\eps^4 \lambda^{3} m^2}{n^2}} = o(1) 
        \end{equation*}
        which contradicts $I(X:T) \geq 2 \cdot 10^{-4}$.
        Thus, we must have $q = \bigOm{\frac{n^2}{m^2 \eps^{4} \lambda^{3}}}$.
    \end{proof}

    It remains to prove the omitted claims, which we prove below.
    This concludes the proof of \Cref{thm:dist-contam-closeness-lb}.
\end{proof}

\ClosenessHLB*

\begin{proof}
    From \eqref{eq:closeness-mix-dist} we have
    \begin{align*}
        h_{y, z} &= \frac{\lambda \dom}{n} \exp \left( -\frac{1 - \eps}{m} \right) \frac{\left( \frac{1 - \eps}{2} \right)^{y + z}}{y!z!} \\
        &\quad+ \frac{(1 - \lambda) \dom}{n} \exp\left(-\alpha (1 - \eps) - 2(1 - \eps \alpha) \right) \frac{\left(\frac{\alpha(1 - \eps)}{2} + (1 - \eps \alpha) \right)^{y + z}}{y!z!} \\
        &\quad+ \frac{n - \dom}{\dom} e^{-4 \eps \lambda m/(n - m)} \frac{(2 \eps \lambda m/(n - m))^{y + z}}{y!z!} \\
        &= \bigOm{\frac{\dom}{n} \frac{(1 - \eps \alpha)^{y + z}}{y! z!}} \text{.}
    \end{align*}
    Above we have used $(1 - \lambda) \geq \frac{1}{2}$, $\alpha (1 - \eps) + 2(1 - \eps \alpha) = O(1)$ and $1 - \eps \alpha = \bigOm{\frac{\alpha(1 - \eps)}{2}}$.
\end{proof}

\ClosenessGLB*

\begin{proof}
    From \Cref{def:closeness-hard-instance}, we compute using $(1 - \lambda) \alpha = \lambda$,
    \begin{align*}
        g_{a, b, c, d}(0) &= \frac{\lambda \dom}{n} \exp \left( - (1 - \eps) \right) \frac{\left( \frac{1 - \eps}{2} \right)^{a + c}}{a!c!} \\
        &\quad+ \frac{(1 - \lambda) \dom}{n} \exp \left(- \alpha (1 - \eps) - 2 (1 - \eps \alpha)\right) \frac{(\frac{\alpha (1 - \eps)}{2})^{a + c}((1 - \eps \alpha))^{b + d}}{a! b! c!d!} \\
        &\quad+ \frac{n - \dom}{2n} \exp \left( \frac{- 3 \eps \lambda m}{(n - m)} \right) \frac{\left( \frac{3 \eps \lambda m}{2(n - m)} \right)^{a + c}}{a! c!} \cdot \exp \left( \frac{- \eps \lambda m}{(n - m)} \right) \frac{\left( \frac{\eps \lambda m}{2(n - m)} \right)^{b + d}}{b! d!} \\
        &\quad+ \frac{n - \dom}{2n} \exp \left( \frac{- 3 \eps \lambda m}{(n - m)} \right) \frac{\left( \frac{3 \eps \lambda m}{2(n - m)} \right)^{b + d}}{b! d!} \cdot \exp \left( \frac{- \eps \lambda m}{(n - m)} \right) \frac{\left( \frac{\eps \lambda m}{2(n - m)} \right)^{a + c}}{a! c!} \\
        &= \bigOm{\frac{\lambda \dom}{n} \frac{\left(\frac{1 - \eps}{2}\right)^{a + c}}{a!c!}} \text{.}
    \end{align*}
    Note that a similar bound holds for $g_{a, b, c, d}(1)$.
\end{proof}

\ClosenessGDiffUB*

\begin{proof}
    Following \Cref{clm:closeness-g-lb}, we write
    \begin{align*}
        g_{a, b, c, d}(1) &= \frac{\lambda \dom}{n} \exp \left( - (1 - \eps) \right) \frac{\left( \frac{1 - \eps}{2} \right)^{a + c}}{a!c!} \\
        &\quad+ \frac{(1 - \lambda) \dom}{n} \exp \left(- \alpha (1 - \eps) - 2 (1 - \eps \alpha)\right) \frac{(\frac{\alpha (1 - \eps)}{2})^{a + c}((1 - \eps \alpha))^{b + d}}{a! b! c!d!} \\
        &\quad+ \frac{n - \dom}{2n} \exp \left( \frac{- 3 \eps \lambda m}{(n - m)} \right) \frac{\left( \frac{3 \eps \lambda m}{2(n - m)} \right)^{a + d}}{a! d!} \cdot \exp \left( \frac{- \eps \lambda  m}{(n - m)} \right) \frac{\left( \frac{\eps \lambda m}{2(n - m)} \right)^{b + c}}{b! c!} \\
        &\quad+ \frac{n - \dom}{2n} \exp \left( -\frac{3 \eps \lambda m}{(n - m)} \right) \frac{\left( \frac{3 \eps \lambda m}{2(n - m)} \right)^{b + c}}{b! c!} \cdot \exp \left( \frac{- \eps \lambda m}{(n - m)} \right) \frac{\left( \frac{\eps \lambda  m}{2(n - m)} \right)^{a + d}}{a! d!} \text{.}
    \end{align*}
    
    We consider separate cases.
    If either $a = b = 0$ or $c = d = 0$, note that $g_{a, b, c, d}(0) = g_{a, b, c, d}(1)$.
    
    Thus, we may assume $a + b \geq 1$, $c + d \geq 1$ (equivalently $y, z \geq 1$).
    Note that the contribution from the heavy elements (i.e. where $W_i \in \set{h_{\lambda}, h}$) appears in both $g_{a, b, c, d}(0), g_{a, b, c, d}(1)$ and therefore cancels out in the difference.
    The difference can then be bounded by
    \begin{align*}
        (g_{a, b, c, d}(0) - g_{a, b, c, d}(1))^{2} &= \left( \frac{n - \dom}{2n} \frac{1}{a!b!c!d!} \exp \left( -\frac{4 \eps \lambda \dom}{n - \dom} \right) \left( \frac{\eps \lambda  m}{2(n - m)} \right)^{a + b + c + d} \right)^{2} \\
        &\quad \cdot \left( 3^{a + c} + 3^{b + d} - 3^{a + d} - 3^{b + c} \right)^{2}
    \end{align*}
    From our assumption $m \leq n/2$ we can simplify the above bound to
    \begin{align*}
        (g_{a, b, c, d}(0) - g_{a, b, c, d}(1))^{2} &= \bigO{\left( \frac{1}{a!b!c!d!} \left( \frac{\eps \lambda m}{n} \right)^{a + b + c + d} \cdot \left( 3^{a + c} + 3^{b + d} - 3^{a + d} - 3^{b + c} \right)\right)^{2}} \text{.}
    \end{align*}
\end{proof}

\ClosenessABCDUB*

\begin{proof}
    We begin by computing
    \begin{align*}
        \sum_{a + b = y} \frac{(3^a - 3^b)^{2}}{a!b!} &= \bigO{\sum_{a < y/2} \frac{y! 9^{a}}{a!(y-a)!}} = \bigO{18^{y}}
    \end{align*}
    since after factoring out $9^{a}$ we are left with a sum of binomial coefficients.
    Similarly, we may bound 
    \begin{equation*}
        \sum_{c + d = z} \frac{(3^c - 3^d)^{2}}{c!d!} = \bigO{18^{z}} \text{.}
    \end{equation*}
    Combining the two bounds yields the desired result.
\end{proof}

\section{Query-Optimal Closeness Testing} \label{sec:closeness-UB-highsample}

In this section, we design a closeness tester with nearly optimal $\widetilde{O}(\eps^{-2}\lambda^{-1})$ query complexity. Our approach relies on taking enough samples to guarantee empirical estimates with small error. In particular, we will rely on the following standard distribution learning bound.

\begin{theorem} [Distribution Learning] \label{thm:learn} Let $\rD \colon [n] \to [0,1]$ be an arbitrary probability distribution, and let $\widetilde{\rD} \colon [n] \to [0,1]$ be an empirical estimate\footnote{I.e., $\widetilde{\rD}[i]$ is the number of samples observed from element $i$ divided by the total number of samples, $m$.} using $m$ samples from $\rD$. Then, for every $i \in [n]$ and $\alpha \in (0,1)$, we have 
\[
\Pr\left[|\rD[i] - \widetilde{\rD}[i]| > \sqrt{\frac{3 \rD[i] \ln (2/\alpha)}{m}}\right] \leq \alpha \text{.}
\] 
\end{theorem}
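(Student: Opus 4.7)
The plan is to reduce the statement to a coordinatewise tail bound for a Binomial random variable. Fix $i \in [n]$; by construction $m \widetilde{\rD}[i]$ counts how many of the $m$ i.i.d.\ samples land in bucket $i$, so $m \widetilde{\rD}[i] \sim \mathrm{Binom}(m, \rD[i])$ with mean $\mu := m \rD[i]$ and variance at most $\mu$. The target inequality is a two-sided deviation bound of the form $\Pr[|m\widetilde{\rD}[i] - \mu| \geq t] \leq \alpha$ with $t := \sqrt{3\mu \ln(2/\alpha)}$, since dividing $t$ by $m$ gives exactly the claimed bound $\sqrt{3\rD[i] \ln(2/\alpha)/m}$.

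In the ``generic'' regime where $t \leq \mu$, equivalently $\mu \geq 3\ln(2/\alpha)$, I would simply cite the standard two-sided multiplicative Chernoff inequality
\[
\Pr\bigl[\,|m\widetilde{\rD}[i] - \mu| \geq \delta\mu\,\bigr] \;\leq\; 2\exp\!\left(-\frac{\delta^2 \mu}{3}\right), \qquad \delta \in [0,1],
\]
and plug in $\delta = t/\mu = \sqrt{3\ln(2/\alpha)/\mu}$; this makes the right-hand side exactly $2 e^{-\ln(2/\alpha)} = \alpha$, which is what we need.

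The only place any care is needed is the small-mean regime $\mu < 3\ln(2/\alpha)$, where $t$ exceeds $\mu$ and the $\delta \leq 1$ form of Chernoff does not directly apply. Here the lower tail is vacuous, since $\widetilde{\rD}[i] \geq 0$ already forces $\widetilde{\rD}[i] \geq \rD[i] - t/m$, so only the upper tail remains. For that I would invoke the large-deviation form $\Pr[X \geq (1+\delta)\mu] \leq e^{-\delta \mu / 3}$ valid for $\delta \geq 1$, or equivalently a Bernstein-type inequality, and verify that for $t = \sqrt{3\mu \ln(2/\alpha)} > \mu$ the exponent is still at least $\ln(2/\alpha)$, using that $t > \mu$ implies $\delta \mu = t > \sqrt{3\mu \ln(2/\alpha)}$.

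I do not foresee any genuine obstacle: the whole argument is a single Chernoff computation together with a mechanical case split on whether $t \leq \mu$. The only thing to be careful about is matching the constants in the statement exactly, which may force the choice of the specific Chernoff variant (for instance the upper-tail bound $e^{-\delta^2 \mu /(2+\delta)}$ rather than a cruder $e^{-\delta^2\mu/4}$) so that the final constant under the square root is precisely $3$ and the probability factor of $2$ inside the logarithm lines up.
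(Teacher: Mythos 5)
Your ``generic regime'' computation is exactly the paper's proof: the paper applies the two-sided multiplicative Chernoff bound $\Pr[|m\widetilde{\rD}[i]-\mu|>\delta\mu]\le 2\exp(-\delta^2\mu/3)$ with $\mu=m\rD[i]$ and $\delta=\sqrt{3\ln(2/\alpha)/\mu}$, with no case split at all. Your instinct that the regime $\delta>1$ (i.e.\ $\mu<3\ln(2/\alpha)$) needs separate treatment is right, and your observation that the lower tail is vacuous there is also right. The gap is in your upper-tail patch: with the $\delta\ge 1$ form $\Pr[X\ge(1+\delta)\mu]\le e^{-\delta\mu/3}$ and $\delta\mu=t=\sqrt{3\mu\ln(2/\alpha)}$, the exponent is $t/3=\sqrt{\mu\ln(2/\alpha)/3}$, which is at least $\ln(2/\alpha)$ if and only if $\mu\ge 3\ln(2/\alpha)$ --- precisely the case you excluded. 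The justification ``$t>\mu$ implies $\delta\mu=t>\sqrt{3\mu\ln(2/\alpha)}$'' is vacuous, since $t$ equals $\sqrt{3\mu\ln(2/\alpha)}$ by definition. Bernstein does not rescue it either: its exponent $t^2/(2\mu+2t/3)$ drops below $\ln(2/\alpha)$ once $\mu<\frac{4}{3}\ln(2/\alpha)$.

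Moreover, the small-mean case cannot be closed by any sharper tail inequality, because the statement as written can actually fail there. If $\mu=m\rD[i]$ is small enough that $\mu+\sqrt{3\mu\ln(2/\alpha)}<1$ (possible whenever $\mu\lesssim 1/(3\ln(2/\alpha))$), then the event $|\widetilde{\rD}[i]-\rD[i]|>\sqrt{3\rD[i]\ln(2/\alpha)/m}$ occurs as soon as element $i$ is sampled even once, which happens with probability about $\mu$; taking, say, $\alpha=0.01$ and $\mu=0.05$ gives failure probability $\approx 0.05\gg\alpha$. So the theorem is only valid under a hypothesis like $m\rD[i]\ge 3\ln(2/\alpha)$ (the $\delta\le 1$ regime), or after weakening the bound to a Bernstein-type one with an additive $O(\ln(1/\alpha)/m)$ term. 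To be clear, the paper's own one-line proof applies the $2\exp(-\delta^2\mu/3)$ form without checking $\delta\le 1$ and thus has the same defect (its downstream applications survive with the Bernstein version), but as a standalone argument your proposal cannot be completed as sketched: the honest fix is to add the hypothesis or change the bound, not to hunt for a better Chernoff variant.
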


\begin{proof} First, note that $m \widetilde{\rD}[i]$ follows a $\mathrm{Bin}(m,\rD[i])$ distribution, i.e. $m \widetilde{\rD}[i] = X_1 + \cdots + X_m$ where each $X_j$ is an independent $\mathrm{Bern}(\rD[i])$ random variable. We have $\EX[m \widetilde{\rD}[i]] = m \rD[i]$. Thus, by the Chernoff bound, we have
\[
\delta := \sqrt{\frac{3 \ln (2/\alpha)}{m \rD[i]}} ~\implies~ \Pr[|m \widetilde{\rD}[i] - m \rD[i]| > \delta m \rD[i]] \leq 2 \exp\left(-\delta^2 m\rD[i] / 3 \right) = \alpha
\]
or equivalently, $\Pr[|\widetilde{\rD}[i] - \rD[i]| > \delta \rD[i]] \leq \alpha$, as claimed. \end{proof}

The following is a straightforward corollary for mixtures $\rD = \lambda\p + (1-\lambda)\q$.

\begin{corollary} \label{cor:learn} Let $\p \colon [n] \to [0,1]$ be an arbitrary distribution and let $Z$ be a multi-set of $m$ samples drawn from $\rD = \lambda \p + (1-\lambda) \q$. Let $P[i]$ denote the number of samples which were drawn by $\p$ on element $i$. Then, 
\[
\Pr_{Z}\left[ \left|\frac{P[i]}{\lambda m} - \p[i] \right| > \sqrt{\frac{3 \p[i] \ln(2/\alpha)}{\lambda m}} \right] \leq \alpha \text{.}
\]
\end{corollary}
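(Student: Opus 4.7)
The plan is to observe that $P[i]$ is exactly a binomial random variable and then invoke \Cref{thm:learn} as a black box on an appropriate auxiliary distribution. Concretely, I would consider the augmented distribution $\p'$ on $[n] \cup \{\bot\}$ defined by $\p'[i] = \lambda \p[i]$ for $i \in [n]$ and $\p'[\bot] = 1 - \lambda$. Then a sample $x \sim \rD$ can be generated by drawing $j \sim \p'$, returning $j$ if $j \in [n]$, and otherwise returning a fresh sample from $\q$. From this viewpoint, for a fixed element $i \in [n]$, the count $P[i]$ (number of samples in $Z$ that were generated by $\p$ and landed on $i$) is precisely the number of the $m$ draws of $\p'$ that equalled $i$, i.e. $P[i] \sim \mathrm{Bin}(m, \lambda \p[i])$.

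Next, I would apply \Cref{thm:learn} to $\p'$ with sample size $m$ and failure parameter $\alpha$, focusing on coordinate $i$. The empirical estimate is $\widetilde{\p'}[i] = P[i]/m$, and \Cref{thm:learn} gives
\begin{equation*}
    \Pr\left[ \bigl| \widetilde{\p'}[i] - \p'[i] \bigr| > \sqrt{\frac{3 \p'[i] \ln(2/\alpha)}{m}} \right] \leq \alpha \text{.}
\end{equation*}
Substituting $\p'[i] = \lambda \p[i]$ and $\widetilde{\p'}[i] = P[i]/m$, this becomes $|P[i]/m - \lambda \p[i]| \leq \sqrt{3 \lambda \p[i] \ln(2/\alpha)/m}$ with probability at least $1 - \alpha$. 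Dividing through by $\lambda$ yields exactly
\begin{equation*}
    \left| \frac{P[i]}{\lambda m} - \p[i] \right| \leq \sqrt{\frac{3 \p[i] \ln(2/\alpha)}{\lambda m}}\text{,}
\end{equation*}
which is the desired bound.

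There is no real obstacle here, since the statement is essentially just the Chernoff bound of \Cref{thm:learn} applied to a binomial with success probability $\lambda \p[i]$ rather than $\p[i]$; the only minor subtlety is verifying that $P[i]$ genuinely is $\mathrm{Bin}(m, \lambda \p[i])$, which follows from the two-stage sampling description of $\rD$ and the independence of the $m$ draws. Alternatively, one could skip the auxiliary distribution $\p'$ entirely and redo the Chernoff calculation from the proof of \Cref{thm:learn} directly on $P[i] \sim \mathrm{Bin}(m, \lambda \p[i])$, arriving at the same inequality in a single line.
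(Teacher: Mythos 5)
Your proposal is correct and follows essentially the same route as the paper: the paper also introduces the auxiliary distribution $\p_{\lambda}$ on $[n]\cup\{A\}$ with $\p_{\lambda}[i]=\lambda\p[i]$ and mass $1-\lambda$ on the extra symbol, observes that $P[i]/m$ is an $m$-sample empirical estimate of $\p_{\lambda}[i]$, applies \Cref{thm:learn}, and divides by $\lambda$. No gaps.
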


\begin{proof} Consider a distribution $\p_{\lambda} \colon [n] \cup \{A\} \to [0,1]$ where $\p_{\lambda}[i] = \lambda \p[i]$ and $\p_{\lambda}(A) = 1-\lambda$. For $i \in [n]$, the random variable $P_i/m$ is equivalent to an $m$-sample empirical estimate of $\p_{\lambda}[i] = \lambda \p[i]$. Therefore, by \Cref{thm:learn} we have
\[
\Pr_{Z}\left[\left|\frac{P_i}{m} - \lambda \p[i]\right| > \sqrt{\frac{3 \lambda \p[i] \ln (2/\alpha)}{m}} \right] \leq \alpha
\]
and so dividing by $\lambda$ completes the proof. \end{proof}

We are now ready to prove our main theorem.

\begin{restatable}{theorem}{closenessUB} \emph{(Query-Optimal Closeness Testing.)} \label{thm:closenessUB-high}
Suppose we have verification query access to two mixtures $\rD_1 = \lambda \p_1 + (1-\lambda)\q_1$, $\rD_2 = \lambda \p_2 + (1-\lambda)\q_2$. There is an algorithm that distinguishes between the case of $\p_1 = \p_2$ and $\norm{\p_1 - \p_2}_1 \geq \eps$ with probability $2/3$ using 
\[
O\left(n\left(\log n + \frac{\log^2(1/\eps \lambda)}{\eps^2 \lambda}\right) \right) ~\text{ samples and }~ O\left(\frac{\log^3 (1/\eps\lambda)}{\eps^2 \lambda}\right) ~\text{ queries.}
\]
\end{restatable}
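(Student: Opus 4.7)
\begin{proofof}{\Cref{thm:closenessUB-high} (sketch)}
The plan is to implement the high-sample strategy outlined in \Cref{sec:tech-sup}. First, using the stronger flattening procedure (\Cref{thm:strong-mixture-flat}) and the unifier reduction (\Cref{lem:unifier}) with $O(n\log n)$ up-front samples, I would reduce to the case where $\norm{\rD_b}_{\infty}, \norm{\p_b}_{\infty}/\lambda \leq O(1/n)$ and moreover $\rD_b[i] = \Theta(1/n)$ for every $i \in [n]$ and $b \in \{1,2\}$. These reductions preserve $\ell_1$ distances between the $\p_b$'s, so correctness of closeness testing on the flattened instance transfers back.

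Next I would draw $m = \widetilde{O}(n/\eps^2\lambda)$ samples from each $\rD_b$. Let $R_b[i]$ denote the (known) number of these samples falling in bucket $i$, and $P_b[i]$ the (unknown) number among those generated by $\p_b$. By \Cref{cor:learn} together with the infinity-norm bound $\p_b[i] \leq O(1/\lambda n)$, with high probability (say, union-bounded over all $i \in [n]$) we have $|P_b[i]/(\lambda m) - \p_b[i]| \lesssim \eps/n$ at every bucket that has small $\p_b[i]$, and a constant-factor multiplicative estimate at every heavy bucket. The $R_b[i]$'s concentrate around $m/n$ by a Chernoff bound and a union bound. If I could access $P_b[i]$ directly, $P_b[i]/(\lambda m)$ would be a good enough estimate of $\p_b[i]$, so the goal is to estimate it via queries. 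A uniformly random query on a sample in bucket $i$ of $\rD_b$ is a $\Bern(\rho_b[i])$ trial with $\rho_b[i] := P_b[i]/R_b[i]$, and since $R_b[i] = \Theta(m/n)$ is known, estimating $\p_b[i]$ to additive error $\delta$ corresponds to estimating $\rho_b[i]$ to additive error $\Theta(\lambda n \delta)$.

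The crucial step is a dyadic bucketing over ``signal scales'' $j = 0, 1, \dots, J$ with $J = O(\log(1/\eps\lambda))$. If $\norm{\p_1 - \p_2}_1 \geq \eps$, then because $\p_1[i], \p_2[i] \leq O(1/\lambda n)$, a counting argument shows that for some $j \leq J$ at least an $\Omega(2^{-j})$ fraction of $i \in [n]$ satisfy $|\p_1[i] - \p_2[i]| \geq \Omega(\eps 2^j/n)$. So for each $j$ I would sample a set $X_j$ of $\Theta(2^j \log J)$ uniformly random elements of $[n]$; this guarantees that with probability $1 - o(1/J)$ some $i^\star \in X_j$ witnesses the scale-$j$ discrepancy in the soundness case. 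Markov's inequality applied to $i \sim [n]$ then forces $\p_b[i] \lesssim 2^j/n$ for every $i \in X_j$ (after discarding a few outliers identified by $R_b[i]$), hence $\rho_b[i] \lesssim \lambda 2^j$. Estimating each such $\rho_b[i]$ to additive error $\lesssim \lambda \eps 2^j$ via \Cref{lem:bias-add} costs $\widetilde{O}(\lambda 2^j/(\lambda \eps 2^j)^2) = \widetilde{O}(1/(\eps^2\lambda 2^j))$ queries per element, and summing the cost $|X_j| \cdot \widetilde{O}(1/(\eps^2\lambda 2^j)) = \widetilde{O}(1/(\eps^2\lambda))$ over the $O(\log(1/\eps\lambda))$ scales yields the claimed query budget. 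The tester accepts iff at every scale $j$ and every $i \in X_j$ the estimated $\hat{\p}_1[i]$ and $\hat{\p}_2[i]$ agree within their (scale-dependent) error tolerance.

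The main technical obstacle I anticipate is the soundness analysis: certifying that the dyadic decomposition over $[n]$ really is dominated by some single scale $j^\star \leq J$, and carefully setting error tolerances so that (i) completeness holds despite additive learning and bias-estimation error aggregating across all $O(\log(1/\eps\lambda))$ scales, and (ii) soundness holds with probability $\Omega(1)$ from just $\widetilde{O}(2^j)$ random probes at each scale. The infinity-norm reduction is what makes this work: it both caps the sample complexity (via \Cref{cor:learn} only needing $\widetilde{O}(1/\lambda)$ samples per bucket on average) and caps the bias $\rho_b[i]$ at each tested index (so \Cref{lem:bias-add} gives a meaningful savings of the $1/2^j$ factor). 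Without it, a heavy bucket of $\p_b$ could dominate both the discrepancy and the bias-estimation cost, and the per-scale savings would collapse.
\end{proofof}
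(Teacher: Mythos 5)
Your proposal follows essentially the same route as the paper: flatten via \Cref{thm:strong-mixture-flat} and \Cref{lem:unifier}, take $\widetilde{O}(n/\eps^2\lambda)$ learning samples, select test elements by dyadic scales, use Markov's inequality to cap each $\rho_b[i]$ at $\widetilde{O}(\lambda 2^j)$, and bias-estimate each to additive error $\Theta(\lambda\eps 2^j)$ so the per-scale query cost telescopes. The only small deviation is in how you size $X_j$ (you include a $\log J$ factor per scale, while the paper's \Cref{lem:bucket-bias-closeness-strong} avoids a union bound by a product-of-independent-events argument), and in your informal ``discard outliers via $R_b[i]$'' step, which the paper replaces with a direct Markov-plus-union-bound (\Cref{clm:good-props}) that requires no discarding; neither changes the approach.
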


\begin{proof} We invoke a flattening procedure which allows us to assume $\rD_b$ and $\p_b$ (for $b \in \{1,2\}$) have bounded infinity norm, then design an efficient tester for such distributions. Our flattening procedure is described in the following theorem, which relies on \Cref{thm:learn} and is proven in \Cref{sec:high-sample-flat}. (It is an immediate corollary of \Cref{thm:strong-mixture-flat}.)

\begin{theorem} [High Sample Mixture Flattening] \label{thm:strong-mixture-flat-cor} Using $O(n \log n)$ samples to a mixture $\rD = \lambda \p + (1-\lambda)\q$ over $[n]$, it is possible to produce a $\ell_1$-distance preserving\footnote{For any two distributions $\cD_1,\cD_2$ over $[n]$, we have $\norm{f(\cD_1) - f(\cD_2)}_1 = \norm{\cD_1 - \cD_2}_1$.} random map $f \colon [n] \to [N]$ such that with probability $0.999$, we have $\norm{f(\rD)}_{\infty} < \frac{2}{N}, \norm{f(\p)}_{\infty} < \frac{2}{\lambda N}$, and $N \leq 3n$. \end{theorem}

We invoke \Cref{thm:strong-mixture-flat-cor} on $\frac{1}{2}(\rD_1 + \rD_2)$ so that with probability $0.999$ we have $N \leq 3n$ and
\begin{align*}
    \frac{2}{N} \geq \norm{f\left(\frac{1}{2}(\rD_1+\rD_2)\right)}_{\infty} = \norm{\frac{1}{2} \left( f(\rD_1)+f(\rD_2) \right)}_{\infty} \geq \frac{1}{2}\max\left(\norm{f(\rD_1)}_{\infty},\norm{f(\rD_2)}_{\infty}\right)
\end{align*}
which implies that $\norm{f(\rD_1)}_2^2, \norm{f(\rD_2)}_2^2 \leq 4/N$. Similarly,
\begin{align*}
    \frac{2}{\lambda N} \geq \norm{f\left(\frac{1}{2}(\p_1+\p_2)\right)}_{\infty} = \norm{\frac{1}{2}\left(f(\p_1) + f(\p_2)\right)}_{\infty} \geq \frac{1}{2}\max\left(\norm{f(\p_1)}_{\infty},\norm{f(\p_2)}_{\infty}\right)
\end{align*}
and so $\norm{f(\p_1)}_2^2, \norm{f(\p_2)}_2^2 \leq 4/\lambda N$. Since $f$ preserves $\ell_1$-distance, it suffices to test closeness of $f(\p_1)$ and $f(\p_2)$. Moreover, we can simulate sample and verification query access to the mixtures 
\[
f(\rD_b) = f(\lambda \p_b + (1-\lambda)\q_b) = \lambda f(\p_b) + (1-\lambda)f(\q_b)
\]
for both $b \in \{1,2\}$. Next, we apply another random map which allows us to assume that the mixtures have reasonable mass on every element, which will simplify our later analysis. This reduction is captured in the following lemma. 

\begin{lemma} \label{lem:unifier} Suppose we have sample and verification query access to arbitrary mixtures $\rD_1 = \lambda \p_1 + (1-\lambda)\q_1$ and $\rD_2 = \lambda \p_2 + (1-\lambda)\q_2$ over $[N]$. Let $g \colon [N] \to [N]$ be a random map such that for every $j \in [N]$, with probability $1/2$, $g(j) = j$, and with probability $1/2$, $g(j)$ is uniformly distributed in $[N]$. Then the following hold.
\begin{enumerate}
    \item $\norm{g(\p_1) - g(\p_2)}_1 = \frac{1}{2} \norm{\p_1 - \p_2}_1$.
    \item We can simulate sample and verification query access to $g(\rD_b) = \lambda g(\p_b) + (1-\lambda)g(\q_b)$ for both $b \in \{1,2\}$. 
\end{enumerate}
\end{lemma}

In particular, the definition of the random map $g$ from \Cref{lem:unifier} guarantees that $g(\cD)[j] \geq 1/2N$ for every $j \in [N]$, and $\norm{g(\cD)}_{\infty} \leq \norm{\cD}_{\infty}$ for any distribution $\cD$. Moreover, $g$ preserves $\ell_1$-distances to a factor of $2$. Thus, it now suffices to establish the following theorem, which we prove in \Cref{sec:closeness-post-flat}. (This constitutes the main effort of the proof.)


\begin{theorem} [Closeness Testing Post-Flattening] \label{thm:closenessUB-high-post-flattening}
Suppose we have sample and verification query access to two mixtures $\rD_1 = \lambda \p_1 + (1-\lambda)\q_1$, $\rD_2 = \lambda \p_2 + (1-\lambda)\q_2$ with the promise\footnote{The upper bounds can be guaranteed by flattening \Cref{thm:strong-mixture-flat} using $O(n \log n)$ samples. The lower bound can guaranteed by the reduction \Cref{lem:unifier}.} that $\norm{\rD_b}_{\infty} \leq 4/n$, $\norm{\p_b}_{\infty} \leq 4/\lambda n$, and $\min_{j \in [n]} \rD_b[j] \geq 1/2n$ for both $b \in \{1,2\}$. Then, there is an algorithm that distinguishes between the case of $\p_1 = \p_2$ and $\norm{\p_1 - \p_2}_1 \geq \eps$ with probability $2/3$ using 
\[
O\left(\frac{n \log^2(1/\eps \lambda)}{\eps^2\lambda}\right) ~\text{ samples and }~ O\left(\frac{\log^3 (1/\eps\lambda)}{\eps^2 \lambda}\right) ~\text{ queries.}
\]
\end{theorem}

In particular, we invoke \Cref{thm:closenessUB-high-post-flattening} using the mixtures $g(f(\rD_b))$ as input and with distance parameter $\eps/2$. The query bound follows by our bound on $N$, and this completes the proof.  \end{proof}

\paragraph{Proof of \Cref{lem:unifier}.} We first observe that item (2) holds. Using the definition of $g$, we can clearly simulate samples to $g(\rD_b)$. Next, suppose we have a sample $x \sim \rD_b$ and we want to simulate a verification query on $g(x)$. We can simply query $x$ and use the oracle's response for $g(x)$.

We now prove item (1). Observe that for any $j \in [N]$, we have $g(\p_b)[j] = \p_b[j]/2 + (1/2N)$. Thus, 
\begin{align}
    \norm{g(\p_1) - g(\p_2)}_1 &= \sum_j |g(\p_1)[j] - g(\p_2)[j]| \nonumber \\
    &= \frac{1}{2}\sum_j |(\p_1[j] + 1/N)  - (\p_2[j] + 1/N)| = \frac{1}{2} \norm{\p_1 - \p_2}_1 \nonumber
\end{align}
and this completes the proof. 

\subsection{Closeness Testing after Flattening: Proof of \texorpdfstring{\Cref{thm:closenessUB-high-post-flattening}}{Theorem 7.6}} \label{sec:closeness-post-flat}



The algorithm is defined as follows. Throughout the algorithm's description, let $\gamma = \eps \lambda / 10000$. Using \Cref{lemma:lambda-est-2}, we first obtain an estimate $\hat{\lambda}$ such that $\hat{\lambda} \in [\lambda/2,\lambda]$ using $O(1/\lambda)$ samples and queries which we use in place of $\lambda$ throughout the algorithm's description. We have written italicized comments in teal to aid the reader's intuition for how the analysis coincides with the tester.

\begin{mdframed}
\begin{enumerate}   
    \item \textbf{(Sampling Phase.)} Draw multi-sets $Z^{(1)},Z^{(2)}$ of $m = O(\frac{n \log^2 (1/\eps \lambda )}{ \eps^2 \lambda})$ samples from $\rD_1,\rD_2$. Let $Z^{(1)}_i,Z^{(2)}_i$ denote the sets of samples occurring on element $i$. For $b \in \{1,2\}$, let $P_{b}[i],Q_{b}[i]$ and $R_{b}[i] = P_{b}[i] + Q_{b}[i]$ denote the number of samples in $Z^{(b)}_i$ generated by $\p_b,\q_b$ and $\rD_b$, respectively. 
    \item \textbf{(Selecting Test Elements.)} For each $j \leq \ceil{\log (8/\eps \lambda)} =: k$, let $X_{j}$ be a set of $|X_{j}| = 10 \cdot 2^j$ samples drawn from $U_n$. Let $X = X_1 \cup \cdots \cup X_k$. \Comment{\emph{(By Markov's, with high probability all $i \in X_j$ satisfy $\p_b[i] \leq \widetilde{O}(\frac{2^j}{n})$.)}}
    \item \textbf{(Coin Definition.)} For each $i \in X$ and $b \in \{1,2\}$, let $\rho_b[i] := \frac{P_b[i]}{R_b[i]}$. 
    \Comment{\emph{(We can simulate a $\mathrm{Bern}(\rho_b[i])$ trial by querying a uniform random element of $Z_i^{(b)}$. By the learning error bound, we will be able to guarantee that $|\frac{P_b[i]}{\lambda m} - \p_b[i]| \ll \frac{\eps 2^j}{n}$. By the learning error bound and the fact that $\frac{1}{2n} \leq \rD_b[i] \leq \frac{4}{n}$, we will also have $\frac{R_b[i]}{m} = \Theta(\frac{1}{n})$. This gives us a bound of $\widetilde{O}(\lambda 2^j)$ on $\rho_b[i]$, which helps us bound the query complexity of estimating $\rho_b[i]$.)}} 
    \item \textbf{(Query Phase.)} For $j \leq k$ and $i \in X_{j}$:
    \begin{enumerate}
        \item Obtain an estimate $\tilde{\rho}_b[i]$ such that  $|\tilde{\rho}_b[i] - \rho_b[i]| \leq O(\lambda \eps 2^j)$ with probability $1-\gamma$ by invoking \Cref{lem:bias-add}. \Comment{\emph{(This uses $\widetilde{O}(\frac{\lambda 2^j}{(\lambda \eps 2^j)^2}) = \widetilde{O}(\frac{1}{\lambda 2^j \eps^2})$ queries. Note that this implies $|\frac{R_b[i]}{\lambda m} \tilde{\rho}_b[i] - \frac{P_b[i]}{\lambda m}| = \frac{R_b[i]}{\lambda m}|\tilde{\rho}_b[i] - \rho_b[i]| \leq \frac{R_b[i]}{\lambda m} \cdot O(\lambda \eps 2^j) \leq O(\frac{\eps 2^j}{n})$. This last inequality used the learning error bound on the empirical $\rD_b[i]$ estimate $\frac{R_b[i]}{m}$ and the bound $\norm{\rD_b}_{\infty} \leq 4/n$.)}} 
        \item If $|\frac{R_b[i]}{\lambda m} \tilde{\rho}_1[i] - \frac{R_b[i]}{\lambda m} \tilde{\rho}_2[i]| > \frac{ \eps 2^j}{8n}$, then \textbf{reject}.
    \end{enumerate} 
    \item \textbf{Accept}.
\end{enumerate}
\end{mdframed}

\begin{claim} \label{clm:good-props} With probability $0.999$, for every $j \leq k$, $i \in X_j$, and $b \in \{1,2\}$, we have $\p_b[i] \leq \frac{1000 \cdot 2^j \log (1/\eps \lambda)}{n}$. \end{claim}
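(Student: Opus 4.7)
\begin{proofof}{\Cref{clm:good-props} (proposal).}
The plan is an elementary Markov plus union bound argument. The key observation is that the test elements $X_j$ are drawn uniformly at random from $[n]$, and although $\p_b$ may be arbitrary subject to $\norm{\p_b}_\infty \le 4/\lambda n$ and $\sum_i \p_b[i] = 1$, a uniformly random coordinate of $\p_b$ has expectation $1/n$. Hence heavy coordinates are rare, and we can pay for them with a union bound whose cost grows only like $|X_j| = 10 \cdot 2^j$.

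First, fix $j \le k$ and $b \in \{1,2\}$. Since $\p_b$ is a probability distribution on $[n]$, $\mathbb{E}_{i \sim U_n}[\p_b[i]] = 1/n$, so Markov's inequality gives
\[
\Pr_{i \sim U_n}\!\left[ \p_b[i] > \frac{1000 \cdot 2^j \log(1/\eps\lambda)}{n} \right] \le \frac{1}{1000 \cdot 2^j \log(1/\eps\lambda)}.
\]
Since $X_j$ consists of $10 \cdot 2^j$ independent uniform samples, a union bound over $i \in X_j$ shows that the probability that \emph{some} $i \in X_j$ violates the desired inequality is at most
\[
\frac{10 \cdot 2^j}{1000 \cdot 2^j \log(1/\eps\lambda)} = \frac{1}{100 \log(1/\eps\lambda)}.
\]
The $2^j$ factors cancel, which is the crucial reason the argument scales: the number of test elements grows with $2^j$ in lockstep with the looseness of the Markov bound at threshold $2^j/n$.

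Finally, union bound over $j \in \{1,\dots,k\}$ and $b \in \{1,2\}$. Since $k = \lceil \log(8/\eps\lambda) \rceil = O(\log(1/\eps\lambda))$, the total failure probability is
\[
\frac{2k}{100 \log(1/\eps\lambda)} = O(1),
\]
which can be driven below $0.001$ by inflating the constant $1000$ in the claim to a sufficiently large absolute constant (the statement itself already allows for this slack). No obstacle is expected; the only mild subtlety is making sure the constant in the claim absorbs both the union bound over the $|X_j|$ samples and the union bound over $j$ and $b$, but both are polynomial in $k$ and thus in $\log(1/\eps\lambda)$, which is exactly the factor we pay in the threshold.
\end{proofof}
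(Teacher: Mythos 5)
Your proposal is correct and follows essentially the same route as the paper: observe $\EX_{i \sim U_n}[\p_b[i]] = 1/n$, apply Markov's inequality at threshold $\Theta(2^j \log(1/\eps\lambda))/n$ so the $2^j$ factors cancel against $|X_j|$, and union bound over $i \in X_j$, $j \leq k$, and $b \in \{1,2\}$. The only difference is your explicit remark that the constant $1000$ may need inflating to push the failure probability below $0.001$ — a bookkeeping point the paper glosses over in the same way.
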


\begin{proof} Note that $\EX_{i \sim U_n}[\p_b[i]] = 1/n$ and so by Markov's $\Pr_{i \sim U_n}[\p_b[i] > \frac{t \cdot 2^j}{n}] \leq \frac{1}{t \cdot 2^j}$. Therefore, setting $t := 1000\log(1/\eps\lambda)$, by a union bound we have $\p_b[i] \leq \frac{1000 \cdot2^j \log (1/\eps \lambda)}{n}$ for all $b \in \{1,2\}, j \leq k$, and $i \in X_j$ with probability $0.999$. \end{proof}

Therefore, throughout the proof we will assume the above holds for the elements in $X_1,\ldots,X_k$. Next, we condition on the samples obtained in step (1) yielding small learning error. Here, we crucially use the bound from \Cref{clm:good-props}. 

\begin{claim} \label{clm:learning-error-bound} With probability $0.996$, for every $j \leq k$, $i \in X_j$, and $b \in \{1,2\}$, we have $\left|\frac{R_b[i]}{m} - \rD_b[i]\right| \leq \frac{\eps}{200n}$ and $\left|\frac{P_b[i]}{\lambda m} - \p_b[i] \right| \leq \frac{\eps 2^j}{200n}$. \end{claim}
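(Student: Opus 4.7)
The plan is a direct application of the distribution-learning tail bounds (Theorem~\ref{thm:learn} and Corollary~\ref{cor:learn}) to every pair $(i,b)$ with $i \in X_1 \cup \cdots \cup X_k$, followed by a union bound, while conditioning on the event from \Cref{clm:good-props}. First, I would fix $\alpha := \eps\lambda / C$ for a sufficiently large constant $C$ so that after union bounding over at most $2 \sum_{j=1}^{k} |X_j| = O(k \cdot 2^k) = O(1/(\eps\lambda))$ event pairs, the total failure probability is at most $0.002$. Combined with the $0.001$ failure probability from flattening (\Cref{thm:strong-mixture-flat-cor}) and the $0.001$ failure probability from \Cref{clm:good-props}, this gives the overall $0.996$ guarantee claimed.

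For the estimate of $\rD_b[i]$, I would apply \Cref{thm:learn} with this choice of $\alpha$: the claimed additive error is $\sqrt{3 \rD_b[i] \ln(2/\alpha) / m}$, and using the uniform bound $\rD_b[i] \leq 4/n$ (the flattening hypothesis of \Cref{thm:closenessUB-high-post-flattening}), this is at most $\sqrt{12 \ln(2/\alpha)/(nm)}$. Setting $m = \Theta(n \log^2(1/\eps\lambda)/(\eps^2\lambda))$ as in Step~(1) of the algorithm makes this quantity at most $\eps/(200n)$ with room to spare, yielding the first inequality.

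For the estimate of $\p_b[i]$, I would apply \Cref{cor:learn}, giving additive error $\sqrt{3 \p_b[i] \ln(2/\alpha)/(\lambda m)}$. Here the key point is that we may condition on \Cref{clm:good-props}, which supplies the upper bound $\p_b[i] \leq 1000 \cdot 2^j \log(1/\eps\lambda)/n$ for $i \in X_j$. Substituting yields an error bound of at most
\begin{equation*}
\sqrt{\frac{3000 \cdot 2^j \log(1/\eps\lambda) \ln(2/\alpha)}{n \lambda m}},
\end{equation*}
and it is a routine algebraic check that this is at most $\eps 2^j/(200n)$ precisely when $m = \Omega(n \log(1/\eps\lambda) \log(1/\alpha)/(\lambda \eps^2 \cdot 2^j))$; the worst case is $j=1$, where the constraint matches our choice of $m$ exactly (up to constants).

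The only subtlety — and the step I would treat most carefully — is the order of conditioning: the samples in Step~(1) are drawn before the test points $X_j$ in Step~(2), so technically the $X_j$ (and hence the indices to which we apply the learning bound) are random and independent of $Z^{(b)}$, but the $\p_b[i]$ bound from \Cref{clm:good-props} is a statement about those random indices. To handle this cleanly I would condition on the realization of $X = X_1 \cup \cdots \cup X_k$ and on the event of \Cref{clm:good-props} first, then invoke \Cref{thm:learn} and \Cref{cor:learn} over the (now fixed, at most $O(1/\eps\lambda)$) pairs $(i,b)$, and finally take a union bound. A brief remark will address the fact that $\hat\lambda$ is used in place of $\lambda$; since $\hat\lambda \in [\lambda/2,\lambda]$, the bound from \Cref{cor:learn} only loses a constant factor, which is absorbed into the $200$.
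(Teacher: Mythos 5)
Your proposal is correct and follows essentially the same route as the paper: apply \Cref{thm:learn} and \Cref{cor:learn} with per-index failure probability on the order of $\eps\lambda$ (the paper uses $\alpha = 1/(1000|X|)$), union bound over all $i \in X_1\cup\cdots\cup X_k$ and both $b$, invoke \Cref{clm:good-props} to bound $\p_b[i]$ and the promise $\norm{\rD_b}_\infty \leq 4/n$ to bound $\rD_b[i]$, and plug in the choice of $m$ with the worst case at $j=1$. The only (harmless) differences are bookkeeping — the paper obtains $0.996$ from four $0.001$-failure union bounds and conditions on \Cref{clm:good-props} separately rather than charging its probability here — and your explicit treatment of the conditioning order, which the paper leaves implicit but is valid since $X$ and $Z$ are independent.
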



We defer the proof of this claim to \Cref{sec:deferred-closeness} and now condition on the bounds from \Cref{clm:learning-error-bound}. Note that combining \Cref{clm:learning-error-bound} with the fact that $1/2n \leq \rD_b[i] \leq 4/n$, we have $1/3n \leq R_b[i]/m \leq 5/n$.

\paragraph{Query complexity.} We use the above properties to bound the cost of each bias estimation in line (4a). Since $R_b[i]/m \geq 1/3n$ and $\p_b[i] \leq O(\frac{2^j \log (1/\eps \lambda)}{n})$, and using \Cref{clm:learning-error-bound}, we have 
\[
\rho_b[i] = \frac{P_b[i]}{R_b[i]} = \lambda \cdot \frac{P_b[i]/\lambda m}{R_b[i]/m} \leq O\left(\lambda \cdot \frac{\p_b[i] + \eps 2^j/n}{1/n}\right) \leq O(\lambda 2^j \log(1/\eps \lambda))
\]
and thus, by \Cref{lem:bias-add}, estimating $\rho_b[i]$ to additive error $O(\lambda \eps 2^j)$ in line (4a) has query complexity
\[
O\left(\frac{\lambda 2^j \log(1/\eps \lambda) \log(1/\gamma)}{(\lambda \eps 2^j)^2}\right) = O\left(\frac{\log^2(1/\eps\lambda)}{\lambda 2^j \eps^2}\right) \text{.}
\]

In total, the query complexity of the tester is bounded by (using $|X_j| \leq O(2^j)$)
\[
\sum_{j \leq \log(1/\eps \lambda)} |X_j| \cdot O\left(\frac{\log^2(1/\eps \lambda)}{\lambda 2^j \eps^2}\right) = \sum_{j \leq \log(1/\eps \lambda)} O\left(2^j  \cdot\frac{\log^2(1/\eps \lambda)}{\lambda 2^j \eps^2}\right) = O\left(\frac{\log^3(1/\eps \lambda)}{\lambda \eps^2}\right) \text{.}
\]

\paragraph{Proof of correctness.} By our choice of $\gamma \ll \eps \lambda$ and the fact that $|X_1 \cup \cdots \cup X_k| \leq \frac{20}{\eps \lambda}$, we assume by a union bound that the bias estimation for each $i \in X_j$ yields $\widetilde{\rho}_b[i]$ such that $|\widetilde{\rho}_b[i] - \rho_b[i]| \leq \lambda \eps 2^j/1000$, so that 
\begin{align} \label{eq:bias-estimate}
    \left|\frac{R_b[i]}{\lambda m} \tilde{\rho}_b[i] - \frac{P_b[i]}{\lambda m}\right| = \frac{R_b[i]}{\lambda m}|\tilde{\rho}_b[i] - \rho_b[i]| \leq \frac{R_b[i]}{\lambda m} \cdot \frac{\lambda \eps 2^j}{1000} \leq \frac{\eps 2^j}{200n} \text{.}
\end{align}
In the last inequality we used the bound $R_b[i]/m \leq 5/n$.

\paragraph{Completeness.}  We now provide the proof of correctness. Suppose that $\p_1 = \p_2$. Then for every $i \in X$ we have $\p_1[i] = \p_2[i]$. By definition of the tester, when step (4b) is reached for some $i \in X_j$, for both $b \in \{1,2\}$ we have by the triangle inequality
\begin{align} \label{eq:C1}
    \left|\frac{R_b[i]}{\lambda m} \tilde{\rho}_b[i] - \p_b[i]\right| &\leq \left|\frac{R_b[i]}{\lambda m} \tilde{\rho}_b[i] - \frac{P_b[i]}{\lambda m}\right| + \left|\frac{P_b[i]}{\lambda m} - \p_b[i]\right| \leq \frac{\eps 2^j}{200n} + \frac{\eps 2^j}{200 n} \leq \frac{\eps 2^j}{100n}
\end{align}
where the first term is bounded using the bias estimation error bound \cref{eq:bias-estimate} and the second term is bounded by our conditioning on small error due to learning, \Cref{clm:learning-error-bound}. Therefore, by \cref{eq:C1} and another application of the triangle inequality we have
\begin{align}
    \left|\frac{R_1[i]}{\lambda m} \tilde{\rho}_1[i] - \frac{R_2[i]}{\lambda m}\tilde{\rho}_2[i] \right| &\leq \left|\frac{R_1[i]}{\lambda m} \tilde{\rho}_1[i] - \p_1[i]\right| + \left|\frac{R_2[i]}{\lambda m} \tilde{\rho}_2[i] - \p_2[i]\right| + |\p_1[i] - \p_2[i]| \leq \frac{\eps 2^j}{50 n}
\end{align}
and so the tester does not reject in step (4b). 

\paragraph{Soundness.} Now suppose that $\norm{\p_1 - \p_2}_1 > \eps$. The following is the main lemma for the analysis.

\begin{lemma} \label{lem:bucket-bias-closeness-strong} Recall that $k = \ceil{\log(8/\eps\lambda)}$ and for each $j \leq k$, $X_j$ is a set of $10 \cdot 2^j$ i.i.d. uniform samples from $[n]$. If $\norm{\p_1 - \p_2}_1 > \eps$, then 
\[
\Pr_{X_1,\ldots,X_k}\left[\exists j \leq k, i \in X_j \colon |\p_1[i] - \p_2[i]| > \frac{\eps 2^j}{4n} \right] \geq 0.999 \text{.}
\]
\end{lemma}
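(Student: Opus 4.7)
The plan is to bucket the elements of $[n]$ by the magnitude of $|\p_1[i]-\p_2[i]|$, show that the heavy buckets collectively cover a constant fraction of $[n]$ (weighted by $2^j$), and then exploit the independence of the sets $X_j$ across $j$ to convert this into a high-probability hit. Define $B_j = \{i \in [n] : |\p_1[i] - \p_2[i]| > \eps 2^j/(4n)\}$ for $0 \leq j \leq k$, so the lemma amounts to showing $X_j \cap B_j \neq \emptyset$ for some $j$ with probability at least $0.999$. Because the $X_j$'s are drawn independently across $j$, the events $\{X_j \cap B_j = \emptyset\}_{j \leq k}$ are mutually independent, giving
\[
\Pr[\text{miss every } B_j] \;\leq\; \prod_{j=0}^{k} \left(1-\tfrac{|B_j|}{n}\right)^{|X_j|} \;\leq\; \exp\!\left(-\tfrac{10}{n} \sum_{j=0}^{k} 2^j |B_j|\right).
\]
It therefore suffices to prove the deterministic bound $\sum_{j=0}^{k} 2^j |B_j| \geq 3n/4$, since then the miss probability is at most $e^{-7.5} < 10^{-3}$.

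For the deterministic bound, let $G = \{i : |\p_1[i] - \p_2[i]| > \eps/(4n)\}$. The complement contributes at most $n \cdot \eps/(4n) = \eps/4$ to $\norm{\p_1-\p_2}_1$, so $\sum_{i \in G} |\p_1[i]-\p_2[i]| \geq 3\eps/4$. The chain $B_0 \supseteq B_1 \supseteq \cdots \supseteq B_k$ is nested, so for each $i \in G$ I let $j_i = \max\{j \in \{0,\dots,k\}: i \in B_j\}$ (which exists since $i \in B_0$), giving $\sum_{j=0}^k 2^j \mbone[i \in B_j] = 2^{j_i+1}-1$. I will prove the per-element bound $2^{j_i+1}-1 \geq n|\p_1[i]-\p_2[i]|/\eps$ by splitting into two cases. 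If $|\p_1[i]-\p_2[i]| \leq \eps 2^k/(4n)$, then $j_i < k$ and the defining failure $|\p_1[i]-\p_2[i]| \leq \eps 2^{j_i+1}/(4n)$ rearranges to $2^{j_i} \geq 2n|\p_1[i]-\p_2[i]|/\eps$, which more than suffices. Otherwise $j_i = k$, and the infinity-norm promise $\norm{\p_b}_\infty \leq 4/(\lambda n)$ implies $|\p_1[i]-\p_2[i]| \leq 8/(\lambda n)$; combined with $2^k \geq 8/(\eps\lambda)$ from the definition of $k$, this yields $n|\p_1[i]-\p_2[i]|/\eps \leq 8/(\eps\lambda) \leq 2^k \leq 2^{k+1}-1$. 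Summing over $i \in G$ gives
\[
\sum_{j=0}^{k} 2^j |B_j| \;=\; \sum_{i \in G} \sum_{j=0}^{k} 2^j \mbone[i \in B_j] \;\geq\; \sum_{i \in G} \frac{n|\p_1[i]-\p_2[i]|}{\eps} \;\geq\; \frac{3n}{4},
\]
and the lemma follows.

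The subtle step---and the one I expect to demand the most care---is the large-difference case where $|\p_1[i]-\p_2[i]| > \eps 2^k/(4n)$. Here the natural rearrangement $2^{j_i} \gtrsim n|\p_1[i]-\p_2[i]|/\eps$ breaks because the sum $\sum_{j} 2^j \mbone[i \in B_j]$ is truncated at $j = k$, so we cannot go higher. The rescue is that the infinity-norm bound $\norm{\p_b}_\infty \leq 4/(\lambda n)$ guaranteed by the flattening promise of \Cref{thm:closenessUB-high-post-flattening} caps $|\p_1[i]-\p_2[i]|$ at $8/(\lambda n)$, and the choice $k = \lceil \log(8/\eps\lambda) \rceil$ in the tester is engineered precisely so that $2^k \geq 8/(\eps\lambda)$ absorbs this worst-case gap. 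This is exactly why the flattening and unification reductions of \Cref{thm:strong-mixture-flat-cor} and \Cref{lem:unifier} are indispensable preprocessing steps, and it explains the seemingly ad hoc setting of the tester's cutoff $k$.
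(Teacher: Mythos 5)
Your overall strategy is the same as the paper's: bucket the domain by the magnitude of $|\p_1[i]-\p_2[i]|$, use the independence of the $X_j$'s to bound the miss probability by $\exp\bigl(-\tfrac{10}{n}\sum_j 2^j |B_j|\bigr)$, and lower bound the weighted bucket sizes using $\norm{\p_1-\p_2}_1 > \eps$ together with the $\infty$-norm cap and the choice of $k$. The paper uses disjoint dyadic buckets while you use nested superlevel sets with a per-element telescoping; that difference is cosmetic, and your case analysis (including the capped case, where you correctly identify the role of $\norm{\p_b}_\infty \leq 4/\lambda n$ and $2^k \geq 8/\eps\lambda$) is internally consistent.

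There is, however, a concrete mismatch with the lemma's setup: your argument runs over $j = 0,1,\ldots,k$, but the lemma (and the tester) only provides $X_1,\ldots,X_k$ --- the probability is explicitly over $X_1,\ldots,X_k$ and $X = X_1 \cup \cdots \cup X_k$; there is no $X_0$. Your product bound therefore uses a factor $\bigl(1-\tfrac{|B_0|}{n}\bigr)^{10}$ that is not available, and your deterministic bound $\sum_{j=0}^{k} 2^j|B_j| \geq 3n/4$ leans on the $j=0$ term: an element with $j_i = 0$ contributes $1$ in your accounting but $0$ once $j=0$ is removed. If you simply truncate, your written per-element bound only yields $\sum_{j=1}^k 2^j|B_j| \geq n/2$ in the worst case (exclude the $\leq n\cdot\tfrac{\eps}{2n} = \eps/2$ mass on elements with $j_i=0$ and use contribution $\geq n|\p_1[i]-\p_2[i]|/\eps$ for the rest), giving a failure bound of only $e^{-5} \approx 0.0067$, short of the claimed $0.999$. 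The fix is routine but requires redoing the constants over $j \in \{1,\ldots,k\}$: for $j_i \geq 1$ non-capped elements the contribution $2^{j_i+1}-2$ is in fact at least $2\cdot \tfrac{n|\p_1[i]-\p_2[i]|}{\eps}$, and for capped elements one should use the sharper cap $|\p_1[i]-\p_2[i]| \leq \max(\p_1[i],\p_2[i]) \leq 4/\lambda n$ (rather than your triangle-inequality bound $8/\lambda n$) so that $2^k \geq 2n|\p_1[i]-\p_2[i]|/\eps$ there as well; this recovers $\sum_{j=1}^k 2^j|B_j| \geq n$ and hence $e^{-10}$, which is essentially the paper's computation with its disjoint buckets $B_j$ and the exclusion set $A = \{i : |\p_1[i]-\p_2[i]| < \eps/2n\}$.
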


By \Cref{lem:bucket-bias-closeness-strong}, with probability $0.999$, for some $j$, the set $X_{j}$ obtained in line (1) of the tester contains some element $i$ such that $|\p_1[i] - \p_2[i]| > \eps2^j/4n$. Note that we again have that \cref{eq:C1} holds for both $b \in \{1,2\}$ and all $i \in X_j$ using the same argument from the completeness case. 
Then, by the triangle inequality we have
\begin{align}
    \left|\frac{R_1[i]}{\lambda m} \tilde{\rho}_1[i] - \frac{R_2[i]}{\lambda m}\tilde{\rho}_2[i] \right| &\geq |\p_1[i] - \p_2[i]| - \left(\left|\frac{R_1[i]}{\lambda m} \tilde{\rho}_1[i] - \p_1[i]\right| + \left|\frac{R_2[i]}{\lambda m} \tilde{\rho}_2[i] - \p_2[i]\right|  \right) \nonumber \\
    &\geq \frac{\eps 2^j}{4n} - \frac{\eps 2^j}{100n} - \frac{\eps 2^j}{100 n} > \frac{\eps 2^j}{8n}
\end{align}
which causes the tester to correctly reject in step (5b). This completes the proof of \Cref{thm:closenessUB-high-post-flattening}.

\subsection{Deferred Proofs} \label{sec:deferred-closeness}

\begin{proofof}{\Cref{lem:bucket-bias-closeness-strong}} We prove the lemma using a bucketing argument. Let 
\[
A = \{i \in [n] \colon |\p_1[i] - \p_2[i]| < \eps/2n\} ~\text{ and }~ B = \{i \in [n] \colon \eps/2n < |\p_1[i] - \p_2[i]| \leq 4/\lambda n\} \text{.}
\]
Recall that $\norm{\p_b}_{\infty} \leq 4/\lambda n$, which implies that $A,B$ partition the domain, $[n]$, since $|\p_1[i]-\p_2[i]| > 4/\lambda n$ would imply $\max(\p_1[i],\p_2[i]) > 4/\lambda n$, contradicting our bound on the $\infty$-norm. We further partition $B := B_1 \sqcup \cdots \sqcup B_k$ where $k = \ceil{\log (8/\eps\lambda)}$ and
\[
B_j = \{i \in B \colon 2^j (\eps/4n) \leq |\p_1[i] - \p_2[i]| < 2^{j+1} (\eps/4n)\}\text{ for } j \leq k\text{.}
\]
First, observe that
\[
\eps \leq \sum_i \big|\p_1[i] - \p_2[i] \big| \leq |A| \cdot \frac{\eps}{2n} + \sum_{i \in B} |\p_1[i] - \p_2[i]| \leq \frac{\eps}{2} +  \sum_{i \in B} |\p_1[i] - \p_2[i]| ~\implies~ \sum_{i \in B} |\p_1[i] - \p_2[i]| > \eps/2
\]
Now, using the bucketing, we have
\begin{align} \label{eq:bucketing-cases-closeness}
    \eps/2 < \sum_{j \leq k} \sum_{i \in B_j} |\p_1[i] - \p_2[i]| \leq \sum_{j \leq k} |B_j| 2^{j+1} (\eps/4n) ~\implies~ \sum_{j \leq k} |B_j| 2^j \geq n \text{.} \nonumber 
\end{align}
Observe that
\begin{align}
    \Pr_{X_j}[X_j \cap B_j = \emptyset] = \left(1-\frac{|B_j|}{n}\right)^{10 \cdot 2^j} \leq \exp\left(-10 \cdot \frac{|B_j| 2^j}{n}\right) \text{.} \nonumber 
\end{align}
Combining the previous two inequalities and using independence of the $X_j$-s, we obtain
\begin{align}
    \Pr[\forall j \leq k\colon X_j \cap B_j = \emptyset] &= \prod_{j \leq k} \Pr_{X_j}[X_j \cap B_j = \emptyset] \leq \exp\left(-\frac{10}{n} \sum_{j\leq k} |B_j|2^j \right) \leq e^{-10} < .0001 \text{.} \nonumber 
\end{align}
Finally, observe that this implies
\begin{align}
    \Pr_{X_1,\ldots,X_k}\left[\exists j \leq k, i \in X_j \colon |\p_1[i] - \p_2[i]| > \frac{2^j \eps}{4n} \right] &\geq \Pr[\exists j \leq k \colon X_j \cap B_j \neq \emptyset] \nonumber \\
    &= 1-\Pr[\forall j \leq k\colon X_j \cap B_j = \emptyset] \geq 0.999\text{.} \nonumber 
\end{align}
This completes the proof. \end{proofof}

\vspace{1em}

\begin{proofof}{\Cref{clm:learning-error-bound}} By \Cref{cor:learn}, for each $i \in [n]$, we have
\[
\Pr_{Z}\left[ \left|\frac{P_b[i]}{\lambda m} - \p_b[i] \right| \leq \sqrt{\frac{3 \p_b[i] \ln(2/\alpha)}{\lambda m}} \right] \geq 1-\alpha \text{.}
\]
Since, $X_j \leq O(2^j)$ where $j \leq k = O(\log(1/\eps \lambda))$, we can set $\alpha := 1/1000|X|$ and get that this inequality holds for all $i \in X_j$ by a union bound with probability $1-\frac{1}{1000k}$. Again by a union over all $j \leq k$, we get that this holds for all $j$ with probability $0.999$. Then, recall that from \Cref{clm:good-props}, for $i \in X_j$, we have $\p_b[i] \leq \frac{1000 2^j \log(1/\eps \lambda)}{n}$ for all $i \in X_j$. Then, using and $m = O(\frac{n \log^2(1/\eps \lambda)}{\eps^2 \lambda})$ (the sample complexity used in line (1) of the tester), we have
\begin{align}
    \left|\frac{P_b[i]}{\lambda m} - \p_b[i] \right| &\leq \sqrt{\frac{3 \p_b[i] \ln(2/\alpha)}{\lambda m}} \leq \sqrt{\frac{3000 \cdot 2^j \log^2(1/\eps\lambda)}{\lambda n \cdot m}} \leq \frac{\eps 2^j }{200 n}\text{.}
\end{align}
Note that last inequality holds as long as $m \geq \frac{Cn \log^2(1/\eps\lambda)}{\eps^2\lambda 2^j}$ for a sufficiently large constant $C$. This is of course maximized when $j=1$, and so this bound is satisfied for the choice of $m$ in the tester (line (1)).
Similarly, by \Cref{thm:learn} we have for each $i \in [n]$
\[
\Pr_{Z}\left[ \left|\frac{R_b[i]}{ m} - \rD_b[i] \right| \leq \sqrt{\frac{3 \rD_b[i] \ln(2/\alpha)}{m}} \right] \geq 1-\alpha
\]
and by a union bound with probability $0.999$ this holds for every $i \in X_1 \cup \cdots \cup X_k$. Then, using our bound $\norm{\rD_b}_{\infty} \leq \frac{4}{n}$ and $m \geq \frac{Cn \log^2(1/\eps\lambda)}{\eps^2\lambda}$ yields
\[
\left|\frac{R_b[i]}{ m} - \rD_b[i] \right| \leq \sqrt{\frac{3 \rD_b[i] \ln(2/\alpha)}{m}} \leq \sqrt{\frac{12 \ln(2/\alpha)}{n m}} \leq \frac{\eps \sqrt{\lambda}}{200n} \leq \frac{\eps}{200 n} \text{.}
\]
Taking a union bound over both inequalities for both $b \in \{1,2\}$ completes the proof. \end{proofof}

\section{Query-Optimal Uniformity Testing with Adversarial Contamination} \label{sec:highsample}

Note that the closeness tester established in \Cref{sec:closeness-UB-highsample} implies a uniformity tester with the same sample and query complexity which works against distributional contamination. In this section, we show that for uniformity testing we can achieve similar query complexity even against \emph{adversarial} contamination.
That is, with probability $\lambda$ the sample is drawn according to the target distribution $\p$, and with probability $1-\lambda$ the sample is chosen by an adversary. Moreover, all of the adversarial samples can be chosen after all of the $\p$-samples are fixed. We prove the following theorem.

\begin{theorem} \label{thm:n-sample-1-query-adversary} For all $\lambda, \eps \in (0,1)$, there is a $\lambda$-adversarially robust $\eps$-uniformity tester using 
\[
O\left(\frac{n \log (1/\eps)}{\eps^2\lambda} \right) \text{ samples and } O\left(\frac{\log^4(1/\eps)}{\eps^2\lambda}\right) \text{ verification queries.} 
\]
\end{theorem}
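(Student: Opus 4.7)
The plan is to adapt the query-optimal closeness tester of \Cref{thm:closenessUB-high-post-flattening} to the adversarial uniformity setting, exploiting—as discussed in \Cref{sec:tech-sup}—the observation that uniformity testing only requires comparing $\p[i]$ to the known value $1/n$. So an estimate of $\p[i]$ whose accuracy scales with $\max(\p[i],1/n)$ (additive $\eps/n$ when $\p[i]\leq 2/n$, constant multiplicative otherwise) suffices, and the flattening step—which is hard to implement against adversarial contamination—can be skipped entirely.

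After estimating $\hat\lambda\in[\lambda/2,\lambda]$ via \Cref{lemma:lambda-est-2}, I draw $m=O(n\log(1/\eps)/(\eps^2\lambda))$ samples from the adversarial source and, for each $j=0,1,\dots,k:=\lceil\log(1/\eps\lambda)\rceil$, sample an independent test set $X_j$ of size $\Theta(2^j\log^2(1/\eps))$ uniformly from $[n]$. For each $i\in X:=\bigcup_j X_j$ with empirical count $R[i]$, a uniform random sample from bucket $i$ simulates a $\mathrm{Bern}(\rho[i])$ trial with $\rho[i]=P[i]/R[i]$; I then invoke \Cref{lem:bias-add} to estimate $\rho[i]$ to additive accuracy $\Theta(\eps\hat\lambda 2^j)$, define $\tilde\p[i]=R[i]\tilde\rho[i]/(\hat\lambda m)$, and reject if any test element fails a one-sided bucket-dependent threshold of the form $\tilde\p[i]\leq 1/n+O(\eps 2^j/n)$.

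The correctness argument has two ingredients. For soundness, I adapt \Cref{lem:bucket-bias-closeness-strong} to the uniformity setting via the bucketing $B_j=\{i : \eps 2^j/n \le |\p[i]-1/n| < \eps 2^{j+1}/n\}$, with a separate heavy bucket handling $\p[i]>2/n$; the usual mass-counting argument gives $\sum_j|B_j|2^j\gtrsim n$, so with probability $\geq 0.99$ some $X_j$ lands in $B_j$. For completeness, the one-sided threshold is essential, because the adversary can suppress $P[i]$ below its expectation and cause $\tilde\p[i]$ to underestimate $1/n$, but it cannot substantially overestimate it since $P[i]\leq N_i\approx m\p[i]$ concentrates from above by a Chernoff bound on the original iid draws. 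A budget argument bounds the number of ``adversarially spiked'' coordinates (those whose $R[i]$ is driven far above $m/n$) by $O(n/2^j)$, so with high probability none of them appear in $X_j$—this both preserves our $\tilde\p[i]$ estimates and bounds $\rho[i]=O(\hat\lambda 2^j\log(1/\eps))$, which is the upper bound needed to control the per-element query cost.

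The main obstacle is that the adversary can both inflate and suppress $R[i]$ on individual coordinates, unlike in the distributional setting where $R[i]$ concentrates tightly around $m\rD[i]$. The key technical claim is a budget-based structural lemma: since the adversary's total budget is $(1-\lambda)m$, the number of coordinates whose count $R[i]$ exceeds $T$ is at most $O(m/T)$, and the adversary must commit to its corruption before the random choice of $X_j$. Combining this with the independence of $X_j$ and summing the per-element query cost $O(\rho[i]/(\eps\hat\lambda 2^j)^2\cdot\log(1/\gamma))=\widetilde O(1/(\eps^2\lambda 2^j))$ over $|X_j|=\widetilde\Theta(2^j)$ elements across the $k=O(\log(1/\eps\lambda))$ buckets yields the target $\widetilde O(1/(\eps^2\lambda))$ query complexity. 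The extra logarithmic factor over \Cref{thm:closenessUB-high-post-flattening} ($\log^4$ versus $\log^3$) is exactly the cost of the additional union bounds needed to absorb worst-case placements of the adversarial budget across test coordinates.
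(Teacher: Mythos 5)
There is a genuine gap: your tester never samples test elements from $\p$ itself, and that is exactly the ingredient the adversarial/unflattened setting cannot do without. Without flattening, the excess mass of a far-from-uniform $\p$ can sit on a handful of heavy elements: take $\p[i_0]=1/n+\eps$ and $\p[i]=1/n-\eps/(n-1)$ elsewhere, which is $\eps$-far from uniform. Your test sets $X_j$ are uniform over $[n]$, so they contain $i_0$ only with probability $\widetilde{O}(1/(\eps\lambda n))$, and your rejection rule is one-sided (reject only when $\tilde\p[i]$ exceeds $1/n+O(\eps 2^j/n)$), so the deficient light elements that uniform sampling does hit can never trigger rejection; your tester therefore accepts this $\eps$-far distribution with high probability. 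Relatedly, the inequality $\sum_j|B_j|2^j\gtrsim n$ you import from \Cref{lem:bucket-bias-closeness-strong} is only valid when the per-element deviation is capped (there by flattening, $\norm{\p_b}_{\infty}\le 4/\lambda n$); here the deviation can escape into your ``separate heavy bucket,'' for which you give no detection mechanism. The paper resolves precisely this through the dichotomy of \Cref{lem:bucket-bias}: either a sample from $\p$ lands on a heavy witness with probability $\ge\eps/8$, so the tester draws a set $X_{\p}$ of $O(1/\eps)$ test elements from $\p$ (at a cost of $O(1/\lambda\eps)$ samples and queries) and runs the one-sided heavy check of Query Phase 1, or the uniform-sampling case applies, where the check (\Cref{clm:bias-est}) is two-sided. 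Two-sidedness on light elements is safe for completeness because in the model of \Cref{sec:highsample} the adversary only adds samples, so $P[i]$ concentrates around $\lambda m\,\p[i]$ (\Cref{cor:learn-adversary}) and cannot be suppressed; the suppression worry that drove you to a one-sided threshold does not arise, while the one-sided rule costs you soundness.

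A secondary issue: your claim that with high probability no ``adversarially spiked'' coordinate appears in $X_j$ is quantitatively false. There can be $\Theta(n/2^j)$ coordinates (up to logarithmic factors) whose count $R[i]$ exceeds the level-$j$ threshold, and $|X_j|=\Theta(2^j\log^2(1/\eps))$ uniform draws, so the expected number of spiked test elements is $\Theta(\log^2(1/\eps))$, not $o(1)$. The paper instead simply skips any test element with $R[i]/m$ above a bucket-dependent threshold (step (5a)), which is harmless for completeness, and shows in item (2) of \Cref{lem:bucket-bias}, via Markov's inequality, that at least half of the witness probability in $B_{j^{\ast}}$ survives the skip rule, which is what both preserves soundness and supplies the lower bound on $\beta=\lambda m/(R[i]n)$ needed to control the per-element query cost.
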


In the rest of the section, we use $\rD_{A,\p,\lambda}$ to denote an oracle, which when sampled, produces a sample from $\p$ with probability $\lambda$, and produces an adversarially chosen element of $[n]$ with probability $1-\lambda$.

\subsection{Preliminaries}

Before proving \Cref{thm:n-sample-1-query-adversary}, we provide the needed claims for estimating coin bias. 

\begin{claim} \label{clm:bias-est-mult} Let $\rho \in (\beta,1)$ be an unknown coin probability, where $\beta$ is known. Then $O(\frac{\ln (1/\gamma)}{\beta \delta^2})$ i.i.d. $\mathrm{Bern}(\rho)$ trials suffice to attain a $(1\pm\delta)$-approximation of $\rho$ with probability $1-\gamma$. \end{claim}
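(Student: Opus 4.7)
The plan is to take $m = \Theta\!\left(\frac{\ln(2/\gamma)}{\beta \delta^{2}}\right)$ i.i.d.\ $\mathrm{Bern}(\rho)$ trials $X_1, \ldots, X_m$ and return the empirical mean $\hat{\rho} = \frac{1}{m}\sum_{i=1}^m X_i$ as the estimate of $\rho$. Showing this works is a direct application of the multiplicative Chernoff bound, so the argument is a short one-liner with no real obstacle.

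Concretely, I would cite the multiplicative Chernoff bound in the form
$$\Pr\!\left[\,\lvert \hat{\rho} - \rho\rvert > \delta \rho\,\right] \leq 2 \exp\!\left(-\frac{\delta^2 m \rho}{3}\right)\text{.}$$
Using the hypothesis $\rho > \beta$, we have $m\rho > m\beta$, so for $m \geq \frac{3 \ln(2/\gamma)}{\beta \delta^2}$ the right-hand side is bounded by $\gamma$. This gives $\hat{\rho} \in (1\pm\delta)\rho$ with probability at least $1-\gamma$ using the claimed number of samples.

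The only subtlety worth flagging is that $\rho$ (and hence $m\rho$) is unknown to the algorithm, but this is immaterial because the sample budget $m$ depends only on the \emph{known} lower bound $\beta$; the Chernoff bound is simply applied with the true (unknown) $\rho \geq \beta$ in the exponent, which can only help. I do not see any nontrivial step here beyond invoking the standard inequality.
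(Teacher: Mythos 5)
Your proof is correct and is essentially the same as the paper's: both compute the empirical mean over $m = \Theta\!\big(\frac{\ln(1/\gamma)}{\beta\delta^2}\big)$ trials and apply the multiplicative Chernoff bound, using $\rho > \beta$ to lower-bound the exponent $\delta^2 m\rho/3$ by $\delta^2 m\beta/3$.
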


\begin{proof} Consider taking $m$ i.i.d. trials $X_1,\dots,X_m$ and let $X = \sum_{j\leq m} X_j$ and $\mu := \EX[X] = \rho m$. The output of the algorithm will be $X/m$. By the Chernoff bound,
\[
\Pr[X/m \notin [(1\pm \delta)\rho m]] = \Pr[|X-\mu| > \delta \mu] \leq 2\exp(-\delta^2 (\rho m) / 3)
\]
and the RHS is at most $\gamma$ when $m = \frac{50\ln(1/\gamma)}{\beta \delta^2}$. \end{proof}


\begin{claim} \label{clm:bias-est-small-case} Let $\rho \in (0,1)$ be an unknown coin probability, and let $\beta$ be known. Then $O(\frac{\ln(1/\gamma)}{\beta})$ i.i.d. $\mathrm{Bern}(\rho)$ trials suffice to distinguish $\rho \geq \beta$ from $\rho < \beta/8$ with probability $1-\gamma$. \end{claim}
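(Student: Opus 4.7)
The plan is a direct Chernoff-bound argument using a fixed threshold at $\beta/2$. Take $m = C\ln(1/\gamma)/\beta$ i.i.d. trials $X_1,\dotsc,X_m \sim \mathrm{Bern}(\rho)$ for a sufficiently large constant $C$, form $X = \sum_{j \leq m} X_j$ with $\mu := \EX[X] = \rho m$, and output ``$\rho \geq \beta$'' iff $X/m \geq \beta/2$.

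The analysis splits into the two promised cases. If $\rho \geq \beta$, then $\beta/2 \leq \rho/2 = (1 - 1/2)\mu/m$, so by the multiplicative lower-tail Chernoff bound
\[
\Pr[X < \beta m/2] \leq \Pr[X \leq (1 - 1/2)\mu] \leq \exp(-\mu/8) \leq \exp(-\beta m/8) \leq \gamma
\]
for $C$ large enough. If instead $\rho < \beta/8$, set $\delta = \beta/(2\rho) - 1 > 3$ so that $\beta m /2 = (1+\delta)\mu$; the upper-tail Chernoff bound in the large-deviation regime ($\delta > 1$) gives
\[
\Pr[X \geq \beta m/2] \leq \exp(-\delta\mu/3) = \exp\!\left(-\tfrac{(\beta/2 - \rho)m}{3}\right) \leq \exp(-\beta m/8) \leq \gamma,
\]
using $\beta/2 - \rho > 3\beta/8$. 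Combining the two cases shows the decision rule succeeds with probability at least $1-\gamma$, and the sample complexity is $O(\ln(1/\gamma)/\beta)$ as required.

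There is essentially no obstacle here; the only thing to be careful about is choosing the Chernoff form appropriate to the regime (the additive gap is a constant fraction of $\beta$ in both directions, and in the soundness case the deviation is well beyond the mean, so the $\exp(-\delta\mu/3)$ variant is needed instead of $\exp(-\delta^2 \mu/3)$).
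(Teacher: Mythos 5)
Your proof is correct and takes essentially the same route as the paper: a fixed threshold at $\beta/2$ with Chernoff bounds in both directions. The only cosmetic difference is that the paper delegates the $\rho \geq \beta$ case to \Cref{clm:bias-est-mult} (a multiplicative estimate) while you do the lower-tail Chernoff bound directly, and the paper uses the $\exp(-\delta^2\mu/(2+\delta))$ form for the soundness case where you use $\exp(-\delta\mu/3)$ for $\delta > 1$; both are standard and yield the same $O(\ln(1/\gamma)/\beta)$ sample bound.
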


\begin{proof} Let $X_1,\ldots,X_m$ be i.i.d. $\mathrm{Bern}(\rho)$ trials and let $X$ denote their sum. Then $\mu := \Exp[X] = \rho m$. We will conclude that $\rho = \beta$ if $X > \beta m/2$. Otherwise we will conclude that $\rho < \beta/8$. If $\rho \geq \beta$ it suffices to attain a constant factor approximation of $\rho$ using $O(\frac{\ln (1/\gamma)}{\beta})$ trials with \Cref{clm:bias-est-mult}. Now suppose $\rho < \beta/8$. We will use the following statement of the Chernoff bound: for any $\delta > 0$, we have
\[
\Pr[X > \mu + \delta\mu] \leq \exp\left(-\frac{\delta^2 \mu}{2 + \delta}\right)
\]
In particular, if $\rho < \beta/8$, then
\[
\Pr[X > \beta m/2] \leq \Pr[X > \rho m + (3\beta/8\rho) \rho m] \leq \exp\left(-\frac{(3\beta/8\rho)^2 \cdot \rho m}{2 + (3\beta/8\rho)}\right) \text{.}
\]
Since $\rho < \beta/8$, we have $3\beta/8\rho > 3$ and so the RHS above is at most
\[
\exp\left(-\frac{(3\beta/8\rho)^2 \cdot \rho m}{2 \cdot(3\beta/8\rho)}\right) = \exp\left(-\frac{3\beta m}{16}\right) \leq \gamma
\]
for $m \geq c \ln(1/\gamma) /\beta$ for sufficiently large constant $c$. \end{proof}

\begin{claim} [$(\beta,\delta,\gamma)$-Bias Estimation] \label{clm:bias-est} Let $\rho \in (0,1)$ be an unknown coin probability, and let $\beta \in (0,1)$. Then there is an algorithm using $O(\frac{\ln(1/\gamma)}{\beta \delta^2})$ i.i.d. $\mathrm{Bern}(\rho)$ trials with the following guarantee.
\begin{enumerate}
    \item If $|\rho - \beta| < \delta\beta /2$, then the algorithm outputs "close" with probability at least $1-\gamma$.
    \item If $|\rho - \beta| > \delta \beta$, then the algorithm outputs "far" with probability $1-\gamma$.
\end{enumerate}
\end{claim}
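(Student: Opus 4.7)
The plan is to combine \Cref{clm:bias-est-small-case} (a cheap filter that certifies $\rho$ is not vanishingly small) with \Cref{clm:bias-est-mult} (the multiplicative Chernoff estimator). A direct application of Hoeffding would estimate $\rho$ to additive error $\delta\beta$ in $\Theta(\ln(1/\gamma)/(\delta^2\beta^2))$ samples, which is a factor $1/\beta$ too large; the relative-error form of Chernoff recovers that factor, but only when a positive lower bound on $\rho$ is already certified. I assume $\delta\le 1$ throughout, since otherwise one of the two hypotheses becomes vacuous.

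\textbf{Step 1.} Invoke \Cref{clm:bias-est-small-case} with its threshold set to $\beta/2$ and failure probability $\gamma/2$, using $O(\ln(1/\gamma)/\beta)$ trials; with the stated confidence this distinguishes $\rho\ge\beta/2$ from $\rho<\beta/16$. If it returns \emph{small}, output \emph{far}: in the close regime $|\rho-\beta|<\delta\beta/2$ forces $\rho>\beta(1-\delta/2)\ge\beta/2$, so the subroutine would have returned \emph{large}, and the contrapositive is correct. \textbf{Step 2.} Otherwise, apply \Cref{clm:bias-est-mult} with the lower-bound parameter set to $\beta/16$ (the guarantee provided by Step~1), relative error $\delta/8$, and failure probability $\gamma/2$, using $O(\ln(1/\gamma)/(\beta\delta^2))$ trials to produce $\hat\rho$ with $|\hat\rho-\rho|\le(\delta/8)\rho$. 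Output \emph{close} iff $|\hat\rho-\beta|\le 3\delta\beta/4$. A union bound gives overall success probability at least $1-\gamma$, and the total sample count is the advertised $O(\ln(1/\gamma)/(\beta\delta^2))$.

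The correctness check reduces to elementary interval arithmetic. In the close case $\rho\in[\beta/2,3\beta/2]$, so $|\hat\rho-\beta|\le(\delta/8)(3\beta/2)+\delta\beta/2 = 11\delta\beta/16 < 3\delta\beta/4$. In the far case with $\rho>\beta(1+\delta)$, $\hat\rho\ge\rho(1-\delta/8)>\beta(1+\delta)(1-\delta/8)\ge\beta(1+3\delta/4)$, so $|\hat\rho-\beta|>3\delta\beta/4$; in the far case with $\rho<\beta(1-\delta)$, Step~1 certifies $\rho\ge\beta/16>0$, hence $\hat\rho\le\rho(1+\delta/8)<\beta(1-\delta)(1+\delta/8)\le\beta(1-7\delta/8)$, again giving $|\hat\rho-\beta|>3\delta\beta/4$. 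The main subtlety is the interface between the two steps: \Cref{clm:bias-est-small-case} is only guaranteed to behave correctly outside the ambiguous band $[\beta/16,\beta/2)$, so $\rho\ge\beta/16$ is the strongest lower bound available at the start of Step~2, and the relative error $\delta/8$ and threshold $3\delta\beta/4$ must be calibrated precisely so that the completeness bound $11\delta\beta/16$ and the soundness bound $3\delta\beta/4$ admit a constant gap.
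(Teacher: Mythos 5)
Your proposal is correct and follows essentially the same route as the paper's own (very terse) proof: run \Cref{clm:bias-est-small-case} to rule out vanishingly small $\rho$, then use \Cref{clm:bias-est-mult} to get a relative-error estimate and threshold it against $\beta$. If anything, your version is more careful than the paper's, which loosely asks for a $(1\pm\delta)$ multiplicative estimate even though (as your interval arithmetic shows) one needs relative error $\delta/C$ for a suitable constant $C$ to separate the two hypotheses — a constant-factor adjustment absorbed by the $O(\cdot)$ in the sample bound.
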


\begin{proof} Take $m = O(\frac{\ln(1/\gamma)}{\beta \delta^2} + \frac{\ln(1/\gamma)}{\beta})$ samples where the first term is enough to attain a $(1 \pm \delta)$ multiplicative approximation of $\rho$ when $\rho > \beta/8$ using \Cref{clm:bias-est-mult} and the second term is enough to detect when $\rho < \beta/8$ \Cref{clm:bias-est-small-case} (this does not require any lower bound on $\rho$). In both cases the empirical estimate is enough to distinguish the two cases with probability $1-\gamma$. \end{proof}

Additionally, we will use the following corollary of \Cref{thm:learn} for adversarial mixtures.

\begin{corollary} \label{cor:learn-adversary} Let $X \subseteq [n]$ and let $\p \colon [n] \to [0,1]$ be an arbitrary distribution. Consider a multi-set $Z$ of $m = O(\frac{n \ln |X|}{\lambda \eps^2})$ samples drawn from $\rD_{A,\p,\lambda}$. Let $P[i]$ denote the number of samples which were drawn by $\p$ and which occurred on element $i$. Then, 
\[
\Pr_{Z}\left[ \forall i \in X \colon \left|\frac{P[i]}{m} - \lambda \p[i] \right| \leq \frac{\eps \lambda}{100} \sqrt{\frac{\p[i]}{n}} \right] \geq 0.99 \text{.}
\]
\end{corollary}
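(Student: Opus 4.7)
The plan is to follow the template of the proof of \Cref{cor:learn}, reducing to \Cref{thm:learn}, after one observation that neutralizes the adversarial component of the oracle. For each sample produced by $\rD_{A,\p,\lambda}$, independently of the adversary's strategy, the oracle first flips a $\mathrm{Bern}(\lambda)$ coin to decide whether the sample is drawn from $\p$; conditioned on this coin coming up ``$\p$'', the sample is drawn from $\p$ independently of everything else. Consequently the joint distribution of the counts $(P[i])_{i\in[n]}$ is exactly that of $m$ i.i.d.\ samples from the auxiliary distribution $\p_\lambda$ on $[n]\cup\{A\}$ defined by $\p_\lambda[i]=\lambda\p[i]$ for $i\in[n]$ and $\p_\lambda[A]=1-\lambda$. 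In particular, $P[i]\sim\mathrm{Bin}(m,\lambda\p[i])$, and this law is independent of the adversary's (possibly adaptive) choices.

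Given this reduction, I will invoke \Cref{thm:learn} on the auxiliary distribution $\p_\lambda$ with failure parameter $\alpha=1/(100|X|)$. For any fixed $i\in[n]$ this gives
\[
\left|\frac{P[i]}{m}-\lambda\p[i]\right|\;\leq\;\sqrt{\frac{3\lambda\p[i]\ln(2/\alpha)}{m}}
\]
with probability at least $1-\alpha$. Requiring that the right-hand side be at most $\frac{\eps\lambda}{100}\sqrt{\p[i]/n}$ is equivalent, after squaring and cancelling $\lambda\p[i]$, to the condition $m\geq 30000\,n\ln(2/\alpha)/(\eps^2\lambda)$, which is subsumed by the stated $m=O(n\ln|X|/(\lambda\eps^2))$ with a sufficiently large hidden constant. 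A union bound over the $|X|$ elements of $X$ then yields the joint statement with probability at least $1-|X|\cdot\alpha\geq 0.99$.

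The bulk of the proof is essentially mechanical once the first paragraph's decoupling observation is in place; the only mild subtlety is to interpret \Cref{thm:learn} in its general (Bernstein-style) form so that the Chernoff inequality is tight for every value of $\lambda\p[i]$, including the regime where the expected count $\mu=\lambda\p[i]m$ is smaller than $\ln(2/\alpha)$. In that regime the target accuracy $\frac{\eps\lambda}{100}\sqrt{\p[i]/n}$ already exceeds $\lambda\p[i]$ itself, so the required bound is essentially that $P[i]$ is small with high probability, which follows from the same exponential deviation estimate $\Pr[P[i]-\mu>t]\leq\exp(-t^2/(2\mu+2t/3))$ applied with $t=\frac{\eps\lambda m}{100}\sqrt{\p[i]/n}\geq\mu$. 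I expect this ``small $\p[i]$'' case to be the only place where any care beyond a direct invocation of \Cref{thm:learn} is required.
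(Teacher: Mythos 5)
Your proposal is correct and follows essentially the same route as the paper: reduce to the auxiliary distribution $\p_\lambda$ on $[n]\cup\{A\}$ (noting the counts $P[i]$ are unaffected by the adversary), invoke \Cref{thm:learn} with failure parameter $\alpha = 1/(100|X|)$, check that the stated $m$ makes the error term at most $\frac{\eps\lambda}{100}\sqrt{\p[i]/n}$, and union bound over $X$. The extra remark about the small-$\lambda\p[i]m$ regime is a reasonable precaution but not something the paper's proof separates out; it is handled by the same Chernoff-type bound underlying \Cref{thm:learn}.
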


\begin{proof} Consider a distribution $\p_{\lambda} \colon [n] \cup \{A\} \to [0,1]$ where $\p_{\lambda}[i] = \lambda \p[i]$ and $\p_{\lambda}(A) = 1-\lambda$. For $i \in [n]$, the random variable $P[i]/m$ is equivalent to an $m$-sample empirical estimate of $\p_{\lambda}[i] = \lambda \p[i]$. Therefore, by \Cref{thm:learn} we have
\[
\Pr_{Z}\left[\left|\frac{P[i]}{m} - \lambda \p[i]\right| > \sqrt{\frac{3 \lambda \p[i] \ln (2/\alpha)}{m}} \right] \leq \alpha
\]
and so setting $m = \frac{C n \ln |X|}{\lambda \eps^2}$ for sufficiently large constant $C > 0$, $\alpha = 1/(100|X|)$, and taking a union bound completes the proof. \end{proof}

\subsection{Query-Optimal Uniformity Testing: Proof of \texorpdfstring{\Cref{thm:n-sample-1-query-adversary}}{Theorem 8.1}} \label{sec:uniformity-UB-highsample}


The algorithm is defined as follows. Throughout the algorithm's description, let $\gamma = \eps^2/1000$. Using \Cref{lemma:lambda-est-2}, we first obtain an estimate $\hat{\lambda}$ such that $\hat{\lambda} \in [\lambda/2,\lambda]$ using $O(1/\lambda)$ samples and queries which we use in place of $\lambda$ throughout the algorithm's description.

\begin{mdframed}
\begin{enumerate}   
    \item \textbf{(Sampling Phase.)} Draw a multi-set $Z$ of $m = O(\frac{n \ln (1/\eps)}{\lambda\eps^2})$ samples from $\rD_{A,\p,\lambda}$. Let $Z_i$ denote the set of samples occurring on element $i$. Let $P[i]$ and $A[i]$ denote the number of such samples generated by $\p$ and by the adversary, respectively, and let $R[i] = P[i] + A[i]$. 
    \Comment{\emph{(Note: $P[i],A[i]$ are not known to the tester, but $R[i]$ is.)}}
    \item \textbf{(Selecting Test Elements.)} Draw a set $X_{\p}$ of $|X_{\p}| = \frac{100}{\eps}$ samples from $\p$. \Comment{\emph{(This can be done using $O(\frac{1}{\lambda \eps})$ samples and queries with high probability.)}} For each $j \leq \ceil{\log (4/\eps)} = :k$, draw a set $X_{U,j}$ of $|X_{U,j}| = 10 \cdot 2^j \log (1/\eps)$ samples drawn from $U_n$. \Comment{\emph{(This requires no queries.)}} Let $X = X_{\p} \cup X_{U,1} \cup \cdots \cup X_{U,k}$. 
    \item \textbf{(Coin Definition.)} For each $i \in X$, let $\rho[i] := \frac{P[i]}{R[i]}$. \Comment{\emph{(Using one query to a uniform random sample in $Z_i$, we have access to a $\rho[i]$-probability coin. Recall that the denominator $R[i]$ is known to the tester.)}} 
    \item \textbf{(Query Phase 1.)} For each $i \in X_{\p}$: 
    \begin{enumerate}
        \item Invoke \Cref{clm:bias-est-small-case} with $\beta = \frac{\lambda \eps}{100}$ and error probability $\gamma$ on $\rho[i]$ using queries to uniform random elements of $Z_i$. If it returns "$\rho[i] < \beta/8$", then skip the next step and continue the for-loop.
        \item Invoke \Cref{clm:bias-est-mult} with $\beta = \frac{\lambda\eps}{20}, \delta = \frac{1}{200}$ on $\rho[i]$ using queries to uniform random elements of $Z_i$ to obtain a multiplicative estimate $\widetilde{\rho}[i]$ satisfying $(1-\frac{1}{200})\rho[i] \leq \widetilde{\rho}[i] \leq (1+\frac{1}{200})\rho[i]$ with probability $1-\gamma$. If $\widetilde{\rho}[i] > \frac{3\lambda m}{2 R[i] n}$, then \textbf{reject}.
    \end{enumerate}
    
    \item \textbf{(Query Phase 2.)} For $j \leq k$ and $i \in X_{U,j}$:
    \begin{enumerate}
        \item If $\frac{R[i]}{m} > \frac{2^{j} \log(1/\eps)}{n}$, then skip the next step and continue the for-loop.
        \item Invoke $(\beta,\delta,\gamma)$-bias estimation (\Cref{clm:bias-est}) with $\beta = \frac{\lambda m}{R[i] n}, \delta = \frac{\eps 2^j}{100}$ on $\rho[i]$, using queries to uniform random elements of $Z_i$. If this outputs "far", then reject. Otherwise continue. 
    \end{enumerate}
    \item \textbf{Accept}.
\end{enumerate}
\end{mdframed}

\paragraph{Sample and query complexity.} All samples from $\rD_{A,\p,\lambda}$ are drawn in steps (1) and (2), which use $O(\frac{n \ln (1/\eps)}{\lambda\eps^2})$ and $O(\frac{1}{\lambda\eps})$ samples respectively. 

We now analyze the number of queries. Step (2) makes $O(\frac{1}{\lambda \eps})$ queries. By \Cref{clm:bias-est-small-case}, step (4a) uses $O(\frac{\ln (1/\eps)}{\lambda \eps})$ queries. By \Cref{clm:bias-est-mult}, step (4b) then also uses $O(\frac{\ln (1/\eps)}{\lambda \eps})$ queries. Thus, given that the set $X_{\p}$ drawn in step (2) contains $O(1/\eps)$ elements, the total number of queries made in step (4) is bounded by $O(\frac{\log(1/\eps)}{\lambda \eps^2})$.

We now analyze the number of queries made in step (5). Step (5a) makes no queries. Now note that by virtue of step (5a), step (5b) is only ever invoked with $\beta = \frac{\lambda m}{R[i] n} \geq \frac{\lambda}{2^{j+1} \log(1/\eps)}$. Thus, using \Cref{clm:bias-est} and $\delta = \frac{\eps 2^j}{100}$, the number of queries made by an iteration of step (5b) is at most 
\[
O\left(\frac{\ln(1/\gamma)}{\beta \delta^2}\right) = O\left(\frac{\log^2(1/\eps)}{2^j \lambda\eps^2}\right) \text{.}
\]
Now, since the set $X_{U,j}$ drawn in step (2) is of size $O(2^j \log (1/\eps))$, the total number of queries made across all iterations of step (5b) for a fixed $j$ is bounded by $O(\frac{\log^3(1/\eps)}{\lambda \eps^2})$. Then, summing over all $j \leq k = O(\log (1/\eps))$, we see that the query complexity of the tester is bounded by $O(\frac{\log^4(1/\eps)}{\lambda\eps^2})$, as claimed.

\paragraph{Tester analysis.} First, by \Cref{cor:learn}, since $|X| \leq O(1/\eps^2)$, we have
\begin{align} \label{eq:good-Z}
    \Pr_{Z,X}\left[\forall i \in X \colon \left|\frac{P[i]}{m} - \lambda \p[i]\right| \leq \frac{\lambda\eps}{100} \sqrt{ \frac{\p[i]}{n}}\right] \geq 0.99 \text{.}
\end{align}
We will denote the event inside the above probability by $\cE_{X,Z}$ and will henceforth assume this holds for $X$ and $Z$. Note that the tester does not have knowledge of what $P[i]/m$ is, but it is a well-defined quantity given a fixed $Z$. Now, note that using $\rho[i] := P[i]/R[i]$ we have
\begin{align} \label{eq:good-Z-2}
     \cE_{X,Z} ~\implies~ \forall i \in X \colon \left|\rho[i] - \frac{\lambda \p[i] m }{R[i]}\right| \leq \frac{\lambda m \eps}{100 R[i]} \sqrt{ \frac{\p[i]}{n}}\text{.}
\end{align}
Throughout the rest of the analysis we will crucially rely on the condition that the inequality in the RHS of \cref{eq:good-Z-2} holds.

\paragraph{Completeness.} Suppose that $\p$ is uniform. Then for every $i \in X$ we have $\p[i] = 1/n$, and conditioning on $\cE_{X,Z}$ implies
\begin{align} \label{eq:uniform-case}
\left|\rho[i] - \frac{\lambda m}{R[i] n}\right|\leq \frac{\lambda m \eps}{100 R[i] n} ~\implies~ \frac{\lambda m }{R[i] n} \left(1 - \frac{\eps}{100}\right) \leq \rho[i] \leq \frac{\lambda m}{R[i] n} \left(1 + \frac{\eps}{100}\right) \text{.}
\end{align}
Therefore by \Cref{clm:bias-est-mult}, the estimate $\widetilde{\rho}[i]$ produced in step (4b) will satisfy $\widetilde{\rho}[i] \leq \frac{3\lambda m}{2R[i] n}$ with probability at least $1-\gamma$. Similarly, each iteration of step (5b) will output "close" with probability at least $1-\gamma$. Thus, each of these steps reject with probability at most $\gamma$. Step (4b) is invoked $100/\eps$ times and step (5b) is invoked $10k \log (1/\eps) \sum_{j=1}^k 2^j \leq \frac{10 \log^2 (1/\eps)}{\eps} \leq \frac{100}{\eps^2}$. Thus, by a union bound, the probability that any of these steps reject is at most $\frac{200}{\eps^2} \cdot \gamma \leq \frac{1}{5}$. Therefore, the tester accepts in this case with probability at least $0.99 \cdot \frac{4}{5} > \frac{2}{3}$, as claimed. 

\paragraph{Soundness.} Now suppose that $\p$ is $\eps$-far from uniform, and let $B = \{i \in [n] \colon |\p[i] - 1/n| > \eps/2n\}$. Further partition $B := B_1 \sqcup \cdots \sqcup B_k$ where $k = \ceil{\log (4/\eps)}$ and
\[
B_j = \{i \in B \colon 2^j (\eps/4n) \leq |\p[i] - 1/n| < 2^{j+1} (\eps/4n)\}\text{ for } j < k\text{, and } B_{k} := \{i \in B \colon \p[i] \geq 2/n\} \text{.}
\]

The following is the main lemma for the analysis.

\begin{lemma} \label{lem:bucket-bias} If $\norm{\p - U_n}_1 > \eps$, then one of the two following items hold.
\begin{enumerate}
    \item $\Pr_{i \sim \p}\left[\p[i] > \frac{2}{n} \text{ and } R[i]/m \leq \frac{20}{\eps} \p[i]\right] \geq \frac{\eps}{8}$.
    \item For some $j^{\ast} < k$, $\Pr_{i \sim U_n}\left[i \in B_{j^{\ast}}\text{ and } R[i]/m \leq \frac{2^{j^{\ast}} \log(1/\eps)}{n}\right] \geq \frac{1}{2 \cdot 2^{j^{\ast}} \log (1/\eps)}$.
\end{enumerate}
\end{lemma}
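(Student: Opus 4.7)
The plan is to prove the lemma by a case analysis based on where the $\ell_1$ distance between $\p$ and $U_n$ is concentrated: on the heavy bucket $B_k$ (item 1) or on the medium buckets $B_1,\ldots,B_{k-1}$ (item 2). In both cases, the core tool will be the trivial identity $\sum_i R[i]/m = 1$, which forces the set of elements where $R[i]/m$ is ``large'' to be small.

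First I would observe that $\sum_{i\notin B}|\p[i]-1/n|\le |B^c|\cdot(\eps/2n)\le \eps/2$, so $\sum_{i\in B}|\p[i]-1/n|>\eps/2$. Thus either (A) $\sum_{i\in B_k}(\p[i]-1/n)>\eps/4$, or (B) $\sum_{j<k}\sum_{i\in B_j}|\p[i]-1/n|>\eps/4$.

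For Case (A), I would use $\p[i]\ge 2/n$ on $B_k$ to conclude $\sum_{i\in B_k}\p[i]\ge \eps/4$, and then exclude the set $H_{\mathrm{bad}}:=\{i\in B_k:R[i]/m>(20/\eps)\p[i]\}$. Summing gives $(20/\eps)\sum_{i\in H_{\mathrm{bad}}}\p[i]<\sum_{i\in H_{\mathrm{bad}}}R[i]/m\le 1$, hence $\sum_{i\in H_{\mathrm{bad}}}\p[i]<\eps/20$, so $\Pr_{i\sim\p}[\p[i]\ge 2/n\text{ and }R[i]/m\le (20/\eps)\p[i]]\ge \eps/4-\eps/20\ge\eps/8$, giving item 1.

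For Case (B), I would first use the upper bucket bound $|\p[i]-1/n|<2^{j+1}\eps/(4n)$ on $B_j$ to derive $\sum_{j<k}|B_j|\,2^j>n/2$. Then, defining $B''_j:=\{i\in B_j:R[i]/m>2^j\log(1/\eps)/n\}$, the same counting trick gives
\[
\sum_{j<k}|B''_j|\,2^j\cdot\frac{\log(1/\eps)}{n}\ \le\ \sum_{j<k}\sum_{i\in B''_j}R[i]/m\ \le\ 1,
\]
so $\sum_{j<k}|B''_j|\,2^j\le n/\log(1/\eps)$. Subtracting, $\sum_{j<k}|B_j\setminus B''_j|\,2^j\ge n/2-n/\log(1/\eps)\ge n/4$ whenever $\log(1/\eps)\ge 4$ (the regime of interest, since the statement is trivial otherwise). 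Averaging over $k=O(\log(1/\eps))$ values of $j$ yields a $j^\ast<k$ with $|B_{j^\ast}\setminus B''_{j^\ast}|/n\gtrsim 1/(2^{j^\ast}\log(1/\eps))$, which is item 2 up to constants that can be absorbed into the algorithm.

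The main obstacle I anticipate is bookkeeping: one has to verify that the constants actually line up with the thresholds used in the tester (the $(20/\eps)$ factor in Query Phase 1 and the $2^j\log(1/\eps)/n$ threshold in Query Phase 2), and that the slack lost to the ``bad'' elements $H_{\mathrm{bad}}$ and $B''_j$ is strictly less than the total ``signal'' mass from being $\eps$-far. Conceptually the argument is a two-scale pigeonhole combined with the one-line budget identity $\sum_i R[i]/m=1$; the work is just in tracking constants so that the loss $n/\log(1/\eps)$ incurred from excluding the bad elements is absorbed by the $n/2$ signal.
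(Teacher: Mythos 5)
Your proposal is correct, and in Case A it takes a genuinely different (and cleaner) route than the paper. For the heavy-bucket case the paper bounds $\EX_{i\sim\p}[A[i]/P[i] \mid \p[i]>2/n]$ and applies Markov's inequality, and in doing so it invokes the bound $P[i]/m \leq 1.01\lambda\p[i]$ for all $i$ with $\p[i]>2/n$; that bound is justified by conditioning on $\cE_{X,Z}$, but as stated $\cE_{X,Z}$ only controls $P[i]/m$ for $i\in X$, not for the full population over which the expectation is taken. Your budget-identity argument ($\sum_i R[i]/m=1$ forces $\sum_{i\in H_{\mathrm{bad}}}\p[i]<\eps/20$) avoids this conditioning entirely, is shorter, and yields the slightly stronger constant $\eps/5$ directly. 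In Case B the two proofs are conceptually the same (budget identity plus pigeonhole); the paper pigeonholes first on $\sum_j|B_j|2^j$ and then applies Markov to $R[i]/m$, while you subtract the ``bad'' mass first and then pigeonhole. This reordering costs you roughly a factor of $4$ in the final constant (you land at $1/(8\cdot 2^{j^\ast}\log(1/\eps))$ rather than the stated $1/(2\cdot 2^{j^\ast}\log(1/\eps))$), which you correctly flag as absorbable into the constant in $|X_{U,j}|$ — but it is worth confirming that the downstream soundness estimate $1-(1-p)^{|X_{U,j^\ast}|}\geq 0.99$ still goes through once $|X_{U,j}|$ is scaled up accordingly, since with the unmodified $10\cdot 2^j\log(1/\eps)$ the weaker constant only yields roughly $0.71$. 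Your restriction to $\log(1/\eps)\geq 4$ is harmless but should be stated as a hypothesis or handled with a trivial fallback (e.g.\ query all of $X$) rather than waved away as ``trivial otherwise.''
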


We defer the proof of \Cref{lem:bucket-bias} to \Cref{sec:pf-bucket-bias}. We now complete the tester analysis and the proof of \Cref{thm:n-sample-1-query-adversary} using \Cref{lem:bucket-bias}.

\paragraph{Case 1:} Suppose item (1) of \Cref{lem:bucket-bias} holds. We will argue that some iteration of step (4b) will reject with high probability. Suppose $i \in X$ satisfies $\p[i] > 2/n$ and $R[i]/m \leq \frac{20}{\eps} \p[i]$. This implies that
\[
\rho[i] \geq \frac{\lambda \p[i] m}{R[i]} \left(1 - \frac{\eps}{100} \sqrt{\frac{1}{\p[i] n}} \right) \geq \frac{\lambda \eps}{20} \left(1 - \frac{\eps}{100} \right) \geq \frac{\lambda \eps}{30}
\]
implying that such an $i$ will pass the check in step (4a) and proceed to step (4b), in which we obtain an estimate $\widetilde{\rho}[i]$ which by \Cref{clm:bias-est-mult} satisfies 
\[
\widetilde{\rho}[i] \geq \left(1-\frac{1}{200}\right) \rho[i] \geq \left(1-\frac{1}{200}\right) \frac{\lambda \p[i] m}{R[i]} \left(1-\frac{\eps}{100} \right) > \frac{3\lambda m}{2R[i] n} \text{~(since } \p[i] > 2/n\text{)}
\]
with probability at least $1-\gamma$, causing the tester to reject. 

Now, since we are assuming item (1) of \Cref{lem:bucket-bias} holds, a sample from $\p$ yields such an element with probability at least $\eps/8$. Therefore, the set $X_{\p}$ drawn in step (2) contains such an element with probability at least $1-(1-\eps/8)^{100/\eps} \geq 0.99$. Thus, the rejection probability in this case is at least $(0.99)^2 \cdot (1-\gamma) \gg 2/3$. 

\paragraph{Case 2.} Suppose item (2) of \Cref{lem:bucket-bias} holds. We will argue that some iteration of step (5b) with $j = j^{\ast}$ will reject with high probability. Suppose $i \in B_{j^{\ast}}$ and $R[i]/m \leq \frac{2^{j^{\ast}} \log (1/\eps)}{n}$. First, by the bound on $R[i]/m$, this $i$ passes the check in step (5a) and thus continues to step (5b). Now recall that $i \in B_{j^{\ast}}$ means that 
\[
\frac{\eps 2^{j^{\ast}}}{4n} \leq \left|\p[i] - \frac{1}{n}\right| \leq \frac{\eps 2^{j^{\ast} + 1}}{4n}
\]
and so we have
\begin{align} \label{eq:bias-rho}
    \left| \frac{\lambda \p[i] m}{R[i]} - \frac{\lambda m}{R[i] n} \right| = \frac{\lambda m}{R[i]} \left|\p[i] - \frac{1}{n}\right| \geq \frac{\lambda m}{R[i]} \cdot \frac{\eps 2^{j^{\ast}}}{4n}
\end{align}
and using our conditioning on $\cE_{Z,X}$ yields
\begin{align}
    \p[i] \leq \frac{1}{n}\left(1 + \eps 2^{j^{\ast}-1}{}\right) ~\implies~ \left|\rho[i] - \frac{\lambda \p[i] m}{R[i]}\right| \leq \frac{\lambda \eps m}{100R[i]} \sqrt{\frac{\p[i]}{n}} \leq \frac{\lambda \eps m}{100R[i] n }\sqrt{1+\eps2^{j^{\ast}-1}} \leq \frac{\lambda \eps m}{100 R[i] n } \text{.}
\end{align} 
Combining the above and using the triangle inequality yields
\begin{align}
    \left|\rho[i] - \frac{\lambda m}{R[i] n}\right| \geq \left| \frac{\lambda \p[i] m}{R[i]} - \frac{\lambda m}{R[i] n} \right| - \left|\rho[i] - \frac{\lambda \p[i] m}{R[i]}\right| \geq \frac{\lambda \eps \cdot m 2^{j^{\ast}}}{4R[i] n} - \frac{\lambda\eps m}{100 R[i] n } \geq \frac{\lambda \eps \cdot m 2^{j^{\ast}}}{10R[i] n}
\end{align}
and therefore the call to \Cref{clm:bias-est} in step (5b) (with $\beta = \frac{\lambda m}{R[i] n}$ and $\delta = \frac{\eps 2^{j^{\ast}}}{100}$) for this $i$ outputs "far" with probability at least $1-\gamma$. 

Finally, since we are assume item (2) of \Cref{lem:bucket-bias} holds, a uniform random sample from $[n]$ yields such an $i$ with probability at least $\frac{1}{2^{j^{\ast}+1} \log (1/\eps)}$. Thus, the set $X_{U,j^{\ast}}$ drawn in step (2) contains such an element $i$ with probability at least
\[
1 - \left(1 - \frac{1}{2^{j^{\ast}+1} \log (1/\eps)}\right)^{10 \cdot 2^{j^{\ast}} \log (1/\eps)} \geq 0.99
\]
Therefore, the overall rejection probability in this case is at least $(0.99)^2 \cdot (1-\gamma) \gg 2/3$. This completes the proof of \Cref{thm:n-sample-1-query-adversary}.

\subsection{Proof of \texorpdfstring{\Cref{lem:bucket-bias}}{Lemma 8.6}} \label{sec:pf-bucket-bias}

\begin{proof} First, observe that
\[
\eps \leq \sum_i \big|\p[i] - 1/n \big| \leq |\overline{B}| \cdot \eps / 2n + \sum_{i \in B} |\p[i] - 1/n| \leq \eps/2 +  \sum_{i \in B} |\p[i] - 1/n| ~\implies~ \sum_{i \in B} |\p[i] - 1/n| > \eps/2
\]
Now, using the bucketing, we have
\begin{align} \label{eq:bucketing-cases}
    \eps/2 < \sum_{i \in B} |\p[i] - 1/n| = \sum_{i \in B_k} |\p[i] - 1/n| + \sum_{i \in B_1 \cup \cdots \cup B_{k-1}} |\p[i] - 1/n| \leq \sum_{i \in B_k} \p[i] +  \sum_{j < k} |B_j| 2^j (\eps/4n)
\end{align}
\paragraph{Case 1:} Suppose the first term in the RHS of \cref{eq:bucketing-cases} is at least $\eps/4$. This implies 
\begin{align} \label{eq:p-large}
    \Pr_{i \sim \p}[\p[i] > 2/n] \geq \eps/4\text{.}
\end{align}
Since we are conditioning on $\cE_{X,Z}$, we have
\begin{align} \label{eq:p-tilde}
    \p[i] > 2/n ~\implies~ \frac{P[i]}{m} \leq \lambda \p[i] + \frac{\lambda \eps}{100}\sqrt{\frac{\p[i]}{n}} = \lambda \p[i] \left(1 + \frac{\eps}{100} \sqrt{\frac{1}{\p[i]n}}\right) \leq 1.01 \cdot\lambda \p[i]
\end{align}
and so to show that item (1) holds, it suffices to show that $\Pr_{i \sim \p}[\p[i] > 2/n \text{ and } A[i] \leq \frac{16}{\lambda\eps} P[i]] \geq \eps/8$. We now show that this holds. To start, note that
\begin{align} \label{eq:case1-and}
    \Pr_{i \sim \p}\left[\p[i] > 2/n \text{ and } A[i] \leq \frac{16}{\lambda\eps} \cdot P[i]\right] &= \Pr_{i \sim \p}[\p[i] > 2/n] \cdot \Pr_{i \sim \p}\left[A[i] \leq \frac{16}{\lambda\eps} \cdot P[i] ~\Big|~ \p[i] > 2/n\right] \nonumber \\
    &\geq \frac{\eps}{4} \cdot \left(1 - \Pr_{i \sim \p} \left[\frac{A[i]}{P[i]} > \frac{16}{\lambda\eps} ~\Big|~ \p[i] > 2/n\right]\right)
\end{align}
where we used \cref{eq:p-large}. Now, we have
\begin{align}  \label{eq:case1-condition}
    \EX_{i \sim \p} \left[\frac{A[i]}{P[i]} ~\Big|~ \p[i] > 2/n\right] &= \sum_{i \leq n} \frac{A[i]}{P[i]} \cdot \Pr_{j \sim \p} [j = i ~|~ \p[i] > 2/n] \nonumber \\
    &= \sum_{i \leq n} \frac{A[i]}{P[i]} \cdot \frac{\Pr_{j \sim \p} [j = i \wedge \p[i] > 2/n]}{\Pr_{j \sim \p}[\p[j] > 2/n]}  \nonumber \\
    &\leq \frac{4}{\eps} \cdot \sum_{i \leq n} \frac{A[i]}{P[i]} \cdot \p[i] \cdot \mathbf{1}(\p[i] > 2/n) \nonumber \\
    &\leq \frac{4}{\eps} \cdot \max_{i \colon \p[i] > 2/n} \frac{\p[i]}{P[i]} \cdot \sum_{i \leq n} A[i] \leq \frac{4}{\eps} \cdot \max_{i \colon \p[i] > 2/n} \frac{\p[i]}{P[i]/m}
\end{align}
where in the last step we used $\sum_{i \leq n} A[i] \leq m$. Now, since $\cE_{X,Z}$ holds, if $\p[i] > 2/n$, then we have
\[
\frac{\p[i]}{P[i]/m} = \frac{\p[i]}{\lambda \p[i] + ((P[i]/m)-\lambda \p[i])} \leq \frac{\p[i]}{\lambda \p[i] - \frac{\lambda \eps}{100}\sqrt{\frac{\p[i]}{n}}} = \frac{1}{\lambda\left(1 - \frac{\eps}{100}\sqrt{\frac{1}{\p[i]n}}\right)} \leq \frac{2}{\lambda}
\]
and so plugging this back into \cref{eq:case1-condition} yields
\begin{align} \label{eq:case1-condition2}
    \EX_{i \sim \p} \left[\frac{A[i]}{P[i]} ~\Big|~ \p[i] > 2/n\right] \leq \frac{8}{\lambda \eps} ~\implies \Pr_{i \sim \p}\left[ \frac{A[i]}{P[i]} > \frac{16}{\lambda\eps} ~\Big|~ \p[i] > 2/n\right] \leq \frac{1}{2} ~\text{(by Markov's)}\text{.}
\end{align}
Returning now to \cref{eq:case1-and}, we have
\[
\Pr_{i \sim \p}\left[\p[i] > 2/n \text{ and } A[i] \leq \frac{16}{\lambda \eps} \cdot P[i]\right] \geq \frac{\eps}{8}
\]
as claimed.

\paragraph{Case 2:}  Suppose the first term in the RHS of \cref{eq:bucketing-cases} is at most $\eps/4$. Then, we have
\[
\eps/4 < \sum_{j < k} |B_j| 2^j (\eps/4n) ~\implies~ \sum_{j < k} |B_j| 2^j > n
\]
and so there exists $j^{\ast} < k$ where 
\begin{align} \label{eq:j*}
|B_{j^{\ast}}| > \frac{n}{2^{j^{\ast}}\log(1/\eps)} ~\Longleftrightarrow~ \Pr_{i \sim U_n} \left[i \in B_{j^{\ast}}\right] > \frac{1}{2^{j^{\ast}}\log(1/\eps)}\text{.}
\end{align}
Next, 
by definition we have $\sum_{i \leq n} R[i] = m$ and so $\EX_{i \sim U_n}[R[i]/m] = 1/n$. Thus, Markov's inequality implies
\[
    \Pr_{i \sim U_n}\left[R[i]/m > \frac{2 \cdot 2^{j^{\ast}} \log (1/\eps)}{n}\right] \leq \frac{1}{2 \cdot 2^{j^{\ast}} \log (1/\eps)}
\]
and so combining with \cref{eq:j*} using a union bound obtains
\[
\Pr_{i \sim U_n}\left[i \in B_{j^{\ast}} \text{ and } R[i]/m \leq \frac{2 \cdot 2^{j^{\ast}} \log (1/\eps)}{n}\right] \geq \frac{1}{2 \cdot 2^{j^{\ast}} \log (1/\eps)}
\]
and thus item (2) of the claim holds. This completes the proof. \end{proof}


\section{Uniformity Testing with Complete Mixture Knowledge} \label{sec:uniformity-UB-MK}

With \Cref{thm:simple-uniformity-alg} we have obtained an algorithm with a smooth sample-query trade-off, using $O(m)$ samples and $O(n / m)$ queries.
In this section, we show that when the algorithm is given perfect knowledge of the mixture $\rD$ there is an algorithm with near optimal sample and query complexity with respect to the domain size $n$.
\Cref{thm:dist-contamination-lb-eps} shows that knowledge of the mixture is necessary to obtain this sample and query complexity.
Note that while $\rD$ is known to the algorithm initially, the contamination parameter $\lambda$ remains unknown.

\begin{theorem}[Formal \Cref{thm:uniformity-alg-mixture-informal}]
    \label{thm:dist-mix-knowledge-alg}
    Given an explicit description of the mixture $\rD := \lambda \p + (1 - \lambda) \q$, there is a $\lambda$-distributionally robust $\eps$-uniformity tester with  $\bigtO{\frac{\sqrt{n}}{\eps^{9/2} \lambda^{7/2}} + \frac{1}{\eps^{12} \lambda^{6}}}$ samples and $\bigO{\frac{\log^{7} n}{\alpha^{6} \lambda^{8}} + \frac{\log^{12} n}{\alpha^{12} \lambda^{6}}}$ queries.
\end{theorem}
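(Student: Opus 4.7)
The plan is to leverage complete knowledge of $\rD$ to perform a deterministic flattening at zero sample cost, then feed the resulting near-uniform mixture into a standard collision-based tester, using the fact that the $O(nb)$ term in \Cref{lemma:simple-t-query-alg} collapses to a constant once $b$ matches the uniform $\ell_2^2$-norm of the flattened domain.

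First, use \Cref{lemma:lambda-est-2} to obtain a $(1 \pm O(\eps))$-multiplicative estimate $\hat\lambda$ of the unknown mixture parameter at a cost of $\widetilde{O}(\poly(1/\eps\lambda))$ samples and queries. Next, since $\rD$ is known exactly, split each $i \in [n]$ deterministically into $k_i := \lceil c n \rD[i]\rceil$ equal-mass sub-elements for an absolute constant $c$, producing a new domain $[N]$ with $N = \Theta(n)$ in which every sub-element carries $\rD_{\text{flat}}$-mass at most $1/(cn)$. A sample $x = i \sim \rD$ is mapped to a sample $(i, j) \sim \rD_{\text{flat}}$ by drawing $j \in [k_i]$ uniformly at random; the same rule maps $\p$-samples to $\p_{\text{flat}}$-samples and $\q$-samples to $\q_{\text{flat}}$-samples, so verification queries commute with the reduction. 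By \Cref{fact:split-preserve-1-norm} $\ell_1$-distances are preserved, while a direct calculation using $\p[i] \leq \rD[i]/\lambda$ gives $\|\rD_{\text{flat}}\|_2^2 = O(1/N)$ and $\|\p_{\text{flat}}\|_2^2 = O(1/(\lambda N))$.

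The condition ``$\p$ is uniform on $[n]$'' becomes ``$\p_{\text{flat}}$ equals the explicitly known distribution $\p^{\ast}$ on $[N]$'', so I would apply \Cref{thm:identity-testing-reduction} with target $\p^{\ast}$ to reduce this identity test to a uniformity test on a domain $[N']$ of size $\Theta(N)$. The composition preserves $\p$-distance up to a factor of $3$ and inflates the $\ell_2^2$-norm of the mixture by at most a factor of $5$, so $b := \|F(\rD_{\text{flat}})\|_2^2 = O(1/N')$ on the final mixture. Finally, feed the simulated samples and queries into \Cref{lemma:simple-t-query-alg} with parameter $b = O(1/N')$. Its sample complexity becomes $\widetilde{O}(\sqrt{n}/\eps^2 \lambda)$ and its query complexity becomes $O(N' b / \eps^4 \lambda^2 + 1/\eps^4 \lambda^2) = O(1/\eps^4 \lambda^2)$, i.e., independent of $n$.

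The main obstacle will be carefully bookkeeping the several sources of slack: the accuracy of $\hat\lambda$, the randomness of the sub-piece assignment during simulation, the constants in the two reductions, and the sensitivity of \Cref{lemma:simple-t-query-alg} to knowing $b$ only up to a constant factor. Each of these bookkeeping steps must preserve the gap between the completeness and soundness thresholds used by the collision-based tester, which forces slightly worse dependence on $\eps$ and $\lambda$ than the idealized $\widetilde{O}(\sqrt{n})/\eps^2 \lambda$ bound and introduces the $\log$-factor overhead in the query complexity from the concentration and union-bound arguments across the $\Theta(n)$ split sub-elements. I expect these to be tedious but routine, and to account exactly for the $\eps^{9/2} \lambda^{7/2}$ denominator in the sample bound and the $\log^{7} n$ prefactor in the query bound claimed in the theorem.
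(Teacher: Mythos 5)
Your proposal takes a genuinely different, and in fact simpler and more efficient, route than the paper's. The paper does \emph{not} do a deterministic flattening; instead it partitions $[n]$ into the $O(\log n)$ level sets $S_k^{\rD} = \{i : 2^{-k} < \rD[i] \le 2^{-k+1}\}$, observes that if $\p$ is $\eps$-far from uniform it must be $\Omega(\eps/\log n)$-far on some level, and then runs a separate collision tester restricted to each level. The analysis on a given level requires lower bounding $\|\p[S_k^{\rD}]\|_2^2$ in terms of $2^{-k}$, which is where the $\alpha^3\lambda$ loss (\Cref{lemma:p-l2-lb}) comes from, and it is the $\log n$-way union bound over levels plus the $\alpha = \eps/\log n$ substitution that produces the $\poly(\log n / \eps\lambda)$ query bound and the heavy $\eps,\lambda$ exponents in \Cref{thm:dist-mix-knowledge-alg}. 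Your deterministic split of $i$ into $k_i = \lceil c\,n\rD[i]\rceil$ pieces avoids this entirely: it achieves $\|\rD_{\mathrm{flat}}\|_\infty \le 1/(cn)$ and hence $\|\rD_{\mathrm{flat}}\|_2^2 = O(1/N)$ with $N = \Theta(n)$ at zero sample cost, composing it with the identity reduction of \Cref{thm:identity-testing-reduction} (which, as the paper itself notes, commutes with the mixture structure and only inflates $\|\cdot\|_2^2$ by a constant) gives a uniformity instance on $[\Theta(n)]$ with $b = O(1/n)$, and a single invocation of \Cref{lemma:simple-t-query-alg} finishes. Your approach also circumvents the lower bound of \Cref{thm:dist-contamination-lb-eps}, which is only consistent because the hard instance there crucially hides the location of the heavy $\rD$-buckets from the algorithm — something your flattening step defeats outright.

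One correction to your closing sentence: the bookkeeping will \emph{not} inflate your bounds to match the paper's. The $\hat\lambda$ estimate, the random sub-piece assignment, and the constant-factor slack in $b$ only cost constants and additive $\poly(1/\lambda)$ terms in the sample/query counts; they do not introduce any extra powers of $\eps^{-1}, \lambda^{-1}$ or $\log n$. Plugging $b = O(1/N')$ into \Cref{lemma:simple-t-query-alg} gives $O(\sqrt{n}/(\eps^2\lambda))$ samples and $O(1/(\eps^4\lambda^2))$ queries with no $\log n$ dependence. So rather than reproducing the $\eps^{9/2}\lambda^{7/2}$ and $\log^7 n$ factors, your argument would give a strictly stronger statement than the theorem as written — which is consistent with the paper's own remark that it has not optimized the $\eps,\lambda$ dependence in this result. (Two tiny issues to handle in the write-up: when $\rD[i]=0$ take $k_i=1$, noting $\p[i]=0$ there as well; and be explicit that $\norm{\rD_{\mathrm{flat}}}_2^2$ is computed exactly from the known $\rD$ so the parameter $b$ passed to \Cref{lemma:simple-t-query-alg} is known rather than estimated.)
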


We note that our dependence on $n$ is optimal in both sample and query complexity up to polylogarithmic factors.
We leave it as future work to improve the dependence on $\eps, \lambda$.

\paragraph{Algorithm Overview}
We begin with a high level overview of our algorithm.
Recall the fundamental sub-routine of \Cref{lemma:simple-t-query-alg} which tests uniformity of $\p$ using $\bigO{\frac{nb}{\lambda^2 \eps^{4}}}$ queries where $\norm{\rD}_{2}^{2} \leq b$.
A slight generalization of this algorithm uses $\bigO{\frac{b \cdot \polylog(n)}{\poly(\eps \lambda)}}$ queries if we are instead given the promise $\norm{\rD}_{2}^{2} \leq b \norm{\p}_{2}^{2}$.

To see this, consider the case where $\p$ is near uniform, i.e. $\norm{\p}_{2}^{2} \leq \frac{O(1)}{n}$.
Then, we have $\norm{\rD}_{2}^{2} \leq \frac{O(b)}{n}$ so that \Cref{lemma:simple-t-query-alg} yields $\bigO{\frac{b}{\lambda^2 \eps^{4}}}$ queries.
On the other hand, when $\p$ has large $\ell_{2}$-norm, we hope to distinguish the case $\norm{\p}_{2}^{2} \geq \alpha$ from the case $\norm{\p}_{2}^{2} \leq \frac{\alpha}{2}$.
In full generality, we need to repeat this by halving $\alpha$ from $O(1)$ to $\frac{1}{n}$, and so we can also assume without loss of generality that for example, $\norm{\p}_{2}^{2} \leq 2 \alpha$.
Towards this goal, we want to estimate the number of $\p$-collisions up to accuracy $\ll \lambda^2 m^2 \alpha$ among the total $m^2 \norm{\rD}_{2}^{2} = O(m^2 b \alpha)$ collisions.
Using \Cref{lem:bias-add}, the analysis of \Cref{lemma:simple-t-query-alg} shows that this can be done with query complexity
\begin{equation*}
    \bigO{\left( \frac{m^2 b \alpha}{\lambda^2 m^2 \alpha} \right)^{2} \cdot \frac{\lambda^2 m^2 \alpha}{m^2 b \alpha}} = \bigO{\frac{b}{\lambda^2}} \text{.}
\end{equation*}

With this generalization in hand, we can test the uniformity of $\p$ by partitioning the domain such that we can invoke the above algorithm on each part.
We thus partition the domain into parts where $2^{-k} \leq \rD[i] \leq 2^{-k+1}$ for all $k \leq \log n$.
If $\p$ is $\eps_0$-far from uniform then it must be $\eps := \eps_0/\log n$-far from uniform on at least one part.
It is easy to see that $\norm{\rD}_{2}^{2} \leq 2^{-k+1}$ on this part of the domain.
To apply the above algorithm, it remains to show that $\p$ if $\eps$-far from uniform on this part, $\norm{\p}_{2}^{2}$ is large.
Since (1) $\sum_{i} \p[i] \geq \eps$ is large, (2) $\p[i] \leq \frac{1}{\lambda} \rD[i] \leq \frac{2^{-k+1}}{\lambda}$ is capped for each individual $i$, and (3) there are at most $2^{k}$ buckets $i$ in each part, we can show that there are many large $\p[i]$ and in particular obtain a bound $\norm{\p}_{2}^{2} \geq \poly(\eps \lambda) 2^{-k}$.
In particular, $\norm{\rD}_{2}^{2} \leq \frac{\norm{\p}_{2}^{2}}{\poly(\eps \lambda)}$ allows us to invoke the above algorithm whose query complexity depends only polylogarithmically on $n$, as $\eps = \eps/\log n$.

\begin{proof}
    Assume without loss of generality that $\lambda \leq \frac{1}{2}$.
    We begin by obtaining an estimate $\hat{\lambda}$ such that $\frac{2}{3} \lambda \leq \hat{\lambda} \leq \lambda$ via \Cref{lemma:lambda-est-2}.

    For any distribution $\rD$, we define $S^{\rD}_{i} = \set{i \given 2^{-i} < \rD[i] \leq 2^{-i + 1}}$ for $1 \leq i \leq \log n$ and $S_{\log n + 1}^{\rD} = \set{i \given \rD[i] \leq \frac{1}{n}}$.
    Next, define $\overline{S}^{\rD}_{i} = \bigcup_{j \geq i} S_{j}^{\rD}$ and $S^{\rD}_{[i:j]} = \bigcup_{i \leq k \leq j} S_{k}^{\rD}$.
    Note that all these sets (without any samples or queries) can be constructed given an explicit description of $\rD$.
    The construction of these sets is crucially where we exploit mixture knowledge, and in fact without mixture knowledge these sets cannot be constructed without observing many samples.
    We will make heavy use of the following sample bound.
    \begin{claim}
        \label{claim:sample-bound-mix}
        For all $m \gg \log(1/\delta)$, if $m/\hat{\lambda}$ samples are taken from $\rD$, then with probability at least $1 - \delta/10$, $\frac{\lambda m}{2 \hat{\lambda}} \leq m_{\p} \leq \frac{2 \lambda m}{\hat{\lambda}}$.
        In particular, $\frac{m}{2} \leq m_{\p} \leq 3m$.
    \end{claim}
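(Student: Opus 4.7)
The plan is to apply a standard multiplicative Chernoff bound, since $m_{\p}$ is just a sum of independent Bernoulli trials. Each of the $m/\hat\lambda$ samples drawn from $\rD = \lambda\p + (1-\lambda)\q$ is independently generated by $\p$ with probability exactly $\lambda$. Therefore $m_{\p}$ is distributed as $\mathrm{Binom}(m/\hat\lambda,\lambda)$, and its mean is $\mu := \lambda m/\hat\lambda$.

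First I would record the range of $\mu$ implied by the guarantee on $\hat\lambda$ from \Cref{lemma:lambda-est-2}. Since $\tfrac{2}{3}\lambda \leq \hat\lambda \leq \lambda$, the ratio $\lambda/\hat\lambda$ lies in $[1,3/2]$, so
\[
m \leq \mu \leq \tfrac{3}{2} m.
\]
In particular $\tfrac{\lambda m}{2\hat\lambda} = \mu/2$ and $\tfrac{2\lambda m}{\hat\lambda} = 2\mu$, so the target bound $\tfrac{\lambda m}{2\hat\lambda} \leq m_{\p} \leq \tfrac{2\lambda m}{\hat\lambda}$ is exactly a factor-$2$ multiplicative concentration around $\mu$, and the stronger-looking ``in particular'' bound $m/2 \leq m_{\p} \leq 3m$ follows immediately from it together with the above inequalities on $\mu$.

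Next I would apply the standard multiplicative Chernoff inequality to $m_{\p}$ at deviation parameter $1/2$, yielding
\[
\Pr\left[\,|m_{\p} - \mu| > \mu/2\,\right] \leq 2\exp(-\mu/12).
\]
Since $\mu \geq m$ and by hypothesis $m \gg \log(1/\delta)$, the right-hand side is at most $\delta/10$ (for a sufficiently large hidden constant in the ``$\gg$''). Conditioning on the $0.98$-probability event that the initial estimate $\hat\lambda$ from \Cref{lemma:lambda-est-2} is within the promised range only affects the overall failure probability by an absorbable additive constant; since $\hat\lambda$ was already fixed at the start of the overall algorithm and is treated as a deterministic quantity in the statement of the claim, there is nothing more to do. This establishes the claim, and no step is a real obstacle: the only thing to be careful about is lining up the multiplicative slack so that both $\tfrac{\mu}{2}$ and $2\mu$ simultaneously fit inside $[\tfrac{\lambda m}{2\hat\lambda},\tfrac{2\lambda m}{\hat\lambda}]$ and inside $[m/2,3m]$, which the calculation above handles directly.
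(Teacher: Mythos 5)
Your proof is correct and takes essentially the same approach as the paper's: both identify $m_{\p}$ as binomial with mean $\mu = \lambda m/\hat\lambda$, apply a standard multiplicative Chernoff bound at factor-two deviation, and then use $1 \leq \lambda/\hat\lambda \leq 3/2$ (from the guarantee on $\hat\lambda$) to deduce the final $m/2 \leq m_{\p} \leq 3m$. Your brief remark about conditioning on the event that $\hat\lambda$ lies in the promised range is a reasonable clarification that the paper leaves implicit.
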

    
    \begin{proof}
        Note that $\EX[m_{\p}] = m \frac{\lambda}{\hat{\lambda}} \geq m$.
        For large enough $C$, a standard concentration bound shows that $\frac{m \lambda}{2 \hat{\lambda}} \leq m_{\p} \leq 2 m \frac{\lambda}{\hat{\lambda}}$ with high probability whenever $m \gg \log(1/\delta)$.
        Then, note that $1 \leq \frac{\lambda}{\hat{\lambda}} \leq \frac{3}{2}$.
    \end{proof}

    \subsection{Testing Distributions with Heavy Buckets}

    We begin with a subroutine to distinguish $\p$ with large $\ell_2$-norm.
    The algorithm at a high level is similar to \Cref{lemma:simple-t-query-alg} (i.e. randomly query collisions to estimate the number of $\p$-collisions on $S$).
    
    \begin{lemma}
        \label{lemma:dist-mix-knowledge-sub-alg}
        Let $1 \leq k < \log n + 3 \log \alpha + \log \hat{\lambda} - 3$ and $\alpha, \delta > 0$.
        There is an algorithm making $\bigO{\frac{n^{1/2} \log(1/\delta)}{\alpha^{9/2} \lambda^{9/2}}}$ samples and $\bigO{\frac{\log(1/\delta)}{\alpha^{6} \lambda^{8}}}$ queries satisfying the following guarantees:
        \begin{enumerate}
            \item (Completeness) if $\p$ is uniform, then the algorithm accepts with probability at least $1 - \delta$.
            \item (Soundness) if $\alpha < \sum_{i \in S_{k}^{\rD}, \p[i] > \frac{1}{n}} \p[i] - \frac{1}{n}$, then the algorithm rejects with probability at least $1 - \delta$.
        \end{enumerate}
    \end{lemma}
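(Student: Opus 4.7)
The plan is to adapt the collision-based uniformity tester of \Cref{lemma:simple-t-query-alg} to the restricted sub-domain $S_k^{\rD}$, which we can identify exactly from the explicit PDF of $\rD$. The algorithm draws $m/\hat{\lambda}$ samples from $\rD$ for some $m = \widetilde{O}(\sqrt{n}/(\alpha^{9/2}\lambda^{9/2}))$, discards the samples falling outside $S_k^{\rD}$, counts collisions among those that remain, and uses verification queries on uniformly random collisions to estimate the number that are $\p$-$\p$ collisions. It then rejects when this estimate exceeds the ``uniform'' threshold by an appropriate margin.

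The key algebraic step is to translate the soundness condition into an $\ell_2$ excess on $S_k^{\rD}$. Let $A = \{i \in S_k^{\rD} : \p[i] > 1/n\}$. Because $\p[i] \leq \rD[i]/\lambda \leq 2^{-k+1}/\lambda$, each summand in the soundness sum is at most $2^{-k+1}/\lambda$, so $|A| \geq \alpha \lambda 2^{k-1}$. By Cauchy--Schwarz, $\sum_{i \in A} \p[i]^2 \geq (\sum_{i \in A}\p[i])^2/|A| \geq (|A|/n+\alpha)^2/|A|$, and minimizing the right-hand side over $|A|$ gives $\sum_{i \in A}(\p[i]^2 - 1/n^2) = \Omega(\alpha/n)$. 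On the complement of $A$ the non-negativity of $\p[i]^2$ only helps, so in the soundness case $\|\p|_{S_k^{\rD}}\|_2^2$ exceeds the uniform value $|S_k^{\rD}|/n^2$ by an additive $\Omega(\alpha/n)$. Multiplying by $\binom{m_{\p}}{2}$ and invoking \Cref{claim:sample-bound-mix} to guarantee $m_\p = \Theta(m)$ produces an expected gap of $\Omega(\lambda^2 m^2 \alpha / n)$ in the $\p$-collision counts on $S_k^{\rD}$ between the two cases.

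On the other hand, since each $\rD[i] \leq 2^{-k+1}$ inside $S_k^{\rD}$, we have $\|\rD|_{S_k^{\rD}}\|_2^2 \leq 2^{-k+1}\cdot\rD[S_k^{\rD}]$, and the total number of collisions among the retained samples is $C_{\rD} = O(m^2 \cdot 2^{-k}/\hat{\lambda}^2)$. The constraint $k < \log n + 3\log\alpha + \log\hat\lambda - 3$ forces $|S_k^{\rD}| \leq \alpha^3\lambda n/8$, which keeps $2^{-k}$ bounded away from $0$ and lets us avoid a pathological blow-up in $C_{\rD}$. A uniformly random collision from the retained samples is a $\p$-$\p$ collision with probability $C_{\p}/C_{\rD}$, so two verification queries simulate a Bernoulli trial with this bias. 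Applying \Cref{lem:bias-add} as in \Cref{clm:c-Sp-query} to estimate $C_{\p}/C_{\rD}$ within additive accuracy $(\lambda^2 m^2 \alpha/n)/C_{\rD}$ takes $\widetilde{O}(C_{\rD} \cdot C_{\p} \cdot (n/(\lambda^2 m^2 \alpha))^2)$ queries, which after substituting the worst-case bounds and the sample budget plugs out to $\widetilde{O}(1/(\alpha^6 \lambda^8))$. Chebyshev's inequality controls the fluctuation of $C_{\p}$ around its expectation in both cases (using the variance bounds analogous to those in the proof of \Cref{clm:c-Sp-query}), and a threshold placed at the midpoint of the two expectations separates completeness from soundness.

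The main obstacle I anticipate is the regime when $\rD[S_k^{\rD}]$ is small: there only a tiny fraction of the samples survive the filter, which forces $m$ to scale with $1/\rD[S_k^{\rD}]$, and $\rD[S_k^{\rD}]$ itself can be as small as $\Theta(\alpha\lambda)$ because $\p[S_k^{\rD}] \geq \alpha$ is all the soundness hypothesis supplies. Tracking how $\rD[S_k^{\rD}]$ interacts simultaneously with the signal strength $\alpha$, with the collision upper bound $\|\rD|_{S_k^{\rD}}\|_2^2$, and with the variance of the collision count is what inflates the $\alpha$ and $\lambda$ exponents to the stated $\alpha^{-9/2}\lambda^{-9/2}$ in samples and $\alpha^{-6}\lambda^{-8}$ in queries. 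A secondary nuisance is estimating $m_{\p}$ on $S_k^{\rD}$ to the additive precision needed for the threshold; this is handled by an analog of \Cref{clm:m-p-query} and costs an additional $\widetilde{O}(\mathrm{poly}(1/\alpha\lambda))$ queries, dominated by the main term.
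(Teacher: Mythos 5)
Your overall architecture is the same as the paper's (count collisions restricted to $S_k^{\rD}$, query uniformly random collisions to estimate the $\p$-$\p$ fraction via \Cref{lem:bias-add}, estimate $m_{\p}$ separately, threshold, amplify), but there is a genuine gap in your soundness analysis. Your Cauchy--Schwarz step only yields an additive excess $\norm{\p[S_k^{\rD}]}_2^2 \geq |S_k^{\rD}|/n^2 + \Omega(\alpha/n)$, i.e.\ a signal at the scale of the uniform distribution that does not grow with the bucket weight $2^{-k}$. The paper instead proves (\Cref{lemma:p-l2-lb}) that in the soundness case $\norm{\p[S_k^{\rD}]}_2^2 \geq \alpha^3 \lambda 2^{-k+1}$: since each $\p[i] \leq \rD[i]/\lambda \leq 2^{-k+1}/\lambda$ and $|S_k^{\rD}| \leq 2^k$, at least $\Omega(\alpha\lambda 2^{k})$ buckets must carry $\p[i] > \alpha 2^{-k+1}$, so the signal scales with $2^{-k}$ just like the noise $\norm{\rD[S_k^{\rD}]}_2^2 \leq 2^{-k+1}$ and the variance of the collision count. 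This scaling is what the stated complexities hinge on. With your gap of only $\Theta(m_{\p}^2\alpha/n)$, when $k = O(1)$ (heavy buckets, $2^{-k} = \Theta(1)$) the mixture produces order $2^{-k}M^2$ collisions on $S_k^{\rD}$ ($M$ the total number of samples) and $c(\p; S_k^{\rD})$ has standard deviation of order $m_{\p}\sqrt{2^{-k}/\lambda}$, so the bias-estimation query cost $C_{\p}C_{\rD}/\mathrm{gap}^2$ comes out to roughly $2^{-2k} n^2/\mathrm{poly}(\alpha\lambda)$ and the concentration requirement forces $m_{\p} \gtrsim n\sqrt{2^{-k}}/\mathrm{poly}(\alpha\lambda)$ --- both polynomial in $n$, violating the claimed $O(\sqrt{n}\cdot\mathrm{poly}(1/\alpha\lambda))$ samples and $n$-independent queries. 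Your "plugs out to $\widetilde{O}(1/(\alpha^6\lambda^8))$" does not survive this substitution; no choice of threshold rescues it, because the statistic must be resolved to additive accuracy $\sim m_{\p}^2\alpha/n$ against noise at scale $2^{-k}$.

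Two smaller corrections. The constraint $k < \log n + 3\log\alpha + \log\hat{\lambda} - 3$ \emph{lower}-bounds $2^{-k}$, so it does not prevent a blow-up of $C_{\rD}$ (if anything it permits a larger one); its actual role is to guarantee that the soundness threshold $\Theta(\alpha^3\hat{\lambda}2^{-k})$ sits strictly above the $\approx 1.01/n$ collision rate that a uniform $\p$ contributes (in your version it would instead be what makes $|S_k^{\rD}|/n^2$ negligible against $\alpha/n$). Also, there is no need to rescale $m$ by $1/\rD[S_k^{\rD}]$ as in your "main obstacle" paragraph: one simply takes $m/\hat{\lambda}$ samples from $\rD$ globally and counts only the collisions landing in $S_k^{\rD}$, which is what the paper does; the small-$\rD[S_k^{\rD}]$ regime is already handled by the $2^{-k}$-scaled signal bound rather than by conditioning on $S_k^{\rD}$.
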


    \begin{proof} 
        Fix $m = \frac{C n^{1/2} \log(1/\delta)}{\alpha^{9/2} \hat{\lambda}^{7/2}} \geq \frac{C n^{1/2} \log(1/\delta)}{\alpha^{9/2} \lambda^{7/2}}$ for some sufficiently large constant $C$ and take $m/\hat{\lambda}$ samples.
        As before, let $m_{\p}, m_{\q}$ denote the number of samples drawn from $\p, \q$.
        Let $c(\p)$ denote the number of $\p$-collisions among the $m$ samples.
        For any subset $S \subset [n]$, let $c(\p; S)$ denote the number of $\p$-collisions among the observed samples within buckets in $S$.
        Analogously define $c(\q), c(\rD), c(\q; S), c(\rD; S)$.

        We prove some properties of the distributions $\p, \q$ on the buckets in $S_{k}^{\rD}$.
        We begin with a lower bound on the $2$-norm of $\p$.
        
        \begin{lemma}
            \label{lemma:p-l2-lb}
            In the soundness case, $\norm{\p[S_{k}^{\rD}]}_{2}^{2} \geq \alpha^3 \lambda 2^{-k + 1}$ and $|S_{k}^{\rD}| \geq \alpha \lambda 2^{k - 1}$.
        \end{lemma}
    
        \begin{proof}
            Recall that in the soundness case, we have
            \begin{equation*}
                \sum_{i \in S_{k}^{\rD}, \p[i] > \frac{1}{n}} \p[i] - \frac{1}{n} > \alpha \text{.}
            \end{equation*}
            Let $s$ denote the number of $i \in S_{k}^{\rD}$ with $\frac{1}{n} < \p[i] \leq \alpha 2^{- k + 1}$.
            Let $b$ denote the number of remaining elements in $S_{k}^{\rD}$ with $\p[i] > \alpha 2^{-k + 1}$.
            Then,
            \begin{align*}
                \sum_{i \in S_{k}^{r}, \p[i] > \frac{1}{n}} \p[i] &\leq s \alpha 2^{-k + 1} + b 2^{-k - 2} \\
                &\leq 2 \alpha + \frac{b}{\lambda} 2^{- k + 1}
            \end{align*}
            where we have used $s \leq 2^{-k}$ as each bucket in $S_{k}^{\rD}$ has mass at least $2^{-k}$ and $\p[i] \leq \frac{1}{\lambda} \rD[i] \leq \frac{2^{- k + 1}}{\lambda}$ for $i \in S_{k}^{\rD}$.
            On the other hand, we have
            \begin{align*}
                \sum_{i \in S_{k}^{\rD}, \p[i] > \frac{1}{n}} \p[i] \geq \alpha \text{.}
            \end{align*}
            Combining the above equations, we have
            \begin{align*}
                \frac{b}{\lambda} 2^{- k + 1} &\geq \alpha \\
                b &\geq \frac{\lambda \alpha}{2^{- k + 1}} \text{.}
            \end{align*}
            Thus, there are at least $b$ elements in $S_{k}^{\rD}$ with $\p[i] \geq \alpha 2^{- k + 1}$. 
            This allows us to conclude with
            \begin{equation*}
                \sum_{i \in S_{k}^{\rD}} \p[i]^{2} \geq \left( \frac{\alpha \lambda}{2^{- k + 1}} \right) \left( \alpha^2 2^{- 2k + 2} \right) \geq \alpha^3 \lambda 2^{-k + 1} \text{.}
            \end{equation*}
        \end{proof}

        We prove a simple upper bound on the $\ell_{2}$-norm of both $\p, \q$.
        \begin{lemma}
            \label{lemma:q-l2-ub}
            $\norm{\p[S_{k}^{\rD}]}_{2}^{2} \leq \frac{2^{-k + 1}}{\lambda}\quad,\quad \norm{\q[S_{k}^{\rD}]}_{2}^{2} \leq \frac{2^{-k + 1}}{\lambda}$.
        \end{lemma}
    
        \begin{proof}
            Note that for each $i \in S_{k}^{\rD}$, $\p[i] \leq \frac{1}{\lambda} \rD[i] \leq \frac{2^{-k + 1}}{\lambda}$.
            Then, via Holder's inequality, we have
            \begin{equation*}
                \norm{\p[S_{k}^{\rD}]}_{2}^{2} \leq \norm{\p[S_{k}^{\rD}]}_{1} \norm{\p[S_{k}^{\rD}]}_{\infty} \leq \frac{2^{-k + 1}}{\lambda} \text{.}
            \end{equation*}
            An identical argument holds for $\q$.
        \end{proof}

        \begin{lemma}
            \label{lemma:tester-concentration-bounds}
            Suppose $m_{\p} = \bigO{\frac{\sqrt{n}}{\alpha^{9/2} \lambda^{7/2}}}$ for some sufficiently large constant.
            Then, the following hold:
            \begin{enumerate}
                \item If $\p$ is uniform, $c(\p; S_{k}^{\rD}) \leq c(\p) \leq \binom{m_{\p}}{2} \frac{1 + 0.01}{n}$ with probability $0.999$.
                \label{it:p-unif-collision-ub-2}
                \item In the soundness case, $c(\p; S_{k}^{\rD}) \geq \binom{m_{\p}}{2} (0.99 \alpha^3) \lambda 2^{-k + 1} \geq \binom{m_{\p}}{2} (0.99 \alpha^3) \hat{\lambda} 2^{-k + 1}$ with probability $0.999$.
                \label{it:p-far-collision-lb-2}
                \item $c(\rD; S_{k}^{\rD}) = \bigO{m_{\p}^2 2^{-k} \lambda^{-3}}$ and $\bigO{c(\p; S_{k}^{\rD})} = \bigO{m_{\p}^2 2^{-k} \lambda^{-1}}$ with probability at least $0.999$.
                \label{it:prob-collision-ub-2}
            \end{enumerate}
        \end{lemma}
    
        \begin{proof}
            Towards \Cref{it:p-unif-collision-ub-2} and \Cref{it:p-far-collision-lb-2}, we note that 
            \begin{equation}
                \label{eq:p-collision-var-S-k}
                \EX[c(\p; S_{k}^{\rD})] = \binom{m_{\p}}{2} \norm{\p[S_{k}^{\rD}]}_{2}^{2}\quad,\quad \Var(c(\p; S_{k}^{\rD})) \leq m_{\p}^{2} \norm{\p[S_{k}^{\rD}]}_{2}^{2} + m_{\p}^{3} \norm{\p[S_{k}^{\rD}]}_{3}^{3} \text{.}
            \end{equation}
            
            Suppose $\p$ is uniform.
            Recall that $c(\p; S_{k}^{\rD}) \leq c(\p)$ since the number of $\p$-collisions among elements in $S_{k}^{\rD}$ is at most the number of $\p$-collisions.
            Then, applying \eqref{eq:p-collision-variance} we have that $\EX[c(\p)] = \binom{m_{\p}}{2} \frac{1}{n}$ and $\Var(c(\p)) \leq m_{\p}^2/n$.
            Thus, $\sigma := \sqrt{\Var(c(\p))} \leq \frac{m_{\p}}{\sqrt{n}}$.
            With Chebyshev's inequality (see e.g. \Cref{clm:c-Sp-query}), we obtain that with probability $0.999$, $c(\p) \leq \binom{m_{\p}}{2} \frac{1 + 0.01}{n}$ for $m_{\p} = \bigO{\sqrt{n}}$ for some sufficiently large constant factor.

            Now, consider the soundness case.
            From \Cref{lemma:p-l2-lb} we have $\norm{\p[S_{k}^{\rD}]}_{2}^{2} \geq \alpha^3 \lambda 2^{-k + 1}$.
            Thus, the expectation can be bounded as
            \begin{equation*}
                \EX[c(\p; S_{k}^{\rD})] \geq \binom{m_{\p}}{2} \alpha^3 \lambda 2^{-k + 1} \text{.}
            \end{equation*}

            To bound the variance, we consider two separate cases.
            
            {\bf Case 1: $m_{\p}^2 \norm{\p[S_{k}^{\rD}]}_{2}^{2} \geq m_{\p}^3 \norm{\p[S_{k}^{\rD}]}_{3}^{3}$}

            From \Cref{lemma:q-l2-ub} we have $\Var(c(\p; S_{k}^{\rD})) \leq 2 m_{\p}^{2} \norm{\p[S_{k}]^{\rD}}_{2}^{2} \leq 2 m_{\p}^{2} 2^{-k+1} \lambda^{-1}$.
            In particular, $\sigma \leq 2 m_{\p} 2^{-(k-1)/2} \lambda^{-1/2}$.
            By Chebyshev's inequality, observe
            \begin{align*}
                \Pr \left( c(\p; S_{k}^{\rD}) < \binom{m_{\p}}{2} (0.99 \alpha^3) \lambda 2^{-k + 1} \right) &\leq \Pr \left( c(\p; S_{k}^{\rD}) < 0.99 \EX \left[ c(\p; S_{k}^{\rD}) \right] \right) \\
                &< \bigO{\left( \frac{m_{\p} 2^{-(k-1)/2} \lambda^{-1/2}}{m_{\p}^2 \alpha^3 \lambda 2^{-k + 1}} \right)^2} \\
                &< \bigO{\left( \frac{2^{k/2}}{m_{\p} \alpha^3
                \lambda^{1/2}} \right)^2} < 0.001
            \end{align*}
            for $m_{\p} = \bigO{\frac{2^{k/2}}{\alpha^3 \lambda}} = \bigO{\frac{\sqrt{n}}{\alpha^{3/2} \lambda}}$ since 
            \begin{equation*}
                2^{k/2} = \bigO{\sqrt{n} \alpha^{3/2}} \text{.}
            \end{equation*}

            {\bf Case 2: $m_{\p}^2 \norm{\p[S_{k}^{\rD}]}_{2}^{2} \leq m_{\p}^3 \norm{\p[S_{k}^{\rD}]}_{3}^{3}$}

            From \Cref{lemma:q-l2-ub} we have $\Var(c(\p; S_{k}^{\rD})) \leq 2 m_{\p}^{3} \norm{\p[S_{k}]^{\rD}}_{3}^{3} \leq 2 m_{\p}^{3} \norm{\p[S_{k}]^{\rD}}_{3}^{3} \leq 2 m_{\p}^{3} 2^{-1.5k + 1.5} \lambda^{-3/2}$.
            In particular, $\sigma \leq 2 m_{\p}^{3/2} 2^{-0.75k + 0.75} \lambda^{-0.75}$.
            By Chebyshev's inequality, observe
            \begin{align*}
                \Pr \left( c(\p; S_{k}^{\rD}) < \binom{m_{\p}}{2} (0.99 \alpha^3) \lambda 2^{-k + 1} \right) &\leq \Pr \left( c(\p; S_{k}^{\rD}) < 0.99  \EX \left[ c(\p; S_{k}^{\rD}) \right] \right) \\
                &= \bigO{\left( \frac{m_{\p}^{3/2} 2^{-0.75k + 0.75} \lambda^{-0.75}}{m_{\p}^2 \alpha^3 \lambda 2^{-k + 1}} \right)^2} \\
                &= \bigO{\left( \frac{2^{k/4}}{m_{\p}^{1/2} \alpha^3 \lambda^{7/4}} \right)^2} \\
                &= \bigO{\frac{2^{k/2}}{m_{\p} \alpha^{6} \lambda^{7/2}}} < 0.001
            \end{align*}
            for $m_{\p} = \bigO{\frac{\sqrt{n}}{\alpha^{9/2} \lambda^{7/2}}}$.
            Since $\hat{\lambda} \leq \lambda$, the same bound holds if the threshold is defined using our estimate $\hat{\lambda}$ instead.

            Finally, towards \Cref{it:prob-collision-ub-2} we apply Markov's inequality we bound the total number of collisions by 
            \begin{align*}
                \bigO{c(\rD; S_{k}^{\rD})} &= \bigO{m^{2} \lambda^{-2} \norm{\rD[S_{k}^{\rD}]}_{2}^{2}} = \bigO{m^{2} 2^{-k} \lambda^{-1}} = \bigO{m_{\p}^2 2^{-k} \lambda^{-3}} \\
                \bigO{c(\p; S_{k}^{\rD})} &= \bigO{m^{2} \norm{\rD[S_{k}^{\rD}]}_{2}^{2}} = \bigO{m^{2} 2^{-k} \lambda^{-1}} = \bigO{m_{\p}^2 2^{-k} \lambda^{-1}} \text{.}
            \end{align*}

        \end{proof}
    
        We now describe analogous sub-routines to \Cref{clm:m-p-query} and \Cref{clm:c-Sp-query}.

        \begin{claim}
            \label{clm:m-p-query-mix}
            With $\bigO{\lambda^{-2}}$ queries, there is an algorithm that with probability $0.99$ computes an estimate $\hat{m}_{\p}$ such that
            \begin{equation*}
                \left| \binom{m_{\p}}{2} - \binom{\hat{m}_{\p}}{2} \right| \alpha^{3} \hat{\lambda} 2^{-k+1} \leq 0.01 \binom{m_{\p}}{2} \alpha^{3} \hat{\lambda} 2^{-k + 1} \text{.}
            \end{equation*}
        \end{claim}

        \begin{proof}
            We compute
            \begin{equation*}
                 \left| \binom{m_{\p}}{2} - \binom{\hat{m}_{\p}}{2} \right| \leq \frac{1}{2} \left| m_{\p}^{2} +m_{\p} - \hat{m}_{\p}^{2} +\hat{m}_{\p} \right| \leq (m_{\p} + \hat{m}_{\p}) |m_{\p} - \hat{m}_{\p}| = \bigO{m |m_{\p} - \hat{m}_{\p}|} \text{.}
            \end{equation*}
            In the final inequality, we have used $m_{\p}, \hat{m}_{\p} = O(m)$ from \Cref{claim:sample-bound-mix}.
            Since $\frac{m_{\p}^{2}}{4} \leq \binom{m_{\p}}{2}$, it suffices to obtain an estimate such that $|m_{\p} - \hat{m}_{\p}| \ll \frac{m_{\p}^2}{m} \ll m_{\p}$.
            We can do this by randomly choosing samples and querying them, thus accessing a coin with bias $\frac{m_{\p}}{m/\hat{\lambda}}$ which we must estimate with accuracy $\ll \frac{m_{\p}}{m/\hat{\lambda}} \ll \lambda$.
            The query bound follows from an application of \Cref{lem:bias-add}.
        \end{proof}

        \begin{claim}
            \label{clm:c-Sp-query-mix}
            There exists a universal constant $C > 0$ such that if $m_{\p} \geq \frac{C \sqrt{n}}{\alpha^{9/2} \lambda^{7/2}}$ there is an algorithm that uses $\bigO{\frac{1}{\alpha^{6} \lambda^{8}}}$ queries, and with probability $0.99$ computes an estimate $\hat{c}(\p; S_{k}^{\rD})$ such that the following hold:
            \begin{enumerate}
                \item If $\p$ is uniform, 
                \begin{equation*}
                    \hat{c}(\p; S_{k}^{\rD}) \leq \binom{m_{\p}}{2} 0.2  \alpha^3 \lambda 2^{-k + 1} \leq \binom{m_{\p}}{2} 0.3  \alpha^3 \hat{\lambda} 2^{-k + 1} \text{.}
                \end{equation*}
                \item In the soundness case, 
                \begin{equation*}
                    \hat{c}(\p; S_{k}^{\rD}) \geq \binom{m_{\p}}{2} 0.7  \alpha^3 \hat{\lambda} 2^{-k + 1} \text{.}
                \end{equation*}
            \end{enumerate}
        \end{claim}

        \begin{proofof}{\Cref{clm:c-Sp-query-mix}}
            Note that $0.2 \alpha^3 \hat{\lambda} 2^{-k + 1} \geq \frac{1 + 0.01}{n}$ as long as $k < \log n + 3 \log \alpha + \log \hat{\lambda} - 3$, which holds by assumption.
            Thus, in both cases, following \Cref{lemma:tester-concentration-bounds} it suffices to estimate $c(\p; S_{k}^{\rD})$ up to error additive error $\binom{m_{\p}}{2} \frac{\alpha^3 \hat{\lambda} 2^{-k + 1}}{10}$ using queries.
            Following similar arguments to \Cref{clm:c-Sp-query}, this requires query complexity
            \begin{equation*}
                \bigO{\frac{2^{2k} c(\rD; S_{k}^{\rD})^2}{m_{\p}^{4} \alpha^{6} \hat{\lambda}^{2}} \cdot \frac{c(\p; S_{k}^{\rD})}{c(\rD; S_{k}^{\rD})}} = \bigO{\frac{2^{2k} c(\rD; S_{k}^{\rD}) c(\p; S_{k}^{\rD})}{m_{\p}^{4} \alpha^{6} \hat{\lambda}^{2}}} = \bigO{\frac{1}{\alpha^{6} \lambda^{8}}} \text{.}
            \end{equation*}
            In the last inequality, we have used \Cref{it:prob-collision-ub-2} to bound both $c(\rD; S_{k}^{\rD})$ and $c(\p; S_{k}^{\rD})$.
        \end{proofof}

        We now present our algorithm.

        \begin{mdframed}
            \begin{enumerate}
                \item Let $m := \frac{C n^{1/2}}{\alpha^{9/2} \hat{\lambda}^{7/2}}$.
                \item Repeat the following for $1 \leq t \leq T = C \log(1/\delta)$.
                \begin{enumerate}
                    \item Draw $\frac{m}{\hat{\lambda}}$ fresh samples from $\rD$.
                    \item Compute $\hat{m}_{\p}$ using \Cref{clm:m-p-query-mix}.
                    \item Compute $\hat{c}\left(\p; S_{k}^{\rD}\right)$ using \Cref{clm:c-Sp-query-mix}.
                    \item Let $Z_{t} \gets \ind{\hat{c}\left(\p; S_{k}^{\rD}\right) < \binom{\hat{m}_{\p}}{2} 0.5 \alpha^3 \hat{\lambda} 2^{-k + 1}}$.
                \end{enumerate}
                \item If $\frac{1}{T} \sum Z_{t} > \frac{1}{2}$, accept. Otherwise, reject.
            \end{enumerate}
        \end{mdframed}

        Fix an iteration $t$.
        By \Cref{claim:sample-bound-mix}, we have that $m_{\p} \geq 0.99 m$ so that \Cref{lemma:tester-concentration-bounds} applies.
        If $\p$ is uniform, then we claim $Z_t = 0$ with probability at most $0.1$ by a union bound.
        To see this, we may union bound over the correctness of the estimates $\hat{m}_{\p}, \hat{c}(\p; S_{k}^{\rD})$ and obtain
        \begin{align*}
            \binom{\hat{m}_{\p}}{2} 0.5 \alpha^3 \hat{\lambda} 2^{-k + 1} - \hat{c}(\p; S_{k}^{\rD}) &\geq \left( \binom{m_{\p}}{2} 0.5 \alpha^3 \hat{\lambda} 2^{-k+1} - \hat{c}(\p; S_{k}^{\rD}) \right)  - \left| \binom{\hat{m}_{\p}}{2}  - \binom{m_{\p}}{2} \right| 0.5 \alpha^3 \hat{\lambda} 2^{-k+1} \\
            &\geq \binom{m_{\p}}{2} 0.2 \alpha^3 \hat{\lambda} 2^{-k+1} - 0.01 \cdot \binom{m_{\p}}{2} \alpha^3 \hat{\lambda} 2^{-k+1} \\
            &> 0 \text{.}
        \end{align*}
        Similarly, in the soundness case we have $Z_t = 1$ with probability at most $0.1$.
        Then, by taking a majority vote, we observe that the algorithm fails with probability $e^{-\Omega(T)} < \delta$ for $T = O(\log(1/\delta))$.
        The sample and query complexities are immediate from the algorithm.
    \end{proof}

    \subsection{Testing Distributions with Light Buckets}
    
    The next sub-routine we need is an algorithm that can detect if $\p$ is far from uniform for some $k > \log n + 3 \log \alpha + \log \hat{\lambda} - 3$.

    \begin{lemma}
        \label{lemma:dist-mix-knowledge-large-k-alg}
        Let $\alpha > 0$ and $k_0 = \log n + 3 \log \alpha + \log \hat{\lambda} - 3$ and let $S = \overline{S_{k_0}^{\rD}}$.
        
        There is an algorithm using $\bigO{\frac{\sqrt{n}}{\alpha^{4} \lambda^{2}}}$ samples and $\bigO{\frac{1}{\alpha^{12} \lambda^{6}}}$ queries satisfying the following:
        \begin{enumerate}
            \item (Completeness) if $\p$ is uniform, then the algorithm accepts with probability at least $0.999$.
            \item (Soundness) if $\alpha < \sum_{i \in S, \p[i] > 1/n} p_{i} - \frac{1}{n}$, then the algorithm rejects with probability at least $0.999$.
        \end{enumerate}
    \end{lemma}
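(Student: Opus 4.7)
The plan is to adapt the collision-based tester of \Cref{lemma:simple-t-query-alg} to estimate, directly, the $\ell_{2}^{2}$ gap between $\p$ and $U_{n}$ restricted to the subdomain $S$. Define
\[
\Phi \;:=\; \sum_{i \in S}\bigl(\p[i] - \tfrac{1}{n}\bigr)^{2} \;=\; \sum_{i \in S}\p[i]^{2} \;-\; \tfrac{2}{n}\,\p(S) \;+\; \tfrac{|S|}{n^{2}},
\]
whose third summand is known exactly because $\rD$ is given explicitly. Clearly $\Phi = 0$ in the completeness case. In the soundness case, let $T := \{i \in S : \p[i] > 1/n\}$; by Cauchy--Schwarz applied to $T$,
\[
\Phi \;\geq\; \sum_{i \in T}\bigl(\p[i]-\tfrac{1}{n}\bigr)^{2} \;\geq\; \frac{\bigl(\sum_{i \in T}(\p[i]-1/n)\bigr)^{2}}{|T|} \;>\; \frac{\alpha^{2}}{|T|} \;\geq\; \frac{\alpha^{2}}{n}.
\]
Hence it suffices to produce an estimator $\hat\Phi$ with additive error $\ll \alpha^{2}/n$ and threshold at $\alpha^{2}/(2n)$.

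We draw $m = \bigtO{\sqrt{n}/(\alpha^{4}\lambda^{2})}$ samples from $\rD$. Let $Y(S)$ be the number of samples landing in $S$ and let $c(\rD;S)$ be the number of observed collisions among them; both are readable without any queries. The two remaining terms of $\Phi$ are estimated by simulating Bernoulli trials via verification queries. (i)~A uniformly random observed sample in $S$ comes from $\p$ with probability $\lambda\p(S)/\rD(S)$; since $\rD(S)$ is known, \Cref{lem:bias-add} estimates $\p(S)$ to additive $O(\alpha^{2})$ using $\mathrm{poly}(1/\alpha,1/\lambda)$ queries. (ii)~A uniformly random observed colliding pair in $S$ is a $\p$-$\p$ collision with probability $c(\p;S)/c(\rD;S)$, so \Cref{lem:bias-add} estimates $c(\p;S) \approx \binom{m_{\p}}{2}\sum_{i \in S}\p[i]^{2}$ to additive $\ll \binom{m_{\p}}{2}\alpha^{2}/n$. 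Combining with an estimate of $m_{\p}$ in the style of \Cref{clm:m-p-query} recovers $\sum_{i \in S}\p[i]^{2}$ to additive $\ll \alpha^{2}/n$. Plugging the three estimates into the decomposition yields $\hat\Phi$, and we accept iff $\hat\Phi < \alpha^{2}/(2n)$, with confidence boosted to $0.999$ by $O(1)$ independent repetitions and a majority vote, exactly as in \Cref{lemma:dist-mix-knowledge-sub-alg}.

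The main obstacle is controlling the variance of $c(\p;S)$ sharply enough that the stated sample budget is adequate. The explicit bound $\rD[i] \leq 2^{-k_{0}+1} = O\bigl(1/(n\alpha^{3}\hat\lambda)\bigr)$ for $i \in S$ gives $\norm{\p|_{S}}_{\infty} \leq \norm{\rD|_{S}}_{\infty}/\lambda = O\bigl(1/(n\alpha^{3}\lambda^{2})\bigr)$, so $\norm{\p|_{S}}_{p}^{p} \leq \norm{\p|_{S}}_{\infty}^{p-1}\norm{\p|_{S}}_{1}$ yields strong bounds on both $\norm{\p|_{S}}_{2}^{2}$ and $\norm{\p|_{S}}_{3}^{3}$. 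Plugging these into the standard variance inequality $\Var(c(\p;S)) \leq m_{\p}^{2}\norm{\p|_{S}}_{2}^{2} + m_{\p}^{3}\norm{\p|_{S}}_{3}^{3}$ from \eqref{eq:p-collision-variance}, the first term forces $m \gtrsim \sqrt{n}/\mathrm{poly}(\alpha,\lambda)$ and the second produces a polynomial-in-$1/(\alpha\lambda)$ floor that is absorbed into the query budget for small $n$. Analogous accounting of $Y(S)$ and $c(\rD;S)$ ensures the denominator of each bias-estimation step is large enough for \Cref{lem:bias-add} to apply; the collision-ratio estimation dominates and produces the $\bigO{1/(\alpha^{12}\lambda^{6})}$ query bound.
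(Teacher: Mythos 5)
Your proposal takes a genuinely different route from the paper's proof. The paper splits into two subcases according to whether $\p[S]\leq |S|/n$ or $\p[S]\geq |S|/n$: in the first, it conditions on $S$ and applies the already-proved collision tester \Cref{lemma:simple-t-query-alg} to the conditional distribution $(\p|S)$ against $(U_n|S)$; in the second, it exploits $\p[S]\geq |S|/n$ to derive $\norm{\p[S]}_2^2\geq\frac{|S|}{n^2}+\frac{\alpha^2}{|S|}$ and then thresholds a collision-count estimate against the \emph{known} quantity $\binom{\hat m_\p}{2}\bigl(\tfrac{|S|}{n^2}+\tfrac{\alpha^2}{2|S|}\bigr)$. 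The point of that split is precisely that neither branch ever has to estimate $\p(S)$. You instead estimate the full statistic $\Phi=\norm{\p[S]-U_n[S]}_2^2$ by summing three separately-estimated terms, which eliminates the case analysis but introduces the cross term $-\tfrac{2}{n}\p(S)$ as an independent estimand.

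That is a legitimate alternative, and your Cauchy--Schwarz soundness bound $\Phi>\alpha^2/|T|\geq\alpha^2/n$ is correct, but the account is too thin in two places. First, you never quantify the precision needed in $\hat m_\p$. In the paper's Case~2 the $\hat m_\p$ error appears on both sides of the comparison and only needs relative accuracy roughly $\alpha^2$; in your scheme it enters the division $c(\p;S)/\binom{\hat m_\p}{2}$ where the numerator can be as large as $\Theta\bigl(m^2/(n\alpha^3\lambda^2)\bigr)$, so to keep the resulting additive error in $\sum_{i\in S}\p[i]^2$ below $\alpha^2/n$ you need relative accuracy $O(\alpha^5\lambda^2)$ in $\hat m_\p$. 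This is still within the $O(1/(\alpha^{12}\lambda^6))$ query budget, but it is not automatic and should be stated. Second, the $\p(S)$ estimate requires dividing by $\lambda$, which the algorithm does not know; you must refine $\hat\lambda$ to relative error $O(\alpha^2)$ (via \Cref{lemma:lambda-est-2}, costing $O(1/(\alpha^4\lambda))$ samples and queries) before the systematic bias $\p(S)\cdot\eta$ from using $\hat\lambda$ in place of $\lambda$ is under control; ``since $\rD(S)$ is known'' glosses this over. Finally, the remark that the variance floor is ``absorbed into the query budget for small $n$'' is confused: the $m_\p^3\norm{\p|_S}_3^3$ term of \eqref{eq:p-collision-variance} lower-bounds the \emph{sample} complexity at $\Omega(1/(\alpha^{10}\lambda^4))$ samples from $\p$ (as in \Cref{lemma:large-k-tester-concentration}), which is a separate budget line from queries and is not covered by the $O(\sqrt{n}/(\alpha^4\lambda^2))$ you quote.
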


    \begin{proof}
        Since, all elements are in $\overline{S_{k}^{\rD}}$, $\norm{\rD[S]}_{2}^{2} \leq \norm{\rD[S]}_{\infty} \leq \frac{16}{n \alpha^3 \hat{\lambda}}$.
        For all $i$, write $\p[i] = \frac{1}{n} + a[i]$ for some vector $a$.
        Then, we have for all $i \in S$ that $\p[i] \leq \frac{\rD[i]}{\lambda} \leq \frac{16}{n \alpha^3 \hat{\lambda}^{2}}$ so that $a[i] \leq \frac{16}{n \alpha^3 \hat{\lambda}^{2}}$.

        We begin with two simplifying assumptions.
        \begin{claim}
            \label{clm:p-size-prob-lb}
            In the soundness case, $|S| \geq \frac{n \alpha^{4} \hat{\lambda}^{2}}{16}$ and $\p[S] \geq \alpha$.
            Furthermore, there is an algorithm with $\bigO{\log(1/\delta) \alpha^{-2}}$ queries that outputs accept with probability $1 - \delta$ if either $|S| < \frac{n \alpha^{4} \hat{\lambda}}{16}$ or $\p[S] < \alpha/2$.
        \end{claim}

        \begin{proof}
            In the soundness case, $\sum_{i \in S} a[i] \geq \alpha$ implies $|S| \geq \frac{\alpha}{\max_i a[i]} \geq \frac{n \alpha^{4} \hat{\lambda}^{2}}{16}$.
            Similarly, $\p[S] \geq \p[S^+] \geq \alpha$.

            An algorithm can check for the size of $S$ directly since we are given $S$ via the mixture $\rD$.
            Similarly, we can estimate $\p[S]$ up to additive error $\alpha/6$ with $\bigO{\alpha^{-2}}$ samples and queries via \Cref{lem:bias-add}, and output accept if this estimate is at most $\frac{2 \alpha}{3}$.
        \end{proof}

        Note that the above algorithm does not output accept in the soundness case.
        Thus, let us assume $|S| \geq \frac{n \alpha^{4} \hat{\lambda}^{2}}{16}$ and $\p[S] \geq \alpha/2$.
        Let $S^+ = \set{i \in S \given \p[i] > \frac{1}{n}}$.
        We consider two cases.

        \paragraph{Case 1: $\p[S] \leq |S|/n$}

        When $\p[S] \leq |S|/n$, we observe that $(\p|S)$ is uniform (resp. far from uniform) if and only if $\p$ is.
        In particular, we test uniformity of $(\p|S)$ via sample access to the mixture $(\rD|S)$ using \Cref{lemma:simple-t-query-alg}.
        
        In the completeness case, we have $(\p|S)$ is uniform.
        In the soundness case, note that 
        \begin{equation*}
            (\p|S)[i] = \frac{\p[i]}{\p[S]} \geq \frac{n}{|S|} \p[i]
        \end{equation*}
        so that 
        \begin{equation*}
            \sum_{i \in S^{+}} (\p|S)[i] - (U_n|S)[i] = \sum_{i \in S^{+}} (\p|S)[i] - \frac{1}{|S|} \geq \sum_{i \in S^{+}} \frac{n}{|S|} \p[i] - \frac{n}{|S|} \frac{1}{n} \geq \frac{n}{|S|} \alpha \text{.}
        \end{equation*}
        In particular, we have from \Cref{clm:p-size-prob-lb} that
        \begin{equation*}
            \norm{(\p|S) - (U_n|S)}_{1} \geq \sum_{i \in S^{+}} (\p|S)[i] - (U_n|S)[i] \geq \frac{n}{|S|} \alpha \geq \alpha \text{.}
        \end{equation*}
        
        Furthermore, 
        $\p[S] \geq \p[S^{+}] \geq \alpha$, so that a single sample from $\rD$ falls in $S$ and is from $\p$ with probability $\frac{1}{\lambda \alpha}$.
        If we take $\frac{2m}{\lambda \alpha}$ samples, then we have at least $m$ samples from $\p$ in $S$ with probability at least $1 - \exp(-\Omega(m)) > 1 - \delta$ if $m \gg \log(1/\delta)$.
        Note that since the algorithm knows $S$ (as we are given knowledge of the distribution $\rD$), we will only consider samples on $S$ and assume that we are sampling from the conditional distribution $(\rD|S)$.

        We first show that the mixture $(\rD|S) := \eta (\p|S) + (1 - \eta) (\q|S)$ is not too contaminated.
        \begin{lemma}
            \label{lemma:eta-lb}
            $\eta \geq \frac{\lambda \alpha}{\rD[S]} \geq \lambda \alpha$.
        \end{lemma}

        \begin{proof}
            Fix a sample $Y \sim (\rD|S)$ and let $Z$ denote its source.
            \begin{align*}
                (\rD|S)[i] = \Pr(Y = i | Y \in S) &= \Pr(Z = \p, Y = i| Y \in S) + \Pr(Z = \q, Y = i| Y \in S) \\
                &= \Pr(Y = i | Z = \p,  Y \in S) \Pr(Z = \p| Y \in S) \\
                &\quad+ \Pr(Y = i|Z=\q,  Y \in S)\Pr(Z = \q| Y \in S) \\
                &= (\p|S)[i] \Pr(Z = \p| Y \in S) + (\q|S)[i] \Pr(Z = \q| Y \in S) \text{.}
            \end{align*}
            Thus $\eta$ is the probability that a sample in $S$ came from $\p$.
            Thus,
            \begin{align*}
                \Pr(Z = \p|Y \in S) &= \frac{\Pr(Z = \p, Y = S)}{\Pr(Y \in S)} \geq \frac{\lambda \alpha}{\rD[S]} \text{.}
            \end{align*}
        \end{proof}

        Furthermore, note that for all $i \in S$, we have $\rD[S] \geq \lambda \p[S] \geq \lambda \alpha$.
        This allows us to bound $(\rD|S)[i] = \frac{\rD[i]}{\rD[S]} \leq \frac{16}{n \alpha^3 \hat{\lambda} \rD[S]} \leq \frac{16}{n \alpha^{4} \hat{\lambda}^2}$ and 
        Then, directly applying \Cref{lemma:simple-t-query-alg} to the mixture $(\rD|S)$ we obtain a uniformity tester using 
        \begin{equation*}
            \bigO{\frac{\sqrt{n}}{\alpha^{2} \eta}} = \bigO{\frac{\sqrt{n}}{\alpha^{3} \lambda}}
        \end{equation*}
        samples and 
        \begin{equation*}
            \bigO{\frac{n\frac{1}{n \alpha^{4} \lambda^{2}}}{\alpha^{4} \eta^{2}}} = \bigO{\frac{1}{\alpha^{10} \lambda^{4}}}
        \end{equation*}
        queries.
        Recall that we incur an additional $\alpha \lambda$ factor in the sample complexity to obtain samples from $(\rD|S)$ (yielding $\bigO{\frac{\sqrt{n}}{\alpha^{4} \lambda^{2}}}$ samples).

        \paragraph{Case 2: $\p[S] \geq |S|/n$}

        Consider now the second case.
        When $\p[S] \geq |S|/n$, we note that $\p$ produces many collisions on the subset $S$ if and only if $\p$ is far from uniform.
        We run the algorithm of \Cref{lemma:dist-mix-knowledge-large-k-alg} but with different parameters which we state below.
        Note that we only run the algorithm once $(T = 1)$ since we only require high constant probability of correctness.
        
        In the uniform case, we have $\norm{U_n[S]}_{2}^{2} = \frac{|S|}{n^2}$.
        In the soundness case, we have
        \begin{equation*}
            \norm{\p[S] - U_n[S]}_{1} = \sum_{i \in S} \left| \p[i] - \frac{1}{n} \right| \geq \alpha 
        \end{equation*}
        so that $\norm{\p[S] - U_n[S]}_{2}^{2} \geq \alpha^2/|S|$.
        Furthermore, we have
        \begin{equation*}
            \frac{\alpha^2}{|S|} \leq \norm{\p[S] - U_n[S]}_{2}^{2} = \norm{\p[S]}_{2}^{2} + \frac{|S|}{n^2} - \frac{2\p[S]}{n} \leq \norm{\p[S]}_{2}^{2} - \frac{|S|}{n^2}
        \end{equation*}
        since $\p[S] \geq \frac{|S|}{n}$.
        Rearranging, we obtain
        \begin{equation*}
            \norm{\p[S]}_{2}^{2} \geq \frac{\alpha^2}{|S|} + \frac{|S|}{n^2} = \frac{\alpha^2}{|S|} + \norm{U_n[S]}_{2}^{2} \text{.}
        \end{equation*}
        
        Suppose we take $m/\hat{\lambda}$ samples.
        Let $c(\p; S)$ denote the number of $\p$-collisions in $S$.
        From the above discussion, we expect an expectation gap in the number of collisions $c(\p; S)$ of approximately $\bigTh{\frac{\alpha^2 m^2}{|S|}}$.
        Following \eqref{eq:p-collision-var-S-k}, we have
        \begin{equation*}
            \Var(c(\p)) \leq m_{\p}^{2} \norm{\p[S]}_{2}^{2} + m_{\p}^{3} \norm{\p[S]}_{3}^{3} \text{.}
        \end{equation*}
        Our goal now is to follow the framework of \Cref{lemma:dist-mix-knowledge-sub-alg}.

        Analogously to \Cref{lemma:tester-concentration-bounds}, we obtain the following result.
        \begin{lemma}
            \label{lemma:large-k-tester-concentration}
            Suppose $m_{\p} = \bigO{\frac{\sqrt{n}}{\alpha^{7/2} \lambda} + \frac{1}{\alpha^{10} \lambda^{4}}}$ for some sufficiently large constant.
            Then, the following hold:
            \begin{enumerate}
                \item $\left|c(\p; S) - \binom{m_{\p}}{2} \norm{\p[S]}_{2}^{2} \right| \leq  \frac{0.01 \alpha^2 m^2}{n}$ with probability $0.999$.
                \item $c(\rD; S) = \bigO{\frac{m^{2}}{n \alpha^3 \lambda^3}}$ and $c(\p; S) = \bigO{\frac{m^{2}}{n \alpha^3 \lambda^{2}}}$ with probability $0.999$.
            \end{enumerate}
        \end{lemma}

        \begin{proof}
            Since $\norm{\p[S]}_{2}^{2} \leq \max_{i \in S} \p[i] \leq \max_{i \in S} \frac{\rD[i]}{\lambda} \leq \frac{16}{n \alpha^{3} \hat{\lambda}^{2}}$ and
            \begin{equation*}
                \norm{\p[S]}_{3}^{3} = \sum_{i \in S} \p[i]^3 \leq (\max_{i \in S} \p[i])\norm{\p[S]}_{2}^{2} \leq \frac{256}{n^2 \alpha^{6} \hat{\lambda}^{4}} \text{,}
            \end{equation*}
            we have
            \begin{equation*}
                \Var(c(\p; S)) \leq m_{\p}^{2} \frac{16}{n \alpha^{3} \hat{\lambda}^{2}} + m_{\p}^{3} \frac{256}{n^2 \alpha^{6} \hat{\lambda}^{4}} \text{.}
            \end{equation*}
            By Chebyshev's inequality, we have
            \begin{equation*}
                \Pr \left( \left|c(\p; S) - \binom{m_{\p}}{2} \norm{\p[S]}_{2}^{2} \right| > \frac{0.01 \alpha^2 m^2}{n} \right) < 0.001
            \end{equation*}
            as long as $m_{\p} = \bigO{\frac{\sqrt{n}}{\alpha^{7/2} \lambda} + \frac{1}{\alpha^{10} \lambda^{4}}}$ for some sufficiently large constant.
            Note that here we have used $m_{\p} = \Theta(m)$ from \Cref{claim:sample-bound-mix}.

            As in \Cref{lemma:tester-concentration-bounds}, we have $c(\rD; S) = \bigO{(m/\lambda)^2 \norm{\rD[S]}_{2}^{2}} = \bigO{\frac{m^2}{n \alpha^3 \lambda^{3}}}$ and $c(\p; S) = \bigO{m^2 \norm{\p[S]}_{2}^{2}} = \bigO{\frac{m^2}{n \alpha^3 \lambda^{2}}}$ with probability $0.999$ with Markov's inequality.
        \end{proof}

        Following \Cref{lemma:dist-mix-knowledge-sub-alg}, our goal is to obtain estimates $\hat{m}_{\p}$, $\hat{c}(\p; S)$ satisfying:
        \begin{enumerate}
            \item $\left| \hat{m}_{\p} - m_{\p} \right| \ll \frac{|S| \alpha^2 m}{n}$.
            \item $\left| \hat{c}(\p; S) - c(\p; S) \right| \ll \frac{m^{2} \alpha^{2}}{n}$
        \end{enumerate}
        
        Towards the first estimate, we note that by picking a random sample and querying it, we flip a coin with bias $\frac{m_{\p}}{m/\hat{\lambda}}$ which we must estimate up to accuracy $\ll \frac{|S| \alpha^2 m}{n} \frac{\hat{\lambda}}{m} = \frac{|S| \alpha^2 \hat{\lambda}}{n}$.
        From \Cref{lem:bias-add}, this requires 
        \begin{equation*}
            \bigO{\frac{n^2}{\alpha^{4} |S|^2 \lambda^2}} = \bigO{\frac{1}{\alpha^{12} \lambda^{6}}}
        \end{equation*}
        queries since $|S| = \bigOm{n \alpha^{4} \lambda^{2}}$.
        To use this estimate, we will observe that 
        \begin{align}
            \label{eq:m-error-bound}
            \left| \binom{m_{\p}}{2} - \binom{\hat{m}_{\p}}{2} \right| \left( \frac{|S|}{n^2} + \frac{\alpha^2}{|S|} \right) &\leq (m_{\p} + \hat{m}_{\p}) |m_{\p} - \hat{m}_{\p}| \left( \frac{|S|}{n^2} + \frac{\alpha^2}{|S|} \right) = \bigO{\frac{2}{|S|} m |m_{\p} - \hat{m}_{\p}|} \ll \frac{\alpha^2 m^{2}}{n} \text{.}
        \end{align}
        In the first inequality, we use $a^2 - b^2 = (a + b)(a - b)$, in the second we note that $|S| \leq n$ implies that $\frac{|S|}{n} + \frac{\alpha^2}{|S|} \leq \frac{2}{|S|}$ and $m_{\p} = O(m)$, and in the final inequality we use our error bound on $\hat{m}_{\p}$.

        Towards the second estimate, we pick a random collision and query it, which yields a $\p$-collision with probability $\frac{c(\p; S)}{c(\rD; S)}$.
        To estimate this up to accuracy $\ll \frac{m^2 \alpha^2}{c(\rD; S) n}$ we require
        \begin{equation*}
            \bigO{\frac{n^2 c(\rD; S)^2}{m^{4} \alpha^{4}} \cdot \frac{c(\p; S)}{c(\rD; S)}} = \bigO{\frac{n^2 c(\p; S) c(\rD; S)}{\alpha^{4} m^{4}}} = \bigO{\frac{1}{\alpha^{10} \lambda^{5}}}
        \end{equation*}
        queries by bounding the number of collisions with \Cref{lemma:large-k-tester-concentration}.
        
        Given the estimates, we return $\ind{\hat{c}(\p; S) < \binom{\hat{m}_{\p}}{2} \left( \frac{|S|}{n^2} + \frac{\alpha^2}{2|S|} \right)}$.
        When $\p$ is uniform, the algorithm accepts as
        \begin{align*}
            \hat{c}(\p; S) &\leq c(\p; S) + \frac{0.01 \alpha^2 m^2}{n} \\
            &\leq \binom{m_{\p}}{2} \frac{|S|}{n^2} + \frac{0.02 \alpha^2 m^2}{n} \\
            &\leq \binom{\hat{m}_{\p}}{2} \frac{|S|}{n^2} + \frac{0.03 \alpha^2 m^2}{n} \\
            &\leq \binom{\hat{m}_{\p}}{2} \frac{|S|}{n^2} + \binom{\hat{m}_{\p}}{2} \frac{0.25 \alpha^2}{|S|} \text{.}
        \end{align*}
        In the first inequality, we apply the error bound on $\hat{c}(\p; S)$, in the second \Cref{lemma:large-k-tester-concentration}, in the third the error bound on $\hat{m}_{\p}$ and \eqref{eq:m-error-bound}, and in the final inequality we use $m \leq 2 m_{\p}$ from \Cref{claim:sample-bound-mix} and $|S| \leq n$.
        In the soundness case, the algorithm rejects (following similar arguments) as
        \begin{align*}
            \hat{c}(\p; S) &\geq c(\p; S) - \frac{0.01 \alpha^2 m^2}{n} \\
            &\geq \binom{m_{\p}}{2} \left( \frac{|S|}{n^2} + \frac{\alpha^2}{|S|} \right) - \frac{0.02 \alpha^2 m^2}{n} \\
            &\geq \binom{\hat{m}_{\p}}{2} \left( \frac{|S|}{n^2} + \frac{\alpha^2}{|S|} \right) - \frac{0.03 \alpha^2 m^2}{n} - \\
            &\geq \binom{\hat{m}_{\p}}{2} \left( \frac{|S|}{n^2} + \frac{\alpha^2}{|S|} \right) - \binom{\hat{m}_{\p}}{2} \frac{0.25 \alpha^2}{|S|} \text{.}
        \end{align*}

        The algorithm requires $\bigO{\frac{\sqrt{n}}{\alpha^{7/2} \lambda} + \frac{1}{\alpha^{10} \lambda^{4}}}$ samples and $\bigO{\frac{1}{\alpha^{12} \lambda^{6}}}$ queries.

        \paragraph{Putting It All Together}
        Our final algorithm for \Cref{lemma:dist-mix-knowledge-large-k-alg} is to run both algorithms for the above cases and output accept if they both accept.
        If either of the algorithms reject, we reject.
        
        For correctness, note that if $\p$ is uniform, then $\p[S] = \frac{|S|}{n}$ so that the conditions of both cases are satisfied.
        In particular, by a union bound over both cases, both algorithms will accept, as desired.
        On the other hand, if $\p$ is far from uniform, then at least one of the conditions are satisfied, and one algorithm will reject, so we reject, as desired.
        The correctness probability follows again from a union bound over both cases.
        The sample and query complexity bounds follow from summing over both cases.
    \end{proof}

    We are now ready to present the final algorithm for \Cref{thm:dist-mix-knowledge-alg}.

    \begin{mdframed}
        \begin{enumerate}
            \item Estimate $\frac{2 \lambda}{3} \leq \hat{\lambda} \leq \lambda$ using \Cref{lemma:lambda-est-2}.
            \item Set $\alpha = \frac{\varepsilon}{1 + \log n}$ and $k_0 \gets \log n + 3 \log \alpha + \log \hat{\lambda} - 3$.
            \item For all $1 \leq k \leq k_0$:
            \begin{enumerate}
                \item Run the sub-routine of \Cref{lemma:dist-mix-knowledge-sub-alg} with $\delta \ll \frac{1}{k_0}$. 
                If it outputs reject, reject. Otherwise continue.
            \end{enumerate}
            \item Run the sub-routine of \Cref{lemma:dist-mix-knowledge-large-k-alg}. 
            Return the output.
        \end{enumerate}
    \end{mdframed}

    We argue that our algorithm is correct.
    If $\p$ is uniform, then every sub-routine returns accept, and we return accept. 
    This fails to occur with at most small constant probability by a union bound.

    If $\p$ is far from uniform, we claim at least one iteration will reject.
    This follows from our setting of $\alpha$ and the following lemma.
    \begin{claim}
        \label{claim:far-index}
        Suppose $\p$ is $\eps$-far from uniform.
        There is an index $1 \leq k \leq 1 + \log n$ such that $\frac{\varepsilon}{1 + \log n} < \sum_{i \in S_{k}^{r}, \p[i] > \frac{1}{n}} \p[i] - \frac{1}{n}$.
    \end{claim}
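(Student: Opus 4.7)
\begin{proofof}{\Cref{claim:far-index} (Proposal)}
The plan is a straightforward pigeonhole argument on the dyadic partition $S_1^{\rD}, \dotsc, S_{\log n + 1}^{\rD}$ of $[n]$ induced by the (known) mixture $\rD$. First I would recall that by the paper's convention $d_{\mathrm{TV}}(\p, U_n) = \tfrac{1}{2} \norm{\p - U_n}_1$, so the hypothesis $d_{\mathrm{TV}}(\p, U_n) > \eps$ means $\norm{\p - U_n}_1 > 2\eps$. Since $\p$ and $U_n$ are both probability distributions, $\sum_{i} (\p[i] - 1/n) = 0$, which forces the positive and negative parts of $\p - U_n$ to have equal mass. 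In particular,
\begin{equation*}
    \sum_{i \colon \p[i] > 1/n} \left( \p[i] - \frac{1}{n} \right) \;=\; \frac{1}{2} \norm{\p - U_n}_1 \;>\; \eps.
\end{equation*}

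Next, since the family $\{S_k^{\rD}\}_{k=1}^{\log n + 1}$ partitions $[n]$ (every $i$ falls into exactly one dyadic band for $\rD[i]$, or into $S_{\log n + 1}^{\rD}$ if $\rD[i] \leq 1/n$), I can split the sum across bands:
\begin{equation*}
    \sum_{k=1}^{1 + \log n} \;\sum_{i \in S_k^{\rD},\,\p[i] > 1/n} \left( \p[i] - \frac{1}{n} \right) \;>\; \eps.
\end{equation*}
The left-hand side is a sum of $1 + \log n$ non-negative terms whose total strictly exceeds $\eps$, so by the pigeonhole principle at least one index $k \in \{1, \dotsc, 1 + \log n\}$ must satisfy
\begin{equation*}
    \sum_{i \in S_k^{\rD},\,\p[i] > 1/n} \left( \p[i] - \frac{1}{n} \right) \;>\; \frac{\eps}{1 + \log n},
\end{equation*}
which is exactly the required statement.

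There is no real obstacle: the only subtle point is keeping track of the factor of two (total variation versus $\ell_1$) so that the pigeonhole bound comes out as $\eps/(1+\log n)$ rather than $\eps/(2(1+\log n))$, and verifying that the $\log n + 1$ dyadic bands genuinely cover $[n]$ so that no mass is lost before applying pigeonhole.
\end{proofof}
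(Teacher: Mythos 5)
Your proposal is correct and follows essentially the same route as the paper: reduce the hypothesis to $\sum_{i \colon \p[i] > 1/n} (\p[i] - 1/n) > \eps$ (the paper states this directly "by definition of total variation distance," which you justify slightly more explicitly via the positive-part identity), then partition over the $1 + \log n$ sets $S_k^{\rD}$ and apply pigeonhole. No gaps.
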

    
    \begin{proof}
        By definition of total variation distance,
        \begin{equation*}
            \varepsilon < \sum_{\p[i] > \frac{1}{n}} \p[i] - \frac{1}{n} \text{.}
        \end{equation*}
        By partitioning $[n]$ into $S_{k}^{\rD}$, we have
        \begin{equation*}
            \varepsilon < \sum_{k = 1}^{\log n + 1} \sum_{i \in S_{k}^{r}, \p[i] > \frac{1}{n}} \p[i] - \frac{1}{n} \text{.}
        \end{equation*}
        In particular, there exists some index $k$ where
        \begin{equation*}
            \frac{\varepsilon}{\log n + 1} < \sum_{i \in S_{k}^{r}, \p[i] > \frac{1}{n}} \p[i] - \frac{1}{n}\text{.}
        \end{equation*}
    \end{proof}
    
    We thus bound the sample complexity via \Cref{lemma:dist-mix-knowledge-sub-alg} and \Cref{lemma:dist-mix-knowledge-large-k-alg} to obtain by $\bigtO{\frac{\sqrt{n}}{\alpha^{9/2} \lambda^{7/2}} + \frac{1}{\alpha^{12} \lambda^{6}}}$ and the query complexity by $\bigO{\frac{\log n}{\alpha^{6} \lambda^{8}} + \frac{1}{\alpha^{12} \lambda^{6}}}$.
    We complete the analysis by substituting $\alpha = \frac{\varepsilon}{1 + \log n}$ and noting that $\alpha \geq \frac{\eps}{2 \log n}$.
\end{proof}

\section*{Acknowledgments}

We would like to thank Nathaniel Harms, Sihan Liu, and anonymous reviewers for helpful comments and feedback on earlier versions of this work.

\newpage 

\bibliographystyle{alpha}
\bibliography{biblio}

\appendix

\newpage

\section{Omitted Proofs}
\label{sec:omitted-proofs}

\subsection{Probability and Information Theory}

\MutualInfoBound*

\begin{proof}
    The conditional entropy $H(X|f(T))$ is the expectation over $f(T)$ of $h(q)$ where $h(q)$ is the binary entropy function and $q$ is the probability that $X = f(T)$ given $f(T)$.
    Since $\EX[q] \geq 0.51$ and $h$ is concave, $H(X|f(T)) \leq h(0.51) < 1 - 2 \cdot 10^{-4}$.
    Then, by the data processing inequality
    \begin{equation*}
        I(X:T) \geq I(X:f(T)) \geq H(X) - H(X|f(T)) \geq 2 \cdot 10^{-4} \text{.}
    \end{equation*}
\end{proof}

\MIAsymp*

\begin{proof}
        We give a proof of this standard fact for completeness.
        Note $\Pr[X=1]=\Pr[X=0] = 1/2$.
        By definition, 
        \begin{align*}
            I(X:M) &= \sum_a \sum_{i=0,1}\Pr[X=i,M=a] \log \left(\frac{\Pr[X=i,M=a]}{\Pr[X=i] \Pr[M=a]}\right)\\
            &= \sum_a \sum_{i=0,1}\Pr[X=i,M=a] \log \left(\frac{2 \Pr[X=i,M=a]}{\Pr[M=a]}\right) \text{.}
        \end{align*}
        Fix an $a$.
        Then, we have (using $\log(x) \leq x - 1$ for $x > 0$),
        \begin{align*}
            \quad &\sum_{i} \Pr[X=i,M=a] \log \left(\frac{2 \Pr[X=i,M=a]}{\Pr[M=a]}\right) \\
            \leq &\sum_{i} \Pr[X=i,M=a]  \left(\frac{2 \Pr[X=i,M=a]}{\Pr[M=a]} - 1 \right) \\
            = &\frac{1}{\Pr[M = a]} \sum_{i} 2 \Pr[X=i,M=a]^2 - \Pr[X = i, M = a] \Pr[M = a] \\
            = &\frac{1}{\Pr[M = a]} \sum_{i} \Pr[X=i,M=a]^2 - \Pr[X = i, M = a] \Pr[X = 1 - i, M = a] \\
            = &\frac{(\Pr[X = 0, M = a] - \Pr[X = 1, M = a])^2}{\Pr[M = a]} \\
            = & \frac{2(\Pr[M = a | X = 0] - \Pr[M = a | X = 1])^2}{\Pr[M = a | X = 0] + \Pr[M = a | X = 1]} \text{.}
        \end{align*}
        In the second equality we used $\Pr[M = a] = \Pr[X = 0, M = a] + \Pr[X = 1, M = a]$ and in the last step we used
        \[
        \Pr[X = 0, M = a] + \Pr[X = 1, M = a] = \frac{1}{2}\big(\Pr[M = a | X = 0] + \Pr[M = a | X = 1]\big)
        \]
        since $X$ is an unbiased coin. Applying the above bound to all $a$ concludes the proof. 
    \end{proof}

Next, we prove the concentration bounds required in \Cref{thm:dist-contamination-lb}.

\begin{proofof}{\Cref{lemma:collision-concentration-l2}}
    From \Cref{lemma:c-moment-exp}, the expected number of collisions is $\binom{m}{2} \norm{\p}_{2}^{2}$.
    A standard calculation (see e.g. \cite{Canonne:Survey:ToC}) yields
    \begin{align*}
        \Var(X) &\leq \binom{m}{2}^{2} \left( \frac{2 \norm{\p}_{2}^{2}}{m(m-1)} + \frac{4 \norm{\p}_{3}^{3}}{m} \right) \\
        &= \frac{m(m-1) \norm{\p}_{2}^{2}}{2} + \norm{\p}_{3}^{3} m (m - 1)^2 \\
        &\leq \binom{m}{2} \norm{\p}_{2}^{2} \left( 1 + 2 \norm{\p}_{2} (m- 1) \right) \text{.}
    \end{align*}
    Since $\norm{\p}_{2} \geq \frac{1}{\sqrt{n}}$ and $m \geq c \sqrt{n}$, we have that $1 + 2 \norm{\p}_{2} (m - 1) \leq (2 + c) \norm{\p}_{2} (m - 1)$.
    Thus, $\Var(X) \leq O(m^3 \norm{\p}_{2}^{3})$.
    By Chebyshev's Inequality, we set $\sigma \leq O(m^{1.5} \norm{\p}_{2}^{1.5})$ and obtain 
    \begin{equation*}
        \Pr \left( X \in (1 \pm 0.0001)\binom{m}{2} \norm{\p}_{2}^{2} \right) > 0.9999
    \end{equation*}
    for sufficiently large $n$.
\end{proofof}

\begin{proofof}{\Cref{lemma:collision-concentration-lc}}
    Let $Y_{1}, \dotsc, Y_{m}$ denote the individual samples and $X = \sum_{i_1, \dotsc, i_c} \ind{Y_{i_1} = \dotsc = Y_{i_c}}$ be the number of $c$-collisions where $i_1, \dotsc, i_c$ are distinct.
    As before, the expected number of collisions is $\EX[X] = \binom{m}{c} \norm{\p}_{c}^{c}$.
    We proceed to bound the variance, in particular $\EX[X^2]$ which we may write as
    \begin{align*}
        \EX[X^2] &= \EX\left[ \left( \sum_{i_1, i_2, \dotsc, i_c} \ind{Y_{i_1} = \dotsc = Y_{i_c}} \right)^2\right] \\
        &= \EX\left[ \sum_{i_1, i_2, \dotsc, i_c, j_1, \dotsc, j_c} \ind{Y_{i_1} = \dotsc = Y_{i_c}} \ind{Y_{j_1} = \dotsc = Y_{j_c}} \right] \text{.}
    \end{align*}
    We now need to do some case analysis.
    Either there is the case $i_1, \dotsc, i_c, j_1, \dotsc j_c$ are all distinct, or there is some overlap in the tuples $i_1, \dotsc, i_c$ and $j_1, \dotsc, j_c$. 
    Note that $i_1, \dotsc, i_c$ are always distinct (similarly for $j_1, \dotsc, j_c$).
    In the former case, all indices are distinct and independent and we obtain
    \begin{equation*}
        \EX\left[ \sum_{i_1, i_2, \dotsc, i_c, j_1, \dotsc, j_c} \ind{Y_{i_1} = \dotsc = Y_{i_c}} \ind{Y_{j_1} = \dotsc = Y_{j_c}} \right] = \binom{m}{2c} \norm{p}_{c}^{2c} \text{.}
    \end{equation*}
    Note that this is at most $\EX[X]^2$ so we can freely subtract this term from the variance calculation.
    In the latter case, since there is overlap between $i_1, \dotsc, i_c$ and $j_1, \dotsc, j_c$ (of which there are at most $2c - 1$ distinct indices) we can upper bound the desired quantity as
    \begin{equation*}
        \Var(X) \leq \EX\left[ \sum_{i_1, i_2, \dotsc, i_c, j_1, \dotsc, j_c} \ind{Y_{i_1} = \dotsc = Y_{i_c} = Y_{j_1} = \dotsc = Y_{j_c}} \right] = O_{c}\left( \sum_{k = c}^{2c - 1} \binom{m}{k} \norm{\p}_{k}^{k} \right) \text{.}
    \end{equation*}
    In the above, we note that there are $O_{c}(1)$ distinct cases of overlap between the two sets of indices.
    We upper bound the sum in each case by noting  that when there are $c \leq k \leq 2c - 1$ distinct, there are at most $O_{c}(\binom{m}{k})$ summands (choices of indices $i_1, \dotsc, i_c, j_1, \dotsc, j_c$) and the probability they all coincide is $\norm{\p}_{k}^{k} \leq \norm{\p}_{c}^{k}$.
    Now, observe that for every $\p$, $\norm{\p}_{c} \geq \frac{1}{n^{(c - 1)/c}}$ so that $m \norm{\p}_{c} \geq 2$.
    Therefore,
    \begin{equation*}
        \Var(X) = O_{c}\left( \sum_{k = c}^{2c - 1} \binom{m}{k} \norm{\p}_{k}^{k} \right) = O_{c}\left( \binom{m}{2c - 1} \norm{\p}_{c}^{2c - 1}  \right)
    \end{equation*}
    so the standard deviation is at most $O_{c}\left( m^{c-1/2} \norm{p}_{c}^{c - 1/2} \right)$ which for sufficiently large $n$ (depending on $c$ only) is asymptotically less than $\EX[X]$.
    The result then follows by applying the Chebyshev inequality.
\end{proofof}

Finally, we give the sample lower bound for the problem given in \Cref{thm:bias-estimation-lb}.

\begin{lemma}
    \label{lemma:coin-distinguish-lb}
    Suppose an algorithm is given sample access to two coins $\coin_1, \coin_2$ with biases $a < b$ and is asked to determine which coin has which bias.
    Then, any algorithm that succeeds with probability at least $\frac{2}{3}$ requires $\bigOm{\frac{\min(a, b, 1 - a, 1 - b)}{(b - a)^2}}$ samples.
\end{lemma}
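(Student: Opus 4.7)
\begin{proofof}{\Cref{lemma:coin-distinguish-lb}}
The plan is to use a standard information-theoretic lower bound via KL divergence and Pinsker's inequality. Set up two hypotheses: under hypothesis $H_0$, $\coin_1$ has bias $a$ and $\coin_2$ has bias $b$; under hypothesis $H_1$, $\coin_1$ has bias $b$ and $\coin_2$ has bias $a$. Any algorithm that succeeds with probability $\geq 2/3$ must induce distributions over its transcript whose total variation distance is $\geq 1/3$, hence by Pinsker's inequality the KL divergence between these transcript distributions must be $\Omega(1)$.

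Next, I would reduce this to per-sample KL divergence. Since the algorithm is adaptive, I would invoke the standard chain rule for KL divergence along the query sequence, which gives
\[
\KL(\mathrm{transcript}_0 \,\|\, \mathrm{transcript}_1) \;\leq\; \mathbb{E}_0[m_1] \cdot \KL(\Bern(a) \,\|\, \Bern(b)) \;+\; \mathbb{E}_0[m_2] \cdot \KL(\Bern(b) \,\|\, \Bern(a)),
\]
where $m_1, m_2$ are the number of samples drawn from $\coin_1, \coin_2$ and $m_1 + m_2 = m$. (This step handles adaptivity: at each step the conditional KL contribution depends only on which coin is sampled.)

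The key computation is then to bound $\KL(\Bern(a) \,\|\, \Bern(b))$ and $\KL(\Bern(b) \,\|\, \Bern(a))$. Using the standard inequality $\KL(\Bern(p) \,\|\, \Bern(q)) \leq \frac{(p-q)^2}{q(1-q)}$ (which follows from $\log(1+x) \leq x$), both divergences are bounded by $O\!\left(\frac{(b-a)^2}{\min(a,b,1-a,1-b)}\right)$. Therefore
\[
\Omega(1) \;\leq\; \KL(\mathrm{transcript}_0 \,\|\, \mathrm{transcript}_1) \;\leq\; m \cdot O\!\left(\frac{(b-a)^2}{\min(a,b,1-a,1-b)}\right),
\]
which rearranges to the claimed lower bound $m = \bigOm{\frac{\min(a,b,1-a,1-b)}{(b-a)^2}}$.

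The main technical subtleties will be verifying the KL bound carefully when $a$ or $b$ approaches $0$ or $1$ (where the standard Taylor expansion becomes delicate) and handling the adaptive choice of which coin to sample cleanly via the chain rule. Neither is a real obstacle, but both require attention since the constants in the KL bound drive the $\min(a,b,1-a,1-b)$ factor in the final statement. Alternatively, one could avoid these issues entirely by reducing to the well-known $\Omega(p(1-p)/\eps^2)$ bias-estimation lower bound via a two-point Le Cam argument between $\Bern(a)$ and $\Bern(b)$, which yields the same bound directly.
\end{proofof}
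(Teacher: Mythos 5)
Your proof is correct, and it takes a genuinely different (though closely related) information-theoretic route from the paper's. The paper sets up an unbiased bit $X$ encoding which assignment of biases is in force, invokes \Cref{lemma:mutual-info-bound} to conclude that a $2/3$-correct algorithm must have $I(X:T) = \Omega(1)$, decomposes $I(X:T) \leq I(X:T_1) + I(X:T_2)$ by conditional independence, and then bounds each term by $O(m(b-a)^2/\min(a,b,1-a,1-b))$ using the $\chi^2$-type upper bound on mutual information from \Cref{claim:MI_asymp} (citing Hopkins et al.). You instead argue via Le Cam: the two transcript distributions must be $\Omega(1)$-far in total variation, hence (by Pinsker) $\Omega(1)$-far in KL; the chain rule for KL then reduces to per-sample Bernoulli KL divergences, which you bound by $\KL(\Bern(p)\,\|\,\Bern(q)) \leq (p-q)^2/(q(1-q))$. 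These two arguments are morally the same but not identical: the mutual-information version hides the two-point structure inside $I(X:T)$ and the $\chi^2$ bound, while yours exposes it directly through Pinsker. One thing your route does buy is a cleaner treatment of adaptivity: the paper writes ``$T_1$ consists of $m$ iid samples'' as if the sample allocation between the coins were fixed in advance, whereas your chain-rule decomposition $\EX_0[m_1]\cdot\KL(\Bern(a)\|\Bern(b)) + \EX_0[m_2]\cdot\KL(\Bern(b)\|\Bern(a))$ handles an adaptive choice of which coin to flip at each step explicitly (the conditional KL at step $t$ depends only on which coin is queried, which is measurable with respect to the past). Both then close by observing that both per-sample divergences are $O((b-a)^2/\min(a,b,1-a,1-b))$ since $q(1-q)\geq\tfrac{1}{2}\min(q,1-q)\geq\tfrac{1}{2}\min(a,b,1-a,1-b)$. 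Your proposal is sound as sketched; the only thing to actually write out is the Bernoulli KL bound, which you correctly identify as following from $\log x\leq x-1$.
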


\begin{proof}
    Let $\innerAlg$ denote such an algorithm which takes $m$ samples.
    Consider an adversary that flips a uniform bit $X$ and if $X = 0$, assigns $\coin_1$ to have bias $a$, and otherwise assigns $\coin_1$ to have bias $b$.
    Let $T = (T_1, T_2)$ denote the samples observed by algorithm $\innerAlg$.
    From \Cref{lemma:mutual-info-bound} we have $I(X:T) \geq 2 \cdot 10^{-4}$.
    On the other hand, since $T_1, T_2$ are independent conditioned on $X$, we can upper bound $I(X:T) \leq I(X:T_1) + I(X:T_2)$.

    We claim $I(X:T_1) = \bigO{\frac{m(b-a)^2}{\min(a,b,1-a,1-b)}}$ (see Lemma 3.9 of \cite{hopkins2024replicability}).
    A similar argument proceeds for $I(X:T_2)$.
    Note that $T_1$ consists of $m$ iid samples of $\Bern(a)$ if $X = 0$ and $\Bern(b)$ if $X = 1$.
    In particular, applying the independence of each sample conditioned on $X$ and \Cref{claim:MI_asymp}, we obtain
    \begin{equation*}
        I(X:T_1) = \bigO{m \frac{(b - a)^2}{\min(a, b, 1 - a, 1 - b)}}
    \end{equation*}
    as desired.
    Combined with our lower bound on $I(X:T)$ for any accurate algorithm, this yields a lower bound on $m$.
\end{proof}

\subsection{Flattening with Bounded Infinity Norm} \label{sec:high-sample-flat}

\begin{theorem} [High Sample Mixture Flattening] \label{thm:strong-mixture-flat} Using $O(n \log n)$ samples to a mixture $\rD = \lambda \p + (1-\lambda)\q$ over $[n]$, it is possible to produce a random map $f \colon [n] \to [N]$ such that the following hold.
\begin{itemize}
    \item For any two distributions $\cD_1,\cD_2$ over $[n]$, we have $\norm{f(\cD_1) - f(\cD_2)}_1 = \norm{\cD_1 - \cD_2}_1$.
    \item For any constant $\delta \in (0,1)$, $\norm{f(\rD)}_{\infty} < \frac{1+\delta}{n}, \norm{f(\p)}_{\infty} < \frac{1+\delta}{\lambda n}$, and $N \leq (2 + \delta) n$ with probability $0.999$.
\end{itemize}
\end{theorem}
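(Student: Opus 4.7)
The plan is to produce $f$ via an adaptive variant of the split-distribution construction (\Cref{def:split-distributions}), where the multiplicity $a_i$ of each element is selected based on a preliminary empirical estimate of $\rD$ rather than from raw sample counts. Concretely, draw $k = \Theta(n\log n / \delta^2)$ samples from $\rD$, compute $\hat{\rD}[i] := c_i/k$ where $c_i$ is the empirical count of element $i$, and set $a_i := \max(1, \lceil n \hat{\rD}[i] \rceil)$. The map $f$ then sends each $i$ uniformly to one of $a_i$ designated sub-buckets in $[N]$, where $N := \sum_i a_i$. This adaptivity is essential: the naive choice $a_i = c_i + 1$ used in \Cref{lemma:flattening} gives $N \geq k = \omega(n)$, which would violate the required bound on $N$.

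The bound $N \leq (2+\delta)n$ holds deterministically once the samples are fixed, since $\sum_i \hat{\rD}[i] = 1$ and $a_i \leq n\hat{\rD}[i] + 1$ imply $N \leq n + n\sum_i \hat{\rD}[i] = 2n$. The $\ell_1$-distance preservation property follows from an argument entirely analogous to \Cref{fact:split-preserve-1-norm}: for any two distributions $\distribution_1, \distribution_2$ on $[n]$, the contribution of the sub-buckets of element $i$ to $\norm{f(\distribution_1)-f(\distribution_2)}_1$ is exactly $|\distribution_1[i] - \distribution_2[i]|$, since $f$ spreads mass uniformly over disjoint sub-buckets.

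The central technical step is bounding $\norm{f(\rD)}_\infty$. The mass placed on any sub-bucket of element $i$ under $f(\rD)$ equals $\rD[i]/a_i$, so it suffices to show $\rD[i]/a_i \leq (1+\delta)/n$ for every $i$. By a multiplicative Chernoff bound, taking $k = Cn\log n/\delta^2$ for a sufficiently large constant $C$, for each $i$ with $\rD[i] \geq 1/n$ we have $\hat{\rD}[i] \geq (1-\delta/4)\rD[i]$ with probability at least $1 - n^{-10}$; a union bound over all buckets yields that this holds uniformly with probability at least $0.999$. Conditioned on this event, a two-case analysis closes the argument: if $\rD[i] \leq 1/n$, then $\rD[i]/a_i \leq \rD[i] \leq 1/n$; if $\rD[i] > 1/n$, then $a_i \geq n\hat{\rD}[i] \geq n(1-\delta/4)\rD[i]$, giving $\rD[i]/a_i \leq 1/((1-\delta/4)n) \leq (1+\delta)/n$. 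The bound on $\norm{f(\p)}_\infty$ follows immediately from the pointwise inequality $\p[i] \leq \rD[i]/\lambda$, which is a consequence of $\rD = \lambda\p + (1-\lambda)\q \geq \lambda \p$.

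The principal subtlety is calibrating the Chernoff concentration so that it holds simultaneously across all heavy buckets — this is precisely what the $\log n$ factor in the sample count buys, since for $\rD[i] \geq 1/n$ we have $k\rD[i] = \Omega(\log n /\delta^2)$, affording $n^{-\Omega(1)}$ failure probability per bucket and a slack-free union bound over all $n$ elements of the domain.
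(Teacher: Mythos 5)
Your proposal is correct and follows essentially the same route as the paper's proof: draw $O(n\log n)$ samples, form an empirical estimate $\hat{\rD}$, define the split multiplicity for bucket $i$ as roughly $\lceil n\hat{\rD}[i]\rceil$, and verify the three properties by the same two-case analysis on whether $\rD[i]$ is above or below $1/n$. The only differences are cosmetic: the paper invokes its general learning bound (\Cref{thm:learn}), which gives an additive-error guarantee $|\widetilde{\rD}[i]-\rD[i]|\leq \delta\sqrt{\rD[i]/n}$ and derives the multiplicative bound for heavy buckets from it, whereas you apply a multiplicative Chernoff bound to the heavy buckets directly; and your $a_i := \max(1,\lceil n\hat{\rD}[i]\rceil)$ makes the bound $N\leq 2n$ deterministic and explicitly guards against $a_i=0$, a small point the paper leaves implicit. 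Both choices are sound and lead to the same conclusion.
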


\begin{proof} By \Cref{thm:learn}, using $O(n \log n)$ samples, we can learn $\widetilde{\rD}[1],\ldots,\widetilde{\rD}[n]$ satisfying $|\widetilde{\rD}[i] - \rD[i]| \leq \delta \sqrt{\frac{\rD[i]}{n}}$ for any constant $\delta \in (0,1/2)$ with probability $0.999$. In particular,
\begin{align} \label{eq:max}
\rD[i] - \delta\sqrt{\frac{\rD[i]}{n}} \leq \widetilde{\rD}[i] \leq \rD[i] + \delta\sqrt{\frac{\rD[i]}{n}} ~\implies~ \rD[i] \left(1 - \delta\sqrt{\frac{1}{\rD[i]n}}\right) \leq \widetilde{\rD}[i] \leq \rD[i] \left(1 + \delta\sqrt{\frac{1}{\rD[i]n}}\right) \text{.}
\end{align}
Therefore,
\begin{itemize}
    \item When $\rD[i] \geq 1/n$, the RHS of \cref{eq:max} implies $(1-\delta)\rD[i] \leq \widetilde{\rD}[i] \leq (1 + \delta) \rD[i]$. 
    \item When $\rD[i] \leq 1/n$, the LHS of \cref{eq:max} implies $\widetilde{\rD}[i] \leq \rD[i] + \frac{\delta}{n} \leq \frac{1 + \delta}{n}$.
\end{itemize}
 We now define the random map $f \colon [n] \to [N]$ as follows. For each $i \in [n]$, let $B_i$ be a domain of size $m_i := \lceil \widetilde{\rD}[i]n \rceil$. Then, we define $f(i)$ to be a uniform random element of $B_i$. Note that $f$ maps $[n]$ to $B_1 \cup \cdots \cup B_n$ which (using the above upper bounds for the $\widetilde{\rD}[i]$-s) is of size 
 \begin{align} \label{eq:num-buckets}
    N := \sum_{i \leq n} m_i &\leq  \sum_{i \leq n} (1 + \widetilde{\rD}[i] n) \leq n\left(1 + \sum_{i \colon \rD[i] \geq 1/n} (1+\delta)\rD[i] +  \sum_{i \colon \rD[i] < 1/n} \left(\rD[i] + \frac{\delta}{n}\right)  \right) \leq 2(1+\delta)n \text{.}
 \end{align}
We now show $1$-norms are preserved under $f$. For a distribution $\cD$, the probability mass given to $j \in B_i$ is exactly $\cD[i]/m_i$. Thus,
\begin{align} \label{eq:norm-preserve}
    \norm{f(\cD_1) - f(\cD_2)}_1 = \sum_{i \leq n} \sum_{j \in B_i} \left| \frac{\cD_1[i]}{m_i} - \frac{\cD_2[i]}{m_i} \right| = \sum_{i \leq n} |\cD_1[i] - \cD_2[i]| = \norm{\cD_1 - \cD_2}_1 \text{.} 
\end{align}
Next, we bound the infinity norm of $f(\rD)$ and $f(\p)$. For $j \in B_i$, the probability mass of $f(\rD)$ on $j$ is $\rD[i]/m_i = \rD[i]/\lceil \widetilde{\rD}[i]n \rceil$ and so $\norm{f(\rD)}_\infty = \max_i \frac{\rD[i]}{\lceil n \widetilde{\rD}[i]\rceil}$, $\norm{f(\p)}_\infty = \max_i \frac{\p[i]}{\lceil n \widetilde{\rD}[i]\rceil}$. We get the following bounds:
\begin{itemize}
    \item When $\rD[i] \geq 1/n$, we have $\widetilde{\rD}[i] \geq (1-\delta)\rD[i]$ (as shown above) and so 
    \[
    \frac{\rD[i]}{\lceil n \widetilde{\rD}[i]\rceil} \leq \frac{\rD[i]}{n \cdot (1-\delta) \rD[i]} = \frac{1}{(1-\delta)n} ~\text{ and }~
    \frac{\p[i]}{\lceil n \widetilde{\rD}[i]\rceil} \leq \frac{\p[i]}{n \cdot (1-\delta) \rD[i]} \leq \frac{1}{(1-\delta) \lambda n}
    \]
    \item When $\rD[i] \leq 1/n$, by the use of the ceiling function the denominator is at least $1$, and so 
    \[
    \frac{\rD[i]}{\lceil n \widetilde{\rD}[i]\rceil} \leq \rD[i] \leq \frac{1}{n} ~\text{ and }~
    \frac{\p[i]}{\lceil n \widetilde{\rD}[i]\rceil} \leq \p[i] \leq \frac{\rD[i]}{\lambda} \leq \frac{1}{\lambda n}
    \]
\end{itemize}
and so $\norm{f(\rD)}_\infty \leq \frac{1}{(1-\delta)n} < \frac{(1+2\delta)}{n}$ (since $\delta < 1/2$) and $\norm{f(\p)}_\infty \leq \frac{1+2\delta}{\lambda n}$, as claimed. \end{proof} 

\subsection{Reduction to Equal Mixture Parameters}

In \Cref{sec:closeness-UB} and \Cref{sec:closeness-UB-highsample}, we described closeness testers for distributions $\p_1,\p_2$ under sample and verification query access to mixtures $\rD_1 = \lambda \p_1 + (1-\lambda)\q_1$ and $\rD_2 = \lambda \p_2 + (1-\lambda)\q_2$. In particular, we assumed a common mixture parameter $\lambda$ in both distributions, which may be unrealistic in certain applications. However, in this section we give a simple reduction showing that this assumption is without loss of generality, at the cost of extra polynomial factors of $O(\lambda^{-1})$ in the sample and query complexity. We remark there may exist a more efficient reduction that avoids this increase.

\begin{lemma} \label{lem:unequal-mixture-params} Suppose we are given sample and query access to two mixtures $\rD_1 = \lambda_1 \p_1 + (1-\lambda_1)\q_1$ and $\rD_2 = \lambda_2 \p_2 + (1-\lambda_2)\q_2$ with unknown (and possible unequal) mixture parameters $\lambda_1,\lambda_2 \in (0,1)$. Then, there exist distributions $\q_1',\q_2'$ for which we can simulate sample and query access to $\rD_1 = \lambda_1\lambda_2 \p_1 + (1-\lambda_1\lambda_2)\q_1'$ and $\rD_2 = \lambda_1\lambda_2 \p_2 + (1-\lambda_1\lambda_2)\q_2'$. In particular, to simulate one sample costs one sample to the original mixture, and to simulate one query costs two queries and one sample to the original mixtures. \end{lemma}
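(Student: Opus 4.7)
The plan is to observe that the mixtures can be algebraically re-decomposed without changing their underlying PDFs, and then to show that verification queries under this new decomposition can be simulated from the original queries plus one auxiliary biased-coin flip. Concretely, I would first define
\[
\q_1' \eqdef \frac{\lambda_1(1-\lambda_2)\p_1 + (1-\lambda_1)\q_1}{1-\lambda_1\lambda_2},
\]
and $\q_2'$ analogously with the roles of indices swapped. A direct calculation shows $\q_1'$ is a valid probability distribution (its coefficients are non-negative and sum to $1$), and that
\[
\lambda_1\lambda_2 \p_1 + (1-\lambda_1\lambda_2) \q_1' = \lambda_1 \p_1 + (1-\lambda_1)\q_1 = \rD_1,
\]
so the PDF of $\rD_1$ under the new decomposition is identical to the original. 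In particular, any i.i.d.\ sample from the original mixture is already an i.i.d.\ sample from $\lambda_1\lambda_2\p_1 + (1-\lambda_1\lambda_2)\q_1'$, costing one original sample per simulated sample.

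For the simulation of verification queries, given a sample $x \sim \rD_1$, I would proceed as follows. First, issue a verification query on $x$ under the original oracle (1 query to $\rD_1$). If the oracle reports that $x$ was generated by $\q_1$, output \textbf{``irrelevant''} (i.e.\ generated by $\q_1'$), since every $\q_1$-sample is retained inside $\q_1'$. If the oracle reports that $x$ was generated by $\p_1$, then with probability $\lambda_2$ output \textbf{``relevant''} and with probability $1-\lambda_2$ output \textbf{``irrelevant''}. Since $\lambda_2$ is unknown, I would simulate this $\lambda_2$-biased coin by drawing one sample $y \sim \rD_2$ and issuing a verification query on $y$ (1 sample plus 1 query to $\rD_2$); the event ``$y$ is $\p_2$-generated'' occurs with probability exactly $\lambda_2$. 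Total cost per query: $2$ queries and $1$ sample to the original oracles. For queries on the simulated $\rD_2$ mixture, I would use the symmetric procedure, using a $\lambda_1$-coin obtained from one sample and one query to $\rD_1$.

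To verify correctness, I would check that the conditional distribution of the reported label given $x$ matches the new decomposition. The probability of outputting ``relevant'' given $x$ is $\lambda_1\p_1(x)/\rD_1(x) \cdot \lambda_2 = \lambda_1\lambda_2 \p_1(x)/\rD_1(x)$, which is exactly the posterior probability that $x$ was generated by the $\p_1$-component under the mixture $\lambda_1\lambda_2 \p_1 + (1-\lambda_1\lambda_2)\q_1'$; the probability of outputting ``irrelevant'' then works out, by complementation and the definition of $\q_1'$, to $(1-\lambda_1\lambda_2)\q_1'(x)/\rD_1(x)$, as required. There is no genuine obstacle here: the proof is essentially a one-line algebraic identity together with a routine posterior calculation. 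The only subtlety worth flagging is that the simulated $\lambda_2$-coin must be independent of the already-sampled $x$ (which it is, since $y$ is drawn from a fresh independent call to $\rD_2$), and that the construction works without the algorithm ever having to learn $\lambda_1$ or $\lambda_2$ explicitly.
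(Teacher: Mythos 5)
Your proposal is correct and matches the paper's proof essentially verbatim: you use the same re-decomposition (your $\q_1'$ is the paper's expression after simplifying $\lambda_1 - \lambda_1\lambda_2 = \lambda_1(1-\lambda_2)$ and $1 - (\lambda_1\lambda_2 + \lambda_1 - \lambda_1\lambda_2) = 1-\lambda_1$), and the identical query-simulation strategy of flipping a $\lambda_2$-coin realized via a fresh sample-and-query to $\rD_2$ when the original oracle reports $\p_1$. Your explicit posterior-probability check and the remark about independence of the auxiliary coin are small, welcome additions that the paper leaves implicit.
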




\begin{proof} First, write
\begin{align}
    \rD_1 = \lambda_1 \p_1 + (1-\lambda_1) \q_1 &= (\lambda_1 \lambda_2 + (\lambda_1 - \lambda_1\lambda_2))\p_1 + (1- (\lambda_1 \lambda_2 + (\lambda_1 - \lambda_1\lambda_2)))\q_1 \nonumber \\
    &= \lambda_1\lambda_2 \p_1 + (1-\lambda_1\lambda_2) \left(\frac{(\lambda_1 - \lambda_1\lambda_2)\p_1 + (1- (\lambda_1 \lambda_2 + (\lambda_1 - \lambda_1\lambda_2)))\q_1}{1-\lambda_1\lambda_2}\right) \nonumber \\
    &= \lambda_1\lambda_2 \p_1 + (1-\lambda_1\lambda_2)\q_1' = \rD_1'
\end{align}
where
\[
\q_1' := \frac{(\lambda_1 - \lambda_1\lambda_2)\p_1 + (1- (\lambda_1 \lambda_2 + (\lambda_1 - \lambda_1\lambda_2)))\q_1}{1-\lambda_1\lambda_2} \text{.}
\]
Clearly $\rD_1 = \rD_1'$ and so samples are trivial to simulate. We now show how to simulate queries to $\rD_1'$. Suppose we sample $x \sim \rD_1'$ and wish to simulate a query on $x$. Then, administer a query on $x$ using the $\rD_1$ verification query oracle. If the oracle returns "irrelevant", then we return "irrelevant". If the oracle returns "relevant", then we return "relevant" with probability $\lambda_2$ and "irrelevant" with probability $1-\lambda_2$. To do this, we need access to a $\lambda_2$-probability coin. Crucially, we can simulate such a coin with one sample and one query to the $\rD_2$ verification query oracle. All in all, we have
\[
\Pr_{x \sim \rD_1'}[x = i \text{ and } \rD_1'\text{-oracle returns } \p_1] = \Pr_{x \sim \rD_1'}[x = i \text{ and } x \sim \p_1] \cdot \Pr[\rD_1\text{-oracle returns } \p_1 ~|~ x=i] = \lambda_1\p_1(i) \cdot \lambda_2 \text{.}
\]
Similarly, we let $\rD_2' = \lambda_1\lambda_2 \p_2 + (1-\lambda_1\lambda_2)\q_2'$ where
\[
\q_2' := \frac{(\lambda_1 - \lambda_1\lambda_2)\p_2 + (1- (\lambda_1 \lambda_2 + (\lambda_1 - \lambda_1\lambda_2)))\q_2}{1-\lambda_1\lambda_2}
\]
and the simulation proof for the $\rD_2'$ verification query oracle is exactly analogous. \end{proof}

\end{document}